\theoremstyle{plain}
\newtheorem{theorem}{Theorem}[section]
\newtheorem{proposition}[theorem]{Proposition}
\newtheorem{corollary}[theorem]{Corollary}
\newtheorem{lemma}[theorem]{Lemma}
\theoremstyle{definition}
\newtheorem{definition}[theorem]{Definition}
\newtheorem{remark}[theorem]{Remark}
\newcommand{\C}{\mathcal{C}}
\newcommand{\G}{\mathcal G}
\newcommand{\cL}{\mathcal L}
\newcommand{\HH}{\mathcal H}
\newcommand{\II}{\mathcal I}
\newcommand{\T}{\mathcal T}
\newcommand{\cS}{\mathcal S}
\newcommand{\F}{\mathbb{F}}
\newcommand{\N}{\mathbb{N}}
\newcommand{\Fm}{\mathbb{F}_{q^m}}
\newcommand{\Fq}{\mathbb{F}_{q}}
\newcommand{\Z}{\mathbb{Z}}
\newcommand{\mat}{\Fq^{n \times m}}
\newcommand{\drk}{\mathrm{d}_{\mathrm{rk}}}
\newcommand{\Norm}{\mathrm{N}_{\Fm/\Fq}}
\newcommand{\Fmk}{[n, k]_{q^m}}
\newcommand{\Fmkd}{[n, k, d]_{q^m}}
\newcommand{\qbinom}{\genfrac{[}{]}{0pt}{}}
\newcommand{\autom}{\sigma}
\newcommand{\Saut}{\mathcal S^{\autom}}
\newcommand{\Taut}{\T^{\autom}}
\newcommand{\Fqm}{\mathbb{F}_{q^m}}
\newcommand{\Fqr}{\mathbb{F}_{q^r}}
\newcommand{\Fqn}{\mathbb{F}_{q^n}}
\newcommand{\numTwists}{\ell}
\newcommand{\tVec}{t}
\newcommand{\hVec}{h}
\newcommand{\etaVec}{\eta}
\newcommand{\HShee}[2]{\HH_{#1}^{#2}}
\newcommand{\TGabShee}[3]{\HH_{#1}^{#2}(#3)}
\newcommand{\PTw}[2]{\II_{#1}^{#2}}
\newcommand{\GTw}[3]{\II_{#1}^{#2}(#3)}
\newcommand{\NewG}[2]{\mathcal J_{#1}^{#2}}
\DeclareMathOperator{\End}{End}
\DeclareMathOperator{\Gal}{Gal}
\DeclareMathOperator{\rs}{rowsp}
\DeclareMathOperator{\Aut}{Aut}
\DeclareMathOperator{\GL}{GL}
\DeclareMathOperator{\rk}{rk}
\DeclareMathOperator{\Gab}{Gab}
\DeclareMathOperator{\TGab}{TGab}
\DeclareMathOperator{\gTGab}{gTGab}
\DeclareMathOperator{\Gr}{Gr}
\DeclareMathOperator{\rsu}{supp}
\newcommand{\mytableformat}[3]{$\substack{#1 \\ #2 \\ #3}$}
\newcommand{\mylabel}[2]{#2\def\@currentlabel{#2}\label{#1}}
\title[Equivalence and Characterizations of Linear Rank-Metric Codes]{Equivalence and Characterizations of Linear Rank-Metric Codes Based on Invariants}
\author{Alessandro Neri}
\address{Alessandro Neri, Institute for Communications Engineering, Technical University of Munich, Germany}
\email{alessandro.neri@tum.de}
\author{Sven Puchinger}
\address{Sven Puchinger, Department of Applied Mathematics and Computer Science, Technical University of Denmark, Denmark}
\email{svepu@dtu.dk}
\author{Anna-Lena Horlemann-Trautmann}
\address{Anna-Lena Horlemann-Trautmann, Faculty of Mathematics and Statistics, University of St. Gallen, Switzerland}
\email{anna-lena.horlemann@unisg.ch}
\thanks{$^*$Alessandro Neri was supported by the Swiss National Science Foundation through grants no.~169510 and 187711.
Sven Puchinger has received funding from the European Union’s Horizon 2020 research and innovation programme under the Marie Skłodowska-Curie grant agreement no.~713683.
This work was partly done while Sven Puchinger was at Technical University of Munich (TUM), where he was supported by the German Israeli Project Cooperation (DIP) grant no.~KR3517/9-1.}
\thanks{This work is an extension of the conference paper ``Invariants and Inequivalence of
Linear Rank-Metric Codes'' \cite{ne19invariants}, which appeared in \emph{Proceedings of IEEE International Symposium on Information Theory} 2019.}
\subjclass[2010]{51E22, 05B25, 94B05}
\keywords{Rank-metric codes, Gabidulin codes, Twisted Gabidulin codes, Invariants}
\begin{document}

\maketitle
\thispagestyle{empty}

\begin{abstract}
 We show that the sequence of dimensions of the linear spaces, generated by a given rank-metric code together with itself under several applications of a field automorphism, is an invariant for the whole equivalence class of the code. The same property is proven for
the sequence of dimensions of the intersections of itself under several applications of a field automorphism.
These invariants give rise to  easily computable criteria to check if two codes are inequivalent. We derive some concrete values and bounds for these dimension sequences for some known families of rank-metric codes, namely Gabidulin and (generalized) twisted Gabidulin codes. We then derive conditions on the length of the codes with respect to the field extension degree, such that codes from different families cannot be equivalent. Furthermore, we derive  upper and lower bounds on the number of equivalence classes of Gabidulin codes and twisted Gabidulin codes, improving a result of Schmidt and Zhou for a wider range of parameters. In the end we use the aforementioned sequences to determine a characterization result for Gabidulin codes.
\end{abstract}

\section{Introduction}

Over the last decades rank-metric codes have become an active research area due to several applications, such as
crisscross error correction~\cite{ro91},
post-quantum cryptography \cite{gabidulin1991ideals,faure2005new,overbeck2008structural,gaborit2013low,loidreau2017new,aguilar2018efficient},
space-time coding for MIMO systems \cite{gabidulin2000space,lusina2002spacetime,liu2002rank,lusina2003,robert2015new,puchinger2016space},
network coding \cite{silva2008rank,etzion2009error,nobrega2010multishot,silva2011universal,etzion2018vector},
distributed data storage~\cite{silberstein2012error,rawat2013optimal,calis2017general,ne18pmds}, and
digital image watermarking~\cite{lefevre2019application}.
They can either be defined as sets of matrices of fixed dimensions over some finite field, where the distance of two elements is measured by the rank of their difference, or equivalently as sets of vectors over an extension field, where the distance is measured as the rank of a vector over the base field. In this work we will use the latter. Furthermore, we will denote an underlying finite field by $\F_{q^m}$, where $q$ is a prime power. 

A rank-metric code in $\F_{q^m}^n$ is called \emph{linear} if it forms an $\F_{q^m}$-linear subspace of $\F_{q^m}^n$. 
Rank-metric codes attaining the Singleton bound, which upper-bounds the code cardinality for given minimum distance, are called \emph{maximum rank distance (MRD) codes}. 
Delsarte \cite{de78}, Gabidulin \cite{ga85a}, and Roth \cite{ro91} independently introduced a prominent class of linear MRD codes for all possible code parameters, which are today called \emph{Gabidulin codes}.

Driven by applications and fundamental questions, finding new (linear) MRD codes inequivalent to Gabidulin codes has become one of the most actively studied research problems within the field of rank-metric codes in the last years.
The topic has been further encouraged by the non-constructive results in \cite{neri2018genericity}, which showed that for many parameter ranges there are plenty of non-Gabidulin linear MRD codes.
After early works on generalizing Gabidulin codes using different automorphisms \cite{roth1996tensor,ks05}, Sheekey \cite{sheekey2016new} was the first to find a general construction for different (linear and non-linear) MRD codes, called \emph{twisted Gabidulin codes}.
Otal and {\"O}zbudak independently discovered a special case of twisted Gabidulin codes \cite{otal2016explicit}.
Starting from Sheekey's construction, several generalizations have been proposed, e.g.\ in \cite[Remark~9]{sheekey2016new} and in \cite{lunardon2018generalized, ot17, sheekey2019new, puchinger2017further}.
Moreover, other non-Gabidulin MRD codes have been constructed, see e.g.\ \cite{ho16, csajbok2018new, csajbok2018maximum, csajbok2018newbis, marino2019mrd, bartoli2019new}. For an overview of non-Gabidulin MRD constructions (also non-linear ones), we refer the reader to the survey \cite{sheekey2019mrd}.

A central question in all of the above mentioned works is whether the new codes are actually inequivalent to known codes. A distinguisher for (generalized) Gabidulin codes, which is based on the dimension of the intersection of the code with itself under some field automorphism, was given in \cite{ho16}. The idea was extended in \cite{giuzzi2019identifiers} to distinguish certain twisted Gabidulin codes. For the other constructions mentioned above, some authors proposed methods tailored to their code construction (e.g., \cite{sheekey2016new}), showed the inequivalence only for special cases (e.g., \cite{puchinger2017further}), or did not study the equivalence problem at all (e.g., \cite{gabidulin2017new}). 
The question of inequivalence is thus still open for many cases. Moreover, not all of the above criteria are efficiently computable. It is thus important to derive an easily computable criterion to check if two codes are equivalent. Since any non-linear code cannot be equivalent to a linear code, we will focus on the question if two linear codes are equivalent.

In this work we widely generalize the results of \cite{ho16,giuzzi2019identifiers} to sums and intersections of the code under arbitrary field automorphisms.
We obtain a class of invariants under equivalence, which can be used as an efficiently computable sufficient tool to prove inequivalence of linear rank-metric codes in general.
The method is particularly powerful for codes constructed as evaluation codes of skew polynomials, and hence suitable for the majority of proposed linear MRD code constructions in the literature.  Furthermore, we show how these invariants  can be used in order to derive theoretical results on MRD codes. In particular, we prove in an elementary way upper and lower bounds on the number of equivalence classes of Gabidulin and twisted Gabidulin codes, and compare them with a result due to Schmidt and Zhou in \cite{sc17}. A second application is a characterization result of Gabidulin codes involving the invariants provided in this work.

The paper is structured as follows. In Section \ref{sec:prelim} we give some preliminaries on finite fields, linearized polynomials, and rank-metric and MRD codes. We define the intersection and sum sequences, and the corresponding dimension sequences in Section \ref{sec:invariants}, and show that these are invariants for the equivalence class of a rank-metric code. Moreover, we derive some general properties of these sequences. In Section \ref{sec:sequences} we compute the sequences for various code families, and use the results to show when two codes from different families are not equivalent. Thereafter, in Section \ref{subsection:numberinequivalent} we use the sequences to derive 
bounds on the number of inequivalent Gabidulin and twisted Gabidulin codes, with an exact formula for %
the case $m=n$. Furthermore, for small code parameters, we present computational results on the number of equivalence classes of generalized twisted Gabidulin codes. In Section \ref{sec:CharGabidulin} we derive new characterization results for Gabidulin codes, based on our sequences. Finally, we conclude this work in Section~\ref{sec:conc} and present some open questions for further research.

\section{Preliminaries}\label{sec:prelim}

\subsection{Finite Fields and Moore Matrices}
\label{ssec:finite_fields_moore}

Let $q$ be a prime power and denote by $\Fq$ the finite field of size $q$.
For a positive integer $m$, the extension field $\Fm$ is a vector space of dimension $m$ over $\Fq$. 
It is also well-known that the extension $\Fm/\Fq$ is Galois, with a cyclic Galois group. More precisely the set
$$\Gal(\Fm/\Fq):= \{ \sigma: \Fm \rightarrow \Fm  \mbox{ field automorphism } \mid \sigma(a)=a \mbox{ for every } a \in \Fq \}$$
is a group endowed with the operation of composition, and is isomorphic to $\Z/m\Z$. The elements of $\Gal(\Fm/\Fq)$ are given by the homomorphisms
$$\begin{array}{rcl}
\theta_i : \Fm & \longrightarrow & \Fm \\
a & \longmapsto & a^{q^i},
\end{array}$$
for every $i= 0,1,\ldots, m-1$, and the generators are given by all the $\theta_i$'s such that $\gcd(i,m)=1$.

Moreover, the \emph{norm} with respect to $\Fm/\Fq$ is the map
$$\begin{array}{rcl}
\Norm: \Fm & \longrightarrow & \Fq \\
a & \longmapsto &\prod\limits_{i=0}^{m-1}\theta_i(a)= \prod\limits_{i=0}^{m-1}a^{q^i}.
\end{array}$$

We now introduce the notions of $\Fq$-support and $q$-rank of a vector over $\Fm$. These notions will play a fundamental role in determining the dimension of certain subspaces of $\Fm^n$.

\begin{definition}\label{def:suppq}
 Let $\Fm$ be an extension field of $\Fq$, and let $g=(g_1,\ldots, g_n)\in \Fm^n$. We define the \emph{$\Fq$-support of $g$ over $\Fq$} the $\Fq$-subspace
$$\rsu_q(g):=\langle g_1,\ldots, g_n\rangle_{\Fq}.$$
Moreover, we define $\rk_q(g):=\dim_{\Fq}(\rsu_q(g))$, and call it the \emph{$q$-rank} (or simply \emph{rank}) \emph{of $g$}.
\end{definition}

The following result is a reformulation of \cite[Corollary 2.38]{lidl1997finite} and is a consequence of \cite[Corollary 4.13]{lam1988vandermonde}. We will widely use it in the rest of the paper.

\begin{proposition}\label{cor:rankMoore}
Let $\theta$ be a generator of $\Gal(\Fm/\Fq)$,  $g \in \Fm^n$ and $k$ be a positive integer. Then 
$$\dim_{\Fm}\langle g, \theta(g), \ldots, \theta^{k-1}(g) \rangle_{\Fm}=\min\{k,\rk_q(g)\}.$$
In particular, the set $\{g, \theta(g), \dots, \theta^{\rk_q(g)-1}(g)\}$ is linearly independent over $\Fm$.
\end{proposition}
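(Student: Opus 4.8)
The plan is to reduce the statement about the dimension of the $\Fm$-span of $g,\theta(g),\dots,\theta^{k-1}(g)$ to a rank computation for a generalized Moore matrix, and then invoke the cited Vandermonde-type result. First I would set $r=\rk_q(g)$ and observe that the dimension in question equals the rank over $\Fm$ of the $k\times n$ matrix $M$ whose $i$-th row is $\theta^{i-1}(g)=(g_1^{q^{i-1}},\dots,g_n^{q^{i-1}})$ (after identifying $\theta$ with a suitable Frobenius power $\theta_s$, $\gcd(s,m)=1$). This is precisely a Moore-type (or $q$-Vandermonde) matrix in the sense of \cite[Corollary 2.38]{lidl1997finite} / \cite[Corollary 4.13]{lam1988vandermonde}, so the whole content is to transfer their statement to this normalization.

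The key steps, in order: (1) Since $\dim_{\Fm}\langle g,\theta(g),\dots,\theta^{k-1}(g)\rangle_{\Fm}=\rk_{\Fm}(M)\le k$ trivially, and also $\rk_{\Fm}(M)\le n$, it suffices to show $\rk_{\Fm}(M)=\min\{k,r\}$. (2) Show $\rk_{\Fm}(M)\le r$: the columns $g_1,\dots,g_n$ lie in the $\Fq$-space $\rsu_q(g)$ of dimension $r$, so choose a basis $b_1,\dots,b_r$ of $\rsu_q(g)$ over $\Fq$ and write $g=\sum_{j=1}^r b_j v_j$ for vectors $v_j\in\Fq^n$; applying $\theta^{i-1}$ and using $\Fq$-linearity of each $v_j$-coordinate gives $\theta^{i-1}(g)=\sum_j \theta^{i-1}(b_j)\,v_j$, so every row of $M$ lies in the $\Fm$-span of the $r$ vectors $v_1,\dots,v_r$, hence $\rk_{\Fm}(M)\le r$. (3) Show $\rk_{\Fm}(M)\ge\min\{k,r\}$: it suffices to exhibit $\min\{k,r\}$ columns of $M$ that are $\Fm$-linearly independent, or equivalently to show that the $\min\{k,r\}\times\min\{k,r\}$ Moore matrix built from $\min\{k,r\}$ $\Fq$-linearly independent entries of $g$ is invertible — this is exactly the classical fact that a Moore matrix with $\Fq$-linearly independent first row has nonzero determinant, which is the cited corollary of \cite{lam1988vandermonde}. (4) Combining (2) and (3) yields $\rk_{\Fm}(M)=\min\{k,r\}$, and the ``in particular'' statement is the case $k=r$, where the full row rank forces $\{g,\theta(g),\dots,\theta^{r-1}(g)\}$ to be $\Fm$-linearly independent.

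The main obstacle is mostly bookkeeping rather than depth: one must be careful that the cited references are usually stated for the \emph{standard} Frobenius $a\mapsto a^q$, whereas here $\theta$ is an arbitrary generator $\theta_s$ with $\gcd(s,m)=1$. I would handle this by noting that $\theta_s$ is itself a generator of the cyclic group $\Gal(\Fm/\Fq)$, so $\theta_s$ is again ``the Frobenius'' for a suitable choice of normal basis / relabeling, or alternatively by observing that $\rk_q(g)=\dim_{\Fq}\langle g_1,\dots,g_n\rangle_{\Fq}$ is unchanged under replacing $q$-power Frobenius by $q^s$-power Frobenius, since $\Fq$ is exactly the fixed field of $\theta_s$ when $\gcd(s,m)=1$. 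Once this identification is in place, steps (2) and (3) go through verbatim, and no new computation beyond citing the Vandermonde/Moore determinant result is needed.
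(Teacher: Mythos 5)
Your proof is correct, and it follows essentially the same route the paper takes: the paper gives no argument of its own but simply cites the Moore/$q$-Vandermonde rank result (\cite[Corollary 2.38]{lidl1997finite}, \cite[Corollary 4.13]{lam1988vandermonde}), and your steps (2)--(3) are the standard derivation of exactly that fact, with the upper bound via the decomposition $g=\sum_j b_j v_j$ and the lower bound via an invertible Moore submatrix. Your remark that the fixed field of $\theta_s$ with $\gcd(s,m)=1$ is $\Fq$ correctly handles the passage from the usual Frobenius to an arbitrary generator of $\Gal(\Fm/\Fq)$, which is the only point the paper's citation leaves implicit.
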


The above result is usually stated in terms of the rank of the \emph{Moore matrix}, which is the $k\times n$ matrix defined as

\[M_{k,\tau}(v) := \left( \begin{array}{cccc} v_1 & v_2
    &\dots &v_n \\ \tau(v_1) & \tau(v_2) &\dots &\tau(v_n) \\
    \vdots&&&\vdots \\ \tau^{k-1}(v_1)& \tau^{k-1}(v_2) &\dots
    &\tau^{k-1}(v_n) \end{array}\right),\] 
where $v=(v_1,\ldots,v_n) \in \Fm^n$ and $\tau \in \Aut(\Fm)$. Note that
the Moore matrix is the $q$-analogue of the Vandermonde matrix.

\subsection{The Skew Group Algebra $\Fm[\Gal(\Fm/\Fq)]$}
\label{ssec:linearized_polynomials}

In the context of rank-metric codes, an important related object is the ring of \emph{linearized polynomials}, which was first studied by {\O}re in~\cite{ore_special_1933}. Its elements are polynomials in $\Fm[x]$ that involve only monomials of the form $x^{q^i}$, for some non-negative integers $i$. Their importance is due to the fact that, seen as functions corresponding to their evaluation, they are $\Fq$-linear maps from $\Fm$ to itself. On the other hand, any $\Fq$-linear map from $\Fm$ to itself can be represented as a $q$-polynomial of degree at most $q^{m-1}$. Let $\cL(\Fm)$ denote the set of $q$-polynomials with coefficients in $\Fm$. This set is closed under addition and composition, and together with these two operations, $\cL(\Fm)$ is a non-commutative ring. 

When one only cares about the evaluation in $\Fm$, one can reduce to studying the set 
$\cL_m(\Fm):=\cL(\Fm)/(x^{q^m}-x),$ 
since $a^{q^m}=a$ for every $a\in \Fm$, and  $(x^{q^m}-x)$ is a two-sided ideal. In this framework, one can easily verify that 
$$\cL_m(\Fm)\cong \Fq^{m\times m}.$$

However, this is not the end of the story: The ring $\cL_m(\Fm)$ is also isomorphic to the skew group algebra $\Fm[G]$, where $G=\Gal(\Fm/\Fq)=\langle \theta \rangle$, which is a ring endowed with addition and composition. More in detail, the elements $f,g \in \Fm[G]$ are of the form $f=\sum_{i=0}^{m-1}f_i \theta^i$, $g=\sum_{i=0}^{m-1}g_i \theta^i$, for some $f_i, g_i \in \Fm$. The addition is defined by
$f+g =\sum_{i=0}^{m-1} (f_i+g_i)\theta^i$; the composition is defined on monomials by $(f_i\theta^i)\circ (g_j\theta^j)=f_i\theta^i(g_j)\theta^{i+j}$, and then extended by associativity and distributivity. In this framework, we also have that 
$$\Fm[G] \cong \End_{\Fq}(\Fm)=\{ \phi:\Fm \rightarrow \Fm \mid \phi \mbox{ is } \Fq\mbox{-linear} \}.$$

The importance of this point of view is that it can be generalized to fields of any characteristic, provided that the field extension has a cyclic Galois group.
This was the key point of the works by Roth~\cite[Section~VI]{roth1996tensor} and Augot, Loidreau and Robert \cite{augot2013rank,augot2014generalization,augot2018generalized}.
For a deeper understanding on this topic over finite fields, the interested reader is referred to \cite{wu2013linearized}. A brief summary of this for general fields can be also found in \cite[Chapter 4]{neri2019PhD}.  This explains why our notation will follow the skew group algebra setting.

\subsection{Rank-Metric and MRD Codes}\label{ssec:rank-metric_codes}

We now explain the basics of linear rank-metric codes and their equivalence maps, and define maximum rank distance (MRD) codes.

\begin{definition}
The \emph{rank distance} between $u,v \in \Fm^n$ is defined as $$\drk(u,v):=\rk_q(u-v).$$ A \emph{linear (vector) rank-metric code} is an $\Fm$-linear subspace $C \subseteq \Fm^n$. If $C \neq \{0\}$ is a linear rank-metric code, then the \emph{minimum distance} of $C$ is the integer 
$$d(C):= \min\{\drk(u,v) \mid u,v \in C, \ u \neq v\}= \min\{\rk_q(u) \mid u \in C, \ u \neq 0\} .$$
\end{definition}

It is easy to verify  that the map $\drk: \Fm^n \times \Fm^n \rightarrow \mathbb N$ defines a metric on $\Fm^n$. From now on we will refer to a linear (vector) rank-metric code $C \subseteq \Fm^n$ of dimension $k$  as an $\Fmk$ code. When the minimum distance  $d=d(C)$ is known, we will call it an $\Fmkd$ code.

Let $V,W$ be vector spaces over a field $\Fqm$. Recall that a map $\varphi :V \longrightarrow W$ is called \emph{semilinear}, if there exists $\tau \in \Aut(\Fqm)$ such that, for all $x, y \in V$ and $\lambda \in \Fqm$,  it holds that
\begin{enumerate}
\item $f(x+y)=f(x)+f(y)$.
\item $f(\lambda x)=\tau(\lambda)f(x)$.
\end{enumerate}
If $V=W$, then the set of invertible semilinear maps is a group, called \emph{general semilinear group} and denoted by $\Gamma\mathrm{L}(V)$. Furthermore, $\Gamma\mathrm{L}(V) \cong \GL(V)  \rtimes \Aut(\F)$.

\begin{definition}
Two  rank-metric codes $C,C' \subseteq \Fm^n$ are \emph{(semilinearly) equivalent} if there exists an $\Fm$-semilinear isometry (i.e., distance-preserving mapping) $\varphi: (\Fm^n,\drk) \to (\Fm^n,\drk)$  such that $\varphi(C)=C'$.  If $C, C'\in \Fm^n$ are equivalent 
rank-metric codes, then we will write $C\sim C'$.
\end{definition}

The semilinear rank isometries on $\Fm^n$ are induced by the semilinear
isometries on $\mat$ (see
\cite{be03,mo14,wan1996geometry}) and  are characterized as follows.
\begin{theorem}\cite[Corollary 1]{be03}\cite[Proposition~2]{mo14}\label{isometries}
  The semilinear $\Fq$-rank isometries on $\Fm^n$ are of the
  form
  \[(\lambda, A, \tau) \in \left( \Fm^* \times \GL_n(q) \right)
  \rtimes \Aut(\Fm) ,\] acting on $ \Fm^n$  via
  \[(v_1,\dots,v_n) (\lambda, A, \tau) = (\tau(\lambda
  v_1),\dots,\tau(\lambda v_n)) A .\] In particular, if
  ${C}\subseteq \Fm^n$ is a rank-metric  code with minimum rank
  distance $d$, then
 ${C}' = \tau(\lambda {C}) A $
  is a rank-metric  code with minimum rank distance $d$.
\end{theorem}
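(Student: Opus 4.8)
The plan is to prove the two directions of the characterization. That every map of the displayed form is a semilinear rank isometry is a direct verification: a componentwise field automorphism $\tau$ carries $\langle v_1,\dots,v_n\rangle_{\Fq}$ onto $\langle \tau(v_1),\dots,\tau(v_n)\rangle_{\Fq}$ (since $\tau$ fixes $\Fq$ setwise), multiplication by $\lambda\in\Fm^*$ scales this $\Fq$-space without changing its dimension, and right multiplication by $A\in\GL_n(\Fq)$ performs an invertible $\Fq$-linear recombination of the coordinates; hence $\rk_q$ is preserved, and therefore so is the minimum distance of any code, which also gives the ``in particular'' statement.

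For the converse, let $\varphi$ be an $\Fm$-semilinear rank isometry with associated automorphism $\tau\in\Aut(\Fm)$. The first step is to remove $\tau$: composing $\varphi$ with the componentwise application of $\tau^{-1}$ (again a rank isometry, by the computation above) yields an $\Fm$-\emph{linear} bijection $\psi$ of $\Fm^n$, which, being linear and distance-preserving, satisfies $\rk_q(\psi(v))=\rk_q(v)$ for all $v$. Writing $\psi(v)=vB$ with $B\in\GL_n(\Fm)$ and rows $b_1,\dots,b_n$, it suffices to show $B=\beta_0 A$ for some $\beta_0\in\Fm^*$ and $A\in\GL_n(\Fq)$. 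The key point is that $\psi(\Fq^n)$ is an $n$-dimensional $\Fq$-subspace of $\Fm^n$ all of whose nonzero vectors have $q$-rank $1$, and that any such subspace equals $\beta_0\Fq^n$ for a single scalar $\beta_0$. To prove this I would observe that $b_i=\psi(e_i)=\beta_i c_i$ with $\beta_i\in\Fm^*$ and $c_i\in\Fq^n\setminus\{0\}$, that invertibility of $B$ forces the $c_i$ to be pairwise non-proportional (otherwise some $b_j$ would lie in $\Fm b_i$), and that for $\Fq$-linearly independent $c,c'\in\Fq^n$ one has $\rk_q(\beta c+\beta' c')\ge\dim_{\Fq}\langle\beta,\beta'\rangle_{\Fq}$, as one sees by restricting $\beta c+\beta' c'$ to two coordinates on which the corresponding $2\times2$ submatrix of $(c\mid c')$ is invertible over $\Fq$. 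Applying the last inequality to sums $b_i+w$ with $w\in\psi(\Fq^n)$, and using that $\psi(\Fq^n)$ is closed under addition, one concludes that every element of $\psi(\Fq^n)$ is an $\Fm$-multiple of a vector in $\Fq^n$ with the \emph{same} scalar class $[\beta_1]$, so $\psi(\Fq^n)=\beta_1\Fq^n$; put $\beta_0:=\beta_1$.

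The rest is routine. Replacing $\psi$ by $\beta_0^{-1}\psi$ (another $\Fm$-linear rank isometry), we may assume $\psi(\Fq^n)=\Fq^n$, so $\psi|_{\Fq^n}$ is an $\Fq$-linear automorphism of $\Fq^n$ and hence coincides with right multiplication by some $A\in\GL_n(\Fq)$; since the $\Fm$-linear maps $v\mapsto vB$ and $v\mapsto\beta_0 vA$ agree on the standard basis $e_1,\dots,e_n$, they agree everywhere, and $B=\beta_0 A$. Undoing the reduction yields $\varphi(v)=\tau(\lambda v)A$ with $\lambda\in\Fm^*$, $A\in\GL_n(\Fq)$, $\tau\in\Aut(\Fm)$ (the precise interleaving of $\tau$ and $A$ being a harmless reparametrization, as $A^\tau$ again lies in $\GL_n(\Fq)$), which is the asserted form. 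This argument is in essence the one behind \cite{be03,mo14} and the classical geometry-of-matrices results of Hua and Wan, see also \cite{wan1996geometry}. The genuinely delicate point---and the step I expect to cost the most work---is the structural claim that an $n$-dimensional $\Fq$-subspace of rank-$\le1$ vectors must equal $\beta_0\Fq^n$; the elementary argument sketched above has the advantage of being uniform in $n$ (in particular it covers $n=2$ directly, where the alternative route---showing that $\psi$ induces a collineation of $\mathbb{P}^{n-1}(\Fq)$ and invoking the fundamental theorem of projective geometry---does not apply), with the case $n=1$ being trivial.
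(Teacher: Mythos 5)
The paper does not actually prove this statement: it is imported verbatim from \cite[Corollary 1]{be03} and \cite[Proposition~2]{mo14}, so there is no in-paper argument to compare yours against. Judged on its own merits, your proof is correct and follows the classical route: the forward direction is the routine verification you give; for the converse you strip off $\tau$ to reduce to an $\Fm$-linear rank isometry $v\mapsto vB$, and then classify these by analyzing the image of $\Fq^n$. The inequality $\rk_q(\beta c+\beta' c')\ge\dim_{\Fq}\langle\beta,\beta'\rangle_{\Fq}$ for $\Fq$-independent $c,c'\in\Fq^n$ is correctly justified via an invertible $2\times 2$ minor, and applying it to the sums $b_1+b_i$ does force all the scalars $\beta_i$ into the single class $\beta_1\Fq^*$, whence $B=\beta_0A$ with $A\in\GL_n(q)$ and the reassembly $\varphi(v)=\tau(\lambda v)A$ goes through (using that $\tau$ fixes $A$ entrywise).

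One caution: the sentence announcing your ``key point'' overstates the lemma you actually need. It is \emph{not} true that every $n$-dimensional $\Fq$-subspace of $\Fm^n$ all of whose nonzero vectors have $q$-rank $1$ equals $\beta_0\Fq^n$: since $n\le m$, the space $V\cdot e_1$ with $V\subseteq\Fm$ an $n$-dimensional $\Fq$-subspace is a counterexample (every nonzero vector has $q$-rank $1$, yet the space contains no two $\Fm$-linearly independent vectors, so it cannot be $\beta_0\Fq^n$ for $n\ge 2$). What rescues the argument is precisely the extra hypothesis you do use in the detailed step, namely that $\psi(\Fq^n)$ is spanned over $\Fq$ by the $\Fm$-linearly independent rows $b_i$ of the invertible matrix $B$, which forces the $c_i$ to be pairwise non-proportional and makes the pairwise application of the rank inequality legitimate. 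So the proof as executed is sound; only the informal formulation of the structural claim should be weakened to include that independence hypothesis.
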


Observe that we can always reduce to the case $\lambda=1$, because if  $C$ and $C'$ are $\Fmk$ codes and ${C}' = \tau(\lambda {C}) A $, then by $\Fm$-linearity, we also have $\tau(\lambda {C}) A=\tau(\lambda)\tau(C) A=\tau(C)A$.
Hence, $\lambda$ is only relevant when the considered codes are not linear over $\Fqm$.

Recall that the \emph{standard inner-product} (or \emph{dot product}) of $u,v \in \Fm^n$ is $\langle u ; v \rangle:=\sum_{i=1}^n u_iv_i$. It is well-known that the map
$(u,v) \mapsto \langle u; v \rangle$ defines an $\Fm$-bilinear, symmetric and nondegenerate form on $\Fm^n$.

\begin{definition}\label{def:vectordual}
The \emph{dual} of a $\Fmk$ (vector) rank-metric code $C$  is $$C^\perp:=\{u \in \Fm^n \mid \langle u; v \rangle =0 \mbox{ for all } v \in C\}.$$
Note that $C^\perp$ is an $[n,n-k]_{q^m}$ code.
\end{definition}

The following result is the rank-metric analogue of the Singleton bound for codes with the Hamming metric.

\begin{theorem}\cite[Theorem 5.4]{de78}\label{singbound}
Let $d,k,n,m$ be  positive integers such that $0<k\leq n\leq m$, and $C$ be an $\Fmkd$ code. Then 
$$d \leq n-k+1.$$
\end{theorem}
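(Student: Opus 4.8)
The plan is to prove the rank-metric Singleton bound by reducing it to the Hamming-metric Singleton bound applied to a suitable shortened code, and to do this I would exploit the fact that the $q$-rank of a vector never exceeds its Hamming weight. First I would observe that for any $u \in \Fm^n$, we have $\rk_q(u) \le \mathrm{wt}_H(u)$, since a vector of Hamming weight $w$ has at most $w$ nonzero coordinates, so its $\Fq$-span has dimension at most $w$. In particular, $d(C) \le d_H(C)$, the minimum Hamming distance of $C$ viewed as an $\Fm$-linear code of length $n$ and dimension $k$. Applying the classical Singleton bound $d_H(C) \le n - k + 1$ would then yield the claim directly, but this is perhaps \emph{too} quick and it is instructive to give a self-contained argument in the spirit of the paper, so let me describe that route as well.

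The self-contained approach: suppose $C$ has minimum rank distance $d$. Consider the projection $\pi : \Fm^n \to \Fm^{n-d+1}$ onto the first $n-d+1$ coordinates. I claim $\pi$ restricted to $C$ is injective. Indeed, if $u \in C$ lies in the kernel, then $u$ has at least its first $n-d+1$ coordinates equal to zero, hence Hamming weight at most $d-1$, hence $\rk_q(u) \le d - 1 < d$; by the definition of $d$ as $\min\{\rk_q(u) \mid u \in C, u \neq 0\}$ (Theorem~\ref{singbound}'s hypothesis guarantees $C$ is a nonzero linear code once $k \ge 1$), this forces $u = 0$. Since $\pi|_C$ is an injective $\Fm$-linear map into $\Fm^{n-d+1}$, we get $k = \dim_{\Fm} C \le n - d + 1$, which rearranges to $d \le n - k + 1$. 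The constraints $0 < k \le n \le m$ are exactly what is needed to ensure that the code is nonzero and that the rank metric is well-defined on $\Fm^n$ (so that vectors can genuinely attain rank up to $\min\{n,m\} = n$), so these hypotheses enter only to make the statement meaningful rather than in an essential computational way.

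The only genuinely load-bearing fact here is the inequality $\rk_q(u) \le \mathrm{wt}_H(u)$, which I regard as the main (though mild) obstacle worth stating carefully: writing $u = (u_1,\ldots,u_n)$, the $\Fq$-support $\rsu_q(u) = \langle u_1,\ldots,u_n\rangle_{\Fq}$ is spanned by the (at most $\mathrm{wt}_H(u)$) nonzero entries, so its dimension, which is $\rk_q(u)$ by Definition~\ref{def:suppq}, is at most $\mathrm{wt}_H(u)$. Everything else is a routine application of the pigeonhole/projection argument underlying the Hamming Singleton bound. I would present the short version — cite the Hamming Singleton bound and the weight comparison — in the main text, and could relegate the puncturing argument to a remark if a fully elementary proof is desired. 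The reference \cite[Theorem 5.4]{de78} is attributed to Delsarte in the statement, so I would simply reconstruct his short argument rather than invoke external machinery.
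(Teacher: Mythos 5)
Your argument is correct. The paper itself offers no proof of this statement --- it is quoted verbatim as a cited result of Delsarte \cite[Theorem 5.4]{de78} --- so there is no internal proof to compare against; what you have reconstructed is the standard derivation. Both of your routes are sound: the inequality $\rk_q(u)\leq \mathrm{wt}_H(u)$ holds because $\rsu_q(u)$ is spanned by the nonzero entries of $u$, which immediately gives $d(C)\leq d_H(C)$ (take a codeword of minimum Hamming weight and bound its $q$-rank), and combining this with the Hamming Singleton bound yields the claim; the self-contained puncturing argument is equally valid, since a codeword vanishing on the first $n-d+1$ coordinates has Hamming weight, hence $q$-rank, at most $d-1<d$, forcing it to be zero, so the projection onto those coordinates is injective on $C$ and $k\leq n-d+1$. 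Your remark that the hypotheses $0<k\leq n\leq m$ are not load-bearing for the inequality itself (the condition $n\leq m$ matters only for attainability by $\Fm$-linear codes, as the paper notes after Definition 2.12) is also accurate.
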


\begin{definition}
A non-zero code $C$ is a \emph{maximum rank distance (MRD) code} if it meets the bound of Theorem \ref{singbound}.
\end{definition}

It was shown in \cite{de78, ga85a} that MRD codes exist for any parameter set with $n \leq m$. Moreover, the condition $n\leq m$ is also necessary for linear codes. Therefore we assume $n\leq m$ throughout the paper.

\subsection{Known MRD Constructions}

The first construction of MRD codes, generally known as Gabidulin codes, was found independently by Delsarte \cite{de78}, Gabidulin \cite{ga85a}, and Roth \cite{ro91}. It was then generalized in~\cite{roth1996tensor,ks05}. 

\begin{definition}\cite{de78,ga85a,ro91,roth1996tensor,ks05}
Let $k,n,m$ be positive integers such that $1\leq k \leq n \leq m$ and let $\theta$ be a generator of $G=\Gal(\Fm/\Fq)$. We denote by $\mathcal G_{k,\theta}$  the $\Fm$-subspace of the skew group algebra  $\Fm[\theta]=\Fm[G]$ generated by the first $k$ powers of $\theta$, that is
$$ \G_{k,\theta}:= \left\{f_0\mathrm{id}+f_1\theta+\ldots+f_{k-1}\theta^{k-1} \mid f_i\in \Fm\right\}.$$
Let $g=(g_1,\ldots,g_n)\in \Fm^n$ such that $\rk_q(g)=n$. 
The \emph{$\theta$-Gabidulin code} $\G_{k,\theta}(g)$ is defined as
$$\G_{k,\theta}(g):=\left\{(f(g_1),\ldots,f(g_n)) \mid f \in \mathcal G_{k,\theta} \right\}.$$
\end{definition}

\begin{proposition}\cite{ga85a,ks05}\label{prop:GabisMRD}
The  $\theta$-Gabidulin code $\G_{k,\theta}(g)$ has cardinality $q^{km}$ and minimum distance $d=n-k+1$, i.e., $\G_{k,\theta}(g)$ is an MRD code.
\end{proposition}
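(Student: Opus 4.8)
The plan is to prove the two assertions separately: the cardinality falls out of the Moore-matrix rank formula (Proposition~\ref{cor:rankMoore}), while the minimum distance is obtained by combining the Singleton bound (Theorem~\ref{singbound}) with a root-counting bound for linearized polynomials. For the cardinality, I would first record that, writing $f=\sum_{i=0}^{k-1}f_i\theta^i\in\G_{k,\theta}$, one has $f(g_j)=\sum_{i=0}^{k-1}f_i\theta^i(g_j)$, so the evaluation vector satisfies $(f(g_1),\dots,f(g_n))=(f_0,\dots,f_{k-1})\,M_{k,\theta}(g)$. Hence $\G_{k,\theta}(g)$ is exactly the $\Fm$-row space of the Moore matrix $M_{k,\theta}(g)$, and the evaluation map $\G_{k,\theta}\to\Fm^n$ is $\Fm$-linear. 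Since $\rk_q(g)=n\ge k$, Proposition~\ref{cor:rankMoore} gives that $g,\theta(g),\dots,\theta^{k-1}(g)$ are $\Fm$-linearly independent, so $\rk M_{k,\theta}(g)=k$ and the evaluation map is injective. Consequently $\dim_{\Fm}\G_{k,\theta}(g)=k$ and $|\G_{k,\theta}(g)|=|\G_{k,\theta}|=q^{km}$.

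For the distance, Theorem~\ref{singbound} already gives $d\le n-k+1$, so it remains to show $\rk_q(c)\ge n-k+1$ for every nonzero $c\in\G_{k,\theta}(g)$. Write $c=(f(g_1),\dots,f(g_n))$ with $f\ne 0$ (using the injectivity just established). Viewing $f$ as an element of $\End_{\Fq}(\Fm)$, it corresponds to a nonzero $q$-polynomial of $q$-degree at most $k-1$, whose set of roots in $\Fm$ is precisely $\ker f$; this is an $\Fq$-subspace of size at most $q^{k-1}$, so $\dim_{\Fq}\ker f\le k-1$. Exploiting the $\Fq$-linearity of $f$, consider the $\Fq$-linear map $\mu:\Fq^n\to\Fm$, $\mu(a)=\sum_{i=1}^{n}a_i c_i=f\!\left(\sum_{i=1}^{n}a_i g_i\right)$. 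Since $\rk_q(g)=n$, the assignment $a\mapsto\sum_i a_i g_i$ is an $\Fq$-isomorphism onto the $n$-dimensional space $\langle g_1,\dots,g_n\rangle_{\Fq}$, and it carries $\ker\mu$ onto $\langle g_1,\dots,g_n\rangle_{\Fq}\cap\ker f$; hence $\dim_{\Fq}\ker\mu\le\dim_{\Fq}\ker f\le k-1$. By rank--nullity, $\rk_q(c)=\dim_{\Fq}\langle c_1,\dots,c_n\rangle_{\Fq}=n-\dim_{\Fq}\ker\mu\ge n-k+1$. Together with the Singleton bound this forces $d=n-k+1$, i.e., $\G_{k,\theta}(g)$ is MRD.

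There is no genuinely hard step here, since the statement is classical; the only points that need care are the root bound $\dim_{\Fq}\ker f\le k-1$ for a nonzero $q$-polynomial of $q$-degree $<k$, and the bookkeeping that relates $\rk_q(c)$ to $\dim_{\Fq}\ker f$ through the $\Fq$-linearity of $f$. One can reach the same inequality more geometrically: pick $A\in\GL_n(\Fq)$ moving the $\Fq$-dependent coordinates of $c$ to $0$; then $cA=(f((gA)_1),\dots,f((gA)_n))$ with $\rk_q(gA)=\rk_q(g)=n$, so $f$ vanishes on $n-\rk_q(c)$ vectors that form part of an $\Fq$-basis of $\langle g_1,\dots,g_n\rangle_{\Fq}$, whence $n-\rk_q(c)\le\dim_{\Fq}\ker f\le k-1$.
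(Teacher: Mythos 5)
The paper gives no proof of this proposition at all---it is imported from \cite{ga85a,ks05}---so your argument has to stand on its own. Its overall structure is the standard one and is mostly sound: the cardinality half (evaluation map as right-multiplication by the Moore matrix, injectivity from Proposition~\ref{cor:rankMoore}) is complete and correct, and the rank--nullity bookkeeping relating $\rk_q(c)$ to $\dim_{\Fq}\ker f$ is fine.

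There is, however, one genuine gap, and it sits exactly at the step you yourself flag as the crux: the bound $\dim_{\Fq}\ker f\le k-1$. Your justification---$f$ ``corresponds to a nonzero $q$-polynomial of $q$-degree at most $k-1$,'' hence has at most $q^{k-1}$ roots---is only valid when $\theta$ is the Frobenius $x\mapsto x^{q}$. The proposition is stated for an arbitrary generator $\theta=\theta_s$ of $\Gal(\Fm/\Fq)$ with $\gcd(s,m)=1$, in which case $f=\sum_{i=0}^{k-1}f_i x^{q^{si}}$ has $q$-degree $q^{s(k-1)}$, and the naive root count only yields $q^{s(k-1)}$ roots, which is useless; this is precisely the point that separates the original construction of \cite{ga85a} from its generalization in \cite{ks05}. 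The bound $\dim_{\Fq}(\ker f\cap\Fm)\le k-1$ is nevertheless true, but it needs a different argument: for instance, if $v_1,\dots,v_k\in\ker f$ were $\Fq$-linearly independent, then by Proposition~\ref{cor:rankMoore} the $k\times k$ Moore matrix $M_{k,\theta}(v)$ with $v=(v_1,\dots,v_k)$ would be invertible, while $(f_0,\dots,f_{k-1})\,M_{k,\theta}(v)=(f(v_1),\dots,f(v_k))=0$ would force $f=0$, a contradiction. (Equivalently, one can cite \cite[Corollary 4.13]{lam1988vandermonde}.) With that substitution, the remainder of your argument---including the more geometric variant in your final paragraph, which also silently relies on the same kernel bound---goes through verbatim.
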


The following result gives an explicit expression for the dual of a $\theta$-Gabidulin code, which is in turn a $\theta$-Gabidulin code.

\begin{proposition}\cite[Sections~2 and 4]{ga85a}\cite[Subsection IV.C]{ks05}\label{prop:dualGab}
 Let  $C=\G_{k,\theta}(g)$ be a $\theta$-Gabidulin code. Then 
 $$C^\perp=\G_{n-k, \theta}(g') ,$$ 
 where $g'$ is any non-zero vector in the code $\G_{n-1,\theta}(\theta^{-(n-k-1)}(g))^\perp$. Moreover, $\rk_q(g')=n$.
\end{proposition}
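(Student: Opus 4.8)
The plan is to realize the Gabidulin code $\G_{n-1,\theta}(\theta^{-(n-k-1)}(g))$ as a hyperplane in $\Fm^n$ (it has dimension $n-1$), so that its dual is a line $\langle g'\rangle_{\Fm}$, and then to verify by a direct pairing computation that $\G_{k,\theta}(g)$ is orthogonal to $\G_{n-k,\theta}(g')$. First I would fix notation: set $h := \theta^{-(n-k-1)}(g)$, so $\rk_q(h)=\rk_q(g)=n$ since $\theta^{-(n-k-1)}$ is a field automorphism and hence preserves $\Fq$-dimension of the span. By Proposition \ref{prop:GabisMRD} (or directly Proposition \ref{cor:rankMoore}), $\dim_{\Fm}\G_{n-1,\theta}(h) = n-1$, so its dual is one-dimensional; pick any nonzero $g'$ spanning it. The first thing to record is that $\rk_q(g')=n$: if $\rk_q(g') = r < n$, then by Proposition \ref{cor:rankMoore} the vectors $g',\theta(g'),\dots,\theta^{r-1}(g')$ span an $r$-dimensional $\Fm$-space $U$ that is $\theta$-invariant, and one checks that $U^{\perp}$ is then also $\theta$-invariant of dimension $n-r > 0$ and lies inside $\G_{n-1,\theta}(h)^{\perp} = \langle g'\rangle$ composed with... — more cleanly: $\rk_q(v) = n$ iff $v,\theta(v),\dots,\theta^{n-1}(v)$ is an $\Fm$-basis of $\Fm^n$, i.e. iff the Moore matrix $M_{n,\theta}(v)$ is invertible, so I would argue that $g'$ must generate a full-rank Moore matrix because otherwise $\G_{n-1,\theta}(g')$ would have dimension $<n-1$ and could not be the full dual of the hyperplane (its dual has dimension $n-1$, forcing $\G_{n-1,\theta}(g') = \G_{n-1,\theta}(h)^{\perp}$... ). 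The quickest route: I'll show directly $g' \in$ something of full rank by the orthogonality computation below, which forces it.

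The core is the orthogonality $\langle f(g) ; f'(g')\rangle = 0$ for all $f \in \G_{k,\theta}$, $f' \in \G_{n-k,\theta}$, which amounts to checking it on the monomial basis: for $0 \le i \le k-1$ and $0 \le j \le n-k-1$ one needs
\[
\sum_{s=1}^{n} \theta^{i}(g_s)\,\theta^{j}(g'_s) = 0 .
\]
Applying $\theta^{-i}$ (a field automorphism, hence additive and fixing nothing that matters for vanishing) this is equivalent to $\sum_{s} g_s\,\theta^{j-i}(g'_s) = 0$ where $j - i$ ranges over $\{-(k-1),\dots,n-k-1\}$, a set of $n-1$ consecutive integers. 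Equivalently, replacing $g'_s$ by $\theta^{i_0}(g'_s)$ for a suitable shift, this says: $g$ is orthogonal to $\theta^{t}(g')$ for $n-1$ consecutive values of $t$. Now recall $g' \perp \G_{n-1,\theta}(h)$ means $\sum_s h_s \theta^{t}(g'_s) = 0$ for $t = 0,\dots,n-2$, i.e. $h = \theta^{-(n-k-1)}(g)$ is orthogonal to $g', \theta(g'),\dots,\theta^{n-2}(g')$; applying $\theta^{n-k-1}$ to each such equation gives that $g$ is orthogonal to $\theta^{n-k-1}(g'), \theta^{n-k}(g'), \dots, \theta^{n-k-1+n-2}(g')$, that is to $\theta^{t}(g')$ for $t \in \{n-k-1, \dots, 2n-k-3\}$ — a run of $n-1$ consecutive shifts — and I would check this run matches the required one after absorbing the index shift, completing the proof that $\G_{k,\theta}(g) \perp \G_{n-k,\theta}(g')$. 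Since $\G_{k,\theta}(g)$ has dimension $k$, its dual has dimension $n-k$, and $\G_{n-k,\theta}(g')$ is contained in that dual; so it suffices that $\dim_{\Fm}\G_{n-k,\theta}(g') = n-k$, which by Proposition \ref{cor:rankMoore} holds precisely when $\rk_q(g') \ge n-k$. This is where I would close the loop on the rank claim: if $\rk_q(g') = r$, then $\dim \G_{n-1,\theta}(g') = \min\{n-1,r\} = r$ (as $r \le n$), but $\G_{n-1,\theta}(g') \subseteq \langle g'\rangle^{\perp}$... no — rather, I use that $\langle g' \rangle_\Fm = \G_{n-1,\theta}(h)^\perp$ has the property that its own dual is $\G_{n-1,\theta}(h)$, an $(n-1)$-dimensional $\theta$-stable space; a standard lemma (which I'd state and prove in one line from Proposition \ref{cor:rankMoore}) says a nonzero $v$ with $\rk_q(v) = r$ has $\langle v, \theta(v),\dots\rangle_\Fm$ of dimension $r$, and one shows the perp of a $\theta$-stable space is $\theta$-stable, forcing $r = n$ here.

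The main obstacle I anticipate is purely bookkeeping: getting the index shift in the exponent of $\theta$ exactly right so that the run of $n-1$ consecutive orthogonality relations coming from $g' \perp \G_{n-1,\theta}(\theta^{-(n-k-1)}(g))$ lines up precisely with the $n-1$ relations $\{-(k-1),\dots,n-k-1\}$ needed for $\G_{k,\theta}(g) \perp \G_{n-k,\theta}(g')$ — the choice of the twist $\theta^{-(n-k-1)}$ in the statement is engineered exactly for this, and verifying that is the crux. The rank-$n$ claim for $g'$ is the only genuinely non-computational point, and it rests on the small lemma that orthogonal complements of $\theta$-invariant subspaces are $\theta$-invariant together with Proposition \ref{cor:rankMoore}; I would isolate it as a short sublemma. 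Everything else is linear algebra over $\Fm$ and properties of the field automorphism $\theta$.
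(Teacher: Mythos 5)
The paper does not prove this proposition; it is imported from \cite{ga85a} and \cite{ks05}, so there is no internal proof to compare against. Your overall strategy --- realize $\G_{n-1,\theta}(\theta^{-(n-k-1)}(g))$ as a hyperplane, take $g'$ spanning its one-dimensional dual, check orthogonality of the monomial generators, and finish with a dimension count --- is the standard and correct route, and the index bookkeeping does close. Writing $h=\theta^{-(n-k-1)}(g)$, the conditions $\langle \theta^u(h);g'\rangle=0$ for $u=0,\dots,n-2$ become, after applying $\theta^{-u}$ and then $\theta^{n-k-1}$, the statements $\langle g;\theta^{t}(g')\rangle=0$ for $t\in\{1-k,\dots,n-k-1\}$, which is exactly the set of exponents $j-i$ with $0\le i\le k-1$, $0\le j\le n-k-1$ that you need. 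Be aware, though, that your intermediate run $\{n-k-1,\dots,2n-k-3\}$ has the shift going the wrong way: applying $\theta^{-u}$ to $\langle\theta^u(h);g'\rangle=0$ produces $\theta^{-u}(g')$, not $\theta^{u}(g')$, so the run before shifting is $\{-(n-2),\dots,0\}$. With the correct sign the two runs coincide; as written they do not, so this is a repairable slip rather than a failure of the method, but it must be repaired.

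The genuine gap is the claim $\rk_q(g')=n$, which you correctly single out as the only non-computational point but never actually close. Your first attempt has the right shape but the wrong containment: if $\rk_q(g')=r<n$, then $U:=\langle g',\theta(g'),\dots,\theta^{r-1}(g')\rangle_{\Fm}$ is $\theta$-invariant of dimension $r$ by Proposition \ref{cor:rankMoore}, hence $U^\perp$ is a nonzero $\theta$-invariant subspace contained in $\langle g'\rangle^\perp=\G_{n-1,\theta}(h)$ --- not in $\langle g'\rangle$, as you wrote. To derive the contradiction one still needs the fact (available in the paper as part 3 of Proposition \ref{prop:propertiesSi}, via \cite[Lemma 4.5]{ho16}) that a nonzero $\theta$-invariant subspace has a basis of vectors in $\Fq^n$, hence contains a nonzero codeword of $q$-rank $1$; but $\G_{n-1,\theta}(h)$ is MRD with minimum distance $2$ by Proposition \ref{prop:GabisMRD}, so it contains no such codeword. (More directly: a nonzero $\lambda\in\Fq^n$ orthogonal to $g'$ would lie in $\G_{n-1,\theta}(h)$ and satisfy $\theta(\lambda)=\lambda$; writing $\lambda=\sum_{u=0}^{n-2}c_u\theta^u(h)$ and comparing coefficients against the linearly independent vectors $h,\dots,\theta^{n-1}(h)$ forces all $c_u=0$.) Without some such argument, neither the assertion $\rk_q(g')=n$ nor the fact that $\G_{n-k,\theta}(g')$ has dimension $n-k$ --- on which your concluding dimension count depends --- is established.
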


Gabidulin codes are not the only known MRD codes. There are some other families of codes which attain the Singleton-like bound of Theorem \ref{singbound}, that have been discovered in the last years. Here we give an overview of some of these families.

\begin{definition}\cite{sheekey2016new}\label{def:TGabcode}
Let $k,n,m,b$ be positive integers such that $1\leq k \leq n \leq m$ and $0 \leq b <m$. Let  $\theta$ be a generator of $G=\Gal(\Fm/\Fq)$  and $\eta \in \Fm$. We denote by $\HShee{k,\theta}{\eta,b}$  the $\Fq$-subspace of the skew group algebra  $\Fm[\theta]=\Fm[G]$ given by
$$ \HShee{k,\theta}{\eta,b}:= \left\{f_0\mathrm{id}+f_1\theta+\ldots+f_{k-1}\theta^{k-1}+\eta\theta^b(f_0)\theta^k \mid f_i\in \Fm\right\}.$$
Moreover, let $g=(g_1,\ldots,g_n)\in \Fm^n$ such that $\rk_q(g)=n$ and suppose that $\Norm(\eta)\neq(-1)^{km}$. The \emph{$\theta$-twisted Gabidulin code} with parameters $\eta$ and $b$  is defined as
$$\TGabShee{k,\theta}{\eta,b}{g}:=\left\{(f(g_1),\ldots,f(g_n)) \mid f \in \HShee{k,\theta}{\eta,b} \right\}.$$
\end{definition}

\begin{proposition}\cite{sheekey2016new}\label{prop:TGabisMRD}
Let $g \in \Fm^n$ such that $\rk_q(g)=n$ and $\eta\in \Fm$ such that $\Norm(\eta)\neq(-1)^{km}$. The  $\theta$-twisted Gabidulin code $\TGabShee{k,\theta}{\eta,b}{g}$ has cardinality $q^{km}$ and minimum distance $d=n-k+1$, i.e., $\TGabShee{k,\theta}{\eta,b}{g}$ is an MRD code.
\end{proposition}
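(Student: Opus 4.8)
The plan is to deduce both assertions — cardinality $q^{km}$ and minimum distance $n-k+1$ — from the single structural fact that \emph{every nonzero $f\in\HShee{k,\theta}{\eta,b}$, viewed as an $\Fq$-linear endomorphism of $\Fm$, satisfies $\dim_{\Fq}\ker f\le k-1$}. Granting this, set $U:=\langle g_1,\dots,g_n\rangle_{\Fq}$, so that $\dim_{\Fq}U=n$ because $\rk_q(g)=n$. For any $f\in\HShee{k,\theta}{\eta,b}$ the codeword $(f(g_1),\dots,f(g_n))$ has $q$-rank $\dim_{\Fq}f(U)=n-\dim_{\Fq}(U\cap\ker f)$, by $\Fq$-linearity of $f$ together with rank--nullity. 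For $f\ne0$ this is at least $n-(k-1)$, which simultaneously shows that the evaluation map is injective (so the code has size $|\HShee{k,\theta}{\eta,b}|=q^{km}$) and that $d\ge n-k+1$; equality then follows from the Singleton bound of Theorem~\ref{singbound}.

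To establish the structural fact I would first dispose of the case $\eta=0$, where the code is a $\theta$-Gabidulin code and the claim is contained in Proposition~\ref{prop:GabisMRD}; so assume $\eta\ne0$. Write $f$ as the $q$-polynomial $f(x)=\sum_{i=0}^{k-1}f_ix^{q^i}+\eta f_0^{q^b}x^{q^k}$ and suppose, for contradiction, that $f\ne0$ with $\dim_{\Fq}\ker f\ge k$. If $f_0=0$ then $f$ has $q$-degree at most $q^{k-1}$, forcing $\dim_{\Fq}\ker f\le k-1$, a contradiction; hence $f_0\ne0$, the coefficient $\eta f_0^{q^b}$ of $x^{q^k}$ is nonzero, $f$ has degree exactly $q^k$, and therefore $V:=\ker f$ is a $k$-dimensional $\Fq$-subspace of $\Fm$. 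Since $|V|=q^k$ already equals the degree of $f$, the elements of $V$ exhaust the roots of $f$ (each simple), so $f(x)=\eta f_0^{q^b}\prod_{v\in V}(x-v)$. Comparing the coefficients of $x$ on both sides yields
\[
    f_0=\eta\, f_0^{q^b}\prod_{v\in V\setminus\{0\}}(-v).
\]

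The last step, which I expect to be the only genuinely technical point, is to apply the norm $\Norm$ to this relation. Since $\Norm(\theta^b(f_0))=\Norm(f_0)\ne0$, cancelling $\Norm(f_0)$ gives $1=\Norm(\eta)\cdot\Norm\!\bigl(\prod_{v\in V\setminus\{0\}}(-v)\bigr)$, so the proof reduces to the identity
\[
    \Norm\Bigl(\prod_{v\in V\setminus\{0\}}(-v)\Bigr)=(-1)^{km}
\]
for an arbitrary $k$-dimensional $\Fq$-subspace $V\subseteq\Fm$. I would prove this by partitioning $V\setminus\{0\}$ into the $\tfrac{q^k-1}{q-1}$ orbits of the scaling action of $\Fq^*$: on a single orbit $\{cw:c\in\Fq^*\}$ one computes $\Norm\!\bigl(\prod_{c\in\Fq^*}(-cw)\bigr)=\bigl(\prod_{c\in\Fq^*}(-c)\bigr)^{m}\Norm(w)^{q-1}=(-1)^{m}$, using $\prod_{c\in\Fq^*}(-c)=-1$ in $\Fq$ and $\Norm(w)^{q-1}=1$ (as $\Norm(w)\in\Fq^*$); multiplying over all orbits gives $\bigl((-1)^{m}\bigr)^{(q^k-1)/(q-1)}=(-1)^{km}$, since $(q^k-1)/(q-1)\equiv k\pmod 2$ when $q$ is odd and everything equals $1$ in characteristic $2$. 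Substituting back produces $\Norm(\eta)=(-1)^{km}$, contradicting the hypothesis $\Norm(\eta)\ne(-1)^{km}$. Hence the structural fact holds, and by the first paragraph $\TGabShee{k,\theta}{\eta,b}{g}$ has cardinality $q^{km}$ and minimum distance $n-k+1$.
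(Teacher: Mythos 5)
The paper does not prove this proposition itself --- it is quoted from \cite{sheekey2016new} --- so there is no in-house proof to compare against. Your strategy is exactly Sheekey's original one: reduce both claims to the statement that every nonzero $f\in\HShee{k,\theta}{\eta,b}$ has $\dim_{\Fq}\ker f\le k-1$, deduce injectivity of evaluation and $d\ge n-k+1$ simultaneously from rank--nullity on $U=\rsu_q(g)$, and close the equality case $\dim_{\Fq}\ker f=k$ by a norm computation. The orbit calculation showing $\Norm\bigl(\prod_{v\in V\setminus\{0\}}(-v)\bigr)=(-1)^{km}$ is correct, and the whole argument is sound \emph{when $\theta$ is the $q$-Frobenius $x\mapsto x^q$}.

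The gap is that Definition~\ref{def:TGabcode} allows $\theta$ to be \emph{any} generator of $\Gal(\Fm/\Fq)$, i.e.\ $\theta(x)=x^{q^s}$ with $\gcd(s,m)=1$, and you silently set $s=1$ the moment you write $f(x)=\sum_i f_ix^{q^i}+\eta f_0^{q^b}x^{q^k}$. For $s>1$ the associated linearized polynomial is $\sum_i f_i x^{q^{si\bmod m}}+\eta f_0^{q^{sb}}x^{q^{sk\bmod m}}$, whose ordinary degree bears no relation to $k$; both the root-counting bound $\dim_{\Fq}\ker f\le k$ and, more seriously, the factorization $f(x)=\eta f_0^{q^b}\prod_{v\in V}(x-v)$ with its comparison of coefficients of $x$ break down. (The paper itself remarks that extending Sheekey's construction from the Frobenius to arbitrary generators required separate work, in \cite[Remark~9]{sheekey2016new} and \cite{lunardon2018generalized}.) The inequality $\dim_{\Fq}\ker f\le k$ can be recovered for general $\theta$ from Proposition~\ref{cor:rankMoore}: if $f$ of $\theta$-degree $\le k$ annihilated a $(k+1)$-dimensional space with basis $v_1,\dots,v_{k+1}$, its coefficient vector would lie in the left kernel of the Moore matrix $M_{k+1,\theta}(v)$, which has full rank. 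But excluding $\dim_{\Fq}\ker f=k$ needs a replacement for your explicit product of linear factors; one route is to factor the monic minimal annihilator of $V$ in $\Fm[G]$ as $(\theta-\theta(w_k)/w_k)\circ\cdots\circ(\theta-\theta(w_1)/w_1)$ for suitable $w_i\in\Fm^*$, so that its constant term is $(-1)^k\prod_i\theta(w_i)/w_i$, of norm $(-1)^{km}$ since $\Norm(\theta(w)/w)=1$; writing $f=\eta\,\theta^b(f_0)\cdot m_V$ and comparing constant terms then again forces $\Norm(\eta)=(-1)^{km}$, a contradiction. A smaller point: for $b\ne 0$ the code is only $\Fq$-linear, so Theorem~\ref{singbound} as stated (for $\Fm$-linear codes) does not literally give you $d\le n-k+1$; you need Delsarte's general Singleton bound $|C|\le q^{m(n-d+1)}$ applied to $|C|=q^{km}$.
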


\begin{remark}
Observe that in general a $\theta$-twisted Gabidulin code is not $\Fm$-linear, but only $\Fq$-linear. It is $\Fm$-linear if and only if $b=0$, in which case, we will denote the set $\HShee{k,\theta}{\eta,0}$ by $\HShee{k,\theta}{\eta}$, and the corresponding code by $\TGabShee{k,\theta}{\eta}{g}$. Therefore, by Proposition \ref{prop:TGabisMRD}, the code $\TGabShee{k,\theta}{\eta}{g}= \left\langle g+\eta \theta^k(g), \theta(g), \dots, \theta^{k-1}(g) \right\rangle_{\Fqm}$ is an $[n,k,n-k+1]_{q^m}$ code.
\end{remark}

 This family of codes was given by Sheekey in \cite{sheekey2016new}, and  was first introduced only considering the $q$-Frobenius automorphism $\theta(x)=x^{q}$. 
In \cite[Remark~9]{sheekey2016new} and \cite{lunardon2018generalized}, it was generalized to any generator of $\Gal(\Fm/\Fq)$. Further generalizations were given   in \cite{ot17}, where the codes obtained are only linear over the prime subfield of $\Fq$.

 From now on we fix the following notation. Let $1\leq k \leq n \leq m$ be integers. Choose a positive integer $\numTwists \in \N$, which we call the \emph{number of twists}. Let $\hVec \in \{0,\dots,k-1\}^\numTwists$ and $\tVec \in \{1,\dots,n-k\}^\numTwists\cup\{m-n+1,\ldots,m-k\}^\numTwists$ such that the $h_i$'s are distinct and the $t_i$'s are distinct. Furthermore, let $\etaVec \in (\Fqm)^\numTwists$
and $\theta$ be a generator of $G=\Gal(\Fm/\Fq)$. We can now define the generalized twisted Gabidulin codes from \cite{puchinger2017further}:

\begin{definition}[\cite{puchinger2017further}]\label{def:genTwistedCodes}
We denote by $\PTw{k,\theta}{\eta,\tVec,\hVec}$  the $\Fm$-subspace of the skew group algebra  $\Fm[\theta]=\Fm[G]$ given by
$$ \PTw{k,\theta}{\etaVec,\tVec,\hVec}:= \left\{f_0\mathrm{id}+f_1\theta+\ldots+f_{k-1}\theta^{k-1}+\sum_{j=1}^{\numTwists} \eta_j f_{h_j}\theta^{k-1+t_j}\mid f_i\in \Fm\right\}.$$

Moreover, let $g \in \Fm^n$ with $\rk_q(g)=n$. The \emph{generalized $\theta$-twisted Gabidulin code} $\GTw{k,\theta}{\etaVec,\tVec,\hVec}{g}$ is defined as
$$ \GTw{k,\theta}{\etaVec,\tVec,\hVec}{g}:= \left\{(f(g_1),\ldots,f(g_n)) \mid f \in \PTw{k,\theta}{\etaVec,\tVec,\hVec} \right\}.  $$
\end{definition}

Note that generalized $\theta$-twisted Gabidulin codes are  $\Fm$-linear by definition. In particular, the code $\GTw{k,\theta}{\etaVec,\tVec,\hVec}{g}$ can be written as
$$ \left\langle \left\{\theta^{h_i}(g)+\eta_i \theta^{k-1+t_i}(g) \mid i\in[\numTwists] \right\} \\
\cup \left\{\theta^{i}(g) \mid i \in \{0,\dots,k-1\} \setminus \{h_1,\dots,h_\ell\} \right\} \right\rangle_{\Fm}.$$

In general, there is a sufficient MRD condition if the $g_i$'s are chosen from a subfield $\Fqr \subseteq \Fm$ with $r 2^\numTwists \mid m$ and a suitable choice of the $\eta_i$ \cite{puchinger2017further} (see also \cite[Chapter~7]{puchinger2018construction} for more details). Note that this gives codes of length $n \leq 2^{-\numTwists}m$. It is an open problem whether longer MRD codes exist in $\PTw{k,\theta}{\eta,\tVec,\hVec}$ for arbitrary $\tVec$ and $\hVec$.
In the special case $\numTwists=1$, we write $\tVec = t_1 \in \N$ and $ \hVec = h_1 \in \N_0$.

Note that these codes have been originally defined only for $\tVec \in \{1,\dots,n-k\}^\numTwists$.  This was done in order to assure that the codes have dimension equal to $k$. Here we relax this condition, since we can still guarantee that the dimension of  $\GTw{k,\theta}{\etaVec,\tVec,\hVec}{g}$ is equal to $k$ when the $t_i$'s belong to $\{m-n+1,\ldots,m-k\}$, by Proposition \ref{cor:rankMoore}, using $\theta^{m-n+k}(g)$ instead of $g$.

In the following we describe two further constructions due to Gabidulin in \cite{gabidulin2017new}.

\begin{definition}\cite{gabidulin2017new}\label{def:GabNewIcodes}
Let $k,n,m$ be positive integers such that $1\leq k \leq n \leq m$ and $m-k>k$. Let  $\theta$ be a generator of $G=\Gal(\Fm/\Fq)$  and $\eta \in \Fm$.  We denote by $\NewG{k,\theta}{\eta, I}$
the $\Fm$-subspace of the skew group algebra  $\Fm[\theta]=\Fm[G]$ given by
$$\NewG{k,\theta}{\eta,I}:= \left\langle \left\{ \theta^{i}+\theta^{i}(\eta) \theta^{k+i}\mid i \in\{0,\ldots, k-1\} \right\} \right\rangle_{\Fm}.$$
Moreover, let $g \in \Fm^n$ with $\rk_q(g)=n$. The \emph{new $\theta$-Gabidulin code of first kind} $\NewG{k,\theta}{\eta,I}(g)$ is defined as
$$ \NewG{k,\theta}{\eta,I}(g):= \left\{(f(g_1),\ldots,f(g_n)) \mid f \in \NewG{k,\theta}{\eta, I} \right\}.  $$
\end{definition}

Observe that the new $\theta$-Gabidulin codes of first kind can be seen as a special case of generalized $\theta$-twisted Gabidulin codes in the sense of Definition \ref{def:genTwistedCodes}, with
\begin{equation*}
\ell = k, \quad h_i = i-1, \quad t_i = i, \quad \mbox{and} \quad \eta_i = \theta^{i-1}(\eta)
\end{equation*}
for $i \in [k]$.

\begin{definition}\cite{gabidulin2017new}\label{def:GabNewIIcodes}
Let $k,n,m$ be positive integers such that $1\leq k \leq n \leq m$ and $m-k \leq k$. Let $\theta$ be a generator of $G=\Gal(\Fm/\Fq)$ and $\eta \in \Fm$. We denote by $\NewG{k,\theta}{\eta, II}$
the $\Fm$-subspace of the skew group algebra  $\Fm[\theta]=\Fm[G]$ given by
$$\NewG{k,\theta}{\eta,II}:= \left\langle \left\{ \theta^{i}+\theta^{i}(\eta) \theta^{k+i}\mid i \in\{0,\ldots, m-k-1\} \right\} 
\cup \left\{ \theta^i\mid m-k \leq i <k \right\}  \right\rangle_{\Fm}.$$
Moreover, let $g \in \Fm^n$ with $\rk_q(g)=n$. The \emph{new $\theta$-Gabidulin code of second kind} $\NewG{k,\theta}{\eta,II}(g)$ is defined as
$$ \NewG{k,\theta}{\eta,II}(g):= \left\{(f(g_1),\ldots,f(g_n)) \mid f \in \NewG{k,\theta}{\eta, II} \right\}.  $$
\end{definition}

Also the new $\theta$-Gabidulin codes of second kind can be seen as a special case of generalized $\theta$-twisted Gabidulin codes in a loosened variant of Definition \ref{def:genTwistedCodes}, where we allow the twist vectors to be from $t \in \{1,\dots,m-k\}^\ell$, with
\begin{equation*}
\ell = m-k, \quad h_i = i-1, \quad t_i = i, \quad \mbox{and} \quad \eta_i = \theta^{i-1}(\eta)
\end{equation*}
for $i \in[m-k]$.

\begin{proposition}\cite{gabidulin2017new}\label{prop:newGabsareMRD}
Let $1\leq k \leq m$ be integers, $\theta$ be a generator of $\Gal(\Fm/\Fq)$, $g \in \Fm^n$ with $\rk_q(g)=n$. Suppose, moreover, that $\Norm(\eta) \neq (-1)^{km}$. 
\begin{enumerate}
\item If $m-k > k$, then the new $\theta$-Gabidulin code of first kind $\NewG{k,\theta}{\eta,I}(g)$ is an $\Fmk$ MRD code.
\item If $m-k \leq k$, then the new $\theta$-Gabidulin code of second kind $\NewG{k,\theta}{\eta,II}(g)$ is an $\Fmk$ MRD code.
\end{enumerate}
\end{proposition}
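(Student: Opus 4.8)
The plan is to reduce part~(1) to ordinary Gabidulin codes by a single algebraic identity in the skew group algebra, and to treat part~(2) by duality. For part~(1) (first kind, $m-k>k$), the key identity is $\theta^{i}+\theta^{i}(\eta)\theta^{k+i}=\theta^{i}\circ(\mathrm{id}+\eta\theta^{k})$ in $\Fm[\theta]$; taking $\Fm$-linear combinations over $i\in\{0,\dots,k-1\}$ shows that the elements of $\NewG{k,\theta}{\eta,I}$ are precisely the operators $P\circ(\mathrm{id}+\eta\theta^{k})$ with $P\in\G_{k,\theta}$. Since $\bigl(P\circ(\mathrm{id}+\eta\theta^{k})\bigr)(g_j)=P\bigl(g_j+\eta\,\theta^{k}(g_j)\bigr)$, evaluating at $g$ gives
\[
\NewG{k,\theta}{\eta,I}(g)=\bigl\{\bigl(P(g_1'),\dots,P(g_n')\bigr):P\in\G_{k,\theta}\bigr\},\qquad g_j':=g_j+\eta\,\theta^{k}(g_j).
\]
Hence, provided $\rk_q(g')=n$, the right-hand side is by definition the $\theta$-Gabidulin code $\G_{k,\theta}(g')$, so it is an $\Fmk$ MRD code by Proposition~\ref{prop:GabisMRD}, which proves~(1).

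The essential point is therefore $\rk_q(g')=n$. As $g_j'=(\mathrm{id}+\eta\theta^{k})(g_j)$, the space $\rsu_q(g')$ is the image of $\rsu_q(g)$ under the $\Fq$-linear map $\mathrm{id}+\eta\theta^{k}$, so $\rk_q(g')=n-\dim_{\Fq}\!\bigl(\rsu_q(g)\cap\ker(\mathrm{id}+\eta\theta^{k})\bigr)$, and it will be enough to show that $\mathrm{id}+\eta\theta^{k}$ is injective on $\Fm$ (for $\eta=0$ one has an ordinary Gabidulin code, so assume $\eta\neq 0$). If $x\neq 0$ with $x+\eta\theta^{k}(x)=0$, then $\theta^{k}(x)/x=-1/\eta$; applying $\Norm$ and using $\Norm\circ\theta^{k}=\Norm$ forces $\Norm(\eta)=\Norm(-1)=(-1)^{m}$. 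The delicate step I expect here is the sharpening of this: by Hilbert~90 for the cyclic extension $\Fm/\F_{q^{\gcd(k,m)}}$, whose Galois group is generated by $\theta^{k}$, the element $-1/\eta$ lies in the image of $x\mapsto\theta^{k}(x)/x$ exactly when $\mathrm{N}_{\Fm/\F_{q^{\gcd(k,m)}}}(\eta)=(-1)^{m/\gcd(k,m)}$, and one must then rule this out using the hypothesis $\Norm(\eta)\neq(-1)^{km}$. This injectivity claim is the only place where the norm hypothesis is used.

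For part~(2) (second kind, $m-k\le k$) the generating set of $\NewG{k,\theta}{\eta,II}$ splits as $\{\theta^{i}\circ(\mathrm{id}+\eta\theta^{k}):0\le i<m-k\}\cup\{\theta^{i}:m-k\le i<k\}$, so the code is no longer a single right-composition with $\mathrm{id}+\eta\theta^{k}$ and the reduction above does not apply verbatim. The plan is to pass to the dual: the dual of an MRD code again meets the Singleton bound of Theorem~\ref{singbound}, hence is MRD, so it suffices to identify $\NewG{k,\theta}{\eta,II}(g)^{\perp}$ up to semilinear equivalence. I would compute this dual with the adjoint/Moore-matrix bookkeeping behind Proposition~\ref{prop:dualGab}, expecting it to be equivalent either to a $\theta$-Gabidulin code or to a new $\theta$-Gabidulin code of first kind of dimension $n-k$ (MRD by Proposition~\ref{prop:GabisMRD}, respectively by part~(1)), and then transfer the conclusion back using Theorem~\ref{isometries}. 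A more hands-on alternative is to write a codeword as $(f(g_1),\dots,f(g_n))$ with $f=\bigl(\sum_{i=0}^{k-1}c_i\theta^{i}\bigr)\circ(\mathrm{id}+\eta\theta^{k})-\sum_{i=m-k}^{k-1}c_i\theta^{i}(\eta)\theta^{\,k+i-m}$ and to bound $\dim_{\Fq}\ker f$ via the subspace-polynomial description of $q$-polynomials with a maximal root space. I expect this second-kind case — pinning down the dual, or controlling $\ker f$ in the presence of the low-degree correction term — to be the main obstacle of the whole proof.
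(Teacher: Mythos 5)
First, be aware that the paper gives no proof of this proposition: it is quoted from \cite{gabidulin2017new} without argument, so your proposal has to stand on its own. For part~(1) your reduction is the natural one --- the identity $\theta^{i}+\theta^{i}(\eta)\theta^{k+i}=\theta^{i}\circ(\mathrm{id}+\eta\theta^{k})$ and the resulting equality $\NewG{k,\theta}{\eta,I}(g)=\G_{k,\theta}(g')$ with $g'=g+\eta\theta^{k}(g)$ are correct, and consistent with what the paper later proves in Section~\ref{sec:CharGabidulin}. But the step you yourself flag as delicate is a genuine gap, and it cannot be closed from the stated hypothesis. Your Hilbert~90 criterion is right: $\mathrm{id}+\eta\theta^{k}$ has a nontrivial kernel exactly when $\mathrm{N}_{\Fm/\F_{q^{d}}}(\eta)=(-1)^{m/d}$ with $d=\gcd(k,m)$, and this only forces $\Norm(\eta)=(-1)^{m}$, not $(-1)^{km}$. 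The two agree when $m$ is even or $k$ is odd, but for $q,m$ odd and $k$ even the hypothesis $\Norm(\eta)\neq(-1)^{km}=1$ does not exclude $\Norm(\eta)=-1$. Concretely, take $q=3$, $m=n=5$, $k=2$, $\theta(x)=x^{3}$, and any $\eta$ with $\Norm(\eta)=\eta^{121}=-1$: the hypothesis holds, yet $\ker(\mathrm{id}+\eta\theta^{2})$ is one-dimensional; choosing a nonzero $v$ in the image of $\mathrm{id}+\eta\theta^{2}$ and $P=\theta-v^{2}\,\mathrm{id}$, the operator $P\circ(\mathrm{id}+\eta\theta^{2})\in\NewG{2,\theta}{\eta,I}$ has two-dimensional kernel, giving a codeword of rank $3<n-k+1=4$. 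So the obstruction is not an artifact of your method: the condition your reduction actually needs is $\mathrm{N}_{\Fm/\F_{q^{d}}}(\eta)\neq(-1)^{m/d}$ (for which $\Norm(\eta)\neq(-1)^{m}$ is sufficient), and with the hypothesis exactly as printed here both your argument and the statement itself break down in this parity case. You should either restrict to the cases where the implication does hold ($m$ even or $k$ odd) or prove the result under the corrected norm condition.

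Part~(2) is a plan, not a proof, and you say as much. You neither compute $\NewG{k,\theta}{\eta,II}(g)^{\perp}$ nor show it is equivalent to a code already known to be MRD; note that the paper's own logic runs in the opposite direction (it uses the MRD property asserted here as an input to prove, in Section~\ref{sec:CharGabidulin}, that the second-kind codes are $\theta$-Gabidulin codes), so you cannot borrow that later result without circularity. Your observation that the second kind is not a single right-composition is correct and worth making precise: for $m-k\leq i<k$ one has $\theta^{i}\circ(\mathrm{id}+\eta\theta^{k})=\theta^{i}+\theta^{i}(\eta)\theta^{k+i-m}$, which lies in $\G_{k,\theta}$ but is not the generator $\theta^{i}$ of $\NewG{k,\theta}{\eta,II}$, so $\NewG{k,\theta}{\eta,II}\neq\G_{k,\theta}\circ(\mathrm{id}+\eta\theta^{k})$ in general. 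Until you either pin down the dual explicitly or carry out the bound $\dim_{\Fq}\ker f\leq k-1$ for the mixed operators $f$ you wrote down, this half of the proposition remains unproven.
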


An overview of the code constructions discussed in this subsection can be found in Figure~\ref{fig:overview_code_constructions}.

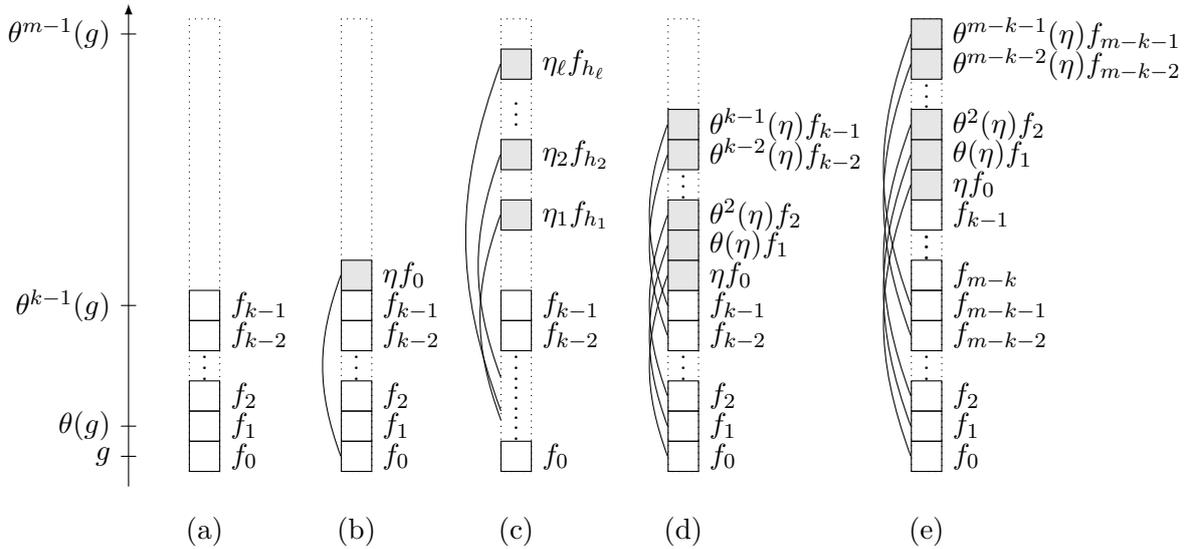
\begin{figure}[ht!]
	\begin{center}
		\begin{tikzpicture}
		\def\ylabelpos{-1cm}
		\def\xlevelminusone{-1cm}
		\def\xlevelzero{0cm}
		\def\xlevelone{2cm}
		\def\xleveltwo{4.1cm}
		\def\xlevelthree{6.3cm}
		\def\xlevelfour{9.5cm}
		\def\xwidth{0.4cm}
		\def\ywidth{0.4cm}
		\def\ydist{1cm}
		\def\nval{15}
		\def\kval{5}
		\def\twistval{6}
		\def\twistvaltwo{8}
		\def\twistvalthree{12}
		\def\myeps{0.1cm}
		\tikzstyle{coeffNodesPure}=[draw, rectangle, color=blue, minimum width=\xwidth, minimum height=\ywidth, inner sep=0pt]
		\tikzstyle{coeffNodes}=[draw, rectangle, minimum width=\xwidth, minimum height=\ywidth, inner sep=0pt]
		\tikzstyle{twistNodes}=[draw, rectangle, fill=black!10, minimum width=\xwidth, minimum height=\ywidth, inner sep=0pt]
		\tikzstyle{coeffNodesLevels}=[draw, rectangle, color=blue, minimum width=\xwidth, minimum height=\ywidth, inner sep=0pt]
		
		\draw[->,>=latex] (\xlevelminusone,-\ywidth) to (\xlevelminusone,\nval*\ywidth);
		\draw (\xlevelminusone+\myeps,0) -- (\xlevelminusone-\myeps,0) node[left] {$g$};
		\draw (\xlevelminusone+\myeps,\ywidth) -- (\xlevelminusone-\myeps,\ywidth) node[left] {$\theta(g)$};
		\draw (\xlevelminusone+\myeps,5*\ywidth) -- (\xlevelminusone-\myeps,5*\ywidth) node[left] {$\theta^{k-1}(g)$};
		\draw (\xlevelminusone+\myeps,\nval*\ywidth-\ywidth) -- (\xlevelminusone-\myeps,\nval*\ywidth-\ywidth) node[left] {$\theta^{m-1}(g)$};

		\draw[dotted] (\xlevelzero-0.5*\xwidth,-0.5*\ywidth) rectangle (\xlevelzero+0.5*\xwidth,\nval*\ywidth-0.5*\ywidth);
		\node[coeffNodes,label=right:{$f_0$}]  (levelZerocoeff0) at (\xlevelzero+0,0*\ywidth) {};
		\node[coeffNodes,label=right:{$f_1$}]  (levelZerocoeff1) at (\xlevelzero+0,1*\ywidth) {};
		\node[coeffNodes,label=right:{$f_2$}]  (levelZerocoeff2) at (\xlevelzero+0,2*\ywidth) {};
		\node 								   (levelZerocoeffdots) at (\xlevelzero+0,3.2*\ywidth) {\footnotesize $\vdots$};
		\node[coeffNodes,label=right:{$f_{k-2}$}]  (levelZerocoeffk2) at (\xlevelzero+0,4*\ywidth) {};
		\node[coeffNodes,label=right:{$f_{k-1}$}]  (levelZerocoeffk1) at (\xlevelzero+0,5*\ywidth) {};

		\draw[dotted] (\xlevelone-0.5*\xwidth,-0.5*\ywidth) rectangle (\xlevelone+0.5*\xwidth,\nval*\ywidth-0.5*\ywidth);
		\node[coeffNodes,label=right:{$f_0$}]  (levelOnecoeff0) at (\xlevelone+0,0*\ywidth) {};
		\node[coeffNodes,label=right:{$f_1$}]  (levelOnecoeff1) at (\xlevelone+0,1*\ywidth) {};
		\node[coeffNodes,label=right:{$f_2$}]  (levelOnecoeff2) at (\xlevelone+0,2*\ywidth) {};
		\node 								   (levelOnecoeffdots) at (\xlevelone+0,3.2*\ywidth) {\footnotesize $\vdots$};
		\node[coeffNodes,label=right:{$f_{k-2}$}]  (levelOnecoeffk2) at (\xlevelone+0,4*\ywidth) {};
		\node[coeffNodes,label=right:{$f_{k-1}$}]  (levelOnecoeffk1) at (\xlevelone+0,5*\ywidth) {};
		\node[twistNodes,label=right:{$\eta f_0$}] (levelOnetwist1) at (\xlevelone+0,\twistval*\ywidth) {};
		\draw[bend angle=20, bend left] (levelOnecoeff0.west) to (levelOnetwist1.west);

		\draw[dotted] (\xleveltwo-0.5*\xwidth,-0.5*\ywidth) rectangle (\xleveltwo+0.5*\xwidth,\nval*\ywidth-0.5*\ywidth);
		\node[coeffNodes,label=right:{$f_0$}]  (levelThreecoeff0) at (\xleveltwo+0,0*\ywidth) {};
		\node 								   (levelThreecoeff1) at (\xleveltwo+0,1.2*\ywidth) {\footnotesize $\vdots$};
		\node 								   (levelThreecoeff2) at (\xleveltwo+0,2.2*\ywidth) {\footnotesize $\vdots$};
		\node 								   (levelThreecoeffdots) at (\xleveltwo+0,3.2*\ywidth) {\footnotesize $\vdots$};
		\node[coeffNodes,label=right:{$f_{k-2}$}]  (levelThreecoeffk2) at (\xleveltwo+0,4*\ywidth) {};
		\node[coeffNodes,label=right:{$f_{k-1}$}]  (levelThreecoeffk1) at (\xleveltwo+0,5*\ywidth) {};
		\node[twistNodes,label=right:{$\eta_1 f_{h_1}$}] (levelThreetwist1) at (\xleveltwo+0,8*\ywidth) {};
		\node[twistNodes,label=right:{$\eta_2 f_{h_2}$}] (levelThreetwist2) at (\xleveltwo+0,10*\ywidth) {};
		\node (levelThreetwistdots) at (\xleveltwo+0,11.6*\ywidth) {\footnotesize $\vdots$};
		\node[twistNodes,label=right:{$\eta_\ell f_{h_\ell}$}] (levelThreetwist3) at (\xleveltwo+0,13*\ywidth) {};
		\draw[bend angle=20, bend left] (\xleveltwo-0.5*\xwidth,1.2*\ywidth) to (levelThreetwist1.west);
		\draw[bend angle=20, bend left] (\xleveltwo-0.5*\xwidth,2.6*\ywidth) to (levelThreetwist2.west);
		\draw[bend angle=20, bend left] (\xleveltwo-0.5*\xwidth,1.5*\ywidth) to (levelThreetwist3.west);

		\draw[dotted] (\xlevelthree-0.5*\xwidth,-0.5*\ywidth) rectangle (\xlevelthree+0.5*\xwidth,\nval*\ywidth-0.5*\ywidth);
		\node[coeffNodes,label=right:{$f_0$}]  (levelFourcoeff0) at (\xlevelthree+0,0*\ywidth) {};
		\node[coeffNodes,label=right:{$f_1$}]  (levelFourcoeff1) at (\xlevelthree+0,1*\ywidth) {};
		\node[coeffNodes,label=right:{$f_2$}]  (levelFourcoeff2) at (\xlevelthree+0,2*\ywidth) {};
		\node 								   (levelFourcoeffdots) at (\xlevelthree+0,3.2*\ywidth) {\footnotesize $\vdots$};
		\node[coeffNodes,label=right:{$f_{k-2}$}]  (levelFourcoeffk2) at (\xlevelthree+0,4*\ywidth) {};
		\node[coeffNodes,label=right:{$f_{k-1}$}]  (levelFourcoeffk1) at (\xlevelthree+0,5*\ywidth) {};
		\node[twistNodes,label=right:{$\eta f_{0}$}] (levelFourtwist0) at (\xlevelthree+0,6*\ywidth) {};
		\node[twistNodes,label=right:{$\theta(\eta) f_{1}$}] (levelFourtwist1) at (\xlevelthree+0,7*\ywidth) {};
		\node[twistNodes,label=right:{$\theta^2(\eta) f_{2}$}] (levelFourtwist2) at (\xlevelthree+0,8*\ywidth) {};
		\node (levelFourtwistdots) at (\xlevelthree+0,9.2*\ywidth) {\footnotesize $\vdots$};
		\node[twistNodes,label=right:{$\theta^{k-2}(\eta) f_{k-2}$}] (levelFourtwistk2) at (\xlevelthree+0,10*\ywidth) {};
		\node[twistNodes,label=right:{$\theta^{k-1}(\eta) f_{k-1}$}] (levelFourtwistk1) at (\xlevelthree+0,11*\ywidth) {};
		\draw[bend angle=20, bend left] (levelFourcoeff0.west) to (levelFourtwist0.west);
		\draw[bend angle=20, bend left] (levelFourcoeff1.west) to (levelFourtwist1.west);
		\draw[bend angle=20, bend left] (levelFourcoeff2.west) to (levelFourtwist2.west);
		\draw[bend angle=20, bend left] (levelFourcoeffk2.west) to (levelFourtwistk2.west);
		\draw[bend angle=20, bend left] (levelFourcoeffk1.west) to (levelFourtwistk1.west);

		\draw[dotted] (\xlevelfour-0.5*\xwidth,-0.5*\ywidth) rectangle (\xlevelfour+0.5*\xwidth,\nval*\ywidth-0.5*\ywidth);
		\node[coeffNodes,label=right:{$f_0$}]  (levelFivecoeff0) at (\xlevelfour+0,0*\ywidth) {};
		\node[coeffNodes,label=right:{$f_1$}]  (levelFivecoeff1) at (\xlevelfour+0,1*\ywidth) {};
		\node[coeffNodes,label=right:{$f_2$}]  (levelFivecoeff2) at (\xlevelfour+0,2*\ywidth) {};
		\node 								   (levelFivecoeffdots) at (\xlevelfour+0,3.2*\ywidth) {\footnotesize $\vdots$};
		\node[coeffNodes,label=right:{$f_{m-k-2}$}]  (levelFivecoeffk2) at (\xlevelfour+0,4*\ywidth) {};
		\node[coeffNodes,label=right:{$f_{m-k-1}$}]  (levelFivecoeffk1) at (\xlevelfour+0,5*\ywidth) {};
		\node[coeffNodes,label=right:{$f_{m-k}$}]  (levelFivecoeffkirrel1) at (\xlevelfour+0,6*\ywidth) {};
		\node  (levelFivecoeffirrel0) at (\xlevelfour+0,7.2*\ywidth) {$\vdots$};
		\node[coeffNodes,label=right:{$f_{k-1}$}]  (levelFivecoeffkirrel2) at (\xlevelfour+0,8*\ywidth) {};
		\node[twistNodes,label=right:{$\eta f_{0}$}] (levelFivetwist0) at (\xlevelfour+0,9*\ywidth) {};
		\node[twistNodes,label=right:{$\theta(\eta) f_{1}$}] (levelFivetwist1) at (\xlevelfour+0,10*\ywidth) {};
		\node[twistNodes,label=right:{$\theta^2(\eta) f_{2}$}] (levelFivetwist2) at (\xlevelfour+0,11*\ywidth) {};
		\node (levelFivetwistdots) at (\xlevelfour+0,12.2*\ywidth) {\footnotesize $\vdots$};
		\node[twistNodes,label=right:{$\theta^{m-k-2}(\eta) f_{m-k-2}$}] (levelFivetwistk2) at (\xlevelfour+0,13*\ywidth) {};
		\node[twistNodes,label=right:{$\theta^{m-k-1}(\eta) f_{m-k-1}$}] (levelFivetwistk1) at (\xlevelfour+0,14*\ywidth) {};
		\draw[bend angle=20, bend left] (levelFivecoeff0.west) to (levelFivetwist0.west);
		\draw[bend angle=20, bend left] (levelFivecoeff1.west) to (levelFivetwist1.west);
		\draw[bend angle=20, bend left] (levelFivecoeff2.west) to (levelFivetwist2.west);
		\draw[bend angle=20, bend left] (levelFivecoeffk2.west) to (levelFivetwistk2.west);
		\draw[bend angle=20, bend left] (levelFivecoeffk1.west) to (levelFivetwistk1.west);

		\node at (\xlevelzero, \ylabelpos) {$\mathrm{(a)}$};
		\node at (\xlevelone,  \ylabelpos) {$\mathrm{(b)}$};
		\node at (\xleveltwo,  \ylabelpos) {$\mathrm{(c)}$};
		\node at (\xlevelthree,\ylabelpos) {$\mathrm{(d)}$};
		\node at (\xlevelfour, \ylabelpos) {$\mathrm{(e)}$};
		
		\end{tikzpicture}
	\end{center}
	\vspace{-0.5cm}
	\caption{Overview of discussed code constructions. Boxes $\square$ represent (possibly non-zero) coefficients that can be chosen independently, filled boxes represent coefficients that depend on other coefficients (dependency indicated by arrows). (a) Gabidulin codes, (b) $\theta$-twisted Gabidulin codes, (c) generalized $\theta$-twisted Gabidulin codes, (d) new Gabidulin (type I) codes, and (e) new Gabidulin (type II) codes (dimension $k$ is chosen larger in this case due to requirement $m-k \leq k$).}
	\label{fig:overview_code_constructions}
\end{figure}

 In the following sections, we will mainly discuss $\theta$-Gabidulin, $\theta$-twisted Gabidulin, and generalized $\theta$-twisted Gabidulin codes. Both types of new $\theta$-Gabidulin codes are only treated in Section~\ref{sec:CharGabidulin}, where we use the new characterization results for Gabidulin codes to show that both new code classes are equivalent to Gabidulin codes whenever they are MRD.

\section{Invariants}\label{sec:invariants}

Let $\theta \in \Gal(\Fm/\Fq)$ and, for a code $C$ and integer $i$, denote by $\theta^i(C)$ the code obtained by applying $\theta^i$ entry-wise to all codewords of $C$.
Sums and intersections of such codes have been considered for several purposes in the literature of rank-metric codes.
Overbeck \cite{overbeck2008structural} proposed an attack on the Gabidulin--Paramonov--Tretjakov (GPT) cryptosystem \cite{gabidulin1991ideals}, which is based on the fact that (for small enough $i$) the code $C+\theta(C)+ \dots + \theta^i(C)$ has much smaller dimension for a Gabidulin code than for the majority of linear codes.
The attack has been modified several times to break multiple modifications of the GPT system \cite{ht16p,otmani2018improved,horlemann2018extension,coggia2019security}.
Furthermore, it is known that the dimension of the code $C + \theta(C)$ is an invariant of an $\Fmk$ code $C$ under code equivalences (\cite{ho16,puchinger2018construction, giuzzi2019identifiers}).
In \cite{ho16}, this dimension was used to derive a criterion for checking whether a given code is Gabidulin or not.
In \cite{puchinger2018construction}, it was used to show that some generalized $\theta$-twisted Gabidulin codes are inequivalent to known constructions. 
Moreover, in \cite{giuzzi2019identifiers}, Giuzzi and Zullo considered the dimensions of $C \cap \theta(C)$ and $C\cap \theta(C) \cap \theta^2(C)$, in order to give a distinguisher for twisted Gabidulin codes.
In the following we generalize these invariants.

\begin{lemma}\label{lem:invariant}
 Let $\sigma_1,\ldots, \sigma_r$ be distinct elements of $\Gal(\Fm/\Fq)$ and let $C_1,C_2$ be two equivalent $\Fmk$  codes. Then the following facts hold.
\begin{enumerate}
\item The codes
 $\cS_1 :=\sigma_1(C_1)+\sigma_2(C_1)+\ldots+\sigma_r(C_1)$ and $\cS_2 :=\sigma_1(C_2)+\sigma_2(C_2)+\ldots+\sigma_r(C_2)$ are equivalent.
In particular, $\dim \cS_1 = \dim \cS_2$.
\item
 The codes  $\T_1 :=\sigma_1(C_1)\cap \sigma_2(C_1)\cap \ldots\cap \sigma_r(C_1) $ and $\T_2:= \sigma_1(C_2)\cap \sigma_2(C_2)\cap \ldots\cap \sigma_r(C_2) $ are equivalent.
In particular, $\dim \T_1 = \dim \T_2$.
\end{enumerate}
\end{lemma}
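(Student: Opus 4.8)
The plan is to reduce everything to a single fact: a semilinear rank isometry $\varphi$ on $\Fm^n$ commutes (up to obvious relabelling) with the action of $\Gal(\Fm/\Fq)$ on codes, so applying $\varphi$ carries the sum/intersection construction for $C_1$ to the same construction for $C_2=\varphi(C_1)$. Concretely, by Theorem~\ref{isometries} we may write $\varphi$ as $v \mapsto \tau(\lambda v)A$ with $\lambda \in \Fm^*$, $A \in \GL_n(q)$ and $\tau \in \Aut(\Fm)$; since $\Fm/\Fq$ is Galois with abelian (cyclic) Galois group, $\tau$ and every $\sigma_j \in \Gal(\Fm/\Fq)$ commute as field automorphisms. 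The first step is therefore to compute, for any code $D \subseteq \Fm^n$ and any $\sigma \in \Gal(\Fm/\Fq)$, how $\varphi$ interacts with $\sigma(D)$. Writing out the action on a codeword $v$, one gets $\varphi(\sigma(v)) = \tau(\lambda\,\sigma(v))A = \tau(\lambda)\,\tau(\sigma(v))A = \sigma(\tau(\sigma^{-1}(\lambda)))\,\sigma(\tau(v))A$; using that $A$ has entries in $\Fq$ (so $\sigma$ acts trivially on $A$) and $\tau\sigma=\sigma\tau$, this rearranges to $\varphi(\sigma(v)) = \sigma\big(\tau(\sigma^{-1}(\lambda)) \tau(v) A\big)$, i.e. $\varphi(\sigma(D)) = \sigma(\psi(D))$ where $\psi$ is the semilinear isometry $v \mapsto \tau(\mu v)A$ with $\mu := \sigma^{-1}(\lambda)$. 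Since $D$ is $\Fm$-linear, the scalar $\mu$ is irrelevant (the remark after Theorem~\ref{isometries}), so in fact $\varphi(\sigma(D))$ and $\sigma(\varphi(D))$ agree up to the single isometry $w \mapsto \tau(w)A$ applied after $\sigma$; said differently, there is one fixed isometry — call it $\phi_0 : w \mapsto \tau(w)A$ — with $\varphi(\sigma(D)) = \sigma(\phi_0(D))$ for \emph{every} $\sigma \in \Gal(\Fm/\Fq)$ and every $\Fm$-linear $D$.

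With this in hand both parts are immediate. Apply $\varphi$ to $\cS_1 = \sigma_1(C_1)+\dots+\sigma_r(C_1)$. An invertible linear map commutes with sums of subspaces, so $\varphi(\cS_1) = \varphi(\sigma_1(C_1)) + \dots + \varphi(\sigma_r(C_1)) = \sigma_1(\phi_0(C_1)) + \dots + \sigma_r(\phi_0(C_1))$. Now $\phi_0(C_1) = \phi_0(\varphi^{-1}(C_2))$, and $\phi_0 \circ \varphi^{-1}$ differs from the identity only by a scalar $\lambda \in \Fm^*$, which again acts trivially on the $\Fm$-linear code $C_2$; hence $\phi_0(C_1) = C_2$ and $\varphi(\cS_1) = \sigma_1(C_2)+\dots+\sigma_r(C_2) = \cS_2$. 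Thus $\cS_1$ and $\cS_2$ are images of one another under the semilinear isometry $\varphi$, so they are equivalent codes and in particular $\dim_{\Fm}\cS_1 = \dim_{\Fm}\cS_2$ (a semilinear bijection preserves $\Fm$-dimension). Part~(2) is the same computation with intersections in place of sums, using that an invertible linear map also commutes with intersections of subspaces: $\varphi(\T_1) = \bigcap_{j} \varphi(\sigma_j(C_1)) = \bigcap_j \sigma_j(\phi_0(C_1)) = \bigcap_j \sigma_j(C_2) = \T_2$.

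The only subtlety — and the step I would write most carefully — is the bookkeeping in the identity $\varphi(\sigma(D)) = \sigma(\phi_0(D))$: one must check that the matrix $A$ genuinely has $\Fq$-entries (so it is fixed by every $\sigma_j$ and by $\tau$), that $\tau$ commutes with each $\sigma_j$ because $\Gal(\Fm/\Fq)$ together with $\tau$ all lie in the cyclic — hence abelian — group $\Gal(\Fm/\Fq)$ (note $\tau \in \Aut(\Fm)$ but its restriction relevant to the computation is as an element acting on $\Fm$, and $\Aut(\Fm)=\Gal(\Fm/\Fp)$ is cyclic, so everything commutes), and that the leftover scalar factors are harmless precisely because all codes involved are $\Fm$-linear. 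Once these three points are nailed down, the proof is a one-line application of "linear isomorphisms commute with sums and intersections." I expect no genuine obstacle beyond this indexing care; the essential content is entirely in Theorem~\ref{isometries} plus commutativity of the automorphism group.
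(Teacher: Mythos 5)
Your proof is correct and rests on exactly the same two facts as the paper's: that $\tau$ commutes with each $\sigma_j$ (both lie in the cyclic, hence abelian, group $\Aut(\Fm)$), that $A$ is fixed by the $\sigma_j$ since its entries are in $\Fq$, and that the scalar $\lambda$ is absorbed by $\Fm$-linearity; your conjugation identity $\varphi(\sigma(D))=\sigma(\phi_0(D))$ is just a repackaging of the paper's key step $\sigma_j(\tau(C_2))\sigma_j(A)=\tau(\sigma_j(C_2))A$. The only cosmetic difference is that the paper disposes of $\lambda$ once and for all before the proof begins, whereas you carry it through the computation.
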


\begin{proof}
 Since $C_1$ and $C_2$ are equivalent, there exist $\tau \in \Aut(\Fm)$, $A\in \GL_n(\Fq)$ such that
$C_1=\tau(C_2)A$. Therefore,
\begin{align*}
\cS_1&=\sigma_1(C_1)+\sigma_2(C_1)+\ldots+\sigma_r(C_1) \\
&=\sigma_1(\tau(C_2))\sigma_1(A)+\sigma_2(\tau(C_2))\sigma_2(A)+\ldots+\sigma_r(\tau(C_2))\sigma_r(A) \\
&\stackrel{(*)}{=} \tau(\sigma_1(C_2))A+\tau(\sigma_2(C_2))A+\ldots+\tau(\sigma_r(C_2))A  \\
&=\tau(\sigma_1(C_2)+\sigma_2(C_2)+\ldots+\sigma_r(C_2))A=
\tau(\cS_2)A,
\end{align*}
and
\begin{align*}
\T_1 &=\sigma_1(C_1)\cap \sigma_2(C_1)\cap \ldots\cap \sigma_r(C_1) \\
&=\sigma_1(\tau(C_2))\sigma_1(A)\cap\sigma_2(\tau(C_2)) \sigma_2(A)\cap\ldots\cap \sigma_r(\tau(C_2)) \sigma_r(A) \\
&\stackrel{(*)}{=} \tau(\sigma_1(C_2))A\cap\tau(\sigma_2(C_2))A\cap\ldots\cap\tau(\sigma_r(C_2))A  \\
&=\tau( \sigma_1(C_2)\cap \sigma_2(C_2)\cap \ldots\cap \sigma_r(C_2) )A=\T_2,
\end{align*}
where the equalities $(*)$ follow from the fact that $\Aut(\Fm)$ is a cyclic group, and therefore abelian, $\Gal(\Fm/\Fq)\subseteq \Aut(\Fm)$ and the $\sigma_i$'s fix all the elements in $\Fq$.
\end{proof}

Lemma~\ref{lem:invariant} implies that if two $\Fmk$ codes $C_1,C_2$ have different dimensions of $\cS_1$ and $\cS_2$ (or of $\T_1$ and $\T_2$), then they must be inequivalent. Hence, checking the dimensions of $\cS_1$ and $\cS_2$ (or of $\T_1$ and $\T_2$) for different choices of the $\sigma_i$'s gives a sufficient condition for codes to be inequivalent.

In the following, we restrict to the special case of consecutive powers of a fixed $\sigma \in \Gal(\Fm/\Fq)$, i.e.,  $\sigma_i = \sigma^{i-1}$, since in this case, we have additional interesting properties.
Motivated by Lemma \ref{lem:invariant}, we introduce the following setting and definitions. Let $\mathcal P_{q^m}(n)$ denote the set of all $\Fm$-subspaces of $\Fm^n$. For any automorphism $\autom \in \mathrm{Gal}(\Fm/\Fq)$ and integer $0\leq i \leq n$, we consider the maps
$$\begin{array}{rccl}
\cS_i^\autom : & \mathcal P_{q^m}(n)& \longrightarrow & \mathcal P_{q^m}(n) \\
&C & \longmapsto & \sum\limits_{j=0}^i\autom^j(C), \\
\end{array}$$
$$\begin{array}{rccl}
\T_i^\autom : & \mathcal P_{q^m}(n)& \longrightarrow & \mathcal P_{q^m}(n) \\
&C & \longmapsto & \bigcap\limits_{j=0}^i\autom^j(C), \\
\end{array}$$
and  the integers
$$\begin{array}{rlcrl}
s_i^\autom(C)&\!\!\!\!:= \dim(\cS_i^\autom(C)), & \qquad \qquad &  t_i^\autom(C)&\!\!\!\!:=\dim(\T_i^\autom(C)), \\
 \Delta_i^\autom(C)&\!\!\!\!:=s_{i+1}^\autom(C)-s_i^\autom(C), & \qquad \qquad & \Lambda_i^\autom(C)&\!\!\!\!:=t_{i}^\autom(C)-t_{i+1}^\autom(C).\end{array}$$

\begin{definition}
With the notation above:
\begin{enumerate}
\item $s_i^\autom(C)$ is called the \emph{$i$-th $\autom$-sum-dimension of $C$}, and $\Delta_i^\autom(C)$ the \emph{$i$-th $\autom$-sum-increment of $C$}. 

\item $t_i^\autom(C)$ is called the \emph{$i$-th $\autom$-intersection-dimension of $C$}, and $\Lambda_i^\autom(C)$ the \emph{$i$-th $\autom$-intersection-decrement of $C$}. 
\end{enumerate}
\end{definition}

As a consequence of Lemma \ref{lem:invariant}, we get that the sequences $\{s^\autom_i(C)\}$, $\{\Delta_i^\autom(C) \}$, $\{t^\autom_i(C)\}$ and $\{\Lambda_i^\autom(C)\}$ are invariants of linear rank-metric codes, i.e., they are stable under code equivalence. They can also be efficiently computed, as shown in Theorem~\ref{thm:computational_complexity}

A first property that we show is that the maps $\cS_i^\autom$ and $\T_i^\autom$ are connected by a duality relation:

\begin{proposition}\label{prop:DualitySiTi}
Let $C$ be an $\Fmk$ code. Then $\T_i^\autom(C)^\perp=\cS_i^\autom(C^\perp)$. In particular, $t_i^\autom(C)=n-s_i^\autom(C^\perp)$ and $\Lambda_i^\autom(C)=\Delta_i^\autom(C^\perp)$.
\end{proposition}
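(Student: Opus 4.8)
The plan is to prove the key identity $\T_i^\autom(C)^\perp = \cS_i^\autom(C^\perp)$ and then derive the two numerical consequences as immediate corollaries. The central observation is that taking orthogonal complements with respect to the standard inner product turns intersections into sums, i.e. $(U \cap V)^\perp = U^\perp + V^\perp$ for any $\Fm$-subspaces $U,V \subseteq \Fm^n$, and that this commutes appropriately with applying a field automorphism. So first I would record the general fact that for $\Fm$-subspaces $U_0,\dots,U_i$ of $\Fm^n$ one has $\bigl(\bigcap_{j=0}^i U_j\bigr)^\perp = \sum_{j=0}^i U_j^\perp$; this follows from the non-degeneracy of the dot product (so that $W^{\perp\perp} = W$ and $\dim W + \dim W^\perp = n$) together with the elementary inclusion $U^\perp + V^\perp \subseteq (U\cap V)^\perp$ and a dimension count, or alternatively directly by applying $(\cdot)^\perp$ to the identity $\sum U_j^\perp$ expanded via De Morgan.

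Next I would verify that $\autom$ interacts correctly with the dual, namely that $\autom^j(C)^\perp = \autom^j(C^\perp)$ for every $\autom \in \Gal(\Fm/\Fq)$ and every integer $j$. This is where the Galois hypothesis is used: since $\autom$ fixes $\Fq$ pointwise and is a field automorphism, for $u, v \in \Fm^n$ we have $\langle \autom^j(u); \autom^j(v)\rangle = \autom^j\bigl(\langle u; v\rangle\bigr)$, and in particular $\langle u;v\rangle = 0 \iff \langle \autom^j(u);\autom^j(v)\rangle = 0$; combined with the fact that $\autom^j$ is an $\Fq$-linear bijection on $\Fm^n$, this gives $\autom^j(C)^\perp = \autom^j(C^\perp)$.

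Putting these together: $\T_i^\autom(C)^\perp = \bigl(\bigcap_{j=0}^i \autom^j(C)\bigr)^\perp = \sum_{j=0}^i \autom^j(C)^\perp = \sum_{j=0}^i \autom^j(C^\perp) = \cS_i^\autom(C^\perp)$. Taking dimensions yields $t_i^\autom(C) = \dim \T_i^\autom(C) = n - \dim \T_i^\autom(C)^\perp = n - \dim \cS_i^\autom(C^\perp) = n - s_i^\autom(C^\perp)$. Finally, the increment/decrement relation follows by the definitions: $\Lambda_i^\autom(C) = t_i^\autom(C) - t_{i+1}^\autom(C) = \bigl(n - s_i^\autom(C^\perp)\bigr) - \bigl(n - s_{i+1}^\autom(C^\perp)\bigr) = s_{i+1}^\autom(C^\perp) - s_i^\autom(C^\perp) = \Delta_i^\autom(C^\perp)$.

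Neither step is genuinely hard; the main thing to be careful about is the De Morgan identity for orthogonal complements of intersections, which I would justify cleanly via $W = W^{\perp\perp}$ and a dimension argument rather than by a messy direct manipulation. One should also make sure the inner product restricted to $\Fm^n$ is the one being used throughout (it is, by Definition~\ref{def:vectordual}), so that $C^\perp$ is again an $\Fm$-subspace and the double-dual identity applies.
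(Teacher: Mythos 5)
Your proof is correct and follows essentially the same route as the paper: both use the De Morgan identity $(\bigcap_j U_j)^\perp = \sum_j U_j^\perp$ together with the commutation $\autom^j(C)^\perp = \autom^j(C^\perp)$, and then conclude via $\dim(U^\perp) = n - \dim(U)$. You simply spell out the justifications of these two facts in more detail than the paper does.
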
 

\begin{proof}
Since $\autom(C^\perp)=\autom(C)^\perp$, we get 
\begin{align*}
\T_i^\autom(C)^\perp =\left(\bigcap\limits_{j=0}^i\autom^j(C)\right)^\perp = \sum\limits_{j=0}^i\left(\autom^j(C)^\perp\right) =  \sum\limits_{j=0}^i\left(\autom^j(C^\perp)\right) =\cS_i^\autom(C^\perp).
\end{align*}
The equalities $t_i^\autom(C)=n-s_i^\autom(C^\perp)$ and $\Lambda_i^\autom(C)=\Delta_i^\autom(C^\perp)$ immediately follow, using the fact that $\dim(U^\perp)=n-\dim(U)$, for any $U \in \mathcal P_{q^m}(n)$.
\end{proof}

Using the fact that $\cS_0^\autom(C)=\T_0^\autom(C)=C$ we also get the following relations.

\begin{proposition}\label{prop:SiTi}
Let $C$ be an $\Fmk$ code. Then
\begin{enumerate}
\item $t^\autom_1(C)= 2k - s^\autom_1(C)$, 
\item $\Delta_0^\autom(C)=\Lambda_0^\autom(C)$.
\end{enumerate}
\end{proposition}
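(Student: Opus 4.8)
The plan is to reduce everything to the Grassmann dimension formula applied to the two $\Fm$-subspaces $C$ and $\autom(C)$ of $\Fm^n$. First I would note that $\autom$ acts on $\Fm^n$ entry-wise as an additive bijection that is $\Fq$-semilinear, hence it carries $\Fm$-subspaces to $\Fm$-subspaces of the same $\Fm$-dimension; in particular $\dim_{\Fm}\autom(C)=\dim_{\Fm}C=k$. Unfolding the definitions, $\cS_1^\autom(C)=C+\autom(C)$ and $\T_1^\autom(C)=C\cap\autom(C)$, so Grassmann's identity $\dim(U+V)+\dim(U\cap V)=\dim U+\dim V$ applied to $U=C$, $V=\autom(C)$ gives $s_1^\autom(C)+t_1^\autom(C)=2k$, which is exactly statement~(1).

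For statement~(2) I would just unfold the increment and the decrement using $\cS_0^\autom(C)=\T_0^\autom(C)=C$, so that $s_0^\autom(C)=t_0^\autom(C)=k$. Then $\Delta_0^\autom(C)=s_1^\autom(C)-s_0^\autom(C)=s_1^\autom(C)-k$ and $\Lambda_0^\autom(C)=t_0^\autom(C)-t_1^\autom(C)=k-t_1^\autom(C)$, and substituting $t_1^\autom(C)=2k-s_1^\autom(C)$ from part~(1) yields $\Lambda_0^\autom(C)=k-(2k-s_1^\autom(C))=s_1^\autom(C)-k=\Delta_0^\autom(C)$.

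There is essentially no obstacle here; the only point that warrants a word of care is that $\autom$ preserves $\Fm$-dimension even though it is only $\Fq$-(semi)linear, which holds because post-composition with a field automorphism is an additive bijection sending $\Fm$-lines to $\Fm$-lines. Alternatively, one could deduce (1) from Proposition~\ref{prop:DualitySiTi} together with $\dim C^\perp=n-k$, but the direct Grassmann argument is shorter, and the rest is bookkeeping with the definitions of $s_i^\autom$, $t_i^\autom$, $\Delta_i^\autom$ and $\Lambda_i^\autom$.
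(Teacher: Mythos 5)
Your proof is correct and follows essentially the same route as the paper: part (1) is exactly the Grassmann/dimension formula applied to $C$ and $\autom(C)$ (using $\dim_{\Fm}\autom(C)=k$), and part (2) is the same bookkeeping with $s_0^\autom(C)=t_0^\autom(C)=k$. The extra remark that $\autom$ preserves $\Fm$-dimension, which the paper leaves implicit, is a valid and harmless addition.
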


\begin{proof}
\begin{enumerate}
\item We have $\Taut_1(C)=C\cap\autom(C)$ and thus $t^\autom_1(C)=\dim(C\cap\autom(C)) = \dim(C)+\dim(\autom(C)) - s^\autom_1(C) = 2k - s^\autom_1(C)$.  
\item $\Delta_0^\autom(C)=s_1^\autom(C)-s_0^\autom(C)=s_1^\autom(C)-k=k-t_1^\autom(C)=t_0^\autom(C)-t_1^\autom(C)=\Lambda_0^\autom(C)$. \qedhere
\end{enumerate}
\end{proof}

We now derive more properties of the $\autom$-sum sequence, before doing the analogue for the intersection sequence.

\begin{proposition}\label{prop:propertiesSi}
Let $C\subseteq \Fm^n$ be an $\Fmk$ code. Then:
\begin{enumerate}
\item $k=s_0^\autom(C)\leq s_1^\autom(C)\leq \ldots \leq s_{n-k}^\autom(C)\leq n$.
\item $\cS^\autom_i\circ \cS^\autom_j=\cS^\autom_{i+j}$.
\item $s^\autom_i(C)=s^\autom_{i+1}(C)$ if and only if $\cS^\autom_i(C)$ has a basis of elements in $\Fq^n$.
\item If $s^\autom_i(C)=s^\autom_{i+1}(C)$ then $s^\autom_{i+j}(C)=s^\autom_i(C)$ for all $j \geq 0$.
\item $s^\autom_{n-k}(C)=s^\autom_{n-k+j}(C)$ for all $j \geq 0$.
\item  $k\geq \Delta_0^\autom(C)\geq \Delta_1^\autom(C)\geq \ldots \geq \Delta_{n-k}^\autom(C)=0$.
\item $s_i^{\autom}(C)=k+\sum_{j=0}^{i-1}\Delta_j^\autom(C)$.
\end{enumerate}
\end{proposition}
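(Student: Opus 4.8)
The plan is to prove the seven items in roughly the stated order, since three are pure bookkeeping, three share a common mechanism, and only one carries real content. I would dispatch \textbf{(1), (2), (7)} first. For (1): $\cS_0^\autom(C)=C$ gives $s_0^\autom(C)=k$, the inclusions $\cS_i^\autom(C)\subseteq\cS_{i+1}^\autom(C)$ give monotonicity, and $\cS_i^\autom(C)\subseteq\Fm^n$ gives the bound $\le n$. For (2) the only content is the index identity $\{a+b:0\le a\le i,\,0\le b\le j\}=\{0,\dots,i+j\}$, whence $\cS_i^\autom(\cS_j^\autom(C))=\sum_{a=0}^i\sum_{b=0}^j\autom^{a+b}(C)=\sum_{c=0}^{i+j}\autom^c(C)=\cS_{i+j}^\autom(C)$. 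And (7) is the telescoping identity $\sum_{j=0}^{i-1}\Delta_j^\autom(C)=s_i^\autom(C)-s_0^\autom(C)=s_i^\autom(C)-k$.

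The common engine behind \textbf{(3), (4), (5)} is the identity $\cS_{i+1}^\autom(C)=C+\autom(\cS_i^\autom(C))$ (valid since $\autom(\cS_i^\autom(C))=\sum_{j=1}^{i+1}\autom^j(C)$ and $C=\autom^0(C)$), together with the observation that $s_i^\autom(C)=s_{i+1}^\autom(C)$ is \emph{equivalent} to $\autom(\cS_i^\autom(C))=\cS_i^\autom(C)$: if the latter holds then $\cS_{i+1}^\autom(C)=C+\cS_i^\autom(C)=\cS_i^\autom(C)$; conversely, equality of dimensions forces $\cS_i^\autom(C)=\cS_{i+1}^\autom(C)\supseteq\autom(\cS_i^\autom(C))$, and since $\autom$ preserves $\Fm$-dimension this inclusion is an equality. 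Item (3) then follows by adjoining the standard fact that an $\Fm$-subspace of $\Fm^n$ is $\autom$-invariant if and only if it has an $\Fm$-basis contained in $\Fq^n$: the ``if'' is clear because $\autom$ fixes $\Fq^n$ pointwise, and the ``only if'' is Galois descent, using that $\autom$ generates $\Gal(\Fm/\Fq)$ so that its fixed field is $\Fq$. For (4): $\autom(\cS_i^\autom(C))=\cS_i^\autom(C)$ yields $\autom^\ell(\cS_i^\autom(C))=\cS_i^\autom(C)$ for every $\ell$, and combined with $\autom^i(C)\subseteq\cS_i^\autom(C)$ this gives $\autom^\ell(C)\subseteq\cS_i^\autom(C)$ for all $\ell\ge0$, hence $\cS_{i+j}^\autom(C)=\cS_i^\autom(C)$ for all $j\ge0$. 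Item (5) is then formal: if $s_\ell^\autom(C)<s_{\ell+1}^\autom(C)$ held for every $\ell\in\{0,\dots,n-k\}$, we would get $s_{n-k+1}^\autom(C)\ge k+(n-k+1)>n$, impossible; so the sequence stabilizes at some index $\le n-k$, and (4) propagates the constancy.

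The core of the statement is \textbf{(6)}. The endpoints are easy: $\Delta_0^\autom(C)=\dim(C+\autom(C))-\dim C\le\dim\autom(C)=k$, and $\Delta_{n-k}^\autom(C)=0$ by (5). For the monotonicity $\Delta_i^\autom(C)\ge\Delta_{i+1}^\autom(C)$ I would use the second isomorphism theorem together with $\cS_{i+1}^\autom(C)=\autom^{i+1}(C)+\cS_i^\autom(C)$ to get
\[
\Delta_i^\autom(C)=\dim\bigl(\cS_{i+1}^\autom(C)/\cS_i^\autom(C)\bigr)=\dim\bigl(\autom^{i+1}(C)/(\autom^{i+1}(C)\cap\cS_i^\autom(C))\bigr),
\]
then apply the bijection $\autom$ (which preserves $\Fm$-dimension of subquotients) to rewrite the right-hand side as $\dim\bigl(\autom^{i+2}(C)/(\autom^{i+2}(C)\cap\autom(\cS_i^\autom(C)))\bigr)$. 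Since $\autom(\cS_i^\autom(C))=\sum_{j=1}^{i+1}\autom^j(C)\subseteq\cS_{i+1}^\autom(C)$, enlarging the intersected subspace only shrinks the quotient, so this is at least $\dim\bigl(\autom^{i+2}(C)/(\autom^{i+2}(C)\cap\cS_{i+1}^\autom(C))\bigr)=\dim\bigl(\cS_{i+2}^\autom(C)/\cS_{i+1}^\autom(C)\bigr)=\Delta_{i+1}^\autom(C)$, using the isomorphism theorem once more with $\cS_{i+2}^\autom(C)=\autom^{i+2}(C)+\cS_{i+1}^\autom(C)$.

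The one point that needs genuine care is that $\autom$ acts $\autom$-semilinearly, not $\Fm$-linearly, on $\Fm^n$: I would record once that a semilinear bijection carries $\Fm$-bases to $\Fm$-bases and hence induces $\Fm$-dimension-preserving isomorphisms on all the subspaces and subquotients in sight, which is exactly what legitimizes the ``apply $\autom$'' step in (6) as well as the dimension equalities used in (3)--(5). With that in place, (6) is the only item requiring more than elementary dimension counting, and I expect the concavity inequality $\Delta_i^\autom(C)\ge\Delta_{i+1}^\autom(C)$ to be the main obstacle.
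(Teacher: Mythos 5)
Your proof is correct and follows essentially the same route as the paper's: parts (1), (2), (7) are the same bookkeeping, and parts (4), (5) use the identical stabilization-plus-pigeonhole argument. There are two localized differences worth noting. For (3), where the paper simply cites a lemma from an earlier work for the equivalence between $\autom$-invariance of a subspace and the existence of a basis in $\Fq^n$, you prove it yourself by Galois descent --- and you correctly flag that this needs $\autom$ to generate $\Gal(\Fm/\Fq)$: for a non-generator $\autom=\theta^r$ the descent only produces a basis in $\F_{q^{\gcd(r,m)}}^n$, a hypothesis the proposition's statement leaves implicit, so your version is if anything more careful. For (6), the paper exhibits an explicit complement $U$ of $\cS_i^\autom(C)\cap\autom(\cS_i^\autom(C))$ inside $\autom(\cS_i^\autom(C))$ and shows $\cS_{i+2}^\autom(C)=\cS_{i+1}^\autom(C)+\autom(U)$, whereas you run the same dimension count through the second isomorphism theorem applied to $\autom^{i+1}(C)$ modulo $\cS_i^\autom(C)$; both arguments bound the increment at step $i+1$ by the $\autom$-image of the increment at step $i$, so this is a difference of packaging rather than substance. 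Your explicit remark that the semilinear map $\autom$ preserves $\Fm$-dimensions of subspaces and subquotients is the right thing to record once and is used silently throughout the paper's proof.
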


\begin{proof}
\begin{enumerate}
\item By definition we have $ s_0^\autom(C) =k$. The rest follows from $\Saut_i(C)\subseteq \Saut_{i+1}(C) \subseteq \Fm^n$.
\item It holds that $\Saut_i(\Saut_j(C))=\sum_{\ell=0}^i\autom^\ell(\Saut_j(C))=\sum_{\ell=0}^i\sum_{r=0}^j \autom^{\ell+r}(C)=\sum_{h=0}^{i+j}\autom^h(C)=\Saut_{i+j}(C)$.
\item Suppose $s^\autom_i(C)=s^\autom_{i+1}(C)$, then $\Saut_i(C)=\Saut_{i+1}(C)$, and by part 2, we get $\Saut_1(\Saut_i(C))=\Saut_i(C)$. This is true if and only if $\autom(\Saut_i(C))=\Saut_i(C)$, and we can conclude using \cite[Lemma 4.5]{ho16}.
\item The equality $s^\autom_i(C)=s^\autom_{i+1}(C)$ implies that $\autom(\Saut_i(C))=\Saut_i(C)$, and therefore, $\Saut_{i+j}(C)=\autom^j(\Saut_i(C))=\Saut_i(C)$
 for all $j \geq 0$. 
\item Let $r^\autom(C)=\min\{i \mid s^\autom_i(C)=s^\autom_{i+1}(C) \}$. If $r^\autom(C)\leq n-k$, then by part 4 we can conclude. Suppose by contradiction that $r^\autom(C)>n-k$. Then we get a chain 
$k=s^\autom_0(C)<s^\autom_1(C)< \ldots <s^\autom_{n-k}(C)<s^\autom_{n-k+1}(C).$
This implies that $s^\autom_i(C) \geq k+i$, and in particular $s^\autom_{n-k+1}(C)\geq k+n-k+1=n+1$, but this is impossible since $\Saut_{n-k+1}(C)\subseteq \Fm^n$.
\item First we prove that $\Delta_0^\autom(C)\leq k$.  We have $\Saut_1(C)=C+\autom(C)$ and thus $s^\autom_1(C)=\dim(C+\autom(C))\leq \dim(C)+\dim(\autom(C))=s^\autom_0(C)+k$. Furthermore, we have $\Delta_{n-k} = 0$ by part 5.

Now we prove that $\Delta_i \geq \Delta_{i+1}$. Suppose $\Delta_i=s^\autom_{i+1}(C)-s^\autom_{i}(C)=r$. Then $\dim (\Saut_i(C)+\autom(\Saut_i(C)))=\dim (\Saut_i(C))+r$. This implies that $\autom(\Saut_i(C))=W+U$, where $W\subseteq \Saut_i(C)$, $U\cap \Saut_i(C)=\{0\}$ and $\dim U=r$. Hence, $\Saut_{i+2}(C)=\Saut_1(\Saut_{i+1}(C))=\Saut_{i+1}(C)+\autom(\Saut_{i+1}(C))=\Saut_i(C)+U+\autom(\Saut_i(C))+\autom(U)$. However, $U \subseteq \autom(\Saut_i(C))$, and therefore $\Saut_{i+2}(C)=\Saut_i(C)+\autom(\Saut_i(C))+\autom(U)=\Saut_{i+1}(C)+\autom(U)$. Since $\dim \autom(U)=\dim(U)=r$, we get $\Delta_{i+1} = \dim(\Saut_{i+2}(C)) - \dim(\Saut_{i+1}(C))\leq r = \Delta_i$.

\item $\sum_{j=0}^{i-1}\Delta_j^\autom(C) = \sum_{j=0}^{i-1} (s_{j+1}^\autom(C)-s_{j}^\autom(C)) = s_i^\autom(C) - s_0^\autom(C) =  s_i^\autom(C) -k$. \qedhere
\end{enumerate}
\end{proof}

The following results are analogous results for the $\sigma$-intersection sequences.

\begin{proposition}\label{prop:propertiesTi}
Let $C\subseteq \Fm^n$ be an $\Fmk$ code. Then:
\begin{enumerate}
\item $k=t_0^\autom(C)\geq t_1^\autom(C)\geq \ldots \geq t_{k}^\autom(C)\geq 0$.
\item $\T^\autom_i\circ \T^\autom_j=\T^\autom_{i+j}$.
\item $t^\autom_i(C)=t^\autom_{i+1}(C)$ if and only if $\T^\autom_i(C)$ has a basis of elements in $\Fq^n$.
\item If $t^\autom_i(C)=t^\autom_{i+1}(C)$ then $t^\autom_{i+j}(C)=t^\autom_i(C)$ for all $j \geq 0$.
\item $t^\autom_{k}(C)=t^\autom_{k+j}(C)$ for all $j \geq 0$.
\item $k\geq \Lambda_0^\autom(C) \geq \Lambda_1^\autom(C) \geq \ldots \geq \Lambda_k^\autom(C)=0$.
\item $ t_i^{\autom}(C)=k-\sum_{j=0}^{i-1}\Delta_j^\autom(C)$.
\end{enumerate}
\end{proposition}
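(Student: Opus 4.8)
The plan is to transfer everything to Proposition~\ref{prop:propertiesSi} via duality. Put $D:=C^\perp$, which by Definition~\ref{def:vectordual} is an $[n,n-k]_{q^m}$ code, and recall from Proposition~\ref{prop:DualitySiTi} that $t_i^\autom(C)=n-s_i^\autom(D)$ and $\Lambda_i^\autom(C)=\Delta_i^\autom(D)$ for all $i$. Under this dictionary, parts (1), (4) and (5) are obtained directly from parts (1), (4) and (5) of Proposition~\ref{prop:propertiesSi} applied to $D$ (whose dimension is $n-k$): for instance, applying $t_i^\autom(C)=n-s_i^\autom(D)$ to the chain $n-k=s_0^\autom(D)\le s_1^\autom(D)\le\dots\le s_{n-(n-k)}^\autom(D)\le n$ yields $k=t_0^\autom(C)\ge t_1^\autom(C)\ge\dots\ge t_k^\autom(C)\ge 0$, and the stabilization index becomes $n-\dim D=k$ in parts (4)--(5). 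The only care needed here is purely bookkeeping: checking that the index ranges transported through the substitution ``$k\mapsto n-k$'' land exactly on $0\le i\le k$, which they do since $n-\dim(C^\perp)=k$.

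For part (2) I would argue directly, mirroring Proposition~\ref{prop:propertiesSi}(2): since each $\autom^\ell$ is a bijection of $\Fm^n$ it commutes with intersections, so
\[
\T_i^\autom(\T_j^\autom(C))=\bigcap_{\ell=0}^{i}\autom^\ell\!\left(\bigcap_{r=0}^{j}\autom^r(C)\right)=\bigcap_{\ell=0}^{i}\bigcap_{r=0}^{j}\autom^{\ell+r}(C)=\bigcap_{h=0}^{i+j}\autom^h(C)=\T_{i+j}^\autom(C),
\]
using that $\{\ell+r:0\le\ell\le i,\ 0\le r\le j\}=\{0,\dots,i+j\}$. (Alternatively this dualizes to Proposition~\ref{prop:propertiesSi}(2) through Proposition~\ref{prop:DualitySiTi}.)

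For part (3) I would mirror the argument used for $\cS_i^\autom$: if $t_i^\autom(C)=t_{i+1}^\autom(C)$, then since $\T_{i+1}^\autom(C)\subseteq\T_i^\autom(C)$ have the same $\Fm$-dimension they are equal, so by part (2), $\T_i^\autom(C)=\T_1^\autom(\T_i^\autom(C))=\T_i^\autom(C)\cap\autom(\T_i^\autom(C))$, i.e. $\T_i^\autom(C)\subseteq\autom(\T_i^\autom(C))$; as $\autom$ preserves $\Fm$-dimension this forces $\autom(\T_i^\autom(C))=\T_i^\autom(C)$, and then \cite[Lemma~4.5]{ho16} provides a basis of $\T_i^\autom(C)$ in $\Fq^n$. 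Conversely, a basis in $\Fq^n$ is fixed by $\autom$, hence $\T_{i+1}^\autom(C)=\T_i^\autom(C)\cap\autom(\T_i^\autom(C))=\T_i^\autom(C)$, so $t_i^\autom(C)=t_{i+1}^\autom(C)$. Thus the ``defined-over-$\Fq$'' subtlety is exactly the point handled by the cited lemma, just as in Proposition~\ref{prop:propertiesSi}(3).

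Finally, part (6) needs one observation beyond the dualization: translating Proposition~\ref{prop:propertiesSi}(6) for $D$ gives the monotonicity $\Lambda_0^\autom(C)\ge\Lambda_1^\autom(C)\ge\dots\ge\Lambda_k^\autom(C)=\Delta_k^\autom(D)=0$, but only the bound $\Lambda_0^\autom(C)=\Delta_0^\autom(D)\le n-k$; the claimed bound $\Lambda_0^\autom(C)\le k$ instead follows at once from $\Lambda_0^\autom(C)=t_0^\autom(C)-t_1^\autom(C)=k-t_1^\autom(C)\le k$ (equivalently from $\Lambda_0^\autom(C)=\Delta_0^\autom(C)$ of Proposition~\ref{prop:SiTi}(2) together with $\Delta_0^\autom(C)\le k$ of Proposition~\ref{prop:propertiesSi}(6)). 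Part (7) is then the telescoping identity $\sum_{j=0}^{i-1}\Lambda_j^\autom(C)=t_0^\autom(C)-t_i^\autom(C)=k-t_i^\autom(C)$, i.e. the summand should be read as the intersection-decrement $\Lambda_j^\autom(C)$, consistently with $t_i^\autom(C)=n-s_i^\autom(D)=k-\sum_{j=0}^{i-1}\Delta_j^\autom(D)$. I expect the main obstacle to be nothing more than this index bookkeeping and the small gap in the bound of part (6); no idea beyond Propositions~\ref{prop:DualitySiTi}, \ref{prop:SiTi} and \ref{prop:propertiesSi} (and \cite[Lemma~4.5]{ho16}) is needed.
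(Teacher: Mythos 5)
Your proof is correct and follows essentially the same route as the paper: a direct computation for part (2) and reduction of the remaining parts to Proposition~\ref{prop:propertiesSi} via the duality of Proposition~\ref{prop:DualitySiTi} (the paper proves part (1) directly and part (3) by duality, where you do the opposite, but both variants work and your direct argument for (3) via $\autom(\T_i^\autom(C))=\T_i^\autom(C)$ and \cite[Lemma~4.5]{ho16} is if anything cleaner). Two details you handle explicitly that the paper subsumes under ``analogous'': the bound $\Lambda_0^\autom(C)\leq k$ in part (6) does not come from dualizing part (6) of Proposition~\ref{prop:propertiesSi} (which only yields $\leq n-k$) and needs the one-line argument $\Lambda_0^\autom(C)=k-t_1^\autom(C)\leq k$ that you supply; and in part (7) the summand must indeed be read as the intersection-decrement $\Lambda_j^\autom(C)=\Delta_j^\autom(C^\perp)$ rather than $\Delta_j^\autom(C)$, as you correctly point out.
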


\begin{proof}
\begin{enumerate}
\item This is clear, since $\Taut_{i+1}(C)\subseteq \Taut_{i}(C) \subseteq \Fm^n$.
\item We have $\Taut_i(\Taut_j(C))=\bigcap_{\ell=0}^i\autom^\ell(\Taut_j(C))=\bigcap_{\ell=0}^i\bigcap_{r=0}^j \autom^{\ell+r}(C)=\bigcap_{s=0}^{i+j}\autom^s(C)=\Taut_{i+j}(\C)$.
\item By Proposition \ref{prop:DualitySiTi}, we have that $t_i^\autom(C)=t_{i+1}^\autom(C)$ if and only if $s_i^\autom(C^\perp)=s_{i+1}^\autom(C^\perp)$, which in turn  is equivalent to  $C^\perp$ having a basis of elements in $\Fq^n$, by Proposition \ref{prop:propertiesSi}. This is in turn equivalent to $C$ having a parity check matrix with entries from $\Fq$, which 
 happens if and only if there exists a basis  of elements in $\Fq^n$ for the code itself.
\item[4.-7.]
In an analogous way, the remaining statements follow from Proposition \ref{prop:propertiesSi}  and the duality result of Proposition \ref{prop:DualitySiTi}. \qedhere
\end{enumerate}
\end{proof}

The following theorem shows that the invariants presented in this section can be computed efficiently for any given code.

\begin{theorem}\label{thm:computational_complexity}
Let $C$ be an $[n,k]_{q^m}$ code, $\sigma,\sigma_1,\dots,\sigma_r \in \mathrm{Gal}(\Fqm/\Fq)$ be automorphisms.
\begin{enumerate}
\item The dimension of $\sigma_1(C)+\sigma_2(C)+\ldots+\sigma_r(C)$ can be computed in
\begin{align*}
O\!\left(\min\left\{n^{\omega-1}rk, nr^{\omega-1}k^{\omega-1}\right\}\right)
\end{align*}
operations over $\Fqm$, where $2 \leq \omega \leq 3$ is the matrix multiplication exponent. 
\item The dimension of $\sigma_1(C)\cap \sigma_2(C)\cap \ldots\cap \sigma_r(C)$ can be computed in
\begin{align*}
O\!\left(\min\left\{n^{\omega-1}r(n-k), nr^{\omega-1}(n-k)^{\omega-1}\right\}\right)
\end{align*}
operations over $\Fqm$.
\item The sequences $\{s^\autom_i(C)\}_{i=0}^{\infty}$, $\{\Delta_i^\autom(C) \}_{i=0}^{\infty}$, $\{t^\autom_i(C)\}_{i=0}^{\infty}$, $\{\Lambda_i^\autom(C)\}_{i=0}^{\infty}$ can be computed in
\begin{align*}
O\!\left(n^{\omega-1} k(n-k)\right)
\end{align*}
operations over $\Fqm$.
\end{enumerate}
\end{theorem}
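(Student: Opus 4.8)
The plan is to reduce each of the three statements to computing the rank (and a row echelon form) of a single stacked matrix, using two standard facts from fast linear algebra together with the duality of Proposition~\ref{prop:DualitySiTi}. Fix a generator matrix $G \in \Fqm^{k \times n}$ of $C$; for $\tau \in \Gal(\Fqm/\Fq)$, applying $\tau$ entrywise to $G$ yields a generator matrix $\tau(G)$ of $\tau(C)$, and I treat one evaluation of a Galois automorphism at an element of $\Fqm$ as $O(1)$ operations over $\Fqm$, so that forming any such $\tau(G)$ costs $O(kn)$ operations, which will be lower-order everywhere below. I also use that the rank and a row echelon form of an $a \times b$ matrix over $\Fqm$ of rank $\rho$ can be computed in $O(ab\rho^{\omega-2})$ operations; taking $\rho \le \min\{a,b\}$ this is $O(\min\{a,b\}^{\omega-1}\max\{a,b\}) = O(\min\{a^{\omega-1}b,\, b^{\omega-1}a\})$, the last identity because $a \le b$ forces $a^{\omega-1}b \le b^{\omega-1}a$ for $\omega \ge 2$.

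For part~1: a basis of $\sigma_1(C)+\dots+\sigma_r(C)$ is obtained by row-reducing the $rk \times n$ matrix $\widetilde{G} := (\sigma_1(G);\dots;\sigma_r(G))$, so $\dim(\sigma_1(C)+\dots+\sigma_r(C)) = \rk \widetilde{G}$, which by the above costs $O(\min\{(rk)^{\omega-1}n,\, n^{\omega-1}(rk)\}) = O(\min\{n\,r^{\omega-1}k^{\omega-1},\, n^{\omega-1}rk\})$ operations, forming $\widetilde{G}$ being lower-order. For part~2: since each $\sigma_j$ fixes $\Fq$ pointwise, $\langle u;v\rangle = 0 \iff \langle \sigma_j(u);\sigma_j(v)\rangle = 0$, hence $\sigma_j(C)^\perp = \sigma_j(C^\perp)$ (as already used in the proof of Proposition~\ref{prop:DualitySiTi}), so $\bigcap_{j=1}^r \sigma_j(C) = \big(\sum_{j=1}^r \sigma_j(C^\perp)\big)^\perp$ and $\dim\big(\bigcap_{j=1}^r \sigma_j(C)\big) = n - \dim\big(\sum_{j=1}^r \sigma_j(C^\perp)\big)$. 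As $C^\perp$ is an $[n,n-k]_{q^m}$ code, applying part~1 to $C^\perp$ — i.e.\ to a parity-check matrix $H$ of $C$, which is assumed available or is computed from $G$ once in $O(nk^{\omega-1})$ operations — gives exactly the bound $O(\min\{n^{\omega-1}r(n-k),\, n\,r^{\omega-1}(n-k)^{\omega-1}\})$.

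For part~3 the point is to obtain all required terms from a single elimination rather than re-running parts~1--2 for each $i$ (which would cost a factor $n-k$ more). By part~5 of Proposition~\ref{prop:propertiesSi} we have $s_i^\autom(C) = s_{n-k}^\autom(C)$ for all $i \ge n-k$, so it suffices to compute $s_0^\autom(C),\dots,s_{n-k}^\autom(C)$; and $s_i^\autom(C)$ equals the rank of the prefix $(\sigma^0(G);\sigma^1(G);\dots;\sigma^i(G))$. I would therefore build $\mathcal{G} := (\sigma^0(G);\sigma^1(G);\dots;\sigma^{n-k}(G)) \in \Fqm^{(n-k+1)k \times n}$ and run a rank-profile-revealing (row-order-respecting) Gaussian elimination processing its $n-k+1$ row-blocks in order: after block $i$ is incorporated, the number of pivots found so far is precisely $s_i^\autom(C)$, so $s_0^\autom(C),\dots,s_{n-k}^\autom(C)$ — hence all increments $\Delta_i^\autom(C) = s_{i+1}^\autom(C)-s_i^\autom(C)$ — are read off at no extra cost. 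Since $\rk \mathcal{G} \le n$, this costs $O\big((n-k+1)k \cdot n \cdot n^{\omega-2}\big) = O(n^{\omega-1}k(n-k))$ operations, using $(n-k+1)k \le 2k(n-k)$ for $1 \le k \le n-1$ (the cases $k \in \{0,n\}$ being trivial). For the intersection sequence, Proposition~\ref{prop:DualitySiTi} gives $t_i^\autom(C) = n - s_i^\autom(C^\perp)$ and $\Lambda_i^\autom(C) = \Delta_i^\autom(C^\perp)$, and by part~5 of Proposition~\ref{prop:propertiesTi} the sequence $t_i^\autom(C)$ stabilizes for $i \ge k$, so I would run the same procedure on $C^\perp$, i.e.\ on $(\sigma^0(H);\dots;\sigma^k(H)) \in \Fqm^{(k+1)(n-k) \times n}$, at cost $O(n^{\omega-1}k(n-k))$, recovering $s_0^\autom(C^\perp),\dots,s_k^\autom(C^\perp)$ and hence $t_0^\autom(C),\dots,t_k^\autom(C)$ and the decrements $\Lambda_i^\autom(C)$. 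Adding the $O(nk^{\omega-1})$ to compute $H$ and the $O(nk(n-k))$ to apply the automorphisms (both lower-order), the total is $O(n^{\omega-1}k(n-k))$.

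The main point to get right is the factor $k$ (rather than $n$) in the exponent-free term of part~3: it comes from performing one row-echelon computation on the full $(n-k+1)k \times n$ stack — whose rank never exceeds $n$, which is what makes the $O(ab\rho^{\omega-2})$ bound pay off — and reading the intermediate ranks off its row rank profile, instead of maintaining and re-reducing the growing spaces $\cS_i^\autom(C)$ one at a time, which would cost $\Theta(n^\omega)$ per step. Hence one must invoke an elimination that respects the block order and exposes the prefix ranks within the $O(ab\rho^{\omega-2})$ budget; the remaining items (converting $G$ to $H$, applying the powers of $\sigma$, forming the difference sequences) are routine bookkeeping that stays below $O(n^{\omega-1}k(n-k))$.
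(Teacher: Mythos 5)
Your proposal is correct and follows essentially the same route as the paper: part 1 by row-reducing the stacked generator matrices, part 2 by the duality $\bigl(\bigcap_j \sigma_j(C)\bigr)^\perp=\sum_j\sigma_j(C^\perp)$ applied to a parity-check matrix, and part 3 by a single rank-profile-revealing elimination of the full stack $(\sigma^0(G);\dots;\sigma^{n-k}(G))$ from which the prefix ranks $s_i^\autom(C)$ are read off, with the intersection sequence obtained from the dual code. The only cosmetic difference is that you invoke the $O(ab\rho^{\omega-2})$ rank bound explicitly where the paper cites Storjohann's theorem directly; the substance and the resulting complexities coincide.
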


\begin{proof}
Let $G \in \Fqm^{k \times n}$ be a generator and $H \in \Fqm^{(n-k) \times n}$ be a parity-check matrix of the code $C$. Without loss of generality we assume that we know (or can efficiently compute) one of them. The other matrix can be determined from the known one in $O(\max\{n-k,k\}^{\omega-1}n)$ field operations using Gaussian elimination (for the complexity, see  e.g., \cite[Theorem~2.10]{storjohann2000algorithms}).

\begin{enumerate}
\item  The code $\sigma_1(C)+\sigma_2(C)+\ldots+\sigma_r(C)$ is generated by the rows of the matrix
\begin{align*}
\begin{pmatrix}
\sigma_1(G) \\
\sigma_2(G) \\
\vdots \\
\sigma_r(G)
\end{pmatrix} \in \Fqm^{k r \times n},
\end{align*}
so its dimension can be computed by determining the rank of the matrix.
Since the rank of an $a \times b$ matrix can be computed in $O(\min\{a^{\omega-1}b,ab^{\omega-1}\})$, the claimed complexity follows.

\item Analogously, the dimension of
$\sigma_1(C)\cap \sigma_2(C)\cap \ldots\cap \sigma_r(C)$ can be computed by determining the rank of
$(\sigma_1(H)^\top, \sigma_2(H)^\top, \dots, \sigma_r(H)^\top)^\top  \in \Fqm^{(n-k) r \times n}$
which is a parity-check matrix of the intersection code.

\item It suffices to show that we can compute $s^\autom_1(C),\dots,s^\autom_{n-k}(C)$ efficiently since 
\begin{itemize}
\item by Proposition~\ref{prop:propertiesSi}, the sequence $s^\autom_i(C)$ converges after at most $n-k$ steps (i.e., $s^\autom_{n-k+j}(C) = s^\autom_{n-k}(C)$ for all $j\geq 0$),
\item by Proposition~\ref{prop:DualitySiTi}, the sequence $\{t^\autom_i(C)\}$ can be computed from the $s^\autom_i(C^\bot)$ sequence of the dual code (which converges after at most $k$ steps), and 
\item the difference sequences follow by $O(\max\{k,n-k\})$ subtractions.
\end{itemize}
We can compute these $n-k$ values $s^\autom_i(C)$ by first determining the column rank profile (i.e., the row indices of leading ones in a column echelon form) of the matrix
\begin{align*}
\begin{pmatrix}
G \\
\sigma(G) \\
\vdots \\
\sigma^{n-k}(G)
\end{pmatrix} \in \Fqm^{k(n-k+1) \times n},
\end{align*}
which can be done in $O(n^{\omega-1} k(n-k))$ operations \cite[Thm.~2.10]{storjohann2000algorithms}. We get $s^\autom_i(C)$ by counting the elements of the column rank profile that are contained in the first $(i+1)k$ rows. We save a factor $n-k$ compared to naively computing each $s^\autom_i(C)$ individually as in part 1. \qedhere
\end{enumerate}
\end{proof}

Note that the difference sequences $\{\Delta_i^\sigma(C)\}$ and $\{\Lambda_i^\sigma(C)\}$ are mainly studied in this section. 
In the subsequent section(s) we focus on the sum and the intersection dimension sequences. The difference sequences can then easily be determined from them. We will use $\{\Delta_i^\sigma(C)\}$ and $\{\Lambda_i^\sigma(C)\}$ again in Section \ref{sec:CharGabidulin}, to characterize Gabidulin codes.

\section{The Sum and Intersection Sequences for Known MRD Constructions}\label{sec:sequences}

In this section we are going to study the properties of the sequences introduced above for Gabidulin, twisted Gabidulin and some generalized twisted Gabidulin codes. Since the intersection sequence is fully determined by the sum sequence, we exemplarily determine the intersection sequence for Gabidulin and (narrow-sense) twisted Gabidulin codes, but not for generalized twisted Gabidulin codes.

For simplicity we will represent the field automorphism $\sigma$ as a power of a generator $\theta$ of the Galois group  $\Gal(\Fm/\Fq)$.

\subsection{Gabidulin Codes}

In general a $\theta$-Gabidulin code $C$ can also be a $\bar \theta$-Gabidulin code for another generator $\bar \theta$ of $\Gal(\Fm/\Fq)$. We will prove in Theorem \ref{thm:upperboundnumGab} that this cannot happen for many $\bar \theta$'s. However, it is straightforward to see that a $\theta$-Gabidulin code is always also a $\theta^{-1}$-Gabidulin code.

\begin{lemma}\label{lem:Gabtheta=theta-1}
Let $\theta$ be a generator of the Galois group $\Gal(\Fm/\Fq)$ and $g \in \Fm^n$ such that $\rk_q(g)=n$. Then $\G_{k,\theta}(g)=\G_{k,\theta^{-1}}(\theta^{k-1}(g))$.
\end{lemma}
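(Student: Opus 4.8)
The plan is to show the two code descriptions produce the same evaluation set by exhibiting a bijection on the underlying spaces of skew polynomials that preserves evaluation at $g$. First I would recall that $\G_{k,\theta}(g) = \langle g, \theta(g), \dots, \theta^{k-1}(g)\rangle_{\Fm}$, since $\G_{k,\theta}$ is by definition the $\Fm$-span of $\mathrm{id}, \theta, \dots, \theta^{k-1}$ and evaluation at $g$ is $\Fm$-linear. Similarly, $\G_{k,\theta^{-1}}(\theta^{k-1}(g)) = \langle \theta^{k-1}(g), \theta^{-1}(\theta^{k-1}(g)), \dots, \theta^{-(k-1)}(\theta^{k-1}(g))\rangle_{\Fm}$.

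The key observation is then purely one about the generating sets: the spanning vectors of $\G_{k,\theta^{-1}}(\theta^{k-1}(g))$ are
$$\theta^{k-1}(g),\ \theta^{k-2}(g),\ \ldots,\ \theta(g),\ g,$$
i.e., exactly the same $k$ vectors $\{\theta^j(g) : 0 \le j \le k-1\}$ that span $\G_{k,\theta}(g)$, just listed in reverse order. Hence the two $\Fm$-spans coincide, which gives the claimed equality of codes. I would also note in passing that $\rk_q(\theta^{k-1}(g)) = n$ because $\theta^{k-1}$ is an $\Fq$-linear bijection of $\Fm$, so that $\theta^{k-1}(g)$ is a legitimate defining vector for a $\theta^{-1}$-Gabidulin code of length $n$ (this uses $k-1 < m$, which holds since $k \le n \le m$; and $\theta^{-1}$ is indeed a generator of $\Gal(\Fm/\Fq)$). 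This last remark is needed only to confirm that the right-hand side is well-defined as a Gabidulin code in the sense of the definition, not for the equality itself.

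There is essentially no obstacle here: the only thing to be slightly careful about is translating between the "skew polynomial module" definition and the "span of images of $g$" description, and making sure the index ranges match up exactly (that $\{0,1,\dots,k-1\}$ applied via $\theta^{-1}$ to $\theta^{k-1}(g)$ lands in $\{0,1,\dots,k-1\}$ applied via $\theta$ to $g$). I would present this in two or three lines: rewrite both sides as $\Fm$-spans, observe the spanning sets are equal as sets, conclude. If one prefers a more algebraic phrasing, one can instead define the $\Fm$-linear isomorphism $\mathcal G_{k,\theta} \to \mathcal G_{k,\theta^{-1}}$ sending $f = \sum_{i=0}^{k-1} f_i \theta^i$ to the unique skew polynomial $\tilde f \in \mathcal G_{k,\theta^{-1}}$ with $\tilde f(\theta^{k-1}(g)) = f(g)$, but the span argument is cleaner and avoids any computation.
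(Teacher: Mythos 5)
Your argument is correct and is essentially identical to the paper's proof: both rewrite $\G_{k,\theta}(g)$ and $\G_{k,\theta^{-1}}(\theta^{k-1}(g))$ as $\Fm$-spans of $\{g,\theta(g),\dots,\theta^{k-1}(g)\}$ and observe these spanning sets coincide. Your additional remarks on well-definedness (rank preservation under $\theta^{k-1}$ and $\theta^{-1}$ being a generator) are harmless extras the paper leaves implicit.
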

\begin{proof}We have  
\begin{align*}
\G_{k,\theta}(g) &=\langle g,\theta(g), \dots, \theta^{k-1}(g)\rangle \\
&=\langle \theta^{k-1}(g),\theta^{-1}(\theta^{k-1}(g)), \dots, \theta^{-(k-1)}(\theta^{k-1}(g))\rangle\\
&=\G_{k,\theta^{-1}}(\theta^{k-1}(g)).  \qedhere
\end{align*}
\end{proof}

The following result gives the $\theta^r$-sequences for $\theta$-Gabidulin codes. Note that the computation for the special case in which $r=1$ and $\theta$ is the $q$-Frobenius automorphism was already derived by Overbeck in \cite{overbeck2008structural}. 
\begin{proposition}\label{prop:GabidulinSeq}
Let $C:= \G_{k,\theta}(g)$ be a $\theta$-Gabidulin code and $i,r \in \mathbb N$ such that $0\leq r<m$.
\begin{enumerate}
\item For $0\leq r \leq k$ we have $\cS^{\theta^{r}}_i(C)=\cS^{\theta}_{ir}(C)= \G_{k+ir, \theta}(g)$ and for $ m-k \leq r \leq m$ we have $\cS^{\theta^{r}}_i(C)= \G_{k+i(m-r), \theta^{-1}}(\theta^{k-1}(g))$.\footnote{Although not properly defined, we assume that $\G_{s, \theta}(g) =\F_{q^m}^n$ for $s\geq n$.}
In particular, we have $s^{\theta^r}_i(C)=\min\{e,n\}$, where
$$e=\begin{cases} k+ir  & \mbox{ if }  0\leq r \leq k\\
k+i(m-r) & \mbox{ if } m-k \leq r \leq m \end{cases}.$$
\item  If $k<r\leq n-k$ or $m-n+k\leq r < m-k$ we have  $s_1^{\theta^r}(C)=2k$.
\item  If $r>k$ and $r >n-k$, then  $s_1^{\theta^r}(C)\geq k+n-r$.   If $r<m-k$ and $r <m-n+k$, then $s_1^{\theta^r}(C)\geq k+n-m+r$.
\end{enumerate}
\end{proposition}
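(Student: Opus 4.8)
The plan is to describe every subspace that occurs in terms of the vectors $v_a := \theta^a(g)\in\Fm^n$. Writing $C=\langle v_0,v_1,\dots,v_{k-1}\rangle_{\Fm}$, one has $\theta^a(C)=\langle v_a,\dots,v_{a+k-1}\rangle_{\Fm}$ (indices read modulo $m$, since $v_a=v_{a\bmod m}$), so that
\[
\cS^{\theta^{r}}_i(C)=\sum_{j=0}^{i}\theta^{rj}(C)=\Bigl\langle\, v_a \;:\; a\in\bigcup_{j=0}^{i}\{rj,rj+1,\dots,rj+k-1\}\,\Bigr\rangle_{\Fm}.
\]
By Proposition~\ref{cor:rankMoore}, $\rk_q(g)=n$ forces $v_0,v_1,\dots,v_{n-1}$ to be $\Fm$-linearly independent, hence a basis of $\Fm^n$; thus for $S\subseteq\{0,\dots,n-1\}$ the space $\langle v_a:a\in S\rangle_{\Fm}$ has dimension $|S|$, and $\langle v_0,\dots,v_{s-1}\rangle_{\Fm}$ has dimension $\min\{s,n\}$, consistently with the convention $\G_{s,\theta}(g)=\Fm^n$ for $s\geq n$. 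Two reductions will take care of all the ``large $r$'' ranges: first, applying the dimension-preserving entry-wise map $\theta^{-ri}$ to $\cS^{\theta^{r}}_i(C)$ produces $\cS^{\theta^{m-r}}_i(C)$, so $s^{\theta^{r}}_i(C)=s^{\theta^{m-r}}_i(C)$ for all $i$; second, by Lemma~\ref{lem:Gabtheta=theta-1} we have $C=\G_{k,\theta^{-1}}(\theta^{k-1}(g))$, where $\theta^{-1}$ is again a generator of $\Gal(\Fm/\Fq)$, $\rk_q(\theta^{k-1}(g))=n$, and $(\theta^{-1})^{m-r}=\theta^{r}$. Hence every assertion for $r$ near $m$ will follow from the corresponding assertion for the ``small'' exponent $m-r$.

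For \emph{Part 1} with $0\leq r\leq k$, consecutive intervals $\{rj,\dots,rj+k-1\}$ and $\{r(j+1),\dots,r(j+1)+k-1\}$ overlap or abut, since $r(j+1)\leq rj+k$; so by induction $\bigcup_{j=0}^{i}\{rj,\dots,rj+k-1\}=\{0,1,\dots,ir+k-1\}$, whence $\cS^{\theta^{r}}_i(C)=\langle v_0,\dots,v_{ir+k-1}\rangle_{\Fm}=\G_{k+ir,\theta}(g)$. The same computation with $r=1$ shows this equals $\cS^{\theta}_{ir}(C)$, and its dimension is $\min\{k+ir,n\}$ by Proposition~\ref{cor:rankMoore}. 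For $m-k\leq r\leq m$ we have $0\leq m-r\leq k$, so applying what was just proved to the generator $\theta^{-1}$ and the vector $\theta^{k-1}(g)$ gives $\cS^{\theta^{r}}_i(C)=\cS^{(\theta^{-1})^{m-r}}_i\bigl(\G_{k,\theta^{-1}}(\theta^{k-1}(g))\bigr)=\G_{k+i(m-r),\theta^{-1}}(\theta^{k-1}(g))$, of dimension $\min\{k+i(m-r),n\}$.

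For \emph{Part 2}, let $k<r\leq n-k$ and $i=1$. Then $\cS^{\theta^{r}}_1(C)=\langle v_0,\dots,v_{k-1},v_r,\dots,v_{r+k-1}\rangle_{\Fm}$ is generated by $2k$ of the vectors $v_0,\dots,v_{r+k-1}$ (the indices are distinct because $r>k-1$, and $r+k-1\leq n-1$), which are linearly independent by Proposition~\ref{cor:rankMoore}; hence $s^{\theta^{r}}_1(C)=2k$. If instead $m-n+k\leq r<m-k$, then $k<m-r\leq n-k$, so $s^{\theta^{r}}_1(C)=s^{\theta^{m-r}}_1(C)=2k$. For \emph{Part 3}, let $r>k$ and $r>n-k$. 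If $r\geq n$ then $s^{\theta^{r}}_1(C)\geq\dim C=k\geq k+n-r$ and we are done, so assume $k<r<n$. Since $r\geq n-k$ we get $n-1\leq r+k-1$, hence $\langle v_r,\dots,v_{n-1}\rangle_{\Fm}\subseteq\langle v_r,\dots,v_{r+k-1}\rangle_{\Fm}=\theta^{r}(C)$, and therefore
\[
\cS^{\theta^{r}}_1(C)=C+\theta^{r}(C)\supseteq\bigl\langle\, v_a : a\in\{0,\dots,k-1\}\cup\{r,\dots,n-1\}\,\bigr\rangle_{\Fm}.
\]
The index set on the right is a disjoint union (as $r>k-1$) of total size $k+(n-r)$ contained in $\{0,\dots,n-1\}$, so this subspace has dimension $k+n-r$, giving $s^{\theta^{r}}_1(C)\geq k+n-r$. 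Finally, if $r<m-k$ and $r<m-n+k$, then $m-r>k$ and $m-r>n-k$, so $s^{\theta^{r}}_1(C)=s^{\theta^{m-r}}_1(C)\geq k+n-(m-r)=k+n-m+r$.

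The essentially only delicate point is the bookkeeping with exponents modulo $m$: in Part~3 one must use only the portion $\{v_r,\dots,v_{n-1}\}$ of the spanning family of $\theta^{r}(C)$ that is guaranteed to consist of basis vectors $v_0,\dots,v_{n-1}$, rather than trying to control the ``wrapped'' vectors $v_n,v_{n+1},\dots$; the symmetry $s^{\theta^{r}}_i(C)=s^{\theta^{m-r}}_i(C)$ together with Lemma~\ref{lem:Gabtheta=theta-1} then removes the need to analyse the ranges of $r$ close to $m$ directly.
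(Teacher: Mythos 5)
Your proof is correct and follows essentially the same route as the paper's: part 1 via the explicit span $\langle g,\theta(g),\dots,\theta^{k+ir-1}(g)\rangle$ combined with Lemma~\ref{lem:Gabtheta=theta-1} for $m-k\leq r\leq m$, and parts 2--3 by exhibiting sufficiently many of the $\Fm$-independent vectors $g,\theta(g),\dots,\theta^{n-1}(g)$ inside $C+\theta^r(C)$ and counting distinct indices via Proposition~\ref{cor:rankMoore}. The only (valid) cosmetic difference is that for the large-$r$ ranges in parts 2 and 3 you use the symmetry $s^{\theta^r}_i(C)=s^{\theta^{m-r}}_i(C)$, obtained by applying the semilinear bijection $\theta^{-ri}$ entry-wise, where the paper instead reduces to $\theta^{-1}$ through Lemma~\ref{lem:Gabtheta=theta-1}.
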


\begin{proof}
\begin{enumerate}
\item If $0\leq r\leq k$, then $\cS^{\theta^r}_i(C)=\langle g, \theta(g), \ldots,\theta^{k-1}(g),\theta^k(g),\ldots, \theta^{k+ir-1}(g)\rangle = \G_{k+ir, \theta}(g)$. 
If $m-k \leq r \leq m-1$, we can write  $\theta^r=(\theta^{-1})^{m-r}$, with $0\leq m-r\leq k$, and thus get $\cS^{\theta^{r}}_i(C)=\cS^{(\theta^{-1})^{m-r}}_i(\G_{k,\theta^{-1}}(\theta^{k-1}(g)))=\G_{k+i(m-r),\theta^{-1}}(\theta^{k-1}(g))$, by Lemma~\ref{lem:Gabtheta=theta-1}. 
The computation of $s^{\theta^{r}}_i(C)$ follows from Proposition \ref{cor:rankMoore}.

\item If $k<r\leq n-k$, then $\cS^{\theta^r}_1(C)=\langle g, \theta(g),\ldots, \theta^{k-1}(g),\theta^r(g),\ldots, \theta^{r+k-1}(g)\rangle$, which has dimension $2k$, by Proposition \ref{cor:rankMoore}. If $m-n+k\leq r < m-k$ the statement follows with Lemma \ref{lem:Gabtheta=theta-1}.

\item If $r>k$ and $r >n-k$, then $\cS^{\theta^r}_1(C)\supseteq\langle g, \theta(g),\ldots, \theta^{k-1}(g),\theta^r(g),\ldots, \theta^{n-1}(g)\rangle$, which has dimension $n+k-r$, by Proposition \ref{cor:rankMoore}. If $r<m-k$ and $r <m-n+k$ the statement follows with Lemma \ref{lem:Gabtheta=theta-1}. \qedhere
\end{enumerate}
\end{proof}

In particular, for $m=n\geq 2k$, we have 
$$s^{\theta^r}_1(C)
=\begin{cases} k+r  & \mbox{ if }  0\leq r \leq k\\
2k & \mbox{ if }  k+1\leq r \leq n-k-1\\
k+(n-r) & \mbox{ if } n-k \leq r \leq n-1 \end{cases}.$$

We present example sequences and the corresponding statements from Proposition~\ref{prop:GabidulinSeq} in Appendix~\ref{app:examples}.

\begin{proposition}\label{prop:GabidulinSeqT}
Let $C:= \G_{k,\theta}(g)$ be a $\theta$-Gabidulin code and $i,r \in \mathbb N$ such that $0\leq r<m$.
\begin{enumerate}
\item For $0\leq r \leq k$ we have $\T^{\theta^{r}}_i(C)=\T^{\theta}_{ir}(C)= \G_{k-ir, \theta}(\theta^{ir}(g))$ and for $ m-k \leq r < m$ we have $\T^{\theta^{r}}_i(C)= \G_{k-i(m-r), \theta^{-1}}(\theta^{k-i(m-r)-1}(g))$.
\footnote{Although not properly defined, we assume that $\G_{s, \theta}(g) =\{0\}$ for $s\leq 0$.}.
In particular, we have $t^{\theta^r}_i(C)=\max\{e,0\}$, where
$$e=\begin{cases} k-ir  & \mbox{ if }  0\leq r \leq k\\
k-i(m-r) & \mbox{ if } m-k \leq r \leq m \end{cases}.$$
\item  If $k<r\leq n-k$ or $m-n+k\leq r < m-k$ we have  $t_1^{\theta^r}(C)=0$.
\item  If $r>k$ and $r >n-k$, then  $t_1^{\theta^r}(C)\leq k-n+r$.   If $r<m-k$ and $r <m-n+k$, then $t_1^{\theta^r}(C)\leq k-n+m-r$.
\end{enumerate}
\end{proposition}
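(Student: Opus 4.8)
The plan is to derive everything from the sum-sequence result (Proposition~\ref{prop:GabidulinSeq}) together with the duality relation of Proposition~\ref{prop:DualitySiTi}. Recall that $\T_i^{\autom}(C)^\perp = \cS_i^{\autom}(C^\perp)$, and that by Proposition~\ref{prop:dualGab} the dual of a $\theta$-Gabidulin code $C = \G_{k,\theta}(g)$ is itself a $\theta$-Gabidulin code $\G_{n-k,\theta}(g')$ for a suitable $g'$ with $\rk_q(g')=n$. So for part~(1), when $0 \le r \le k$, I would apply Proposition~\ref{prop:GabidulinSeq}(1) to $C^\perp = \G_{n-k,\theta}(g')$: since $r \le k \le n-k$ is \emph{not} automatic, one must be slightly careful, but the relevant hypothesis in Proposition~\ref{prop:GabidulinSeq}(1) is $0 \le r \le k$ applied to a code of dimension $n-k$, i.e.\ we need $0 \le r \le n-k$; this holds because $r \le k \le n-k$ fails in general — actually $k$ can exceed $n-k$. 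Here I would instead argue directly: $\T_i^{\theta^r}(C) = \bigcap_{j=0}^{i}\theta^{jr}(\G_{k,\theta}(g)) = \bigcap_{j=0}^i \G_{k,\theta}(\theta^{jr}(g))$, and since $\G_{k,\theta}(\theta^{jr}(g)) = \langle \theta^{jr}(g),\dots,\theta^{jr+k-1}(g)\rangle$, the intersection over $j=0,\dots,i$ of these "windows" of consecutive $\theta$-powers of $g$ is the window $\langle \theta^{ir}(g),\dots,\theta^{k-1}(g)\rangle$, which equals $\G_{k-ir,\theta}(\theta^{ir}(g))$ when $ir \le k$ and is $\{0\}$ otherwise — here Proposition~\ref{cor:rankMoore} guarantees the windows are "honest" (full rank, size $\min\{\cdot,n\}=\cdot$ since everything stays $\le k \le n$), so the combinatorics of consecutive-power windows genuinely computes the dimension. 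This gives $t_i^{\theta^r}(C) = \max\{k-ir,0\}$. The case $m-k \le r < m$ then follows by writing $\theta^r = (\theta^{-1})^{m-r}$ with $0 \le m-r \le k$ and invoking Lemma~\ref{lem:Gabtheta=theta-1}, which rewrites $C$ as $\G_{k,\theta^{-1}}(\theta^{k-1}(g))$; re-running the same window argument with $\theta^{-1}$ and shift $m-r$ yields $\T_i^{\theta^r}(C) = \G_{k-i(m-r),\theta^{-1}}(\theta^{k-i(m-r)-1}(g))$ and $t_i^{\theta^r}(C)=\max\{k-i(m-r),0\}$.

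For parts~(2) and~(3) I would use Proposition~\ref{prop:SiTi}(1) in the generalized form, or more directly the inclusion–exclusion dimension formula applied to two $k$-dimensional subspaces: $\dim(U \cap V) = \dim U + \dim V - \dim(U+V)$, so $t_1^{\theta^r}(C) = 2k - s_1^{\theta^r}(C)$. Then part~(2) is immediate: when $k < r \le n-k$ or $m-n+k \le r < m-k$, Proposition~\ref{prop:GabidulinSeq}(2) gives $s_1^{\theta^r}(C) = 2k$, hence $t_1^{\theta^r}(C) = 0$. Likewise part~(3) follows by substituting the lower bounds of Proposition~\ref{prop:GabidulinSeq}(3): $s_1^{\theta^r}(C) \ge k+n-r$ gives $t_1^{\theta^r}(C) = 2k - s_1^{\theta^r}(C) \le 2k - (k+n-r) = k - n + r$, and similarly $s_1^{\theta^r}(C) \ge k+n-m+r$ gives $t_1^{\theta^r}(C) \le k-n+m-r$. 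Note the general inclusion–exclusion identity relating $t_1$ and $s_1$ is exactly Proposition~\ref{prop:SiTi}(1) specialized to the automorphism $\theta^r$, so no new work is needed there.

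The main obstacle I anticipate is bookkeeping with the intersection-of-windows argument in part~(1): one must verify that for the relevant ranges of $i$ and $r$ the subspaces $\theta^{jr}(g),\dots,\theta^{jr+k-1}(g)$ really do form a chain of overlapping windows whose intersection is again a window (rather than, say, wrapping around modulo $m$ and producing spurious extra overlaps). Since the statement only asserts the formula for $0 \le r \le k$ (so consecutive windows overlap in $k-r \ge 0$ positions and the total span $ir + k - 1$ of the indices involved stays within a range where Proposition~\ref{cor:rankMoore} applies cleanly and no wraparound occurs), this should go through, but it is the step that requires the most care. An alternative that avoids this entirely is to obtain part~(1) purely formally from Proposition~\ref{prop:GabidulinSeq}(1) via $\T_i^{\theta^r}(C)^\perp = \cS_i^{\theta^r}(C^\perp)$ and Proposition~\ref{prop:dualGab}, checking that the dimension count $n - \min\{(n-k)+ir,\, n\} = \max\{k-ir, 0\}$ matches; I would present whichever is cleaner, probably the duality route for the dimension formula and a brief window argument to pin down the explicit generator vector.
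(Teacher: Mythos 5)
Your handling of parts 2 and 3 (via $t_1^{\theta^r}(C)=2k-s_1^{\theta^r}(C)$ from Proposition~\ref{prop:SiTi} together with parts 2--3 of Proposition~\ref{prop:GabidulinSeq}) and of the range $m-k\le r<m$ in part 1 (via $\theta^r=(\theta^{-1})^{m-r}$ and Lemma~\ref{lem:Gabtheta=theta-1}) coincides with the paper's proof. The gap is in your primary argument for part 1 on the range $0\le r\le k$: you compute $\T^{\theta^r}_i(C)=\bigcap_{j=0}^i\langle\theta^{jr}(g),\dots,\theta^{jr+k-1}(g)\rangle$ in one shot and assert that the intersection of these index-windows is the span of the intersected index set. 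That identity is valid only when \emph{all} the vectors $\theta^{j}(g)$ for $0\le j\le ir+k-1$ are linearly independent, which by Proposition~\ref{cor:rankMoore} requires $ir+k\le n$; for larger $i$ (and the statement covers arbitrary $i$) the powers beyond $\theta^{n-1}(g)$ are linear combinations of the earlier ones, and the intersection of the spans can a priori exceed the window $\langle\theta^{ir}(g),\dots,\theta^{k-1}(g)\rangle$. Your own caveat about ``wraparound'' is exactly the right worry, but your claim that the total index span stays in a range where Proposition~\ref{cor:rankMoore} applies cleanly is false once $ir+k>n$. Your fallback (dualize and apply Proposition~\ref{prop:GabidulinSeq}(1) to $C^\perp$) does not close the hole either, for the reason you yourself flag: it needs $r\le n-k$, not $r\le k$.

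The paper avoids this by never intersecting more than two windows at a time: it proves only the case $i=1$ --- the inclusion $\T_1^{\theta^r}(C)\supseteq\langle\theta^r(g),\dots,\theta^{k-1}(g)\rangle$ plus the dimension count $t_1^{\theta^r}(C)=2k-s_1^{\theta^r}(C)$ and $s_1^{\theta^r}(C)=k+r$ --- and then obtains general $i$ by induction using $\T^{\theta^r}_{i+1}=\T^{\theta^r}_1\circ\T^{\theta^r}_i$ (part 2 of Proposition~\ref{prop:propertiesTi}), since $\T_i^{\theta^r}(C)$ is again of the form $\G_{k',\theta}(g')$ with $\rk_q(g')=n$ and the two-window step can be reapplied. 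All the ingredients for this repair appear elsewhere in your write-up; you need to invoke the composition property in part 1 instead of the all-at-once window computation.
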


\begin{proof}
\begin{enumerate}
\item It is enough to prove it for $i=1$, then the claim follows by induction, since $\T^{\theta^{r}}_{i+1}=\T^{\theta^{r}}_1 \circ \T^{\theta^{r}}_i$ by part 2 of Proposition \ref{prop:propertiesTi}. For $0 \leq r \leq k$ we have  $\T^{\theta^r}_1(C)=\langle g,\ldots,\theta^{k-1}(g)\rangle \cap \langle\theta^r(g),\ldots, \theta^{k+r-1}(g)\rangle \supseteq \langle \theta^r(g),\ldots,\theta^{k-1}(g)\rangle$. The equality follows by part 1 of Proposition \ref{prop:SiTi} and part 1 of Proposition \ref{prop:GabidulinSeq}. If $m-k \leq r \leq m$ we write $\theta^r=(\theta^{-1})^{m-r}$ and use Lemma \ref{lem:Gabtheta=theta-1}.
\item[2.--3.] They follow from part 1 of Proposition \ref{prop:SiTi}. and parts 2--3 of Proposition \ref{prop:GabidulinSeq}. \qedhere
\end{enumerate}
\end{proof}

In particular, for $m=n\geq 2k$, we have 
$$t^{\theta^r}_1(C)
=\begin{cases} k-r  & \mbox{ if }  0\leq r \leq k\\
0 & \mbox{ if }  k+1\leq r \leq n-k-1\\
k-(n-r) & \mbox{ if } n-k \leq r \leq n-1 \end{cases}.$$

\subsection{Twisted Gabidulin Codes}
In this subsection we analyze the family of $\Fmk$ $\theta$-twisted Gabidulin codes, i.e., those which are linear over $\Fm$. 
The following result is a straightforward computation, analogous to Lemma \ref{lem:Gabtheta=theta-1}.

\begin{lemma}\label{lem:TGabtheta=theta-1}
Let $\theta$ be a generator of the Galois group $\Gal(\Fm/\Fq)$ and $g \in \Fm^n$ such that $\rk_q(g)$, and $\eta \in \Fm^*$. Then $\TGabShee{k,\theta}{\eta}{g} = \TGabShee{k,\theta^{-1}}{\eta^{-1}}{\theta^k(g)}$.
\end{lemma}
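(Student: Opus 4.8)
The plan is to mimic the computation of Lemma~\ref{lem:Gabtheta=theta-1}, using the explicit $\Fqm$-span description of an $\Fqm$-linear twisted Gabidulin code recorded in the Remark after Proposition~\ref{prop:TGabisMRD}:
\[
\TGabShee{k,\theta}{\eta}{g} = \left\langle g + \eta\,\theta^k(g),\ \theta(g),\ \ldots,\ \theta^{k-1}(g) \right\rangle_{\Fqm}.
\]

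First I would check that the right-hand side of the asserted identity is a bona fide twisted Gabidulin code, so that the notation is meaningful. This comes down to three points: that $\theta^{-1}$ is again a generator of $\Gal(\Fm/\Fq)$ (immediate, since the Galois group is cyclic); that $\rk_q(\theta^k(g)) = n$ (immediate, since a field automorphism permutes $\Fqm$ and fixes $\Fq$, hence preserves $\Fq$-linear independence, so $\rk_q(\theta^k(g))=\rk_q(g)=n$); and that $\Norm(\eta^{-1}) \neq (-1)^{km}$, which holds because $(-1)^{km}\in\{1,-1\}$ is its own inverse, so $\Norm(\eta^{-1}) = \Norm(\eta)^{-1} = (-1)^{km}$ would contradict the hypothesis $\Norm(\eta)\neq(-1)^{km}$.

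Then I would expand the right-hand side. Setting $h := \theta^k(g)$ and applying the span formula above with $(\theta,\eta,g)$ replaced by $(\theta^{-1},\eta^{-1},h)$, together with the reindexing $\theta^{-j}(h) = \theta^{k-j}(g)$ — in particular $(\theta^{-1})^k(h) = g$ and $\{\theta^{-1}(h),\ldots,\theta^{-(k-1)}(h)\} = \{\theta(g),\ldots,\theta^{k-1}(g)\}$ — one gets
\[
\TGabShee{k,\theta^{-1}}{\eta^{-1}}{\theta^k(g)} = \left\langle \theta^k(g) + \eta^{-1}g,\ \theta(g),\ \ldots,\ \theta^{k-1}(g) \right\rangle_{\Fqm}.
\]
Rescaling the first spanning vector by the unit $\eta\in\Fqm^{*}$ does not change the $\Fqm$-span and turns it into $g + \eta\,\theta^k(g)$, while the remaining generators already coincide with those of $\TGabShee{k,\theta}{\eta}{g}$. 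Hence the two codes are equal.

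There is no genuine obstacle: the only step requiring a moment's attention is the well-definedness check for the right-hand side (the norm condition in particular), and after that the argument is pure bookkeeping with the substitution $\theta^{-j}(\theta^k(g)) = \theta^{k-j}(g)$ and a single scalar rescaling of a basis vector.
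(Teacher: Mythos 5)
Your proof is correct and follows essentially the same route as the paper's: both rest on the reindexing $\theta^{-j}(\theta^k(g))=\theta^{k-j}(g)$ applied to the explicit $\Fqm$-span of the code, together with rescaling the twisted generator by the unit $\eta$. The only cosmetic difference is that you expand the right-hand side and match it to the left, while the paper rewrites the left-hand side into the right-hand form; your preliminary well-definedness checks are a harmless addition the paper omits.
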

\begin{proof}
We have
\begin{align*}
\TGabShee{k,\theta}{\eta}{g}&= \langle g+\eta\theta^k(g),\theta(g), \dots, \theta^{k-1}(g)\rangle\\
&=\langle \eta^{-1}\theta^{-k}(\theta^{k}(g)) + \theta^k(g),   \theta^{-1}(\theta^{k}(g)),\theta^{-2}(\theta^{k}(g)), \dots, \theta^{-(k-1)}(\theta^{k}(g)) \rangle\\
& = \TGabShee{k,\theta^{-1}}{\eta^{-1}}{\theta^k(g)}. \qedhere
\end{align*}
\end{proof}

\begin{proposition}\label{prop:TwGabidulinSeq}
Let $g\in \Fm^n$ with $\rk_q(g)=n$, and $C := \TGabShee{k,\theta}{\eta}{g}$ be a $\theta$-twisted Gabidulin code, where $\eta \in \Fm^*$, 
and $i,r \in \mathbb N$ such that $0\leq r<m$.
\begin{enumerate}
\item For $1\leq r \leq k-1$ we have $\cS^{\theta^{r}}_i(C)= \G_{k+ir+1, \theta}(g)$ and for $ m-k-1 \leq r < m$ we have $\cS^{\theta^{r}}_i(C)= \G_{k+i(m-r)+1, \theta^{-1}}(\theta^{k}(g))$.\footnote{Although not properly defined, we assume that $\TGabShee{s, \theta}{\eta}{g} =\F_{q^m}^n$ for $s\geq n$.} In particular, for any $i>0$, we have $s^{\theta^r}_i(C)=\min\{e,n\}$, where
$$e=\begin{cases} k+ir+1  & \mbox{ if }  1\leq r \leq k-1\\
k+i(m-r)+1 & \mbox{ if } m-k+1 \leq r \leq m-1 \end{cases}.$$
\item  If $k\leq r \leq n-k$ or $m-n+k\leq r \leq m-k$, then  $s_1^{\theta^r}(C)=2k$.
\item  If $r \geq k$ and $r >n-k$, then  $s_1^{\theta^r}(C)\geq k+n-r$.   If $r \leq m-k$ and $r <m-n+k$, then $s_1^{\theta^r}(C)\geq k+n-m+r$.
\end{enumerate}
\end{proposition}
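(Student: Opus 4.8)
The plan is to mirror the structure of the proof of Proposition~\ref{prop:GabidulinSeq}, using the explicit generating set of the $\Fm$-linear twisted Gabidulin code $C = \TGabShee{k,\theta}{\eta}{g} = \langle g + \eta\theta^k(g), \theta(g), \dots, \theta^{k-1}(g)\rangle_{\Fm}$ together with Proposition~\ref{cor:rankMoore} (the Moore-matrix rank statement) and the duality Lemma~\ref{lem:TGabtheta=theta-1}. For part 1 with $1 \le r \le k-1$ and $i=1$, I would first write out $\cS_1^{\theta^r}(C) = C + \theta^r(C)$ as the span of the $2k$ vectors $\{g+\eta\theta^k(g),\theta(g),\dots,\theta^{k-1}(g)\} \cup \{\theta^r(g)+\theta^r(\eta)\theta^{k+r}(g),\theta^{r+1}(g),\dots,\theta^{k+r-1}(g)\}$. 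Since $1 \le r \le k-1$, the pure powers $\theta(g),\dots,\theta^{k-1}(g)$ on the left together with $\theta^r(g),\dots,\theta^{k+r-1}(g)$ on the right already contain $\theta(g),\dots,\theta^{k+r-1}(g)$; adding the first generator on the left recovers $g$ as well (since $\theta^k(g)$ is among the pure powers, using $r \le k-1$ so $k \le k+r-1$), and then the twisted generator on the right becomes redundant. So $\cS_1^{\theta^r}(C) = \langle g,\theta(g),\dots,\theta^{k+r-1}(g)\rangle_{\Fm} = \G_{k+r+1-1,\theta}(g)$; wait — I must be careful with the index: we get $g$ through $\theta^{k+r-1}(g)$, which is $k+r$ consecutive powers, hence dimension $\min\{k+r, n\}$ by Proposition~\ref{cor:rankMoore}, matching $e = k+ir+1$ only if... let me recount. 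With $i=1$ the claimed value is $k+r+1$. The subtlety is that $g$ is \emph{not} automatically recovered: the left generator gives $g + \eta\theta^k(g)$, and $\theta^k(g)$ is available on the right only if $r \le k$, i.e. $k \le k+r-1$ needs $r\ge 1$ — true — but $\theta^k(g)$ appears on the right as part of $\theta^r(g),\dots,\theta^{k+r-1}(g)$ which covers exponents $r,\dots,k+r-1$, and $k$ lies in this range iff $r \le k \le k+r-1$, i.e. $1 \le r$; fine. So indeed $g$ is recovered, and we span exponents $0,1,\dots,k+r-1$: that is $k+r$ powers. Hence I expect the correct reading of the footnote/statement handles the "$+1$" through the twist contributing one extra dimension \emph{beyond} what a plain Gabidulin code of dimension $k$ would give under $\cS_1^{\theta^r}$; concretely I would verify that the twist forces $g, \theta^k(g), \dots$ into general position so that $\cS_1^{\theta^r}(C)$ actually has the claimed dimension $k+r+1$, by exhibiting $k+r+1$ independent vectors (the $k+r$ pure powers $g,\dots,\theta^{k+r-1}(g)$ plus $\theta^{k+r}(g)$, which enters through the right twisted generator $\theta^r(g)+\theta^r(\eta)\theta^{k+r}(g)$ since $\theta^r(\eta)\neq 0$). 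That is the key mechanism: the twisted generator on the $\theta^r$-shifted side contributes the extra power $\theta^{k+r}(g)$, raising the count from $k+r$ to $k+r+1$.

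Once $i=1$ is established, I would obtain general $i$ by induction using part 2 of Proposition~\ref{prop:propertiesSi} ($\cS_i^{\autom}\circ\cS_j^{\autom} = \cS_{i+j}^{\autom}$): applying $\cS_1^{\theta^r}$ to $\G_{k+(i-1)r+1,\theta}(g)$ adds $r$ further consecutive powers (this is exactly part 1 of Proposition~\ref{prop:GabidulinSeq} applied to a plain Gabidulin code — once the twist has "opened up" into a full Gabidulin code at step~1, subsequent sums behave as for Gabidulin codes), yielding $\G_{k+ir+1,\theta}(g)$, whose dimension is $\min\{k+ir+1,n\}$ by Proposition~\ref{cor:rankMoore}. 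The range $m-k+1 \le r \le m-1$ is then reduced to the previous case by writing $\theta^r = (\theta^{-1})^{m-r}$ with $1 \le m-r \le k-1$ and invoking Lemma~\ref{lem:TGabtheta=theta-1} to replace $C$ by $\TGabShee{k,\theta^{-1}}{\eta^{-1}}{\theta^k(g)}$, reading off $\cS_i^{\theta^r}(C) = \G_{k+i(m-r)+1,\theta^{-1}}(\theta^k(g))$.

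For part 2, when $k \le r \le n-k$ (so in particular $r \ge k$, meaning the pure powers $\theta^r(g),\dots,\theta^{r+k-1}(g)$ on the shifted side do not overlap with $\theta^{k-1}(g)$ or below), I would argue that $\cS_1^{\theta^r}(C) \supseteq \langle \theta(g),\dots,\theta^{k-1}(g)\rangle + \langle\theta^r(g),\dots,\theta^{r+k-1}(g)\rangle$ plus the recovered $g$; the $2k$ consecutive-within-blocks powers $g,\theta(g),\dots,\theta^{k-1}(g)$ and $\theta^r(g),\dots,\theta^{r+k-1}(g)$ are linearly independent over $\Fm$ by Proposition~\ref{cor:rankMoore} provided all $2k$ exponents are distinct modulo the rank condition, which holds since $r \ge k$; combined with the upper bound $s_1^{\theta^r}(C) \le 2\dim C = 2k$ this forces equality. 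Here I would need to double-check whether $g$ is recovered: on the left we have $g+\eta\theta^k(g)$ and $\theta^k(g)$ need not be available (it is available iff $r \le k \le r+k-1$, i.e. $r \le k$, which holds only at $r=k$); for $r > k$, $g$ is not separately recovered, but then $\cS_1^{\theta^r}(C)$ is spanned by $g+\eta\theta^k(g),\theta(g),\dots,\theta^{k-1}(g)$ and $\theta^r(g)+\theta^r(\eta)\theta^{k+r}(g),\theta^{r+1}(g),\dots,\theta^{k+r-1}(g)$ — still $2k$ vectors, and I would show these are independent (a Moore-type determinant / rank argument, since no monomial relation ties the two twisted generators together when the exponent sets $\{0,k\}$ and $\{r,k+r\}$ are disjoint from the pure-power sets and from each other, which again uses $r \ge k$ and $r \le n-k$ so all relevant exponents $0,1,\dots,k-1,k,r,r+1,\dots,r+k-1,r+k$ are $\le n$... actually $r+k \le n$ needs $r \le n-k$, fine). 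So $s_1^{\theta^r}(C) = 2k$. Part 3 follows by the same token: for $r \ge k$ and $r > n-k$, restrict attention to the sub-collection of generators whose exponents lie in $\{0,1,\dots,n-1\}$ (namely $\theta(g),\dots,\theta^{k-1}(g)$ and $\theta^r(g),\dots,\theta^{n-1}(g)$, plus the two twisted generators as available), count that they span at least $n-r+k$ dimensions by Proposition~\ref{cor:rankMoore}, and the symmetric range reduces via Lemma~\ref{lem:TGabtheta=theta-1}.

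The main obstacle I anticipate is the careful bookkeeping in part 1 (and the boundary case $r=k$ of part 2) of exactly \emph{which} powers of $\theta$ applied to $g$ survive in the span: one must track the interaction between the twisted generator $g + \eta\theta^k(g)$, its $\theta^r$-image $\theta^r(g) + \theta^r(\eta)\theta^{k+r}(g)$, and the pure powers, to be sure that the dimension is exactly $k+ir+1$ — i.e. that the twist genuinely contributes the single extra power $\theta^{k+ir}(g)$ (or $\theta^{k+r}(g)$ at step one) and no more and no fewer — rather than collapsing by one when $\eta$ is "special". Since $\eta \in \Fm^*$ is arbitrary and the only constraint used elsewhere is $\Norm(\eta) \neq (-1)^{km}$, I would need to confirm that the $+1$ holds for \emph{all} $\eta \neq 0$; the Moore-matrix independence argument should give this cleanly because $\theta^r(\eta) \neq 0$ is all that is required for $\theta^{k+r}(g)$ to appear with nonzero coefficient, and a vector in $C+\theta^r(C)$ that is an $\Fm$-combination of $g,\dots,\theta^{k+r-1}(g)$ forms a subspace of codimension exactly one.
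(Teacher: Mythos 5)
Your proposal is correct and follows essentially the same route as the paper's proof: writing out the explicit generators of $C+\theta^r(C)$, recovering $g$ and $\theta^{k+r}(g)$ from the two twisted generators because $\theta^k(g)$ and $\theta^r(g)$ already appear among the pure powers when $1\leq r\leq k-1$, concluding $\cS_1^{\theta^r}(C)=\G_{k+r+1,\theta}(g)$, then inducting via $\cS_i^{\theta^r}=\cS_{i-1}^{\theta^r}\circ\cS_1^{\theta^r}$ together with Proposition~\ref{prop:GabidulinSeq}, and reducing the upper ranges of $r$ via Lemma~\ref{lem:TGabtheta=theta-1}; parts 2 and 3 are likewise handled by the same Moore-matrix counting as in the paper. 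The only difference is cosmetic: you spend more words double-checking the index bookkeeping (and briefly misstate the right-hand block of pure powers before self-correcting), but the final mechanism you identify is exactly the paper's.
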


\begin{proof}
\begin{enumerate}
\item Let $1\leq r\leq k-1$ and $i=1$. Then 
$\cS^{\theta^r}_1(C)=\langle g+\eta \theta^k(g),\theta(g),\dots, \theta^{k-1}(g),$ $\theta^r(g)+\theta^r(\eta)\theta^{k+r}(g), \theta^{r+1}(g), \dots, \theta^{k+r-1}(g)\rangle,$
which is contained in $ \G_{k+r+1, \theta}(g)$. As $r\leq k-1$, we have that  $\cS^{\theta^r}_1(C) \supseteq \{\theta(g),\ldots, \theta^{r+k-1}(g)\}$.
Furthermore, it contains $g+\eta\theta^k(g)$ and $\theta^r(g)+\theta^r(\eta)\theta^{k+r}(g)$.
Since $1\leq r \leq k-1$, it contains  $\theta^k(g)$, and $\theta^{r}(g)$, and therefore  $g, \theta^{k+r}(g) \in \cS^{\theta^r}_1(C)$, and we deduce $\cS^{\theta^r}_1(C)= \G_{k+r+1, \theta}(g)$.
If $i >1$, by part 2 of Proposition~\ref{prop:propertiesSi}, we have $\cS^{\theta^r}_i(C)=\cS^{\theta^r}_{i-1}(\cS^{\theta^r}_1(C))$, and we conclude using part 1 of Proposition \ref{prop:GabidulinSeq}. 

 If $m-k+1 \leq r \leq m-1$ we write $\theta^{r}=(\theta^{-1})^{m-r}$, with $1\leq m-r \leq k-1$. With Lemma \ref{lem:TGabtheta=theta-1} we get  $\cS^{\theta^{r}}_i(C)= \cS^{(\theta^{-1})^{m-r}}_i( \TGabShee{k,\theta^{-1}}{\eta^{-1}}{\theta^k(g)}) =\G_{k+i(m-r)+1, \theta^{-1}}(\theta^{k}(g))$.

\item If $k\leq r\leq n-k$, then 
$\cS^{\theta^r}_1(C)=\langle g+\eta \theta^k(g),\theta(g) \ldots, \theta^{k-1}(g),\theta^r(g)+\theta^r(\eta)\theta^{r+k}(g),$ $\theta^{r+1}(g), \ldots, \theta^{r+k-1}(g)\rangle$
which  has dimension $2k$ by Proposition \ref{cor:rankMoore}.

If  $m-n+k\leq r \leq m-k$ the claim follows with Lemma \ref{lem:TGabtheta=theta-1}, writing $\theta^r=(\theta^{-1})^{m-r}$.

\item If $r \geq k$ and $k+r> n-1$, then 
$$\cS^{\theta^r}_1(C)\supseteq\langle g+\eta \theta^k(g), \theta(g),\ldots, \theta^{k-1}(g),\theta^r(g)+\theta^r(\eta)\theta^{r+k}(g),\theta^{r+1}(g),\ldots, \theta^{n-1}(g)\rangle,$$ which has dimension $k+n-r$, by Proposition  \ref{cor:rankMoore}.

If $r\leq m-k$ and $r <m-n+k+1$ the claim follows again with Lemma \ref{lem:TGabtheta=theta-1}. \qedhere
\end{enumerate}
\end{proof}

In particular, for $m=n\geq 2k$, we have 
$$s^{\theta^r}_1(C)=\begin{cases} k+r+1  & \mbox{ if }  0\leq r < k\\
2k & \mbox{ if }  k\leq r \leq n-k\\
k+(n-r)+1 & \mbox{ if } n-k < r \leq n-1 \end{cases}.$$

We present example sequences and the corresponding statements from Proposition~\ref{prop:TwGabidulinSeq} in Appendix~\ref{app:examples}.

\begin{proposition}
Let $g\in \Fm^n$ with $\rk_q(g)=n$, and $C := \TGabShee{k,\theta}{\eta}{g}$ be a $\theta$-twisted Gabidulin code, where $\eta \in \Fm^*$, %
and $i,r \in \mathbb N$ such that $0\leq r<m$.
\begin{enumerate}
\item For $1\leq r \leq k-1$ we have $\T^{\theta^{r}}_i(C)=\T^{\theta}_{ir}(C)= \G_{k-ir-1, \theta}(\theta^{ir}(g))$ and for $ m-k+1 \leq r \leq m-1$ we have $\T^{\theta^{r}}_i(C)= \G_{k-i(m-r)-1, \theta^{-1}}(\theta^{k-i(m-r)-1}(g))$.\footnote{Although not properly defined, we assume that $\G_{s, \theta}(g) =\{0\}$ for $s\leq 0$.} In particular, for any $i>0$, we have $t^{\theta^r}_i(C)=\max\{e,0\}$, where
$$e=\begin{cases} k-ir-1  & \mbox{ if }  1\leq r \leq k-1\\
k-i(m-r)-1 & \mbox{ if } m-k+1 \leq r \leq m-1 \end{cases}.$$
\item  If $k\leq r\leq n-k$ or $m-n+k\leq r \leq m-k$ we have  $t_1^{\theta^r}(C)=0$.
\item  If $r\geq k$ and $r >n-k$, then  $t_1^{\theta^r}(C)\leq k-n+r$.   If $r\leq m-k$ and $r <m-n+k$, then $t_1^{\theta^r}(C)\leq k-n+m-r$.
\end{enumerate}
\end{proposition}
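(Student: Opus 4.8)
The strategy is to reduce everything to the results already established for the sum sequence of twisted Gabidulin codes (Proposition~\ref{prop:TwGabidulinSeq}) and the duality relation between intersections and sums (Proposition~\ref{prop:SiTi}, part~1, and its iterated form). The cleanest route is the one used in the proof of Proposition~\ref{prop:GabidulinSeqT}: it suffices to treat the case $i=1$ directly, and then bootstrap to general $i$ using $\T^{\theta^r}_{i+1}=\T^{\theta^r}_1\circ\T^{\theta^r}_i$ (part~2 of Proposition~\ref{prop:propertiesTi}). For $i=1$ and $1\leq r\leq k-1$, I would write $\T^{\theta^r}_1(C)=C\cap\theta^r(C)$ explicitly as $\langle g+\eta\theta^k(g),\theta(g),\dots,\theta^{k-1}(g)\rangle \cap \langle\theta^r(g)+\theta^r(\eta)\theta^{k+r}(g),\theta^{r+1}(g),\dots,\theta^{k+r-1}(g)\rangle$, and observe that the intersection contains $\langle\theta^r(g),\dots,\theta^{k-1}(g)\rangle$ (which has dimension $k-r$); then the reverse inequality $t^{\theta^r}_1(C)\leq k-r-1$ should follow from the dimension formula $t^{\theta^r}_1(C)=2k-s^{\theta^r}_1(C)$ of Proposition~\ref{prop:SiTi} together with $s^{\theta^r}_1(C)=k+r+1$ from Proposition~\ref{prop:TwGabidulinSeq}, giving $t^{\theta^r}_1(C)=k-r-1$. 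One must also identify \emph{which} $g'$ the resulting Gabidulin code is generated by, namely $\G_{k-r-1,\theta}(\theta^r(g))$, which is immediate from the explicit generating set.

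For the induction step, applying $\T^{\theta^r}_1$ to $\T^{\theta^r}_i(C)=\G_{k-ir-1,\theta}(\theta^{ir}(g))$ reduces to part~1 of Proposition~\ref{prop:GabidulinSeqT} applied to a plain Gabidulin code with evaluation point $\theta^{ir}(g)$: one gets $\T^{\theta^r}_1(\G_{k-ir-1,\theta}(\theta^{ir}(g)))=\G_{k-ir-1-r,\theta}(\theta^{r}(\theta^{ir}(g)))=\G_{k-(i+1)r-1,\theta}(\theta^{(i+1)r}(g))$, closing the induction. The formula $\T^{\theta^r}_i(C)=\T^\theta_{ir}(C)$ then follows because, once we are inside an honest Gabidulin code, $\theta^r$-intersections coincide with $\theta$-intersections at $r$ times the depth, exactly as in Proposition~\ref{prop:GabidulinSeqT}. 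The case $m-k+1\leq r\leq m-1$ is handled, as usual, by writing $\theta^r=(\theta^{-1})^{m-r}$ with $1\leq m-r\leq k-1$ and invoking Lemma~\ref{lem:TGabtheta=theta-1} to pass to the $\theta^{-1}$-twisted Gabidulin code $\TGabShee{k,\theta^{-1}}{\eta^{-1}}{\theta^k(g)}$; one then reuses the first case verbatim, tracking the evaluation point through the isomorphism.

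Parts~2 and~3 are pure corollaries of part~1 of Proposition~\ref{prop:SiTi} combined with parts~2 and~3 of Proposition~\ref{prop:TwGabidulinSeq}: from $t^{\theta^r}_1(C)=2k-s^{\theta^r}_1(C)$ and $s^{\theta^r}_1(C)=2k$ we get $t^{\theta^r}_1(C)=0$ in the appropriate range, and from $s^{\theta^r}_1(C)\geq k+n-r$ (resp.\ $\geq k+n-m+r$) we get $t^{\theta^r}_1(C)\leq 2k-(k+n-r)=k-n+r$ (resp.\ $k-n+m-r$). So parts~2 and~3 require essentially no new work beyond quoting the dual statements.

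The main obstacle I anticipate is the base case $i=1$ of part~1: one must be careful that the inclusion $\T^{\theta^r}_1(C)\supseteq\langle\theta^r(g),\dots,\theta^{k-1}(g)\rangle$ together with the dimension count really pins down the intersection, and in particular that the two ``twist'' generators $g+\eta\theta^k(g)$ and $\theta^r(g)+\theta^r(\eta)\theta^{k+r}(g)$ do not conspire to produce an additional common vector. This is controlled by the exact value $s^{\theta^r}_1(C)=k+r+1$, which already encodes that the twists behave ``generically'' in the sum; feeding this into $t^{\theta^r}_1(C)=2k-s^{\theta^r}_1(C)$ forces equality, so the apparent subtlety dissolves once Proposition~\ref{prop:TwGabidulinSeq} is in hand. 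A secondary bookkeeping point is getting the evaluation points and indices exactly right when passing to $\theta^{-1}$, but this is entirely mechanical.
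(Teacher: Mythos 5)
Your overall route is the same as the paper's: the dimension comes from $t_1^{\theta^r}(C)=2k-s_1^{\theta^r}(C)$ (Proposition~\ref{prop:SiTi}) combined with the sum sequence of Proposition~\ref{prop:TwGabidulinSeq}; the identification of the intersection as a Gabidulin code comes from an explicit inclusion; general $i$ follows by induction via $\T^{\theta^{r}}_{i+1}=\T^{\theta^{r}}_1\circ\T^{\theta^{r}}_i$ together with Proposition~\ref{prop:GabidulinSeqT}; the range $m-k+1\leq r\leq m-1$ is handled through Lemma~\ref{lem:TGabtheta=theta-1}; and parts 2--3 are pure duality. So the architecture matches.

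However, your base case contains a concrete error. You claim $\T^{\theta^{r}}_1(C)\supseteq\langle\theta^r(g),\dots,\theta^{k-1}(g)\rangle$, a space of dimension $k-r$. This is false: $\theta^r(g)$ lies in $C$, but it does \emph{not} lie in $\theta^r(C)$, because in $\theta^r(C)$ the vector $\theta^r(g)$ occurs only inside the twisted generator $\theta^r(g)+\theta^r(\eta)\theta^{k+r}(g)$, and $\theta^{k+r}(g)$ is independent of the remaining generators $\theta^{r+1}(g),\dots,\theta^{k+r-1}(g)$. Your own next line exposes the problem: you derive $t_1^{\theta^r}(C)=2k-s_1^{\theta^r}(C)=2k-(k+r+1)=k-r-1$, which is strictly smaller than the dimension $k-r$ of the subspace you just asserted is contained in the intersection; the two claims cannot both hold. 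The correct inclusion is $\T^{\theta^{r}}_1(C)\supseteq\langle\theta^{r+1}(g),\dots,\theta^{k-1}(g)\rangle$ (these are the pure powers of $\theta$ applied to $g$ that appear in the generating sets of both $C$ and $\theta^r(C)$), which has dimension exactly $k-r-1$; combined with the duality count this forces equality and identifies the intersection as a Gabidulin code of dimension $k-r-1$ spanned by consecutive powers. With that one-line repair, the induction step, the reduction to $\theta^{-1}$ for $m-k+1\leq r\leq m-1$, and parts 2--3 all go through exactly as you describe and as in the paper.
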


\begin{proof}
\begin{enumerate}
\item For $i=1$ the claim follows from part 1 of Proposition \ref{prop:SiTi} and part 1 of Proposition \ref{prop:TwGabidulinSeq}. That $\T^{\theta^{r}}_1(C)$ is equal to some Gabidulin code can be shown analogously to the proof of Proposition \ref{prop:GabidulinSeqT}. For larger $i$ the claim follows by induction, since $\T^{\theta^{r}}_{i+1}=\T^{\theta^{r}}_1 \circ \T^{\theta^{r}}_i$ by part 2 of Proposition \ref{prop:propertiesTi}, together with part 1 of Proposition\ref{prop:GabidulinSeqT} . 
\item[2.--3.] They follow from part 1 of Proposition \ref{prop:SiTi} and parts 2--3 of Proposition \ref{prop:TwGabidulinSeq}. \qedhere
\end{enumerate}
\end{proof}

In particular, for $m=n\geq 2k$, we have 
$$t^{\theta^r}_1(C)=\begin{cases} k-r -1 & \mbox{ if }  0\leq r < k\\
0 & \mbox{ if }  k\leq r \leq n-k\\
k-(n-r)-1 & \mbox{ if } n-k < r \leq n-1 \end{cases}.$$

The previous results imply:
\begin{corollary}\label{cor:inequiv1}
Let $1<k<n-1$, $g,h \in \Fm^n$ with $\rk_q(g)=\rk_q(h)=n$ and let $\theta$ be a generator of $\Gal(\Fm/\Fq)$. Moreover, let $1\leq r < m$ be such that $\gcd(r,m)=1$. Suppose that \emph{at least one} of the following holds:
\begin{enumerate}
\item[\mylabel{h1}{\textnormal{(h1)}}] $r < n-1$;
\item[\mylabel{h2}{\textnormal{(h2)}}] $r > m-n+1$;
\item[\mylabel{h3}{\textnormal{(h3)}}] $k<2n-m-1$;
\item[\mylabel{h4}{\textnormal{(h4)}}] $k>m-n+1$.
\end{enumerate}
Then $\TGabShee{k,\theta}{\eta}{g}$ is not equivalent to $\G_{k,\theta^r}(h)$. 
\end{corollary}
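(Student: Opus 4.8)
The plan is to compare the invariant $s_1^{\theta^r}$ of the two codes and show it must differ under each of the hypotheses \ref{h1}--\ref{h4}; since $s_1^{\theta^r}$ is an equivalence invariant by Lemma~\ref{lem:invariant}, this forces inequivalence. Note first that $\gcd(r,m)=1$ guarantees $\theta^r$ is again a generator of $\Gal(\Fm/\Fq)$, so $\G_{k,\theta^r}(h)$ is a genuine $\theta^r$-Gabidulin code and both codes are $\Fmk$ codes; hence the invariant is well-defined for both and we may freely apply Propositions~\ref{prop:GabidulinSeq} and~\ref{prop:TwGabidulinSeq}.

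The key computation is the value of $s_1^{\theta^r}$ on a $\theta^r$-Gabidulin code. For a $\theta^r$-Gabidulin code $\G_{k,\theta^r}(h)$, applying $\theta^r$ once and summing gives $\G_{2k,\theta^r}(h)$ (by part~1 of Proposition~\ref{prop:GabidulinSeq} applied with the generator $\theta^r$ and exponent $r'=1$), so $s_1^{\theta^r}\bigl(\G_{k,\theta^r}(h)\bigr)=\min\{2k,n\}$. On the other side, I must read off $s_1^{\theta^r}$ of the twisted code $\TGabShee{k,\theta}{\eta}{g}$ from Proposition~\ref{prop:TwGabidulinSeq}: part~1 (with $i=1$) gives $s_1^{\theta^r}=\min\{k+r+1,\,n\}$ when $1\le r\le k-1$ and $\min\{k+(m-r)+1,\,n\}$ when $m-k+1\le r\le m-1$; part~2 gives $s_1^{\theta^r}=2k$ when $k\le r\le n-k$ or $m-n+k\le r\le m-k$; and part~3 gives $s_1^{\theta^r}\ge k+n-r$ (resp.\ $\ge k+n-m+r$) in the remaining ranges. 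The strategy is then a case analysis: under each of \ref{h1}--\ref{h4} I pin down which of these branches applies to the twisted code and check the resulting value is $\ne \min\{2k,n\}$. For instance, if \ref{h1} holds, $r<n-1$, then either $r\le k-1$, giving $s_1^{\theta^r}(\TGabShee{k,\theta}{\eta}{g})=\min\{k+r+1,n\}$ which exceeds $k$ strictly and is $<\min\{2k,n\}$ precisely when $r<k-1$ and $\le n$, or $k\le r\le n-k$, giving exactly $2k$ — so I need the strict inequality $1<k<n-1$ in the hypothesis to separate the boundary subcases (e.g.\ when $r=k-1$ the twisted value is $\min\{2k,n\}$, matching, and one must instead invoke a different $r$ or a finer argument). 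The clean way around these boundary collisions is to observe that the \emph{whole sequence} $\{s_i^{\theta^r}\}_i$ differs: for the twisted code part~1 of Proposition~\ref{prop:TwGabidulinSeq} shows $s_i^{\theta^r}$ grows in steps of $r$ up to $\min\{k+ir+1,n\}$, carrying the ``$+1$'' offset, whereas for the Gabidulin code $s_i^{\theta^r}=\min\{k+ir,n\}$ has no offset; comparing the two sequences (equivalently, comparing $\Delta_0^{\theta^r}$ or the limiting values before saturation) removes the single-value coincidences, and the conditions \ref{h1}--\ref{h4} are exactly what is needed to ensure the sequences are computed in the ``non-saturated'' regime where this offset is visible.

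I would organize the write-up as four short paragraphs, one per hypothesis, each reducing to: identify the active branch of Proposition~\ref{prop:TwGabidulinSeq}, identify the active branch of Proposition~\ref{prop:GabidulinSeq} for the $\theta^r$-Gabidulin code, and exhibit an index $i$ (typically $i=1$, occasionally a slightly larger $i$ when $r$ is small) at which $s_i^{\theta^r}$ differs; then close by invoking Lemma~\ref{lem:invariant}. Cases \ref{h2} and \ref{h4} are handled symmetrically to \ref{h1} and \ref{h3} by passing through $\theta^{-1}$ via Lemmas~\ref{lem:Gabtheta=theta-1} and~\ref{lem:TGabtheta=theta-1} (replacing $r$ by $m-r$), so in practice only two genuinely distinct arguments are needed. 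The main obstacle I anticipate is precisely the boundary bookkeeping: making sure that for the specific $r$ under consideration the twisted code's $s_1^{\theta^r}$ (with its ``$+1$'') is not accidentally equal to $\min\{2k,n\}$, which is where the strict bounds $1<k<n-1$ and the particular forms $2n-m-1$, $m-n+1$ in \ref{h3}--\ref{h4} must be used carefully — these are chosen exactly so that the offset survives truncation at $n$. A secondary but routine point is verifying that the hypotheses' ranges for $r$ are nonempty and compatible with $\gcd(r,m)=1$, so that such an $r$ can actually be chosen, but since $r$ is given this is automatic.
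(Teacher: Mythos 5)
Your overall strategy (compare the equivalence invariant $s_1^{\theta^r}$ of the two codes via Lemma~\ref{lem:invariant}) is the same as the paper's, but there is a genuine error at the anchor of your argument: the value of the invariant for the Gabidulin code. You claim that $C+\theta^r(C)=\G_{2k,\theta^r}(h)$ for $C=\G_{k,\theta^r}(h)$, hence $s_1^{\theta^r}(C)=\min\{2k,n\}$. This is false. Since $C=\langle h,\theta^r(h),\dots,(\theta^r)^{k-1}(h)\rangle$ and $\theta^r(C)=\langle \theta^r(h),\dots,(\theta^r)^{k}(h)\rangle$, the two spaces share $k-1$ of their $k$ generators, so the sum is $\G_{k+1,\theta^r}(h)$ and, by Proposition~\ref{cor:rankMoore} (with $\theta^r$ a generator because $\gcd(r,m)=1$), $s_1^{\theta^r}(\G_{k,\theta^r}(h))=k+1$. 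This is exactly part~1 of Proposition~\ref{prop:GabidulinSeq} read with $\theta^r$ as the base generator and exponent $1$; your reading with ``exponent $r'=1$'' should have produced $\min\{k+1,n\}$, not $\min\{2k,n\}$.

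This error propagates through your case analysis. The ``boundary collisions'' you worry about (e.g.\ $r=k-1$ making the twisted value equal $\min\{2k,n\}$) do not exist once the correct target $k+1$ is used: for $1\le r\le k-1$ the twisted code has $s_1^{\theta^r}=\min\{k+r+1,n\}\ge k+2>k+1$; for $k\le r\le n-k$ it equals $2k>k+1$ since $k>1$; and in the remaining ranges part~3 of Proposition~\ref{prop:TwGabidulinSeq} gives $s_1^{\theta^r}\ge k+n-r$ (resp.\ $\ge k+n-m+r$), which the hypotheses \ref{h1}--\ref{h4} are designed to push strictly above $k+1$ (this is where $k<2n-m-1$ and $k>m-n+1$ enter). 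Consequently your proposed repair---comparing the whole sequences $\{s_i^{\theta^r}\}_i$ or the increments to ``remove single-value coincidences''---is unnecessary; the single value $s_1^{\theta^r}$ suffices throughout, which is precisely how the paper's proof proceeds. As written, your proposal would not close, because the comparison value for the Gabidulin side is wrong and the subsequent branch-by-branch inequalities are checked against the wrong number.
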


\begin{proof}
By Proposition \ref{prop:GabidulinSeq} we have  $ s_1^{\theta^r}(\G_{k,\theta^r}(h)) = k+1$. Hence, if $s_1^{\theta^r}(\TGabShee{k,\theta}{\eta}{g}) =s_1^{\theta^r}(\TGabShee{k,\theta}{\eta}{g}) \neq k+1$ then the two codes are inequivalent by Lemma \ref{lem:invariant}.

If $r \in \{1,\dots,\max(k-1,n-k)\}\cup \{\min(m-k+1,m-n+k),\dots,m-1\}$  we have that $s_1^{\theta^r}(\TGabShee{k,\theta}{\eta}{g}) > k+1$ by parts 1 and 2 of Proposition \ref{prop:TwGabidulinSeq}. 
If $r\in\{\max(k,n-k+1),\dots,n-3\}\cup\{m-n+3,\dots,\min(m-k,m-n+k-1)\}$ we have that $s_1^{\theta^r}(\TGabShee{k,\theta}{\eta}{g}) > k+1$ by part 3 of Proposition \ref{prop:TwGabidulinSeq}.
This proves the statement for one between \ref{h1} and \ref{h2} holds.. 

Assume now that $n-2<r<m-n+2$. This implies that $\max(k,n-k)< r < \min(m-k,m-n+k)$. Part 3 of Proposition \ref{prop:TwGabidulinSeq} implies that
$$s_1^{\theta^r} \geq k+n-r >2n-m $$ 
and also that
$$s_1^{\theta^r} \geq k+n-m+r >2k+n-m .$$ 
Thus $s_1^{\theta^r}$ is strictly greater than $k+1$ if one between \ref{h3} and \ref{h4} is satisfied.
\end{proof}

This implies the following general result on the equivalence of Gabidulin and twisted Gabidulin codes in dependence on $m$ and $n$.
The special case $m=n$ of was already proven in \cite{sheekey2016new} and \cite{lunardon2018generalized}.

\begin{corollary}\label{cor:sufficient_inequivalence_condition_Gab_TGab}
Let $1<k<n-1$. If $m< 2n-2$ (in particular $m=n$), then twisted Gabidulin codes of length $n$ and dimension $k$ over $\Fm$ are never equivalent to a Gabidulin code.
\end{corollary}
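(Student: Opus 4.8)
The plan is to deduce the statement directly from Corollary~\ref{cor:inequiv1}. First I would unwind the definitions: an arbitrary Gabidulin code of length $n$ and dimension $k$ over $\Fm$ is $\G_{k,\sigma}(h)$ for some generator $\sigma$ of $\Gal(\Fm/\Fq)$ and some $h$ with $\rk_q(h)=n$, and an arbitrary twisted Gabidulin code is $\TGabShee{k,\theta}{\eta}{g}$ for some generator $\theta$ of $\Gal(\Fm/\Fq)$, some $\eta\in\Fm^*$ with $\Norm(\eta)\neq(-1)^{km}$, and some $g$ with $\rk_q(g)=n$. Since $\Gal(\Fm/\Fq)$ is cyclic and $\theta$ is a generator, I would write $\sigma=\theta^r$ for the unique $r\in\{1,\dots,m-1\}$ with $\gcd(r,m)=1$; the goal then becomes to show that, whenever $m<2n-2$, at least one of the conditions \ref{h1}--\ref{h4} of Corollary~\ref{cor:inequiv1} is satisfied for this $r$.

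The only real content is a one-line inequality: $m<2n-2$ is equivalent to $m-n+1<n-1$, so no integer $r$ can satisfy both $r\geq n-1$ and $r\leq m-n+1$. Hence for every $r\in\{1,\dots,m-1\}$ one has either $r<n-1$, which is \ref{h1}, or $r\geq n-1$, and then $r\geq n-1>m-n+1$ gives \ref{h2}. (Conditions \ref{h3} and \ref{h4} play no role for this range of $m$; they become relevant only when $m\geq 2n-2$.) Applying Corollary~\ref{cor:inequiv1} now yields $\TGabShee{k,\theta}{\eta}{g}\not\sim\G_{k,\theta^r}(h)=\G_{k,\sigma}(h)$, which is the claim. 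The parenthetical case $m=n$ is covered because $1<k<n-1$ forces $n\geq 4$, hence $m=n<2n-2$.

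I do not expect any genuine obstacle: essentially all the work is already contained in Corollary~\ref{cor:inequiv1} and, through it, in Proposition~\ref{prop:TwGabidulinSeq}. The two points requiring a little care are (i)~making sure the reduction $\sigma=\theta^r$ really covers \emph{all} pairs consisting of one twisted Gabidulin code and one Gabidulin code, rather than only codes built from powers of a single fixed generator---this is automatic because $\Gal(\Fm/\Fq)$ is cyclic, so any generator is a power of any other; and (ii)~checking that the degenerate value $r=0$ cannot occur, which holds since a generator of a group of order $m\geq n\geq 4$ is never the identity.
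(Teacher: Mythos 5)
Your proposal is correct and follows essentially the same route as the paper: the paper's proof likewise observes that $m<2n-2$ forces $n-1\geq m-n+2$, so every admissible $r$ satisfies either \ref{h1} or \ref{h2} of Corollary~\ref{cor:inequiv1}. The extra remarks you make (reducing an arbitrary generator $\sigma$ to $\theta^r$ and excluding $r=0$) are correct and only make explicit what the paper leaves implicit.
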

\begin{proof}
For $m<2n-2$ we get that $n-1\geq m-n+2$, hence either the first or the second condition of Corollary \ref{cor:inequiv1} is fulfilled.
\end{proof}

\subsection{Generalized Twisted Gabidulin Codes}

In this subsection, we analyze some special instances of generalized twisted Gabidulin codes with $\ell=1$ twists.
We derive (parts of) the sequences $s_i^\sigma$ for three cases (recall that $t=1$, $h=0$ are narrow-sense twisted Gabidulin codes discussed in the previous section):
\begin{itemize}
\item $h=0$ and arbitrary $1<t\leq n-k$ for any evaluation point vector $g$ (Proposition~\ref{prop:GTwGabidulinSeqh0}),
\item $h=k-1$ and arbitrary $m-n+1\leq t < m-k$ for any $g$ (Corollary~\ref{cor:GTwGabidulinSeqhk-1}),
\item and almost all $t,h$ for any $g$ spanning a subfield of $\Fqm$ (Proposition~\ref{prop:GTwGabidulinSeqSubfieldGenth}).
\end{itemize}
This enables us to give several classes of generalized twisted Gabidulin codes that are inequivalent to any Gabidulin or (narrow-sense) twisted Gabidulin code (cf.~Corollaries~\ref{cor:twisted_gab_inequivalence_h0_1}, \ref{cor:twisted_gab_inequivalence_h0_2}, %
 \ref{cor:twisted_gab_inequivalence_generalth}).
The techniques can be carried over to more classes of twisted Gabidulin codes.
In order to demonstrate the suitability of the approach without becoming too technical, we concentrate on the cases mentioned above.
 For the same reason, we refrain from determining the intersection sequences in this subsection.

The following result is again analogous to Lemmas \ref{lem:Gabtheta=theta-1} and \ref{lem:TGabtheta=theta-1}.

\begin{lemma}\label{lem:GTGabtheta=theta-1}
Let $\theta$ be a generator of the Galois group $\Gal(\Fm/\Fq)$, $g \in \Fm^n$ with $\rk_q(g)=n$ and $\eta \in \Fm^*$. Then $\GTw{k,\theta}{\eta,t,h}{g}= \GTw{k,\theta^{-1}}{\eta,m-(k+t-1),k-h-1}{\theta^{k-1}(g)}$.
\end{lemma}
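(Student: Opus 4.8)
The plan is to mimic the proofs of Lemmas \ref{lem:Gabtheta=theta-1} and \ref{lem:TGabtheta=theta-1}: write out an explicit spanning set of $\GTw{k,\theta}{\eta,t,h}{g}$, re-index the powers of $\theta$ by conjugating with $\theta^{-1}$, and recognize the resulting spanning set as that of a generalized twisted Gabidulin code for $\theta^{-1}$ with the stated parameters. Concretely, recall from Definition~\ref{def:genTwistedCodes} that for $\numTwists=1$ we have
\begin{align*}
\GTw{k,\theta}{\eta,t,h}{g} = \left\langle \theta^{h}(g)+\eta\,\theta^{k-1+t}(g),\ \theta^{i}(g) : i\in\{0,\dots,k-1\}\setminus\{h\} \right\rangle_{\Fm}.
\end{align*}

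First I would apply $\theta^{-(k-1)}$ to each generator: since $\theta^{-(k-1)}$ is an $\Fm$-semilinear (in fact $\Fq$-linear on coordinates, but here we are only relabeling exponents inside a fixed code — actually the cleanest route is the one used in the previous lemmas, namely to substitute $g' := \theta^{k-1}(g)$ and observe $\theta^{a}(g) = \theta^{a-(k-1)}(g') = (\theta^{-1})^{(k-1)-a}(g')$). Under this substitution the plain generators $\theta^{i}(g)$ for $i\in\{0,\dots,k-1\}\setminus\{h\}$ become $(\theta^{-1})^{(k-1)-i}(g')$, and as $i$ ranges over $\{0,\dots,k-1\}\setminus\{h\}$ the index $j=(k-1)-i$ ranges over $\{0,\dots,k-1\}\setminus\{k-1-h\}$; this matches the plain part of a $\theta^{-1}$-code with $h' = k-1-h = k-h-1$. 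The twisted generator $\theta^{h}(g)+\eta\,\theta^{k-1+t}(g)$ becomes $(\theta^{-1})^{(k-1)-h}(g') + \eta\,(\theta^{-1})^{(k-1)-(k-1+t)}(g') = (\theta^{-1})^{h'}(g') + \eta\,(\theta^{-1})^{-t}(g')$; multiplying the (one-dimensional) span by the nonzero scalar appropriately and writing $-t = h' - (k-1+t')$ forces $t' = (k-1) + t - h' + (-t)$... so I would solve $h' + (k-1+t') \cdot(-1)\text{-exponent}$ — more carefully: in the $\theta^{-1}$-code the twisted generator should read $(\theta^{-1})^{h'}(g') + \eta'\,(\theta^{-1})^{k-1+t'}(g')$, so I need $\{h', k-1+t'\} = \{(k-1)-h,\ -t\}$ as a set of exponents (mod $m$); since $h' = k-1-h \le k-1 < k-1+t'$, the twisted exponent $k-1+t'$ must equal the remaining value $-t \equiv m-t \pmod m$, giving $t' = m-t-(k-1) = m-(k+t-1)$, exactly as claimed, with $\eta' = \eta$. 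I would double-check that $t' = m-(k+t-1)$ lands in the allowed range $\{m-n+1,\dots,m-k\}$ (or $\{1,\dots,n-k\}$) given the hypotheses on $t$, and that $h' = k-h-1 \in \{0,\dots,k-1\}$, and that $\rk_q(\theta^{k-1}(g)) = n$ (immediate since $\theta^{k-1}$ is a bijection preserving $\Fq$-dimension of the span).

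Finally I would assemble these observations: the two spanning sets coincide, hence the two codes are equal, which is the assertion $\GTw{k,\theta}{\eta,t,h}{g}= \GTw{k,\theta^{-1}}{\eta,m-(k+t-1),k-h-1}{\theta^{k-1}(g)}$.

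The main obstacle — really the only subtle point — is bookkeeping the exponents modulo $m$ and making sure the twist parameter $t'=m-(k+t-1)$ genuinely falls into the parameter range for which Definition~\ref{def:genTwistedCodes} (in its relaxed form allowing $\tVec \in \{m-n+1,\dots,m-k\}^\numTwists$) produces a code of dimension $k$; this is where the hypothesis on the range of $t$ is used, and one should invoke Proposition~\ref{cor:rankMoore} (as the paragraph after Definition~\ref{def:genTwistedCodes} does) to confirm the dimension. Everything else is the same routine re-indexing already carried out twice in the preceding lemmas.
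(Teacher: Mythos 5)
Your proposal is correct and follows essentially the same route as the paper: the paper's proof is exactly this re-indexing of the spanning set (written from the right-hand side code back to the left, via $\theta^{-i}(\theta^{k-1}(g))=\theta^{k-1-i}(g)$ and the exponent identity $-(k-1+m-(k+t-1))+(k-1)\equiv k+t-1 \pmod m$). Your additional check that $t'=m-(k+t-1)$ lands in the complementary admissible range $\{m-n+1,\dots,m-k\}$ (resp.\ $\{1,\dots,n-k\}$) is a detail the paper leaves implicit but is worth noting.
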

\begin{proof}
We have the following chain of equalities
\begin{align*}
& \GTw{k,\theta^{-1}}{\eta,m-(k+t-1),k-h-1}{\theta^{k-1}(g)} \\
&= \left\langle\!\left\{ \theta^{-i}(\theta^{k-1}(g)) \!\right\}_{i\in \{0,\dots,k-1\}\backslash \{k-h-1\}} \!\!\!\!\cup \{   \eta\theta^{-(k-1+m-(k+t-1))}(\theta^{k-1}(g))  +  \theta^{-(k-h-1)}(\theta^{k-1}\!(g))\}\!\right\rangle \\
& = \left\langle\left\{ \theta^{i}(g) \right\}_{i\in \{0,\dots,k-1\}\backslash \{h\}} \cup \{ \eta\theta^{k+t-1}(g) + \theta^{h}(g) \}\right\rangle \\
&=  \GTw{k,\theta}{\eta,t,h}{g}. \qedhere
\end{align*}
\end{proof}

Note that for $h=0, t=1$ this gives a different equality from the one in Lemma \ref{lem:TGabtheta=theta-1}. Both results together give the following identities.

\begin{corollary}
$$ \GTw{k,\theta}{\eta,1,0}{g}=  \GTw{k,\theta^{-1}}{\eta,m-k,k-1}{\theta^{k-1}(g)} =  \GTw{k,\theta^{-1}}{\eta^{-1},1,0}{\theta^k(g)}=  \GTw{k,\theta}{\eta^{-1},m-k,k-1}{\theta^{2k-1}(g)}$$
\end{corollary}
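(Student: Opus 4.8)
The plan is to chain together the two equalities we already have at our disposal. Lemma~\ref{lem:GTGabtheta=theta-1} applied with $h=0$ and $t=1$ gives directly
$$\GTw{k,\theta}{\eta,1,0}{g}= \GTw{k,\theta^{-1}}{\eta,m-k,k-1}{\theta^{k-1}(g)},$$
since $m-(k+t-1)=m-k$ and $k-h-1=k-1$. This establishes the first equality. For the second equality, I would invoke Lemma~\ref{lem:TGabtheta=theta-1}: recalling that $\GTw{k,\theta}{\eta,1,0}{g}=\TGabShee{k,\theta}{\eta}{g}$ (the narrow-sense twisted Gabidulin code, as noted right before Definition~\ref{def:genTwistedCodes} and in the text following it), Lemma~\ref{lem:TGabtheta=theta-1} yields $\TGabShee{k,\theta}{\eta}{g} = \TGabShee{k,\theta^{-1}}{\eta^{-1}}{\theta^k(g)} = \GTw{k,\theta^{-1}}{\eta^{-1},1,0}{\theta^k(g)}$.

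For the third equality, I would apply Lemma~\ref{lem:GTGabtheta=theta-1} once more, this time to the code $\GTw{k,\theta^{-1}}{\eta^{-1},1,0}{\theta^k(g)}$, i.e.\ with the generator $\theta^{-1}$ in place of $\theta$, evaluation vector $\theta^k(g)$, twist parameter $\eta^{-1}$, and $h=0$, $t=1$. Since $(\theta^{-1})^{-1}=\theta$, this produces
$$\GTw{k,\theta^{-1}}{\eta^{-1},1,0}{\theta^k(g)}= \GTw{k,\theta}{\eta^{-1},m-k,k-1}{(\theta^{-1})^{k-1}(\theta^k(g))}= \GTw{k,\theta}{\eta^{-1},m-k,k-1}{\theta^{2k-1}(g)},$$
using $(\theta^{-1})^{k-1}\theta^k = \theta^{-(k-1)+k} = \theta$... wait, that gives $\theta(g)$, not $\theta^{2k-1}(g)$, so I would need to be careful here. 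The subtlety is that $\theta^{k-1}$ in the displayed Lemma~\ref{lem:GTGabtheta=theta-1} is a power of the ambient generator, which here is $\theta^{-1}$; hence the shift is $(\theta^{-1})^{k-1}$ applied to $\theta^k(g)$, and indeed the exponent reduces modulo $m$: $-(k-1)+k \equiv 1 \pmod m$. This would force us to reconcile $\theta(g)$ with $\theta^{2k-1}(g)$, which requires observing that a Moore-type evaluation vector can be replaced by any $\theta^j$-shift of it at the cost of a corresponding shift in the code (cf.\ the normalization arguments already used, e.g., in Lemma~\ref{lem:Gabtheta=theta-1} and the remark after Definition~\ref{def:genTwistedCodes}).

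The main obstacle, then, is bookkeeping the evaluation vector correctly: the four displayed representations must use evaluation vectors that are genuinely related by the stated $\theta$-shifts, and a naive composition of the lemmas may land on $\theta(g)$ rather than $\theta^{2k-1}(g)$ (these agree only up to the code-level shift $\GTw{k,\theta}{\eta',t,h}{\theta^j(g)}$ being independent of $j$ modulo a reindexing, or up to $m\mid 2k-2$ in special cases). I would resolve this by fixing one canonical derivation path — apply Lemma~\ref{lem:GTGabtheta=theta-1} to the first code to get the second representation, apply Lemma~\ref{lem:TGabtheta=theta-1} to the first code to get the third, and apply Lemma~\ref{lem:GTGabtheta=theta-1} to the third representation to get the fourth — and carefully tracking each exponent arithmetic step, rather than trying to relate all four simultaneously. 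The computations themselves are routine once the shift conventions are pinned down; no new ideas beyond the two cited lemmas are needed.
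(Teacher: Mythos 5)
Your route is exactly the one the paper intends: the text offers no separate argument for this corollary beyond ``both results together give the following identities,'' i.e.\ one is meant to chain Lemma~\ref{lem:GTGabtheta=theta-1} (with $t=1$, $h=0$) and Lemma~\ref{lem:TGabtheta=theta-1}. Your derivations of the second and third representations are correct, and — importantly — so is your exponent computation for the fourth: applying Lemma~\ref{lem:GTGabtheta=theta-1} to $\GTw{k,\theta^{-1}}{\eta^{-1},1,0}{\theta^k(g)}$ shifts the evaluation vector by $(\theta^{-1})^{k-1}$ (the power of the \emph{left-hand code's} generator, which here is $\theta^{-1}$), giving $(\theta^{-1})^{k-1}(\theta^k(g))=\theta(g)$. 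One can confirm directly that this is right: $\GTw{k,\theta}{\eta^{-1},m-k,k-1}{\theta(g)}$ is spanned by $\theta(g),\dots,\theta^{k-1}(g)$ and $\theta^{k-1}(\theta(g))+\eta^{-1}\theta^{m-1}(\theta(g))=\theta^k(g)+\eta^{-1}g$, which is a scalar multiple of $g+\eta\theta^k(g)$, so this code equals $\GTw{k,\theta}{\eta,1,0}{g}$.

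The gap is in how you propose to reconcile $\theta(g)$ with the printed $\theta^{2k-1}(g)$. The appeal to ``replacing a Moore-type evaluation vector by a $\theta^j$-shift at the cost of a corresponding shift in the code'' does not close the gap: $\theta^j(C)\neq C$ in general, and indeed $\GTw{k,\theta}{\eta^{-1},m-k,k-1}{\theta^{2k-1}(g)}$ is spanned by $\theta^{2k-1}(g),\dots,\theta^{3k-3}(g)$ together with $\theta^{3k-2}(g)+\eta^{-1}\theta^{2k-2}(g)$, which for generic $g$ of full $q$-rank is \emph{not} the code $\langle g+\eta\theta^k(g),\theta(g),\dots,\theta^{k-1}(g)\rangle$ (already for $k=2$, $n=m=5$ the two spans are visibly different). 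So you should not try to force agreement; the exponent $2k-1$ in the displayed statement is a slip — it is what one gets by mistakenly applying $\theta^{k-1}$ (a power of the original $\theta$) rather than $(\theta^{-1})^{k-1}$ to $\theta^k(g)$ — and the correct fourth representation is $\GTw{k,\theta}{\eta^{-1},m-k,k-1}{\theta(g)}$. With that correction, your chain (Lemma~\ref{lem:GTGabtheta=theta-1} for the second equality, Lemma~\ref{lem:TGabtheta=theta-1} for the third, Lemma~\ref{lem:GTGabtheta=theta-1} again for the fourth) is a complete proof and coincides with the paper's intended argument.
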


We now determine the first elements of the $\theta^r$-sum sequence for generalized twisted Gabidulin codes for the case $h=0$ and $1<t\leq n-k$. 

\begin{proposition}\label{prop:GTwGabidulinSeqh0}
Let $g\in \Fm^n$ with $\rk_q(g)=n$, and $C :=\GTw{k,\theta}{\eta,t,0}{g}$ be a generalized $\theta$-twisted Gabidulin code with $h=0$ and $1<t\leq n-k$.

\begin{enumerate}
    \item 
If $    1\leq r \leq k-1 $, then $s^{\theta^r}_i(C)\geq \min\{n,k+ir\}$ and if $ t+ir \leq n-k$ then 
$$s^{\theta^r}_i(C)=\min\!\left\{ k+ir+\min\!\left(i,\left\lceil \tfrac{t}{r}\right\rceil\right) ,n\right\}. $$
  If $    m-k+1\leq r \leq m-1 $, then $s^{\theta^r}_i(C)\geq \min\{n,k+i(m-r)\}$ and if $ t+i(m-r) \leq n-k$ then 
  $$s^{\theta^r}_i(C)=\min\!\left\{ k+i(m-r)+\min\!\left(i,\left\lceil \tfrac{t}{m-r}\right\rceil\right) ,n\right\}.$$
\item  If $k\leq r\leq n-k-1$ or $m-n+k+1\leq r \leq m-k$ we have  $s_1^{\theta^r}(C)=2k$.
\item  If $r\geq k$ and $r \geq n-k$, then  $s_1^{\theta^r}(C)\geq k+n-r-1$.   If $r\leq m-k$ and $r \leq m-n+k$, then $s_1^{\theta^r}(C)\geq k+n-m+r-1$.
\end{enumerate}
\end{proposition}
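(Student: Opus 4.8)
The plan is to work throughout with the explicit $\Fm$-generating set of $\cS^{\theta^r}_i(C)$ and to reduce every dimension count to the rank of a Moore-type family, which Proposition~\ref{cor:rankMoore} controls completely; the one fact used over and over is that, since $\rk_q(g)=n$, the vectors $g,\theta(g),\dots,\theta^{n-1}(g)$ form an $\Fm$-basis of $\Fm^n$, so any sub-family $\theta^{s_1}(g),\dots,\theta^{s_p}(g)$ with pairwise distinct exponents $s_\nu\in\{0,\dots,n-1\}$ is $\Fm$-independent. (Throughout we take $k\ge 2$, consistent with the standing hypotheses of this section.) Because $h=0$, $C=\langle g+\eta\,\theta^{k-1+t}(g),\ \theta(g),\dots,\theta^{k-1}(g)\rangle_{\Fm}$, so $\theta^{rj}(C)=\langle v_j,\ \theta^{rj+1}(g),\dots,\theta^{rj+k-1}(g)\rangle_{\Fm}$ with $v_j:=\theta^{rj}(g)+\theta^{rj}(\eta)\,\theta^{rj+k-1+t}(g)$. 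For $1\le r\le k-1$ the index blocks $\{rj+1,\dots,rj+k-1\}$, $0\le j\le i$, chain up to $\{1,\dots,ir+k-1\}$, hence with $P:=\langle\theta(g),\dots,\theta^{ir+k-1}(g)\rangle_{\Fm}$ one has $\cS^{\theta^r}_i(C)=P+\langle v_0,\dots,v_i\rangle_{\Fm}$ and $\dim P=\min\{ir+k-1,n\}$. The unconditional bound in part~(1) then drops out at once: $v_0$ has a nonzero $g$-coefficient in the Moore basis while $P$ uses only exponents $\ge 1$, so $s^{\theta^r}_i(C)\ge\dim(P+\langle v_0\rangle)\ge\min\{ir+k,n\}$.

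For the exact value under $t+ir\le n-k$: now the largest exponent appearing is $ir+k-1+t\le n-1$, so $\dim P=ir+k-1$ and $\{\overline{\theta^s(g)}:s\in\{0\}\cup\{ir+k,\dots,n-1\}\}$ is a basis of $\Fm^n/P$. I would reduce the $v_j$ modulo $P$: for $j\ge 1$ one has $1\le rj\le ir$, so $\theta^{rj}(g)\in P$ and $\overline{v_j}=\theta^{rj}(\eta)\,\overline{\theta^{rj+k-1+t}(g)}$, which is nonzero exactly when $rj+k-1+t>ir+k-1$, i.e.\ $r(i-j)<t$; and $\overline{v_0}=\overline g+\eta\,\overline{\theta^{k-1+t}(g)}\ne 0$ thanks to its $\overline g$-term. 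The nonzero images have pairwise distinct \emph{leading} exponents ($0$ for $v_0$, and $rj+k-1+t$ for surviving $j\ge1$, pairwise distinct since $r\ge1$), so they are $\Fm$-independent; the survivors are $v_0$ together with those $v_j$, $1\le j\le i$, for which $i-j\le\lceil t/r\rceil-1$, i.e.\ $\min\{i,\lceil t/r\rceil\}$ of them. Hence
$$s^{\theta^r}_i(C)=\dim P+\big(1+\min\{i,\lceil t/r\rceil\}\big)=k+ir+\min\{i,\lceil t/r\rceil\}\ \le\ k+ir+t\ \le\ n,$$
giving exactly $s^{\theta^r}_i(C)=\min\{k+ir+\min(i,\lceil t/r\rceil),n\}$. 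The statements for $m-k+1\le r\le m-1$ follow by re-running the argument with the backward shifts $\theta^{rj}=\theta^{-r'j}$, $r'=m-r\in\{1,\dots,k-1\}$ (now the pure blocks chain downwards to $\{-ir'+1,\dots,k-1\}$), or equivalently by passing through Lemma~\ref{lem:GTGabtheta=theta-1}.

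Parts~(2) and~(3) concern $i=1$, where $\cS^{\theta^r}_1(C)=C+\theta^r(C)$ is generated by $v_0,\theta(g),\dots,\theta^{k-1}(g)$ and $\theta^r(v_0),\theta^{r+1}(g),\dots,\theta^{r+k-1}(g)$. If $k\le r\le n-k-1$, the $2k-2$ \emph{pure} generators carry the distinct exponents $\{1,\dots,k-1\}\cup\{r+1,\dots,r+k-1\}\subseteq\{1,\dots,n-1\}$ (disjoint because $r\ge k$), hence are independent; then $v_0$ and $\theta^r(v_0)$ contribute the new distinct leading exponents $0$ and $r$, forcing $\dim=2k$ — the bound $\le 2k$ being automatic — which is part~(2); the range $m-n+k+1\le r\le m-k$ is symmetric via $\theta^r=(\theta^{-1})^{m-r}$. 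For part~(3), if $r\ge k$ and $r\ge n-k$ then $\{r+1,\dots,n-1\}\subseteq\{r+1,\dots,r+k-1\}$, so $\cS^{\theta^r}_1(C)\supseteq\langle v_0,\theta(g),\dots,\theta^{k-1}(g),\theta^{r+1}(g),\dots,\theta^{n-1}(g)\rangle$, whose dimension is $1+(k-1)+(n-1-r)=k+n-r-1$ by Proposition~\ref{cor:rankMoore} ($v_0$ raising the dimension by exactly one over the pure part no matter where the exponent $k-1+t$ lands); the second inequality of part~(3) is again symmetric. The step I expect to be most delicate is precisely this last kind of argument in parts~(2)--(3): keeping track of exponents modulo $m$ and checking that the secondary term $\theta^{r+k-1+t}(g)$ of a twisted generator — whose exponent can wrap around modulo $m$ — never cancels against another generator and thereby lowers the rank. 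This cannot happen when the code is MRD (which constrains $\eta$), but a completely general statement needs the finitely many degenerate values of $\eta$ to be excluded or treated by hand.
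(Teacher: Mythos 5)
Your proposal is correct and follows essentially the same route as the paper's proof: write each $\theta^{rj}(C)$ via its explicit generators, chain the pure exponent blocks into one span $P$, reduce the twisted generators modulo $P$ and count survivors by comparing exponents via Proposition~\ref{cor:rankMoore}, then transfer to the range $m-k+1\le r\le m-1$ through Lemma~\ref{lem:GTGabtheta=theta-1}. The one delicate point you flag in part~(2) --- that the secondary exponent $r+k-1+t$ may exceed $n-1$, so that $\theta^{r+k-1+t}(g)$ is no longer a Moore-basis vector and could in principle interfere with the independence count --- is real but is left equally implicit in the paper's own proof, which simply asserts the dimension $2k$ there.
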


\begin{proof}
\begin{enumerate}
\item If $1\leq r \leq k-1$ we have 
\begin{align*}
 S_i^{\theta^r}(C) = \, &\langle  g+\eta\theta^{k+t-1}(g),\theta(g), \theta^2(g),\dots,\theta^{k-1}(g)\rangle \\
    &+\langle \theta^r(g)+\theta^r(\eta)\theta^{k+r+t-1}(g),\theta^{r+1}(g),\dots,\theta^{r+k-1}(g)\rangle + \dots\\
   &+\langle \theta^{ir}(g)+\theta^{ir}(\eta)\theta^{k+ir+t-1}(g),\theta^{ir+1}(g),\dots,\theta^{ir+k-1}(g)\rangle\\
= \, &\langle  g+\eta \theta^{k+t-1}(g), \theta(g), \theta^2(g),\dots,\theta^{k+ir-1}(g)\rangle + \\
&\langle  \theta^{k+t+r-1}(g),\theta^{k+t+2r-1}(g),\dots, \theta^{k+t+ir-1}(g)\rangle .
\end{align*}
The first summand contains $\min\{n,k+ir\}$ linearly independent elements and implies $s^{\theta^r}_i(C)\geq \min\{n,k+ir\}$. If $t+ir \leq n-k$, all involved powers of $\theta$ are smaller than $n$. In this case, the number of elements from the second summand that are not contained in the first summand is $\min\!\left(i,\left\lceil \tfrac{t}{r}\right\rceil\right)$.

If $m-k+1\leq r \leq m-1$, then analogously, with $h:=\theta^{ir}(g)$, 
\begin{align*}
 S_i^{\theta^r}(C) =
&\langle  \theta^{ir}(g)+\theta^{ir}(\eta)\theta^{k+ir+t-1}(g),\theta^{ir+1}(g),\dots,\theta^0(g),\dots, \theta^{k-1}(g)\rangle \\
&+\langle \theta^{k+t-1}(g),\dots, \theta^{k+t+(i-1)r-1}(g) \rangle ,\\
= \, &\langle h+\theta^{ir}(\eta) \theta^{k+t-1}(h), \theta(h), \theta^2(h),\dots,\theta^{k+i(m-r)-1}(h)\rangle + \\
&\langle  \theta^{k+t+(m-r)-1}(h),\theta^{k+t+2(m-r)-1}(h),\dots, \theta^{k+t+i(m-r)-1}(h)\rangle .
\end{align*}
which implies the second statement.

\item 
If $k\leq r \leq n-k-1$ we have 
$$ S_1^{\theta^r}(C) = \langle  \theta(g),\dots, \theta^{k-1}(g),\theta^{r+1}(g),\dots,\theta^{k+r-1}(g),g+\eta \theta^{k+t-1}(g), \theta^r(g)+\eta\theta^{k+t+r-1}(g)\rangle $$
which has dimension $2k$. 
Similarly we get only linearly independent elements if $m-n+k+1\leq r \leq m-k$.

\item 
 If $r \geq\max (k,n-k)$, then 
$$\cS^{\theta^r}_1(C)\supseteq\langle g+\eta \theta^{k+t-1}(g),\theta(g), \ldots, \theta^{k-1}(g),\theta^{r+1}(g),\ldots, \theta^{n-1}(g)\rangle,$$ which has dimension $k+n-r-1$, by Proposition  \ref{cor:rankMoore}.
Similarly the claim follows for $r \leq \min(m-k,m-n+k)$. \qedhere
\end{enumerate}
\end{proof}

\begin{corollary}\label{cor:twisted_gab_inequivalence_h0_1}
Let $1<k<n-1$, $g,h \in \Fm^n$ with $\rk_q(g)=\rk_q(h)=n$ and let $\theta$ be a generator of $\Gal(\Fm/\Fq)$. Moreover, let $t\in\{2,\dots,n-k\}$ and  let $1\leq r < m$ be such that $\gcd(r,m)=1$. 
\begin{enumerate}
\item Suppose that \emph{at least one} of the following holds:
{\setlist{leftmargin=45pt}
\begin{enumerate}
\item[\mylabel{a1}{\textnormal{(a1)}}] $1<r<  n-2$, or
\item[\mylabel{a2}{\textnormal{(a2)}}] $m-n+2< r <m-1$, or
\item[\mylabel{a3}{\textnormal{(a3)}}] $r\in\{1,m-1\}$ and $t<n-k$, or
\item[\mylabel{a4}{\textnormal{(a4)}}] $k<\min(2n-m-2,n-t)$, or
\item[\mylabel{a5}{\textnormal{(a5)}}] $m-n+2<k<n-t$.
\end{enumerate}
}
Then $\GTw{k,\theta}{\eta,t,0}{g}$ is not equivalent to $\G_{k,\theta^r}(h)$. 

\item Let $\eta,\bar\eta \in \Fm^*$ such that $\Norm(\eta)\neq(-1)^{km}$. Suppose that \emph{at least one} of the following holds:
{\setlist{leftmargin=45pt}
\begin{enumerate}
\item[\mylabel{b1}{\textnormal{(b1)}}] $1<r<  n-3$ and $k>2$, or
\item[\mylabel{b2}{\textnormal{(b2)}}] $m-n+3< r <m-1$ and $k>2$, or
\item[\mylabel{b3}{\textnormal{(b3)}}] $r\in\{1,m-1\}$ and $t<n-k-1$, or
\item[\mylabel{b4}{\textnormal{(b4)}}] $k=2, t<n-3$ and $r\in\{1,\dots,\lfloor(n-2)/2\rfloor\}\cup\{\lfloor m-(n-2)/2,\dots,m-1\rfloor\}$.
\end{enumerate}
}
Then $\GTw{k,\theta}{\eta,t,0}{g}$ is not equivalent to $\TGabShee{k,\theta^r}{\bar\eta}{h}$. 
\end{enumerate}
\end{corollary}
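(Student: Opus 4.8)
The plan is to use the sum-dimension invariant $s_1^{\theta^r}(\cdot)$ from Lemma~\ref{lem:invariant}, exactly as in the proof of Corollary~\ref{cor:inequiv1}. For a Gabidulin code $\G_{k,\theta^r}(h)$ we know from part~1 of Proposition~\ref{prop:GabidulinSeq} (applied with the automorphism $\theta^r$, whose first power is $\theta^r$ itself) that $s_1^{\theta^r}(\G_{k,\theta^r}(h)) = k+1$, since $\gcd(r,m)=1$ makes $\theta^r$ a generator of the Galois group. Similarly, from Proposition~\ref{prop:TwGabidulinSeq} we get $s_1^{\theta^r}(\TGabShee{k,\theta^r}{\bar\eta}{h}) = k+2$ (using that for a narrow-sense twisted Gabidulin code the generator is $\theta^r$ and the relevant index $r$ in that proposition equals $1$). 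So the whole task reduces to showing that under each of the hypotheses \ref{a1}--\ref{a5} one has $s_1^{\theta^r}(\GTw{k,\theta}{\eta,t,0}{g}) > k+1$, and under each of \ref{b1}--\ref{b4} one has $s_1^{\theta^r}(\GTw{k,\theta}{\eta,t,0}{g}) > k+2$. Crucially, this comparison is done with respect to $\sigma = \theta^r$ as a power of the \emph{original} $\theta$ used to define the twisted code, which is exactly the regime analyzed in Proposition~\ref{prop:GTwGabidulinSeqh0}.

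**Case analysis.** First I would split according to the size of $r$. For the generic middle range, \ref{a1} $1<r<n-2$ or \ref{a2} $m-n+2<r<m-1$: such $r$ falls into one of the ranges covered by parts~1, 2, or~3 of Proposition~\ref{prop:GTwGabidulinSeqh0}. In part~1's range ($1 \le r \le k-1$ or $m-k+1 \le r \le m-1$) we get $s_1^{\theta^r} \ge k+r > k+1$ when $r>1$; in part~2's range ($k \le r \le n-k-1$ etc.) we get $s_1^{\theta^r} = 2k > k+1$ since $k>1$; in part~3's range ($r \ge \max(k,n-k)$, $r < n-2$) we get $s_1^{\theta^r} \ge k+n-r-1 > k+1$ precisely because $r < n-2$. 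One checks these three ranges cover all $r$ with $1 < r < n-2$ (and symmetrically the dual range for \ref{a2}). For \ref{a3}, $r \in \{1, m-1\}$: here part~1 with $i=1$ gives the refined formula $s_1^{\theta^r} = \min\{k + r + \min(1, \lceil t/r\rceil), n\} = \min\{k+2, n\}$ whenever $t + r \le n-k$, i.e.\ $t \le n-k-1 < n-k$; so the hypothesis $t < n-k$ is exactly what makes the refined term kick in and yields $s_1^{\theta^r} \ge k+2 > k+1$. For \ref{a4}--\ref{a5}, which concern the "very long" regime $n-2 \le r \le m-n+2$ not covered above, I would invoke part~3: there $\max(k,n-k) \le r \le \min(m-k,m-n+k)$, so both bounds $s_1^{\theta^r} \ge k+n-r-1$ and $s_1^{\theta^r} \ge k+n-m+r-1$ hold; combining with $r \le m-n+2$ and $r \ge n-2$ respectively gives $s_1^{\theta^r} \ge 2n-m-2$ and $s_1^{\theta^r} \ge 2k+n-m-2$, hence $> k+1$ under $k < 2n-m-2$ or $k < $ appropriate bound; the extra "$k < n-t$" condition comes from needing $t+r \le n-k$ so that part~1's refined count (or simply the basic bound's validity with all powers $< n$) applies. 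Part~2 of the corollary is handled identically, only replacing the target value $k+1$ by $k+2$ and tightening every strict inequality by one, which is why the hypotheses \ref{b1}--\ref{b4} are the "$-1$ shifted" versions of \ref{a1}--\ref{a5}; the special subcase $k=2$ in \ref{b4} is needed because when $k=2$ the range $1 \le r \le k-1$ degenerates and one must use the subfield-free part~1 refined formula directly.

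**Main obstacle.** The routine but delicate part is the bookkeeping of which sub-range of Proposition~\ref{prop:GTwGabidulinSeqh0} applies to which $r$, and in particular verifying that the side conditions $t + r \le n-k$ (needed for the \emph{exact} formula in part~1, as opposed to the mere lower bound) are implied by the stated hypotheses — this is where conditions like $k < n-t$ in \ref{a4}, \ref{a5} and $t < n-k$ in \ref{a3} enter, and getting the boundary cases $r=1$, $r=m-1$, $r = n-2$, etc., consistent between the corollary statement and the proposition requires care. The genuinely substantive input, however — the value $k+1$ for Gabidulin, $k+2$ for twisted Gabidulin, and the lower bounds $>k+1$ (resp.\ $>k+2$) for the generalized code — is entirely supplied by the already-proven Propositions~\ref{prop:GabidulinSeq}, \ref{prop:TwGabidulinSeq}, and~\ref{prop:GTwGabidulinSeqh0}, so no new ideas beyond careful case-chasing are required.
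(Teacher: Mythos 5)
Your part 1 follows the paper's proof essentially step for step: compare $s_1^{\theta^r}$ of the generalized twisted code, bounded below via the three ranges of Proposition~\ref{prop:GTwGabidulinSeqh0}, against the value $k+1$ of a $\theta^r$-Gabidulin code from Proposition~\ref{prop:GabidulinSeq}. Apart from some arithmetic slips in the (a4)/(a5) case (the paper derives $s_1^{\theta^r}\ge k+n-r-1\ge 2n-m-1$ from $r\le m-n+k$, not the constants $2n-m-2$ and $2k+n-m-2$ you state), this half is sound and matches the paper.

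The genuine gap is in part 2. You assert that it is handled ``identically, only replacing the target value $k+1$ by $k+2$,'' i.e.\ that it suffices to prove $s_1^{\theta^r}(\GTw{k,\theta}{\eta,t,0}{g})>k+2$. This cannot work throughout the range $1<r<k$ allowed by (b1): part~1 of Proposition~\ref{prop:GTwGabidulinSeqh0} only guarantees $s_1^{\theta^r}(C)\ge\min\{n,k+r\}$, which for $r=2$ equals $k+2$ --- exactly the value $s_1^{\theta^r}(\TGabShee{k,\theta^r}{\bar\eta}{h})=k+2$ you are trying to beat --- and the exact formula $s_1^{\theta^r}(C)=\min\{k+r+1,n\}$ is only available under the side condition $t+r\le n-k$, which (b1) does not supply (take $t=n-k$ and $r=2$). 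This is precisely why the paper's proof of part~2 brings in the \emph{second} sum dimension: it uses $s_2^{\theta^r}(\TGabShee{k,\theta^r}{\bar\eta}{h})=k+3$ and shows $s_2^{\theta^r}(\GTw{k,\theta}{\eta,t,0}{g})\ge\min\{n,k+2r\}>k+3$ for $1<r<k$, and it settles case (b4) with $k=2$ by computing $s_2^{\theta^r}=6>k+3$. Since your proposal never invokes $s_2$, the cases (b1)/(b2) with small $r$ and the whole of (b4) are not actually covered; the fix is to add the $s_2$ comparison exactly as the paper does.
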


\begin{proof}
\begin{enumerate}
\item
We use again the fact that, by Proposition \ref{prop:GabidulinSeq} we have  $ s_1^{\theta^r}(\G_{k,\theta^r}(h)) = k+1$. Hence, if $s_1^{\theta^r}(\GTw{k,\theta}{\eta,t,0}{g}) =s_1^{\theta^r}(\GTw{k,\theta}{\eta,t,0}{g}) \neq k+1$ then the two codes are inequivalent by Lemma \ref{lem:invariant}.

If $1<r< k$, then by part 1 of Proposition \ref{prop:GTwGabidulinSeqh0} we have 
$$ s^{\theta^r}_1(\GTw{k,\theta^r}{\eta,t,0}{g})\geq \min\{n,k+r\}  >k+1.$$
If $k\leq r <n-k$, then by part 2 of Proposition \ref{prop:GTwGabidulinSeqh0} we have 
$$ s^{\theta^r}_1(\GTw{k,\theta^r}{\eta,t,0}{g})=2 k >k+1.$$
 If $\max(k,n-k)\leq r< n-2$, then by part 3 of Proposition \ref{prop:GTwGabidulinSeqh0} we have 
$$ s^{\theta^r}_1(\GTw{k,\theta^r}{\eta,t,0}{g})\geq k+n-r-1  >k+1.$$
Hence, if \ref{a1} is satisfied, then $\GTw{k,\theta}{\eta,t,0}{g}$ is not equivalent to $\G_{k,\theta^r}(h)$.  Analogously the statement follows if one between \ref{a2} and \ref{a3} holds. 

Observe that if one between \ref{a4} and \ref{a5} holds, then  $t<n-k$. If in addition \ref{a1}, \ref{a2} and \ref{a3} are not fulfilled, then $n-2\leq r\leq m-n+2$, which we assume now. This implies that $\max(k,n-k)\leq r \leq \min(m-k,m-n+k)$. Part 3 of Proposition \ref{prop:GTwGabidulinSeqh0} implies that
$$s_1^{\theta^r} (\GTw{k,\theta^r}{\eta,t,0}{g})\geq k+n-r -1\geq 2n-m -1$$ 
and also that
$$s_1^{\theta^r} (\GTw{k,\theta^r}{\eta,t,0}{g})\geq k+n-m+r -1\geq 2k+n-m-1 .$$ 
Thus, also in this case $s_1^{\theta^r}(\GTw{k,\theta^r}{\eta,t,0}{g})$ is strictly greater than $k+1$ and we conclude. 

\item
Here we use the fact that, by Proposition \ref{prop:TwGabidulinSeq} we have  $ s_1^{\theta^r}(\TGabShee{k,\theta^r}{\bar\eta}{h}) = k+2$ and $ s_2^{\theta^r}(\TGabShee{k,\theta^r}{\bar\eta}{h}) = k+3$. Hence, if $s_1^{\theta^r}(\GTw{k,\theta}{\eta,t,0}{g}) =s_1^{\theta^r}(\GTw{k,\theta}{\eta,t,0}{g}) \neq k+2$ or $s_2^{\theta^r}(\GTw{k,\theta}{\eta,t,0}{g}) =s_2^{\theta^r}(\GTw{k,\theta}{\eta,t,0}{g}) \neq k+3$ then the two codes are inequivalent by Lemma \ref{lem:invariant}.

If $1<r< k$, then by part 1 of Proposition \ref{prop:GTwGabidulinSeqh0} we have 
$$ s^{\theta^r}_2(\GTw{k,\theta^r}{\eta,t,0}{g})\geq \min\{n,k+2r\}  >k+3.$$
If $k\leq r <n-k$, then by part 2 of Proposition \ref{prop:GTwGabidulinSeqh0} we have 
$$ s^{\theta^r}_1(\GTw{k,\theta^r}{\eta,t,0}{g})=2 k $$
which is greater than $k+2$ if $k>2$. 
 If $\max(k,n-k)\leq r< n-3$, then by part 3 of Proposition \ref{prop:GTwGabidulinSeqh0} we have 
$$ s^{\theta^r}_1(\GTw{k,\theta^r}{\eta,t,0}{g})\geq k+n-r-1  >k+2.$$
Hence, if \ref{b1} is satisfied, then $\GTw{k,\theta}{\eta,t,0}{g}$ is not equivalent to $\TGabShee{k,\theta^r}{\bar\eta}{h}$.  Analogously the statement follows if one between \ref{b2} and \ref{b3} holds. 

If $k=2$ and $t<n-3$, we get
$$ s^{\theta^r}_2(\GTw{k,\theta^r}{\eta,t,0}{g}) =6 >k+3$$
with the given conditions in \ref{b4}. \qedhere
\end{enumerate}
\end{proof}

\begin{corollary}\label{cor:twisted_gab_inequivalence_h0_2}

\begin{enumerate}
\item
Let $1<k<n-t<n-1$.  If 
$m< 2n-4$
then generalized twisted Gabidulin codes of length $n$ and dimension $k$ over $\Fm$ with hook $h=0$ and twist $t>1$ are never equivalent to a Gabidulin code.
\item
Let $2<k<n-t-1<n-2$. If 
$m< 2n-6$
then generalized twisted Gabidulin codes of length $n$ and dimension $k$ over $\Fm$ with hook $h=0$ and twist $t>1$ are never equivalent to a twisted Gabidulin code.
\end{enumerate}
\end{corollary}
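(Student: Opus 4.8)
The plan is to derive both parts from Corollary~\ref{cor:twisted_gab_inequivalence_h0_1} by a covering argument over the admissible exponents $r$, exactly as Corollary~\ref{cor:sufficient_inequivalence_condition_Gab_TGab} was obtained from Corollary~\ref{cor:inequiv1}.

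First I would set up the reduction. Semilinear equivalence preserves length and $\Fm$-dimension, so it is enough to compare a given generalized twisted Gabidulin code $\GTw{k,\theta}{\eta,t,0}{g}$ (with $\rk_q(g)=n$, $\eta\in\Fm^*$, hook $0$ and twist $t>1$) with Gabidulin, resp.\ twisted Gabidulin, codes of the \emph{same} parameters $n,k$. Every generator of the cyclic group $\Gal(\Fm/\Fq)$ is a power $\theta^r$ of the fixed generator $\theta$ with $1\le r<m$ and $\gcd(r,m)=1$; hence every such Gabidulin code can be written $\G_{k,\theta^r}(h)$ with $\rk_q(h)=n$, and every such twisted Gabidulin code can be written $\TGabShee{k,\theta^r}{\bar\eta}{h}$. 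Consequently it suffices to prove that, under the stated hypotheses, \emph{every} $r\in\{1,\dots,m-1\}$ satisfies at least one of the pertinent conditions of Corollary~\ref{cor:twisted_gab_inequivalence_h0_1}; then that corollary rules out equivalence with $\G_{k,\theta^r}(h)$ (resp.\ with $\TGabShee{k,\theta^r}{\bar\eta}{h}$) for all admissible $r$ at once, and the claimed non-equivalence follows. Along the way one verifies the side hypotheses of Corollary~\ref{cor:twisted_gab_inequivalence_h0_1}: in part~1 the chain $1<k<n-t<n-1$ gives $1<k<n-1$ and $t\in\{2,\dots,n-k\}$, and in part~2 the chain $2<k<n-t-1<n-2$ additionally yields $k>2$ and $t<n-k-1$.

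For part~1 I would use that $k<n-t$ is the same as $t<n-k$, so condition~\ref{a3} covers $r\in\{1,m-1\}$, condition~\ref{a1} covers $2\le r\le n-3$, and condition~\ref{a2} covers $m-n+3\le r\le m-2$. An exponent $r\in\{2,\dots,m-2\}$ escaping all three must satisfy both $r\ge n-2$ and $r\le m-n+2$, hence lie in $\{n-2,\dots,m-n+2\}$; this set is empty precisely when $n-2>m-n+2$, i.e.\ when $m<2n-4$, which is the assumed bound. Thus every $r$ is covered. Part~2 is handled identically with the constants shifted by one: $k>2$ activates the $k>2$ clauses in~\ref{b1} and~\ref{b2}, and $t<n-k-1$ activates~\ref{b3}; so~\ref{b3} covers $r\in\{1,m-1\}$, \ref{b1} covers $2\le r\le n-4$, and~\ref{b2} covers $m-n+4\le r\le m-2$, leaving only $\{n-3,\dots,m-n+3\}$, which is empty exactly when $m<2n-6$.

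The argument is essentially bookkeeping and I do not expect a genuine obstacle. The only point that deserves a little care is the reduction: one must make sure that being equivalent to \emph{some} Gabidulin (resp.\ twisted Gabidulin) code really does reduce to equivalence with a code in the normalized form $\G_{k,\theta^r}(h)$ (resp.\ $\TGabShee{k,\theta^r}{\bar\eta}{h}$) with $r$ ranging over all residues coprime to $m$ in $\{1,\dots,m-1\}$, so that Corollary~\ref{cor:twisted_gab_inequivalence_h0_1} is applicable for every relevant generator; a secondary, purely arithmetic, check is that the three covering intervals meet without gaps at their endpoints under the given parameter constraints.
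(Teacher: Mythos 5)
Your proposal is correct and follows essentially the same route as the paper: both deduce the statement from Corollary~\ref{cor:twisted_gab_inequivalence_h0_1} by observing that conditions \ref{a1}--\ref{a3} (resp.\ \ref{b1}--\ref{b3}) cover every admissible exponent $r$ once the gap $\{n-2,\dots,m-n+2\}$ (resp.\ $\{n-3,\dots,m-n+3\}$) is empty, which is exactly the bound $m<2n-4$ (resp.\ $m<2n-6$). Your version just spells out the normalization of generators as $\theta^r$ and the interval bookkeeping that the paper leaves implicit.
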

\begin{proof}
\begin{enumerate}
\item
For $m<2n-4$ we get that $n-2> m-n+2$, hence one of the first three conditions of part 1 of Corollary \ref{cor:twisted_gab_inequivalence_h0_1} is fulfilled.
\item
Similarly, for $m< 2n-6$ one of the first three conditions of part 2 of Corollary \ref{cor:twisted_gab_inequivalence_h0_1} is fulfilled. \qedhere
\end{enumerate} 
\end{proof}

Proposition \ref{prop:GTwGabidulinSeqh0} combined with Lemma \ref{lem:GTGabtheta=theta-1} gives the analogues of the previous results for generalized twisted Gabidulin codes with hook $h=k-1$. For simplicity we only state the analogues of Proposition \ref{prop:GTwGabidulinSeqh0} and Corollary \ref{cor:twisted_gab_inequivalence_h0_2}, but the same can easily be done for Corollary \ref{cor:twisted_gab_inequivalence_h0_1}.

\begin{corollary}\label{cor:GTwGabidulinSeqhk-1}
Let $g\in \Fm^n$ with $\rk_q(g)=n$, and $C :=\GTw{k,\theta}{\eta,t,k-1}{g}$ be a generalized $\theta$-twisted Gabidulin code with $h=k-1$ and $m-n+1 \leq t< m-k$.

\begin{enumerate}
   \item 
If $    1\leq r \leq k-1 $, then $s^{\theta^r}_i(C)\geq \min\{n,k+ir\}$ and if $ t+ir < m-k$ then 
$$s^{\theta^r}_i(C)=\min\!\left\{ k+ir+\min\!\left(i,\left\lceil \tfrac{m-(k+t-1)}{r}\right\rceil\right) ,n\right\}. $$
 If $    m-k+1\leq r \leq m-1 $, then $s^{\theta^r}_i(C)\geq \min\{n,k+i(m-r)\}$ and if $ t+i(m-r) < m-k$ then 
 $$s^{\theta^r}_i(C)=\min\!\left\{ k+i(m-r)+\min\!\left(i,\left\lceil \tfrac{m-(k+t-1)}{m-r}\right\rceil\right) ,n\right\}.$$
\item  If $k\leq r\leq n-k-1$ or $m-n+k+1\leq r \leq m-k$ we have  $s_1^{\theta^r}(C)=2k$.
\item  If $r\geq k$ and $r \geq n-k$, then  $s_1^{\theta^r}(C)\geq k+n-r-1$.   If $r\leq m-k$ and $r \leq m-n+k$, then $s_1^{\theta^r}(C)\geq k+n-m+r-1$.
\end{enumerate}
\end{corollary}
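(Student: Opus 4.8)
The plan is to reduce the hook‑$(k-1)$ case to the hook‑$0$ case already settled in Proposition~\ref{prop:GTwGabidulinSeqh0}, using the identity of Lemma~\ref{lem:GTGabtheta=theta-1}, and then to translate every index range and formula back. First I would apply Lemma~\ref{lem:GTGabtheta=theta-1} with $h=k-1$, which yields
$$C=\GTw{k,\theta}{\eta,t,k-1}{g}=\GTw{k,\theta^{-1}}{\eta,\,t',\,0}{g'},\qquad t':=m-(k+t-1),\quad g':=\theta^{k-1}(g),$$
and $\rk_q(g')=\rk_q(g)=n$ since $\theta^{k-1}$ is an $\Fq$-linear bijection. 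Next I would observe that the hypotheses $m-n+1\le t<m-k$ are precisely what makes $C'$ a code to which Proposition~\ref{prop:GTwGabidulinSeqh0} applies: $t<m-k$ forces $t'\ge 2$, and $t\ge m-n+1$ forces $t'\le n-k$, so $1<t'\le n-k$.

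The second step is the substitution for the automorphism. Since $\theta^{-1}$ is again a generator of $\Gal(\Fm/\Fq)$ and $\theta^r=(\theta^{-1})^{m-r}$, we get $\cS^{\theta^r}_i(C)=\cS^{(\theta^{-1})^{m-r}}_i(C')$ for every $i$, hence $s^{\theta^r}_i(C)=s^{(\theta^{-1})^{m-r}}_i(C')$. Now I would read Proposition~\ref{prop:GTwGabidulinSeqh0} with $\theta$ replaced by $\theta^{-1}$, its twist parameter equal to $t'$, and its automorphism exponent equal to $m-r$. Under this reading, the two subranges of item~(1) of Proposition~\ref{prop:GTwGabidulinSeqh0} exchange roles, so that $1\le m-r\le k-1$ corresponds to $m-k+1\le r\le m-1$ and vice versa (which is why the subranges $1\le r\le k-1$ and $m-k+1\le r\le m-1$ appear in the corollary), the ceiling $\lceil t'/\cdot\rceil$ becomes $\lceil (m-(k+t-1))/\cdot\rceil$, the equality $s^{\theta^r}_1(C)=2k$ of item~(2) is obtained on the ranges coming from $k\le m-r\le n-k-1$ and $m-n+k+1\le m-r\le m-k$, and the lower bounds $k+n-(m-r)-1$ and $k+n-m+(m-r)-1$ of item~(3) turn into the claimed $k+n-m+r-1$ and $k+n-r-1$.

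The only genuine content of the proof is the bookkeeping: carefully tracking how the exponent ranges transform under $r\mapsto m-r$ and how the twist transforms under $t\mapsto m-(k+t-1)$, and in particular rewriting the ``all involved powers of $\theta$ stay below $n$'' hypothesis of Proposition~\ref{prop:GTwGabidulinSeqh0} — namely $t'+i\cdot(\text{exponent})\le n-k$ — in terms of the original $t$ and $r$. I expect this translation of the inequalities, together with checking that every boundary case lands in the ranges stated in the corollary, to be the main obstacle; once it is done, the three items follow verbatim from Lemma~\ref{lem:GTGabtheta=theta-1} and Proposition~\ref{prop:GTwGabidulinSeqh0}.
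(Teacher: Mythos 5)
Your reduction is exactly the route the paper intends: the paper gives no proof beyond the remark that the corollary follows by ``combining Proposition~\ref{prop:GTwGabidulinSeqh0} with Lemma~\ref{lem:GTGabtheta=theta-1}'', and you carry out precisely that combination, including the two checks that genuinely matter --- that $h=k-1$ maps to hook $0$ under the lemma, and that the hypothesis $m-n+1\le t<m-k$ translates into $1<t'\le n-k$ for $t'=m-(k+t-1)$, so that Proposition~\ref{prop:GTwGabidulinSeqh0} is applicable to $\GTw{k,\theta^{-1}}{\eta,t',0}{\theta^{k-1}(g)}$. The exchange of the two subranges of item~1, and the translations of items~2 and~3, come out exactly as you describe.

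There is, however, a genuine gap in the one step you defer as ``bookkeeping''. Substituting $t'=m-(k+t-1)$ into the hypothesis $t'+i\cdot(\text{exponent})\le n-k$ of Proposition~\ref{prop:GTwGabidulinSeqh0} yields $m-(k+t-1)+ir\le n-k$, i.e.\ $t\ge ir+m-n+1$ (and $t\ge i(m-r)+m-n+1$ in the other subrange). This is a \emph{lower} bound on $t$, whereas the condition printed in the corollary, $t+ir<m-k$, is an \emph{upper} bound; the two are not equivalent, so item~1 does not ``follow verbatim''. The translated condition is the correct one: it is exactly what forces all exponents of $\theta$ occurring in $\cS_i^{\theta^r}(C)$ into a window of $n$ consecutive integers, so that Proposition~\ref{cor:rankMoore} gives linear independence. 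The printed condition does not suffice in general: take $m=20$, $n=10$, $k=4$, $t=11$, $r=1$, $i=2$, and $g$ spanning the subfield $\F_{q^{10}}\subseteq\F_{q^{20}}$, so that $\theta^{14}(g)=\theta^4(g)$. Then $t+ir=13<16=m-k$, but $\cS_2^{\theta}(C)=\langle g,\theta(g),\dots,\theta^{6}(g)\rangle$ has dimension $7$, while the displayed formula predicts $8$ (the translated condition $t\ge ir+m-n+1=13$ correctly excludes this case). So a faithful execution of your plan proves a correct statement whose hypothesis in item~1 differs from the printed one; you need to either flag this discrepancy or not assert that the stated condition drops out of the translation.
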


\begin{corollary}\label{cor:twisted_gab_inequivalence_hk-1_2}

\begin{enumerate}
\item
Let $1<k<n-t<2n-m$.  If 
$m< 2n-4$
then generalized twisted Gabidulin codes of length $n$ and dimension $k$ over $\Fm$ with hook $h=k-1$ and twist $t$ are never equivalent to a Gabidulin code.
\item
Let $2<k<n-t-1<2n-m-1$. If 
$m< 2n-6$
then generalized twisted Gabidulin codes of length $n$ and dimension $k$ over $\Fm$ with hook $h=k-1$ and twist $t$ are never equivalent to a twisted Gabidulin code.
\end{enumerate}
\end{corollary}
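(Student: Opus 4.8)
The plan is to imitate the proof of Corollary~\ref{cor:twisted_gab_inequivalence_h0_2} almost verbatim, only replacing Proposition~\ref{prop:GTwGabidulinSeqh0} by its hook-$(k-1)$ analogue Corollary~\ref{cor:GTwGabidulinSeqhk-1}. By Propositions~\ref{prop:GabidulinSeq} and~\ref{prop:TwGabidulinSeq}, a Gabidulin code $\G_{k,\theta^r}(h)$ has $s_1^{\theta^r}=k+1$, whereas a narrow-sense twisted Gabidulin code $\TGabShee{k,\theta^r}{\bar\eta}{h}$ has $s_1^{\theta^r}=k+2$ and $s_2^{\theta^r}=k+3$; by Lemma~\ref{lem:invariant} it therefore suffices to show that, for every admissible exponent $r$ (that is, $1\le r<m$ with $\gcd(r,m)=1$), the code $\GTw{k,\theta}{\eta,t,k-1}{g}$ has some $s_j^{\theta^r}$ strictly larger than the corresponding value above.

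I would first write down the hook-$(k-1)$ version of Corollary~\ref{cor:twisted_gab_inequivalence_h0_1}, which the surrounding text already notes ``can easily be done'': feeding the three regimes of Corollary~\ref{cor:GTwGabidulinSeqhk-1} into the same argument yields $s_1^{\theta^r}\ge\min\{n,k+r\}>k+1$ for $1\le r\le k-1$, then $s_1^{\theta^r}=2k>k+1$ for $k\le r\le n-k-1$, then $s_1^{\theta^r}\ge k+n-r-1>k+1$ for $\max(k,n-k)\le r<n-2$, plus the mirror statements obtained via $\theta^r=(\theta^{-1})^{m-r}$, which cover the symmetric band around $m-1$. The hypothesis $m<2n-4$ is exactly what makes the intervals $(1,n-2)$ and $(m-n+2,m-1)$ overlap ($n-2>m-n+2$), so that together with the endpoints $r\in\{1,m-1\}$, handled through the exact formula in Corollary~\ref{cor:GTwGabidulinSeqhk-1}(1), every admissible $r$ is hit; this gives part~1. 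For part~2 one repeats the bookkeeping one step further, comparing $s_2^{\theta^r}$ to $k+3$: the $i=2$ instances of Corollary~\ref{cor:GTwGabidulinSeqhk-1} are used, the interval on which only $s_1^{\theta^r}=2k$ is available must now give $2k>k+2$ (hence the constraint $k>2$), and closing the gap between the two relevant $r$-intervals forces the stronger bound $m<2n-6$, exactly as in Corollary~\ref{cor:twisted_gab_inequivalence_h0_2}(2).

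An equivalent and shorter route is to apply Lemma~\ref{lem:GTGabtheta=theta-1}, which rewrites $\GTw{k,\theta}{\eta,t,k-1}{g}$ as $\GTw{k,\theta^{-1}}{\eta,\,m-k-t+1,\,0}{\theta^{k-1}(g)}$; the hypothesis $m-n+1\le t<m-k$ (implied by $1<k<n-t<2n-m$) becomes $2\le m-k-t+1\le n-k$, so one can simply invoke Corollary~\ref{cor:twisted_gab_inequivalence_h0_2} for the $h=0$ code over the generator $\theta^{-1}$, the exponents $r$ and $m-r$ playing interchangeable roles there. The only place where I expect to need care is the simultaneous occurrence of an extreme exponent $r\in\{1,m-1\}$ and the extreme twist $t=m-n+1$ (equivalently $m-k-t+1=n-k$ after the reduction): there the ``exact formula'' hypothesis $t+ir<m-k$ of Corollary~\ref{cor:GTwGabidulinSeqhk-1}(1) degenerates, and the argument has to be closed by hand, either from the cruder lower bounds of parts~2--3 of that corollary or by a direct Moore-matrix rank count showing $s_1^{\theta^r}\ge k+2$; the remainder is a purely mechanical comparison of inequalities against the thresholds $k+1$, $k+2$, $k+3$.
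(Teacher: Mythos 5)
Your proposal is correct and follows exactly the route the paper intends: the paper gives no explicit proof of this corollary, justifying it only by the remark that Lemma~\ref{lem:GTGabtheta=theta-1} together with Proposition~\ref{prop:GTwGabidulinSeqh0} (equivalently, Corollary~\ref{cor:GTwGabidulinSeqhk-1}) transports the $h=0$ results to hook $h=k-1$, which is precisely your plan of repeating the proof of Corollary~\ref{cor:twisted_gab_inequivalence_h0_2} with the hook-$(k-1)$ sequence bounds, or equivalently reducing to the $h=0$ case over $\theta^{-1}$. Your flagged boundary case ($t=m-n+1$, i.e.\ $t'=n-k$ after the reduction, combined with $r\in\{1,m-1\}$) is a genuine subtlety the paper glosses over, since the stated parameter ranges do not perfectly mirror under the substitution $t\mapsto m-k-t+1$, so closing that case by a direct rank count as you propose is exactly what is needed.
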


As a third class of generalized twisted Gabidulin codes, we study the family of codes in which the evaluation points $g_i$ span a subfield of $\Fqm$. This is a relevant case since it includes the codes of maximal length (for a given field extension $m$) for which the codes can be guaranteed to be MRD by the sufficient condition in \cite{puchinger2017further}. We exclude the smallest and largest two choices of both $t$ and $h$ for the sake of an easier proof, but the result might as well hold for some of these cases.

\begin{proposition}\label{prop:GTwGabidulinSeqSubfieldGenth}
Let $g\in \Fqn^n$ with $\rk_q(g)=n$, where $\Fqn \subseteq \Fqm$ is a subfield (i.e., the $g_i$ span a subfield of $\Fqm$), and $C := \GTw{k,\theta}{\eta,t,h}{g}$ be a generalized $\theta$-twisted Gabidulin code with $\eta \in \Fqm^*$, $2<t<n-k-1$ and $1<h<k-2$ (note that this implies $4<k<n-4$). Then, for $0 < r <m$ with $\gcd(r,m)=1$, we have
\begin{align*}
s_1^{\theta^r}(C) \geq k+3.
\end{align*}
\end{proposition}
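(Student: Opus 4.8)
The plan is to exhibit $k+3$ linearly independent vectors inside $\cS_1^{\theta^r}(C)=C+\theta^r(C)$ by a careful bookkeeping of which powers of $\theta$ appear, exploiting the fact that the $g_i$ span a subfield $\Fqn$ and $\eta\in\Fqm^*$ — but crucially the twist coefficients $\theta^{r}(\eta)$, $\theta^{jr}(\eta)$ need not lie in $\Fqn$, which is what drives the dimension up. Recall that $C=\langle \theta^i(g) : i\in\{0,\dots,k-1\}\setminus\{h\}\rangle \cup \langle \theta^h(g)+\eta\theta^{k+t-1}(g)\rangle$, all over $\Fqm$. Applying $\theta^r$ gives $\theta^r(C)=\langle \theta^{i+r}(g) : i\in\{0,\dots,k-1\}\setminus\{h\}\rangle \cup \langle \theta^{h+r}(g)+\theta^r(\eta)\theta^{k+t-1+r}(g)\rangle$. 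First I would separate into the two symmetric regimes $1\le r\le k-1$ and $m-k+1\le r\le m-1$ (the latter reducing to the former via Lemma \ref{lem:GTGabtheta=theta-1}, which sends $(t,h)\mapsto(m-(k+t-1),k-h-1)$ and $\theta\mapsto\theta^{-1}$, after checking the parameter constraints $2<t<n-k-1$, $1<h<k-2$ are preserved in an appropriate form), and for the middle regime $k\le r\le n-k$ one already gets $s_1^{\theta^r}(C)=2k\ge k+3$ since $k>4$, while for $r\ge \max(k,n-k)$ one gets $k+n-r-1$ linearly independent elements which for $r$ up to $n-4$ is at least $k+3$.

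For the core regime $1\le r\le k-1$, the key step is the following. The two ``pure'' blocks $\{\theta^0(g),\dots,\theta^{k-1}(g)\}\setminus\{\theta^h(g)\}$ from $C$ and $\{\theta^r(g),\dots,\theta^{k-1+r}(g)\}\setminus\{\theta^{h+r}(g)\}$ from $\theta^r(C)$ together span, over $\Fqm$, the space $\langle \theta^i(g) : 0\le i\le k-1+r\rangle$ minus at most the two indices $h$ and $h+r$; by Proposition \ref{cor:rankMoore} (since $\rk_q(g)=n\ge k-1+r$ as $r\le k-1<n-k$... actually one needs $k+r\le n$, which holds since $r\le k-1$ and $k<n-4$ forces $k+r<2k-1<n$), the pure blocks alone give dimension $\ge k+r-2\ge k-1$. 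Then I add the two twisted generators $v_1:=\theta^h(g)+\eta\theta^{k+t-1}(g)$ and $v_2:=\theta^{h+r}(g)+\theta^r(\eta)\theta^{k+t-1+r}(g)$, and argue that modulo the pure span they contribute the missing $\theta^h$- and $\theta^{h+r}$-directions plus genuinely new high-degree directions $\theta^{k+t-1}(g)$ and/or $\theta^{k+t-1+r}(g)$ (these have indices $\ge k+t-1>k-1+r$ when $t>r$, hence are outside the pure span). The arithmetic $2<t<n-k-1$, $1<h<k-2$, together with $\gcd(r,m)=1$ (so $\theta^r$ is again a generator and Proposition \ref{cor:rankMoore} applies with $\theta^r$), is exactly what guarantees enough room: the indices $h, h+r, k+t-1, k+t-1+r$ are all distinct and, after accounting for the two holes, yield a net gain of at least $3$ over $k$.

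The main obstacle is the case analysis when $t\le r$: then $\theta^{k+t-1}(g)$ may already sit inside the pure span $\langle \theta^i(g): i\le k-1+r\rangle$, so the twisted generators no longer obviously add new high-degree directions, and one must instead extract the gain from the two ``hole'' positions $h,h+r$ being filled together with a shift interaction — this is where one has to be most careful, and it is the reason the statement is only an inequality $\ge k+3$ rather than an equality, and why the endpoints of $t$ and $h$ are excluded. Concretely, I would show that even when $t\le r$, the vectors $v_1,v_2$ and the pure blocks of $C$ and $\theta^r(C)$ span a space containing $\langle\theta^i(g):0\le i\le k-1\rangle$ entirely (recovering $\theta^h(g)$ via $v_1$ minus $\eta\theta^{k+t-1}(g)$, the latter being a pure element of $\theta^r(C)$ when $t-1\le r$... actually one needs $k+t-1\le k-1+r$ i.e.\ $t\le r$ — consistent), hence dimension $\ge k$, plus at least $r\ge 1$ more from $\theta^{k}(g),\dots,\theta^{k+r-1}(g)$, plus the twist direction $\theta^{k+t-1+r}(g)$ from $v_2$ which has index $k+t-1+r>k+r-1$ — giving $k+r+1$... and since one must handle small $r$ ($r=1,2$) separately to reach $k+3$, I would invoke $h>1$ and $t>2$ to find one additional independent vector by a parallel argument using that $\theta^{h+r}(g)$ can also be recovered, decoupling it from the $\theta^{k+t-1+r}$ tail. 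Assembling these bounds over all regimes of $r$ yields $s_1^{\theta^r}(C)\ge k+3$ in every case.
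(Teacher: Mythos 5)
There is a genuine gap: you have misidentified what the subfield hypothesis is for, and as a result your case decomposition in $r$ does not cover all admissible $r$ and your dimension counts are not justified. The actual consequence of $g\in\Fqn^n$ spanning a subfield is the \emph{periodicity} $\theta^{n}(g)=g$, hence $\theta^{i}(g)=\theta^{i\bmod n}(g)$ for all $i$. This is the crux of the paper's proof: since $n\mid m$ and $\gcd(r,m)=1$ force $r\not\equiv 0\pmod n$, every shift $\theta^r$ acts on the conjugates of $g$ as an effective shift by $r\bmod n\in\{1,\dots,n-1\}$, which (after replacing $g$ by $\theta^{(-r\bmod n)}(g)$ when $r\bmod n>n/2$) reduces the whole proposition to $r\in\{1,\dots,\lceil(n-1)/2\rceil\}$; the paper then finishes with a short case analysis on $r\geq 4$, $r=3$, $r\in\{1,2\}$. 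Your proposal instead imports the regime structure of Propositions \ref{prop:GTwGabidulinSeqh0}--\ref{cor:GTwGabidulinSeqhk-1} for generic full-rank $g$, treating $\theta^0(g),\dots,\theta^{m-1}(g)$ as if only Proposition \ref{cor:rankMoore} with $\rk_q(g)=n$ were relevant. Two things then fail. First, your bounds $s_1^{\theta^r}(C)\geq k+n-r-1$ (and its mirror near $m$) only reach $k+3$ for $r\leq n-4$ or $r\geq m-n+4$; since $\Fqn\subsetneq\Fqm$ forces $m\geq 2n$, the entire range $n-3\leq r\leq m-n+3$ is left uncovered, and without periodicity there is no argument for those $r$ at all. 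Second, even inside the regimes you do treat, wrap-around creates dependencies you do not account for: e.g.\ for $k\leq r\leq n-k$ the index $k+t-1+r$ can exceed $n-1$, so $\theta^{k+t-1+r}(g)=\theta^{(k+t-1+r)\bmod n}(g)$ may fall back into one of your ``pure'' blocks, and the claimed value $2k$ (or the count in your core regime) is not established.

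A secondary but telling symptom: you write that the dimension is ``driven up'' by the twist coefficients $\theta^r(\eta)$ not lying in $\Fqn$. All spans here are $\Fqm$-linear, so whether $\eta$ or $\theta^r(\eta)$ lies in $\Fqn$ has no bearing on the $\Fqm$-dimension of the listed generators; the subfield hypothesis enters only through the periodicity of $\theta^i(g)$. To repair the proof you would need to start from $\theta^{n}(g)=g$, reduce $r$ modulo $n$ (using $n\mid m$ and $\gcd(r,m)=1$), and then redo the independence counts with all exponents read modulo $n$ — which is exactly the paper's route and is where the hypotheses $2<t<n-k-1$ and $1<h<k-2$ are actually consumed.
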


\begin{proof}
Since the $g_i$ span a subfield $\Fqn$ of $\Fqm$, we have $\theta^{n+r}(g) = \theta^r(g)$ for all $r$. 
Due to this periodicity, it suffices to show the claim for $r=1,\dots,\lceil \tfrac{n-1}{2} \rceil$ since $\gcd(m,n)=1$ implies $r \not\equiv 0 \mod n$ (due to $n \mid m$), we can always write $\theta^r(g) = \theta^{r \mod n}(g)$, and for $(r \mod n) > n/2$, we can express the code with respect to $g' = \theta^{(-r \mod n)}(g)$ instead of $g$.
Recall also that by Proposition~\ref{cor:rankMoore}, $\theta^0(g),\dots,\theta^{n-1}(g)$ are linearly independent. The set  $S_1^{\theta^r}(C)$ contains
$$  \langle \theta^{j}(g)\rangle_{j\in\{0,\dots,k-1\}\backslash \{h\}} + \langle\theta^{j}(g)\rangle_{j\in\{\max(r,k),\dots,\min(n,k+r-1)\}\backslash \{h+r\}}$$
$$+\langle \eta \theta^{k-1+t}(g) + \theta^h(g),  \theta^{r}(\eta) \theta^{k-1+t+r}(g) + \theta^{h+r}(g) \rangle $$ 
which contains $k+3$ linearly independent elements, as we show now. Note that the first summand above always has dimension $k-1$, thus we need to find four extra linearly independent elements. 
We distinguish three cases:
\begin{itemize}
\item Case $r\geq 4$: 
Due to $4 \leq r \leq n/2$ and $n-k\geq 5$, we get:
\begin{itemize}
\item If $h+r\in \{k,\dots,n-1\}$, then there are three distinct elements, say $i_1,i_2,i_3$, in $\{k,\dots,n-1\} \setminus \{h+r\}$, for which $\theta^{i_j}(g) \in \theta^{r}(C) \subseteq S_1^{\theta^r}(C)$ for $j=1,2,3$.
The fourth linearly independent element is $\theta^{h+r}(g)+\theta^r(\eta)\theta^{k-1+t+r}(g)$.

\item If $h+r \notin \{k,\dots,n-1\}$, then there are four distinct integers $i_1,\dots,i_4 \in \{k,\dots,n-1\}$ with $\theta^{i_j}(g) \in \theta^r(C) \subseteq S_1^{\theta^r}(C)$.
\end{itemize}

\item Case $r=3$: For each $i \in \{0,\dots,k+2\}$, we have that $\theta^i(g) \in S_1^{\theta^r}(C)$ if $i \notin \{h,h+3\}$, $\theta^h(g)+\eta \theta^{k-1+t}(g) \in S_1^{\theta^r}(C)$, and $\theta^{h+r}(g)+\theta^{r}(\eta) \theta^{k-1+t+r}(g) \in S_1^{\theta^r}(C)$. Since $\theta^{k-1+t}(g) \neq \theta^{h+r}(g)$ and $\theta^{k-1+t+r}(g) \neq \theta^h(g)$, there are at least $k+3$ linearly independent codewords in $S_1^{\theta^r}(C)$.

\item Case $r \in \{1,2\}$: Due to $2 \leq h \leq k-3$, we have $r \leq h$ and $h+r \leq k-1$, hence the codeword $\theta^{i}(g)$ is contained in $C$ or $\theta^r(C)$ for any $i=0,\dots,k-1+r$. Furthermore, $k-1+r < k-1+t < k-1+t+r <n$, so the codewords $\eta \theta^{k-1+t}(g) + \theta^h(g)$ and $\theta^{r}(\eta) \theta^{k-1+t+r}(g) + \theta^{h+r}(g)$ are linearly independent from $g,\theta(g),\dots,\theta^{k-1+r}(g)$. Thus, there are at least $k+3$ linearly independent elements in $S_1^{\theta^r}(C)$. \qedhere
\end{itemize}
\end{proof}

This implies:

\begin{corollary}\label{cor:twisted_gab_inequivalence_generalth}
Let $g\in \Fqn^n$ with $\rk_q(g)=n$, where $\Fqn \subseteq \Fqm$ is a subfield (i.e., $n$ divides $m$ and the $g_i$'s span a subfield of $\Fqm$), and $C := \GTw{k,\theta}{\eta,t,h}{g}$ be a generalized $\theta$-twisted Gabidulin code with $\eta \in \Fqm^*$, $2<t<n-k-1$ and $1<h<k-2$.

Then, $C$ is not equivalent to a $\theta'$-Gabidulin code or a $\theta'$-twisted Gabidulin code in the narrow sense ($t=1$, $h=0$) for  any generator $\theta'$ of  $\Gal(\Fqm/\Fq)$.
\end{corollary}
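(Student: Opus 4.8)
The plan is to separate $C$ from every Gabidulin and every narrow-sense twisted Gabidulin code by evaluating the sum-dimension invariant $s_1^{\sigma}$ of Section~\ref{sec:invariants} at the \emph{one} automorphism $\sigma$ for which such a hypothetical code would be in its canonical (minimal) position, and then invoking the universal lower bound established in Proposition~\ref{prop:GTwGabidulinSeqSubfieldGenth}.

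First I would argue by contradiction. Suppose $C \sim \G_{k,\theta'}(v)$ for some generator $\theta'$ of $\Gal(\Fm/\Fq)$ and some $v \in \Fm^n$ with $\rk_q(v)=n$. Since $\Gal(\Fm/\Fq)=\langle\theta\rangle$ is cyclic of order $m$, there is a unique $r\in\{1,\dots,m-1\}$ with $\gcd(r,m)=1$ and $\theta'=\theta^r$. By Lemma~\ref{lem:invariant}, the integer $s_1^{\theta^r}$ takes the same value on equivalent codes, so $s_1^{\theta^r}(C)=s_1^{\theta'}(\G_{k,\theta'}(v))$. Applying part~1 of Proposition~\ref{prop:GabidulinSeq} to the $\theta'$-Gabidulin code with $i=r=1$ — legitimate because $1\le k$ and, as the hypotheses $2<t<n-k-1$ and $1<h<k-2$ force $4<k<n-4$, also $k+1\le n$ — gives $s_1^{\theta'}(\G_{k,\theta'}(v))=k+1$. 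On the other hand $r$ satisfies $0<r<m$ and $\gcd(r,m)=1$, so Proposition~\ref{prop:GTwGabidulinSeqSubfieldGenth} applies verbatim and yields $s_1^{\theta^r}(C)\ge k+3>k+1$, a contradiction.

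The twisted case is handled identically. Suppose $C\sim\TGabShee{k,\theta'}{\bar\eta}{v}$ for some generator $\theta'=\theta^r$ (again $0<r<m$, $\gcd(r,m)=1$), some $v\in\Fm^n$ with $\rk_q(v)=n$, and some $\bar\eta\in\Fm^*$. By Lemma~\ref{lem:invariant} together with part~1 of Proposition~\ref{prop:TwGabidulinSeq} applied with $i=r=1$ (valid since $1\le k-1$ and $k+2\le n$), one obtains $s_1^{\theta^r}(C)=s_1^{\theta'}(\TGabShee{k,\theta'}{\bar\eta}{v})=k+2$, again contradicting $s_1^{\theta^r}(C)\ge k+3$ from Proposition~\ref{prop:GTwGabidulinSeqSubfieldGenth}. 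This covers all generators $\theta'$ and finishes the proof.

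Since the substantive work is already done in Proposition~\ref{prop:GTwGabidulinSeqSubfieldGenth}, I do not expect a genuine obstacle here; the only point requiring care is the bookkeeping observation that every generator of the cyclic group $\Gal(\Fm/\Fq)$ is of the form $\theta^r$ with $1\le r\le m-1$ and $\gcd(r,m)=1$, so that the hypotheses of Proposition~\ref{prop:GTwGabidulinSeqSubfieldGenth} are met at precisely the exponent where the Gabidulin (resp.\ twisted Gabidulin) first sum-dimension attains its minimum $k+1$ (resp.\ $k+2$); the remaining inequalities $k+1\le n$ and $k+2\le n$ are immediate from $4<k<n-4$.
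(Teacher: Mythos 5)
Your proof is correct and follows essentially the same route as the paper: write each generator $\theta'$ as $\theta^r$ with $\gcd(r,m)=1$, invoke Proposition~\ref{prop:GTwGabidulinSeqSubfieldGenth} to get $s_1^{\theta'}(C)\ge k+3$, and compare with the values $k+1$ and $k+2$ for Gabidulin and narrow-sense twisted Gabidulin codes via the invariance from Lemma~\ref{lem:invariant}. The only difference is presentational (you phrase it as a proof by contradiction and spell out the parameter checks $k+1\le n$ and $k+2\le n$), which changes nothing of substance.
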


\begin{proof}
Let $\theta' \in \Gal(\Fqm/\Fq)$ be a generator of the Galois group. Then, there is an $r\in \{1,\dots,m-1\}$ with $\theta' = \theta^r$, where $\gcd(r,m)=1$. By Proposition~\ref{prop:GTwGabidulinSeqSubfieldGenth}, we have $s_1^{\theta'}(C)  \geq k+3$.
On the other hand, a $\theta'$-Gabidulin code $C'$ fulfills $s_1^{\theta'}(C') = k+1$ and any (narrow-sense) $\theta'$-twisted Gabidulin code $C''$  gives $s_1^{\theta'}(C'') = k+2$.
Hence, the codes cannot be equivalent.
\end{proof}

\section{Number of Inequivalent Known Codes}\label{subsection:numberinequivalent}

In this section we use the results from the previous section to determine upper and lower bounds on the number of inequivalent Gabidulin and twisted Gabidulin codes. For generalized twisted Gabidulin codes, we compute lower bounds on the number of equivalence classes for some small parameters.

\subsection{Gabidulin Codes}

In the following, $\Gr(k,\Fm^n)$ denotes the $k$-dimensional Grassmannian of $\Fm^n$, that is the set of all the $k$-dimensional $\Fm$-subspaces of $\Fm^n$.
Moreover, let $\Gab_q(k,n,m,\theta)$  be the set of all $\Fmk$  $\theta$-Gabidulin codes, i.e.,
$$\Gab_q(k,n,m,\theta):= \left\{ \mathcal U \in \Gr(k,\Fm^n) \mid \mathcal U=\G_{k,\theta}(g) \mbox{ for some $g\in\Fm^n$ with }  \rk_q(g)=n \right\}.$$
Furthermore, define $F:=\{\theta \in  \Gal(\Fm/\Fq) \mid \theta \mbox{ is a generator of } \Gal(\Fm/\Fq) \}$, and
\begin{align*}
\Gab_q(k,n,m):= & \left\{ \mathcal U \in \Gr(k,\Fm^n) \mid \mathcal U \mbox{ is a $\theta$-Gabidulin code for some }  \theta \in F  \right\} \\
= & \bigcup_{\theta \in F} \Gab_q(k,n,m, \theta).
\end{align*}

For a fixed $\theta$,  Berger provided the following result about the equality of two Gabidulin codes.

\begin{theorem}\cite[Theorem~2]{be03}\label{thm:numGab} Let $u,v \in \Fm^n$ be two vectors such that $\rk_q(u)=\rk_q(v)=n$. Then, for any generator $\theta$ of $\Gal(\Fm/\Fq)$, 
$\G_{k,\theta}(u)=\G_{k,\theta}(v)$ if and only if $u=\lambda v$ for some $\lambda \in \Fm^*$.
\end{theorem}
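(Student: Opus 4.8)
The plan is to prove the equivalence by treating the two implications separately, the forward direction being the substantive one.

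The implication ``$u=\lambda v \;\Rightarrow\; \G_{k,\theta}(u)=\G_{k,\theta}(v)$'' is immediate: if $u=\lambda v$ with $\lambda\in\Fm^*$, then $\theta^i(u)=\theta^i(\lambda)\,\theta^i(v)$ for every $i$, and since each $\theta^i(\lambda)$ is a nonzero scalar the generating sets $\{\theta^i(u):0\le i\le k-1\}$ and $\{\theta^i(v):0\le i\le k-1\}$ span the same $\Fm$-subspace of $\Fm^n$.

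For the converse, assume $\G_{k,\theta}(u)=\G_{k,\theta}(v)=:C$; I work in the meaningful range $k<n$ (for $k=n$ both codes equal $\Fm^n$ by Proposition~\ref{cor:rankMoore}). Since $u\in C$, there is a skew polynomial $f=\sum_{i=0}^{k-1}f_i\theta^i\in\Fm[\theta]$ with $u_j=f(v_j)$ for all $j\in\{1,\dots,n\}$, and the goal is to show $f_1=\dots=f_{k-1}=0$: then $u=f_0v$ with $f_0\neq 0$ (since $\rk_q(u)=n\ge 1$ forces $u\neq 0$). I would prove this by downward induction on the index of the top surviving coefficient. Fix $\ell\in\{1,\dots,k-1\}$ and assume $f_{k-1}=\dots=f_{k-\ell+1}=0$ (empty for $\ell=1$), so $f=\sum_{i=0}^{k-\ell}f_i\theta^i$. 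Since $\theta^\ell(u)$ lies in the generating set of $\G_{k,\theta}(u)=C=\G_{k,\theta}(v)$, there is $\tilde f=\sum_{i=0}^{k-1}\tilde f_i\theta^i$ with $\theta^\ell(u_j)=\tilde f(v_j)$ for all $j$; on the other hand, applying $\theta^\ell$ entrywise to $u_j=f(v_j)$ gives $\theta^\ell(u_j)=(\theta^\ell\circ f)(v_j)$, where the composition $\theta^\ell\circ f=\sum_{i=0}^{k-\ell}\theta^\ell(f_i)\theta^{\ell+i}$ has $\theta^k$-coefficient $\theta^\ell(f_{k-\ell})$. Hence $p:=\theta^\ell\circ f-\tilde f$, of $\theta$-degree at most $k$, vanishes at $v_1,\dots,v_n$; writing this as the statement that the length-$(k+1)$ coefficient vector of $p$ annihilates the Moore matrix $M_{k+1,\theta}(v)$, and using that the $k+1\le n=\rk_q(v)$ rows of that matrix are $\Fm$-linearly independent by Proposition~\ref{cor:rankMoore}, I get $p=0$. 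Comparing $\theta^k$-coefficients (that of $\tilde f$ is $0$) yields $\theta^\ell(f_{k-\ell})=0$, so $f_{k-\ell}=0$. Letting $\ell$ run from $1$ to $k-1$ eliminates $f_{k-1},f_{k-2},\dots,f_1$ in turn, finishing the proof.

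The whole argument rests on one structural fact: a nonzero skew polynomial of $\theta$-degree $<n$ cannot vanish on an $n$-dimensional $\Fq$-subspace of $\Fm$ (equivalently, the $\Fm$-linear independence asserted in Proposition~\ref{cor:rankMoore}). The only real care needed is the bookkeeping of $\theta$-degrees under composition so that this fact applies with the sharp bound $k+1\le n$ — this is exactly where $\rk_q(v)=n$ and $k<n$ enter — and I do not anticipate any obstacle beyond that.
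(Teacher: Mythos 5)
Your proof is correct. Note, however, that the paper itself does not prove this statement: it is imported verbatim as \cite[Theorem~2]{be03} (Berger), so there is no in-paper argument to compare against. Your argument is a clean, self-contained derivation using only the paper's Proposition~\ref{cor:rankMoore}: the backward direction is the trivial rescaling of the Moore rows, and the forward direction correctly extracts $f$ with $u=f(v)$ from $u\in\G_{k,\theta}(v)$, then kills the coefficients $f_{k-1},\dots,f_1$ one at a time by composing with $\theta^\ell$, subtracting the representation of $\theta^\ell(u)$ in $\G_{k,\theta}(v)$, and observing that the resulting skew polynomial of $\theta$-degree at most $k\le n-1$ annihilating $v_1,\dots,v_n$ must vanish because the $k+1$ rows of $M_{k+1,\theta}(v)$ are $\Fm$-independent. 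The degree bookkeeping (top coefficient $\theta^\ell(f_{k-\ell})$ of $\theta^\ell\circ f$ versus degree $\le k-1$ of $\tilde f$) is exactly right, and your explicit exclusion of $k=n$ is appropriate, since there the ``only if'' direction genuinely fails ($\G_{n,\theta}(u)=\Fm^n$ for every full-rank $u$) and the theorem is only ever invoked in the paper for $k\le n-2$. I see no gap.
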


This leads to the following results.

\begin{corollary}\label{cor:numGab}
 Let $k,n,m$  be integers such that $2\leq k \leq n-2$ and $n \leq m$, and let $\theta$ be a generator of $\Gal(\Fm/\Fq)$. Then,
 \begin{enumerate}
 \item
$$|\Gab_{q}(k,n,m,\theta)| = \prod_{i=1}^{n-1}(q^m-q^i),$$
and hence
\item
$$|\mathrm{Gab}_{q}(k,n,m)| \leq \frac{\phi(m)}{2} \prod_{i=1}^{n-1}(q^m-q^i).$$
\end{enumerate}
\end{corollary}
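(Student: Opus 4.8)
The plan is to count generating vectors and then divide by the appropriate redundancy. For part 1, the map $g \mapsto \G_{k,\theta}(g)$ is a surjection from the set $\{g \in \Fm^n : \rk_q(g) = n\}$ onto $\Gab_q(k,n,m,\theta)$ by definition. By Theorem \ref{thm:numGab}, two vectors $u,v$ of full $q$-rank give the same $\theta$-Gabidulin code if and only if $u = \lambda v$ for some $\lambda \in \Fm^*$; thus the fibers of this map are exactly the orbits of the scaling action of $\Fm^*$, each of size $q^m - 1$. Hence
\begin{align*}
|\Gab_q(k,n,m,\theta)| = \frac{\#\{g \in \Fm^n : \rk_q(g) = n\}}{q^m - 1}.
\end{align*}
The number of vectors $g = (g_1,\dots,g_n) \in \Fm^n$ with $\rk_q(g) = n$ is the number of $n$-tuples of $\Fq$-linearly independent elements of $\Fm$ (viewed as an $m$-dimensional $\Fq$-space), which is $\prod_{i=0}^{n-1}(q^m - q^i)$ by the standard count. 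Dividing by $q^m - 1 = q^m - q^0$ leaves $\prod_{i=1}^{n-1}(q^m - q^i)$, as claimed.

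For part 2, I would use the union bound together with the observation that $\Gab_q(k,n,m) = \bigcup_{\theta \in F}\Gab_q(k,n,m,\theta)$, so
\begin{align*}
|\Gab_q(k,n,m)| \leq \sum_{\theta \in F} |\Gab_q(k,n,m,\theta)| = |F| \cdot \prod_{i=1}^{n-1}(q^m - q^i),
\end{align*}
using part 1 for each summand. Now $|F| = \phi(m)$, since the generators of the cyclic group $\Gal(\Fm/\Fq) \cong \Z/m\Z$ are precisely the $\theta_i$ with $\gcd(i,m)=1$. The extra factor $\tfrac12$ comes from Lemma \ref{lem:Gabtheta=theta-1}: every $\theta$-Gabidulin code is simultaneously a $\theta^{-1}$-Gabidulin code (since $\G_{k,\theta}(g) = \G_{k,\theta^{-1}}(\theta^{k-1}(g))$), so the sets $\Gab_q(k,n,m,\theta)$ and $\Gab_q(k,n,m,\theta^{-1})$ coincide. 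Pairing each generator $\theta$ with its inverse $\theta^{-1}$ partitions $F$ into pairs, except that we must check no generator is its own inverse; if $\theta = \theta^{-1}$ then $\theta^2 = \mathrm{id}$, forcing $m \mid 2\gcd$-type constraints, i.e. the order of $\theta$ divides $2$, which contradicts $\theta$ being a generator once $m > 2$ (and the hypotheses $2 \le k \le n-2$ force $n \ge 4$, hence $m \ge n \ge 4 > 2$). So $F$ splits into $\phi(m)/2$ pairs $\{\theta,\theta^{-1}\}$, and the union over $F$ is really a union over at most $\phi(m)/2$ distinct sets, giving the bound.

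The only genuinely delicate point is making sure the $\tfrac12$ is justified and not over-counted: one should note that $\theta$ and $\theta^{-1}$ being distinct generators means their corresponding sets are literally equal (not merely of equal size), so collapsing the union to $\phi(m)/2$ terms is valid; the remaining inequality is just subadditivity of cardinality over a union, with no claim of disjointness between different unordered pairs $\{\theta,\theta^{-1}\}$. I would also remark that the hypothesis $2 \le k \le n-2$ is what lets us invoke Theorem \ref{thm:numGab} (which is stated for that range) in part 1; the bound in part 2 then follows formally with no further obstacle.
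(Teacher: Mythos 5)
Your proof is correct and follows exactly the route the paper takes: part 1 by counting full-rank vectors and dividing by the $q^m-1$ scalar multiples via Theorem \ref{thm:numGab}, and part 2 by the union bound over generators collapsed into $\phi(m)/2$ pairs $\{\theta,\theta^{-1}\}$ via Lemma \ref{lem:Gabtheta=theta-1}. The paper states this in two lines; you have simply supplied the details (including the check that no generator is self-inverse since $m\geq n\geq 4$), all of which are accurate.
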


\begin{proof}
 \begin{enumerate}
 \item 
 This follows from Theorem \ref{thm:numGab}.
 \item
This directly follows from part 1 and Lemma \ref{lem:Gabtheta=theta-1}. \qedhere
\end{enumerate}
\end{proof}

However, one can expect that some of those Gabidulin codes will be equivalent to each other. Then the natural question is ``how many inequivalent Gabidulin codes are there?''. Recently, Schmidt and Zhou provided a lower bound on this number.

\begin{theorem}\cite[Theorem 1.2]{sc17}\label{thm:schmidtandzhou}
For any $k,n,m$ integers such that $1\leq k \leq n-1$ and $2\leq n \leq m-2$, the number of inequivalent Gabidulin codes of dimension $k$ in $\Fm^n$ is at least
$$\frac{1}{m}\qbinom{m}{n}_q\frac{q-1}{q^m-1}.$$
\end{theorem}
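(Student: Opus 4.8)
Write $\mathcal G$ for the group of semilinear rank isometries of $\Fm^n$ described in Theorem~\ref{isometries}. The plan is to count the $\theta$-Gabidulin codes exactly for one fixed generator $\theta$ of $\Gal(\Fm/\Fq)$ and then divide by an upper bound on the size of an equivalence class. By Corollary~\ref{cor:numGab}(1) -- supplemented by a direct count from Theorem~\ref{thm:numGab} in the two boundary cases $k\in\{1,n-1\}$ it does not cover -- we have
$$
|\Gab_q(k,n,m,\theta)|=\prod_{i=1}^{n-1}(q^m-q^i)=\frac{1}{q^m-1}\prod_{i=0}^{n-1}(q^m-q^i)=\frac{|\GL_n(\Fq)|}{q^m-1}\,\qbinom{m}{n}_q ,
$$
using $|\GL_n(\Fq)|=\prod_{i=0}^{n-1}(q^n-q^i)$ and the identity $\qbinom{m}{n}_q=\prod_{i=0}^{n-1}(q^m-q^i)/|\GL_n(\Fq)|$.

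The first step is to observe that $\mathcal G$ maps $\theta$-Gabidulin codes to $\theta$-Gabidulin codes with the \emph{same} $\theta$. Indeed, since the group of field automorphisms of $\Fm$ is abelian, any $\tau$ commutes with $\theta$, so $\tau(\G_{k,\theta}(g))=\G_{k,\theta}(\tau(g))$; right multiplication by $A\in\GL_n(\Fq)$ gives $\G_{k,\theta}(g)A=\G_{k,\theta}(gA)$, because the entries of $A$ lie in $\Fq$ and hence are fixed by $\theta$; and the scalar $\lambda$ can be dropped by $\Fm$-linearity. Therefore every isometry sends $\G_{k,\theta}(g)$ to $\G_{k,\theta}(\tau(g)A)$, which is again a $\theta$-Gabidulin code since $\rk_q(\tau(g)A)=n$. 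Consequently $\Gab_q(k,n,m,\theta)$ is a disjoint union of full $\mathcal G$-orbits, and the number of inequivalent Gabidulin codes of dimension $k$ in $\Fm^n$ is at least the number of these orbits.

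The second step bounds the size of one orbit. Because $\Fm^*$ acts trivially on $\Fm$-linear codes, the $\mathcal G$-orbit of $C=\G_{k,\theta}(g)$ equals its orbit under $H:=\GL_n(\Fq)\rtimes\Gal(\Fm/\Fq)$, of order $m\cdot|\GL_n(\Fq)|$. Every scalar matrix $\mu I$ with $\mu\in\Fq^*$ stabilizes $C$, since $\G_{k,\theta}(\mu g)=\G_{k,\theta}(g)$ by Theorem~\ref{thm:numGab}; hence $|\mathrm{Stab}_H(C)|\ge q-1$, and by the orbit--stabilizer theorem the orbit of $C$ has size at most $m\,|\GL_n(\Fq)|/(q-1)$. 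Dividing the cardinality of $\Gab_q(k,n,m,\theta)$ by this bound yields
$$
\frac{|\GL_n(\Fq)|\,\qbinom{m}{n}_q\,/(q^m-1)}{m\,|\GL_n(\Fq)|\,/(q-1)}=\frac1m\qbinom{m}{n}_q\,\frac{q-1}{q^m-1},
$$
which is the asserted lower bound.

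The main obstacle is not the computation -- which, as above, collapses to a one-line identity -- but getting the structural set-up right. First, one must verify the identity $\varphi(\G_{k,\theta}(g))=\G_{k,\theta}(\tau(g)A)$ with care, since it is exactly this fact (the whole orbit staying inside $\Gab_q(k,n,m,\theta)$) that makes ``counting orbits inside that set'' a legitimate way to lower bound the number of equivalence classes. Second, one has to fix the convention for the field-automorphism part of $\mathcal G$: using the full group $\Aut(\Fm)$ rather than $\Gal(\Fm/\Fq)$ would only shrink the constant $\tfrac1m$, so nothing is lost for a lower bound. Finally, all of the real content sits in Theorem~\ref{thm:numGab}/Corollary~\ref{cor:numGab}; the crude estimate $|\mathrm{Stab}_H(C)|\ge q-1$ suffices precisely because the target is a lower bound and the scalar subgroup $\{\mu I:\mu\in\Fq^*\}$ already accounts for the factor $q-1$ appearing in it.
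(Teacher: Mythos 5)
First, note that the paper does not prove this statement at all: Theorem~\ref{thm:schmidtandzhou} is quoted from Schmidt and Zhou \cite{sc17}, so there is no in-paper proof to compare yours against. Judged on its own, your orbit-counting strategy is natural and most of it is sound: the exact count $|\Gab_q(k,n,m,\theta)|=\prod_{i=1}^{n-1}(q^m-q^i)$ and its rewriting via $\qbinom{m}{n}_q$, the observation that every semilinear isometry maps $\G_{k,\theta}(g)$ to $\G_{k,\theta}(\tau(g)A)$ (so each equivalence class meeting $\Gab_q(k,n,m,\theta)$ is entirely contained in it), and the stabilizer bound $|\mathrm{Stab}(C)|\ge q-1$ coming from the scalar matrices $\mu I_n$ are all correct.

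The gap is in the orbit-size bound. The equivalence group of Theorem~\ref{isometries} contains the full automorphism group $\Aut(\Fm)$, of order $m[\Fq:\F_p]$, not just $\Gal(\Fm/\Fq)$ of order $m$. Your claim that the $\mathcal{G}$-orbit of $C$ equals its orbit under $H=\GL_n(q)\rtimes\Gal(\Fm/\Fq)$ is justified only for the scalar part $\Fm^*$, not for the automorphisms outside $\Gal(\Fm/\Fq)$: for $\tau\in\Aut(\Fm)\setminus\Gal(\Fm/\Fq)$ the code $\G_{k,\theta}(\tau(g))$ is in general not of the form $\G_{k,\theta}(\lambda\tau'(g)A)$ with $\tau'\in\Gal(\Fm/\Fq)$ and $A\in\GL_n(q)$, so the $\mathcal{G}$-orbit can be up to $[\Fq:\F_p]$ times larger than the $H$-orbit. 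Your parenthetical remark that using $\Aut(\Fm)$ ``would only shrink the constant $\tfrac1m$, so nothing is lost'' has the logic reversed: shrinking the constant means your argument only delivers the weaker bound $\frac{1}{m[\Fq:\F_p]}\qbinom{m}{n}_q\frac{q-1}{q^m-1}$, which coincides with the asserted bound only when $q$ is prime. That this factor genuinely must be tracked under this paper's notion of equivalence is visible in the paper's own proof of part~2 of Theorem~\ref{thm:numberinequivalentGab}, where essentially your orbit argument is carried out and the resulting lower bound explicitly carries a $1/[\Fq:\F_p]$. So either you must restrict to prime $q$, or you need an additional argument showing that the extra automorphisms do not enlarge the orbits (equivalently, that they already lie in the $H$-orbit of $C$ or contribute to its stabilizer); your write-up does not supply this, and it is exactly where the real content of the Schmidt--Zhou bound would have to live.
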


In order to determine the number of equivalent Gabidulin codes, for any such code $C$, we need to give an estimate  of the number of automorphisms $\theta$ for which $C$ is a $\theta$-Gabidulin code. Note that the  notion of $\theta$-Gabidulin codes requires $\theta$ to be a generator of the Galois group. 
Since $\Gal(\Fm/\Fq) \cong (\Z/m\Z)$ and the set of generators is isomorphic to $(\Z/m\Z)^*$, we introduce the following notation. For a code $C$ we define the set
$$A_{C}:=\left\{ \theta \mbox{ generator of } \Gal(\Fm/\Fq) \mid C \mbox{ is a } \theta\mbox{-Gabidulin code } \right\}.$$
If we fix a generator $\theta$ of $\Gal(\Fm/\Fq)$, the set $A_C$ corresponds to the set 
$$A_{C,\theta}:=\left\{ r \in (\Z/m\Z)^* \mid C \mbox{ is a } \theta^r\mbox{-Gabidulin code } \right\}.$$
We now state an auxiliary result.

\begin{lemma}\label{lem:numautomorphismGab}
 Let $3\leq k \leq n-3$ be integers and $\theta$ be a generator of $\Gal(\Fm/\Fq)$. Let moreover $C$ be a $\theta$-Gabidulin code.
\begin{enumerate}
\item $A_{C,\theta} \cap \{2,\ldots,n-2\}=\emptyset$.
\item If $a\in A_{C,\theta}$ for some  $a\in (\Z/m\Z)^*$, then   $A_{C,\theta}\cap \{a+3, a+4. \ldots, a+n-3\}=\emptyset$.
\item If $a\in A_{C,\theta}$ for some  $a\in (\Z/m\Z)^*$, then $|A_{C,\theta}\cap\{a+1,a+2,a+n-2\}|\leq 1$.
\item For any $b \in \Z/m\Z$, $|A_{C,\theta}\cap \{b,b+1,\ldots, b+n-2\}| \leq 2$.
\end{enumerate}
\end{lemma}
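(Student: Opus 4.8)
The plan is to prove parts 1 and 2 using the sum-dimension invariant $s_1^{\theta^r}$ computed for Gabidulin codes in Proposition~\ref{prop:GabidulinSeq}, and then deduce parts 3 and 4 as counting consequences. For part 1: if $C = \G_{k,\theta}(g)$ and also $C = \G_{k,\theta^r}(h)$ for some generator $\theta^r$ with $r \in \{2,\dots,n-2\}$, then $C$ is simultaneously a $\theta^r$-Gabidulin code in its \emph{own} right, so $s_1^{\theta^r}(C)$ equals $k+1$ (the value for any Gabidulin code with respect to its defining automorphism, by part~1 of Proposition~\ref{prop:GabidulinSeq}). On the other hand, viewing $C$ as the $\theta$-Gabidulin code $\G_{k,\theta}(g)$ and applying $\theta^r$ with $r \in \{2,\dots,n-2\}$, parts~1 and~2 of Proposition~\ref{prop:GabidulinSeq} give $s_1^{\theta^r}(C) = \min\{k+r,n\} \geq k+2$ when $r \le k$, and $s_1^{\theta^r}(C) = 2k \geq k+3$ (using $k\ge 3$) when $k < r \le n-k$; and by the symmetric statement for $m-n+k \le r \le n-2$ one gets $s_1^{\theta^r}(C) \ge k+2$ as well. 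In all cases $s_1^{\theta^r}(C) > k+1$, contradicting the value forced by $C$ being a $\theta^r$-Gabidulin code. Here I would need to be a little careful that the excerpt's Proposition~\ref{prop:GabidulinSeq} is stated for $0 \le r < m$ and covers the relevant ranges; the range $\{2,\dots,n-2\}$ is contained in $\{2,\dots,k\}\cup\{k+1,\dots,n-k\}\cup\{n-k,\dots,n-2\}$, and the last piece is handled by the $\theta^{-1}$-reformulation in Lemma~\ref{lem:Gabtheta=theta-1}, so all of it is covered.

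For part 2: suppose $a \in A_{C,\theta}$, i.e.\ $C$ is a $\theta^a$-Gabidulin code, say $C = \G_{k,\theta^a}(g')$. Now for any $b \in \{a+3,\dots,a+n-3\}$ the automorphism $\theta^b = (\theta^a)^{b-a}$ with $b - a \in \{3,\dots,n-3\}$, so applying part~1 with the generator $\theta^a$ in place of $\theta$ shows $b-a \notin A_{C,\theta^a}$, which is exactly the statement that $b \notin A_{C,\theta}$ (the sets $A_{C,\theta}$ and $A_{C,\theta^a}$ index the same automorphisms, just with a shift of exponent base). Strictly: one should note that $A_{C,\theta^a} = a^{-1} \cdot A_{C,\theta}$ as subsets of $(\Z/m\Z)^*$, and part~1 says $A_{C,\theta^a}$ misses $\{2,\dots,n-2\}$, i.e.\ $A_{C,\theta}$ misses $a\cdot\{2,\dots,n-2\}$ — hmm, this is not literally $\{a+3,\dots,a+n-3\}$. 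So the cleaner route is: apply part~1 directly with base automorphism $\theta^a$ to conclude $A_{C,\theta^a}\cap\{2,\dots,n-2\}=\emptyset$, and then translate via the bijection $r \mapsto ar$ — or, more simply, observe that the argument of part~1 never used that the base automorphism was ``$\theta$'' rather than any other generator, so it literally says: for any generator $\tau$ with $C$ a $\tau$-Gabidulin code, $C$ is not a $\tau^j$-Gabidulin code for $j \in \{2,\dots,n-2\}$; applying this with $\tau = \theta^a$ and $j = b-a$ gives part~2. The \textbf{main obstacle} is bookkeeping the relation between the index sets $A_{C,\theta}$, $A_{C,\theta^a}$, and the additive shifts in $\Z/m\Z$ cleanly — nothing deep, but one must be precise that ``$C$ is a $\theta^b$-Gabidulin code'' with $b = a + j$ is the same as ``$C$ is a $(\theta^a)^{j}$-Gabidulin code'' only when $j$ is understood additively, i.e.\ $\theta^{a}\circ(\theta^{a})^{\,j'}$ logic, and to keep track that $\gcd(b,m)=1$ is automatic since $b \in A_{C,\theta} \subseteq (\Z/m\Z)^*$.

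For part 3: the set $\{a+1, a+2, a+n-2\}$; by part~2 (with base point $a$), once $a \in A_{C,\theta}$, the elements $a+3,\dots,a+n-3$ are excluded, but $a+1$, $a+2$, $a+n-2$ are not directly excluded by part~2. Instead, suppose two of them, say $c, c' \in \{a+1,a+2,a+n-2\}$ with $c < c'$, both lie in $A_{C,\theta}$. One checks that $c' - c \in \{1, n-4, n-3\}$ (the possible gaps among $\{1,2,n-2\}$ offsets), actually $c'-c \in \{1,\, n-4,\, n-3\}$; the cases $c'-c \in \{3,\dots,n-3\}$ are killed by part~2 applied with base point $c$ — so I must double-check the arithmetic: gaps are $2-1=1$, $(n-2)-1 = n-3$, $(n-2)-2 = n-4$. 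For $n \ge 7$ (implied by $3 \le k \le n-3$), both $n-3$ and $n-4$ lie in $\{3,\dots,n-3\}$, so those pairs are excluded by part~2, leaving only the pair $\{a+1,a+2\}$ with gap $1$. To exclude $\{a+1,a+2\}\subseteq A_{C,\theta}$ I would again use the $s_1$-invariant: if $C$ is an $\theta^{a+1}$-Gabidulin code then $s_1^{\theta^{a+2}}(C) = s_1^{(\theta^{a+1})^{b}}(C)$ where $\theta^{a+2} = (\theta^{a+1})\circ\theta$... this needs $\theta^{a+2} = (\theta^{a+1})^{s}$ for some $s$, which holds with $s \equiv (a+2)(a+1)^{-1}$, not an adjacent power — so the clean argument is instead: $C$ an $\theta^{a+1}$-Gabidulin code gives $s_1^{\theta^{j(a+1)}}(C)$ structure, and I should rather argue directly that the $\theta$-sum sequence of a $\theta$-Gabidulin code satisfies $s_1^{\theta^a}(C)$ and $s_1^{\theta^{a+1}}(C)$ cannot both equal $k+1$ using Proposition~\ref{prop:GabidulinSeq} part~1 directly: if $C=\G_{k,\theta^a}(g')$ then $s_1^{(\theta^a)^1}(C) = k+1$ but $s_1^{(\theta^a)^{1}\theta}(C)$ — no. The cleanest fix: reduce to the case $a=0$ by replacing $\theta$ with $\theta^a$ (legitimate since $A_{C,\theta}$ with base $\theta^a$ is what matters), so WLOG $a = 0$, $C = \G_{k,\theta}(g)$; then I must show not both $1 \in A_{C,\theta}$ (which is automatic, $C$ is $\theta$-Gabidulin) and $2 \in A_{C,\theta}$ — but $2 \in \{2,\dots,n-2\}$ is already excluded by part~1! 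So with the WLOG reduction $a=0$, the set $\{a+1,a+2,a+n-2\} = \{1,2,n-2\}$, and $2, n-2 \in \{2,\dots,n-2\}$ are both excluded by part~1, leaving only $\{1\}$ possibly in $A_{C,\theta}$, giving $|A_{C,\theta}\cap\{1,2,n-2\}| \le 1$. Transporting back by the shift proves part~3. Finally part~4: for arbitrary $b$, the window $\{b,\dots,b+n-2\}$ has length $n-1$; if it contains no element of $A_{C,\theta}$ we are done; otherwise let $a$ be the smallest such element, so $A_{C,\theta} \cap \{b,\dots,b+n-2\} \subseteq \{a\} \cup (\{a+1,\dots,b+n-2\})$, and $\{a+1,\dots,a+n-3\}\setminus\{a+1,a+2\}$ is excluded by part~2 while the tail $\{a+n-2,\dots,b+n-2\}$ has length $\le a - b + 1$... this needs the window to be short enough that beyond $a+n-3$ only $a+n-2$ (at most) survives; combining the at-most-one from part~3 on $\{a+1,a+2,a+n-2\}$ with $\{a\}$ gives the bound $2$. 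The only genuinely delicate point throughout is handling the ``wrap-around'' in $\Z/m\Z$ versus treating indices as integers in $\{0,\dots,m-1\}$; since $n-1 \le m-1$ here the windows do not wrap, but one should state this. I expect part~3's arithmetic case-check (and the WLOG reduction that makes it collapse to part~1) to be where care is most needed.
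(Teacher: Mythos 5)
Your part 1 is correct and is exactly the paper's argument: a $\theta^r$-Gabidulin code must have $s_1^{\theta^r}=k+1$, while Proposition~\ref{prop:GabidulinSeq} forces $s_1^{\theta^r}(C)\ge k+2$ for $r\in\{2,\dots,n-2\}$; part 4 as a counting consequence of parts 2 and 3 is also fine. The genuine gap is in part 2, and it propagates into part 3. Your argument rests on the identity $\theta^b=(\theta^a)^{b-a}$, which is false: $(\theta^a)^{b-a}=\theta^{a(b-a)}$, whereas $\theta^b=\theta^a\circ\theta^{b-a}$. Changing the generator from $\theta$ to $\theta^a$ rescales the exponent set multiplicatively ($A_{C,\theta^a}=a^{-1}A_{C,\theta}$ in $(\Z/m\Z)^*$), so part 1 applied with base $\theta^a$ excludes the exponents $\{2a,3a,\dots,(n-2)a\}$, not the additive translate $\{a+3,\dots,a+n-3\}$. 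You notice this yourself mid-proof, but the ``cleaner route'' you then propose (``applying this with $\tau=\theta^a$ and $j=b-a$'') commits exactly the same error. The paper's actual argument for part 2 is different in kind: from $s_1^{\theta^a}(C)=s_1^{\theta^{a+i}}(C)=k+1$ it deduces $\dim(C+\theta^a(C)+\theta^{a+i}(C))\le k+2$ (two $(k+1)$-dimensional spaces both containing the $k$-dimensional $C$), hence $s_1^{\theta^{i}}(\theta^a(C))=\dim(\theta^a(C)+\theta^{i}(\theta^a(C)))\le k+2$; since $\theta^a(C)$ is again a $\theta$-Gabidulin code and $i\in\{3,\dots,n-3\}$, Proposition~\ref{prop:GabidulinSeq} gives $s_1^{\theta^i}(\theta^a(C))\ge k+3$, a contradiction. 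The exponent that matters is the additive difference $i=(a+i)-a$, because $\theta^{a+i}=\theta^i\circ\theta^a$ as a composition --- this is the mechanism your proposal is missing.

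The same defect sinks your part 3. The ``WLOG $a=0$ by replacing $\theta$ with $\theta^a$'' reduction is not legitimate for the reason above: the set $\{a+1,a+2,a+n-2\}$ does not become $\{1,2,n-2\}$ under any change of generator. The pairs $\{a+1,a+n-2\}$ and $\{a+2,a+n-2\}$ can be handled by the (corrected) triangle argument of part 2, but the pair $\{a+1,a+2\}$ genuinely cannot: their difference is $1$, and $s_1^{\theta}$ of a $\theta$-Gabidulin code is only $k+1$, so the bound $\dim(C+\theta^{a+1}(C)+\theta^{a+2}(C))\le k+2$ yields no contradiction from dimensions alone. The paper needs a separate, more delicate argument here: assuming $a,a+1,a+2\in A_{C,\theta}$, it studies the space $\mathcal U=\langle g,\dots,\theta^{k-1}(g),\theta^{a+2}(g),\dots,\theta^{a+k-1}(g)\rangle$ inside $\cS_1^{\theta^a}(C)\cap\cS_1^{\theta^{a+1}}(C)\cap\cS_1^{\theta^{a+2}}(C)$, and splits into the cases $\dim\mathcal U=k+1$ (contradiction via $s_2^{\theta}(\theta^a(C))=k+2$) and $\dim\mathcal U=k$ (forcing $\theta^{a+2}(g)=\lambda_1\theta(g)$ and hence $s_1^{\theta^{a+1}}(C)=k$, a contradiction). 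Nothing in your proposal substitutes for this step, so parts 2 and 3 as written do not go through.
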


\begin{proof} Let $g \in \Fm^n$ with $\rk_q(g)=n$ be such that $C=\G_{k,\theta}(g)$. 
 \begin{enumerate}
\item Let $r \in \{2,\ldots,n-2\}$. By Proposition \ref{prop:GabidulinSeq}, $s_1^{\theta^r}(C)\geq k+2$, so $C$ cannot be a $\theta^r$-Gabidulin code.
\item Suppose $a\in A_{C,\theta}$. By definition and by Propostion \ref{prop:GabidulinSeq},  $s_1^{\theta^a}(C)=k+1$. Suppose that $s_1^{\theta^{a+i}}(C) =k+1$ for some $i \in \{3, 4, \ldots, n-3\}$. Then $\dim(C+\theta^{a}(C)+\theta^{a+i}(C))\leq k+2$. In particular, $s_1^{\theta^i}(\theta^a(C))\leq k+2$, which is not  possible for $i \in \{3, 4. \ldots, n-3\}$, by Proposition \ref{prop:GabidulinSeq}, since $\theta^a(C)$ is a $\theta$-Gabidulin code.
\item Suppose $a+1, a+n-2 \in A_{C,\theta}$. Then, $\dim(C+\theta^{a+1}(C)+\theta^{a+n-2}(C))\leq k+2$. This implies that $s_1^{\theta^{n-3}}(\theta^{a+1}(C))\leq k+2$, which is not possible  by part 4 of Proposition \ref{prop:GabidulinSeq}. The same argument shows that $a+2$ and $a+n-2$ cannot belong simultaneously to $A_{C,\theta}$. It remains to show that $a+1$ and $a+2$ cannot belong to $A_{C,\theta}$ at the same time. Suppose on the contrary that this holds. We have $s_1^{\theta^a}(C)=s_1^{\theta^{a+1}}(C)=s_1^{\theta^{a+2}}(C)$. Consider the space
$\mathcal U:=\langle g,\theta(g), \ldots, \theta^{k-1}(g),\theta^{a+2}(g),\ldots,\theta^{a+k-1}(g) \rangle$. Thus, $C\subseteq \mathcal U \subseteq \cS^{\theta^{a}}_1(C)\cap \cS^{\theta^{a+1}}_1(C) \cap \cS^{\theta^{a+2}}_1(C)$. Then,  $\dim(\mathcal U)$ can only be equal to $k$ or $k+1$.

 If $\dim(\mathcal U)=k+1$, then  $\cS^{\theta^{a}}_1(C)= \cS^{\theta^{a+1}}_1(C) = \cS^{\theta^{a+2}}_1(C)$, and $\cS^{\theta^{a}}_1(C)+ \cS^{\theta^{a+1}}_1(C)+\cS^{\theta^{a+2}}_1(C)=C+\cS_2^{\theta}(\theta^a(C))$ has dimension $k+1$, which is impossible, since $s_2^{\theta}(\theta^a(C))=k+2$. 

If $\dim(\mathcal U)=k$, then $\theta^{a+2}(g),\ldots,\theta^{a+k-1}(g) \in \langle g,\ldots,\theta^{k-1}(g)\rangle$. Hence, we can write $\theta^{a+2}(g)=\sum_{i=0}^{k-1}\lambda_i \theta^i(g)$, for some $\lambda_i \in \Fm$. Imposing  also $\theta^{a+2+i}(g) \in \langle g,\ldots,\theta^{k-1}(g)\rangle$, for every $i \in \{0,\ldots, k-3\}$, we get that necessarily $\theta^{a+2}(g) \in \langle g, \theta(g), \theta^2(g) \rangle$. However, $C$ is a $\theta^{a+2}$-Gabidulin code, so $s_1^{\theta^{a+2}}(C)=k+1$. This implies $\lambda_2=0$ and $\lambda_1 \neq 0$.
At the same time, we can write $\theta^{a}(g)=\mu_0\theta^{m-2}(g)+\mu_1\theta^{m-1}(g)$, where $\mu_i=\theta^{-2}(\lambda_i)$ for $i \in\{0,1\}$. By assumption, $C$ is $\theta^a$-Gabidulin, therefore $s_1^{\theta^a}(C)=k+1$. This implies that necessarily $\mu_0=0=\lambda_0$ and $\mu_1\neq 0$. Therefore, $\theta^{a+2}(g)=\lambda_1\theta(g)$. However, this contradicts the fact that $C$ is a $\theta^{a+1}$-Gabidulin code, since in this case we  have $s_1^{\theta^{a+1}}(C)=k$.

\item This follows from the previous parts. \qedhere
\end{enumerate}
\end{proof}

\begin{theorem}\label{thm:upperboundnumGab}
Let $k,n,m$ be integers with $2 <k< n-2 $  and $n\leq m$. Then %
$$|\Gab_q(k,n,m)| \geq \frac{\phi(m)}{\lfloor\frac{2m}{n-1}\rfloor}\prod_{i=1}^{n-1}(q^m-q^i).$$
\end{theorem}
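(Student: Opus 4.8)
The plan is to turn the statement into a double-counting argument. By Corollary~\ref{cor:numGab}(1), for \emph{each} of the $\phi(m)$ generators $\theta'$ of $\Gal(\Fm/\Fq)$ there are exactly $\prod_{i=1}^{n-1}(q^m-q^i)$ distinct $[n,k]_{q^m}$ $\theta'$-Gabidulin codes, so
$$\sum_{\theta' \in F} |\Gab_q(k,n,m,\theta')| = \phi(m)\prod_{i=1}^{n-1}(q^m-q^i).$$
On the other hand, fixing one generator $\theta$ and writing every generator as $\theta^r$ with $r\in(\Z/m\Z)^*$, the left-hand side counts the pairs $(C,r)$ with $C$ a $\theta^r$-Gabidulin code; grouping by $C$, this equals $\sum_{C\in\Gab_q(k,n,m)}|A_{C,\theta}|$ (no code outside $\Gab_q(k,n,m)$ is counted, and each pair is counted once). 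Hence it suffices to prove a uniform upper bound $|A_{C,\theta}|\le\bigl\lfloor\tfrac{2m}{n-1}\bigr\rfloor$ valid for every $C\in\Gab_q(k,n,m)$, because then $\phi(m)\prod_{i=1}^{n-1}(q^m-q^i)\le\bigl\lfloor\tfrac{2m}{n-1}\bigr\rfloor\cdot|\Gab_q(k,n,m)|$, which rearranges to the claim.

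For the bound on $|A_{C,\theta}|$, the key input is Lemma~\ref{lem:numautomorphismGab}(4) (applicable since $2<k<n-2$ means $3\le k\le n-3$): the set $A_{C,\theta}\subseteq(\Z/m\Z)^*$ meets every block $\{b,b+1,\dots,b+n-2\}$ of $n-1$ consecutive residues in at most $2$ elements. First I would make this global by counting over all $m$ cyclic windows $W_b:=\{b,b+1,\dots,b+n-2\}$, $b\in\Z/m\Z$. Each fixed residue $r$ lies in exactly $n-1$ of these windows (namely $W_{r-n+2},\dots,W_r$), so
$$(n-1)\,|A_{C,\theta}| \;=\; \sum_{b\in\Z/m\Z}|A_{C,\theta}\cap W_b| \;\le\; 2m,$$
whence $|A_{C,\theta}|\le\tfrac{2m}{n-1}$; since $|A_{C,\theta}|$ is an integer this gives $|A_{C,\theta}|\le\bigl\lfloor\tfrac{2m}{n-1}\bigr\rfloor$. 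Combining this with the two expressions for $\sum_{\theta'\in F}|\Gab_q(k,n,m,\theta')|$ above yields exactly $|\Gab_q(k,n,m)|\ge\frac{\phi(m)}{\lfloor 2m/(n-1)\rfloor}\prod_{i=1}^{n-1}(q^m-q^i)$.

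The genuine mathematical content is entirely in Lemma~\ref{lem:numautomorphismGab}, which I am assuming; conditional on it, everything above is routine. The only points needing a little care are: (i) checking that the exchange of summation order correctly identifies $\sum_{\theta'\in F}|\Gab_q(k,n,m,\theta')|$ with $\sum_{C\in\Gab_q(k,n,m)}|A_{C,\theta}|$, with each pair counted once and no extraneous codes; and (ii) doing the window count cyclically in $\Z/m\Z$ so that each residue lies in exactly $n-1$ windows and there are exactly $m$ windows in total. It is also worth remarking that $\lfloor 2m/(n-1)\rfloor\ge 2$ (because $n-1<m$ forces $2m/(n-1)>2$), so the lower bound is consistent with — and in the regime $n-1>2m/3$ coincides with — the upper bound of Corollary~\ref{cor:numGab}(2).
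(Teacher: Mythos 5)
Your proposal is correct and follows essentially the same route as the paper: the same double-counting of $A_{C,\theta}$ over the $m$ cyclic windows of length $n-1$ (using part 4 of Lemma~\ref{lem:numautomorphismGab}) to get $|A_{C,\theta}|\le\lfloor 2m/(n-1)\rfloor$, combined with Corollary~\ref{cor:numGab} by observing that each code is counted at most $|A_{C,\theta}|$ times across the generators. Your write-up merely makes the second double count (over pairs $(C,r)$) more explicit than the paper does.
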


\begin{proof}
First we provide an upper bound on the cardinality of the set $A_{C,\theta}$, where $C$ is a $\theta$-Gabidulin code, using a double counting argument. Consider the number
$$\sum_{b \in \Z/m\Z} |A_{C,\theta} \cap \{b,b+1,\ldots, b+n-2\}|.$$
On one hand, by Lemma \ref{lem:numautomorphismGab}, we have that it is upper-bounded by $2m$. On the other hand, every $r \in A_{C,\theta}$ is counted exactly $n-1$ times. Therefore we get
$$(n-1)|A_{C,\theta}|=\sum_{b \in \Z/m\Z} |A_{C,\theta} \cap \{b,b+1,\ldots, b+n-2\}|\leq 2m,$$
from which we deduce that $|A_{C,\theta}|\leq \lfloor \frac{2m}{n-1}\rfloor$. 
At this point, combining this upper bound with Corollary \ref{cor:numGab}, we get the desired lower bound, since every $\theta$-Gabidulin code $C$ is counted at most $|A_{C,\theta}|$ times.
\end{proof}

Moreover, for $1\leq k <n \leq m$ we define  $N_q(k,n,m)$ as the number of inequivalent $\Fmk$ Gabidulin codes, i.e.
$$N_q(k,n,m):=\left| \Gab_q(k,n,m)/\sim \right|.$$

\begin{theorem}\label{thm:numberinequivalentGab}
Let $\Fq$ be a finite field of characteristic $p$ and $2< k < n-2$ be integers. 
\begin{enumerate}
\item \label{part:1numGab} If $m=n$, then
$$N_q(k,m,m)= \frac{\phi(m)}{2}.$$
\item If $m>n$, then 
$$\frac{(n-1)\phi(m)}{2m^2 [\Fq:\F_p]}\prod_{i=2}^n\frac{q^{m-i+1}-1}{q^i-1} \leq N_q(k,m,n)  \leq \frac{\phi(m)}{2}\prod_{i=2}^n\frac{q^{m-i+1}-1}{q^i-1}.$$
\end{enumerate}
\end{theorem}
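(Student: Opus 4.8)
The plan is to count equivalence classes directly inside $\Gab_q(k,n,m)$. The key structural observation is that the whole $\sim$-class of a $\theta$-Gabidulin code again consists of $\theta$-Gabidulin codes for the \emph{same} generator $\theta$: writing an isometry as $(\lambda,A,\tau)$ as in Theorem~\ref{isometries} and using that $\tau\in\Aut(\Fm)$ commutes with the $q$-power Frobenius together with the identity $\G_{k,\theta}(h)A=\G_{k,\theta}(hA)$ for $A\in\GL_n(\Fq)$, one gets $(\lambda,A,\tau)\cdot\G_{k,\theta}(g)=\G_{k,\theta}(\tau(g)A)$ with $\rk_q(\tau(g)A)=n$. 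Hence $\Gab_q(k,n,m)$ is a union of full $\sim$-classes, $|\Gab_q(k,n,m)|=\sum_{\mathcal C}|\mathcal C|$ over the equivalence classes $\mathcal C$, and therefore
\[
\frac{|\Gab_q(k,n,m)|}{\max_{\mathcal C}|\mathcal C|}\ \le\ N_q(k,n,m)\ \le\ \frac{|\Gab_q(k,n,m)|}{\min_{\mathcal C}|\mathcal C|}.
\]
Into this I will substitute $\tfrac{(n-1)\phi(m)}{2m}\prod_{i=1}^{n-1}(q^m-q^i)\le|\Gab_q(k,n,m)|\le\tfrac{\phi(m)}{2}\prod_{i=1}^{n-1}(q^m-q^i)$, the upper bound being Corollary~\ref{cor:numGab} and the lower bound being Theorem~\ref{thm:upperboundnumGab} together with $\lfloor\tfrac{2m}{n-1}\rfloor\le\tfrac{2m}{n-1}$.

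I would do Part~1 first, where the argument is elementary. When $m=n$, every $g$ with $\rk_q(g)=m$ is an $\Fq$-basis of $\Fm$, so any two $\theta$-Gabidulin codes for a fixed generator $\theta$ differ by a change-of-basis matrix in $\GL_m(\Fq)$; thus $\Gab_q(k,m,m,\theta)$ is a single $\sim$-class. By Lemma~\ref{lem:numautomorphismGab} (or Proposition~\ref{prop:GabidulinSeq}, using that $s_1^{\theta^r}=k+1$ only for $r\equiv\pm1\pmod m$), a $\theta$-Gabidulin code is a $\theta^r$-Gabidulin code only for $r\in\{1,m-1\}$; with Lemma~\ref{lem:Gabtheta=theta-1} this gives $\Gab_q(k,m,m,\theta)=\Gab_q(k,m,m,\theta')$ exactly when $\theta'\in\{\theta,\theta^{-1}\}$, and $\theta\neq\theta^{-1}$ since $m\geq 6$. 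So the $\phi(m)$ families form $\phi(m)/2$ pairwise disjoint single classes, whence $N_q(k,m,m)=\phi(m)/2$.

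For Part~2 the remaining task is to bound $|\mathcal C|$ from both sides. By Theorem~\ref{thm:numGab}, $\G_{k,\theta}(g)=\G_{k,\theta}(h)\iff g\in\Fm^*h$, so the codes in the class of $\G_{k,\theta}(g)$ biject with the $\Fm^*$-scaling classes of the vectors $\tau(g)A$ ($\tau\in\Aut(\Fm)$, $A\in\GL_n(\Fq)$); since the $\GL_n(\Fq)$-orbit of $g$ runs through all $\Fq$-bases of $V_g:=\langle g_1,\dots,g_n\rangle_{\Fq}$, this number is $a(V_g)\,|\GL_n(\Fq)|/(q^{d(V_g)}-1)$, where $q^{d(V_g)}-1=|\{\mu\in\Fm^*:\mu V_g=V_g\}|$ (so $d(V_g)\mid\gcd(m,n)$) and $a(V_g)$ is the $\Aut(\Fm)$-orbit length of the scaling class of $V_g$. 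With $1\le a(V_g)\le|\Aut(\Fm)|=m[\Fq:\F_p]$ and $q^{d(V_g)}-1\ge q-1$ this yields $|\GL_n(\Fq)|\le|\mathcal C|\le|\GL_n(\Fq)|\,m[\Fq:\F_p]/(q-1)$, the lower bound being sharp when $\gcd(m,n)=1$. Combining with the displayed sandwich and the elementary identity $\prod_{i=1}^{n-1}(q^m-q^i)/|\GL_n(\Fq)|=\tfrac{1}{q-1}\prod_{i=2}^{n}\tfrac{q^{m-i+1}-1}{q^i-1}$ (both sides equal $\tfrac{(q^{m-1}-1)\cdots(q^{m-n+1}-1)}{(q-1)(q^2-1)\cdots(q^n-1)}$) turns the two estimates into the asserted inequalities for $N_q(k,n,m)$.

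The main obstacle is precisely this two-sided control of $|\mathcal C|$: one must rule out that a $\theta$-Gabidulin code has an exceptionally large stabilizer inside the isometry group (for the lower bound on $N_q$) or an exceptionally small orbit (for the upper bound), i.e.\ that the evaluation space $V_g$ never has too large an $\Aut(\Fm)$-orbit and, for the upper bound, never degenerates into a subfield-linear space when $\gcd(m,n)>1$. Theorem~\ref{thm:numGab} is the rigidity input that converts the geometry of $n$-dimensional $\Fq$-subspaces of $\Fm$ under scaling and Galois into a clean class count; the subfield-linear evaluation spaces are the one point where some extra care is needed.
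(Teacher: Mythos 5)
Your Part~1 and the lower bound in Part~2 are correct and, modulo packaging, follow the paper's own route: for $m=n$ the paper likewise shows that each $\Gab_q(k,m,m,\theta)$ collapses to a single class via the $\GL_m(\Fq)$-action, identifies $\theta$ with $\theta^{-1}$ by Lemma~\ref{lem:Gabtheta=theta-1}, and separates the remaining generators with the invariant $s_1^{\theta^r}$; for $m>n$ it also divides the count of codes by the $\GL_n(\Fq)$-orbit size and allows up to $m[\Fq:\F_p]$ orbits per equivalence class. Your explicit class-size formula $|\mathcal C|=a(V_g)\,|\GL_n(\Fq)|/(q^{d(V_g)}-1)$ is a genuine refinement, and the chain $a(V_g)\le m[\Fq:\F_p]$, $q^{d(V_g)}-1\ge q-1$ does deliver the asserted lower bound on $N_q(k,m,n)$.

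The gap is in the upper bound of Part~2, at the step $|\GL_n(\Fq)|\le|\mathcal C|$. From your own formula with $a(V_g)\ge 1$ you only get $|\mathcal C|\ge |\GL_n(\Fq)|/(q^{d(V_g)}-1)$; the estimate $q^{d(V_g)}-1\ge q-1$ points the wrong way here (to lower-bound $|\mathcal C|$ you need an \emph{upper} bound on $q^{d(V_g)}-1$), and even in the generic case $d(V_g)=1$ the conclusion is $|\mathcal C|\ge|\GL_n(\Fq)|/(q-1)$, not $|\GL_n(\Fq)|$. The stated inequality is in fact false for, say, $m=7$, $n=6$, $q=101$: then $a(V_g)\le m[\Fq:\F_p]=7<q-1$, so $|\mathcal C|<|\GL_n(\Fq)|$. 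As written, your sandwich would yield $N_q\le\frac{\phi(m)}{2(q-1)}\prod_{i=2}^n\frac{q^{m-i+1}-1}{q^i-1}$, a factor $q-1$ stronger than the theorem, which your estimates do not support. Replacing it by the correct $|\mathcal C|\ge|\GL_n(\Fq)|/(q-1)$ recovers exactly the asserted upper bound, but only under the unproved assumption that $d(V_g)=1$ for every full-rank $g$; when $\gcd(m,n)>1$ there exist evaluation spaces $V_g$ that are $\F_{q^d}$-linear for some $d>1$ (e.g.\ $V_g=\F_{q^n}$ when $n\mid m$), whose classes can be as small as $|\GL_n(\Fq)|/(q^{\gcd(m,n)}-1)$, and these are precisely the cases you flag at the end but do not dispose of. For comparison, the paper's proof instead asserts that the induced action of $\GL_n(\Fq)/\Fq^*$ is free, i.e.\ that $gA=\lambda g$ forces $A=\lambda I_n$; this runs into the same subfield-linear obstruction, so an argument excluding (or separately handling) the $g$ with $d(V_g)>1$ is needed to make either version of the upper bound airtight.
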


\begin{proof}
\begin{enumerate}
\item Let $\theta_1, \theta_2$ be two generators of $\Gal(\Fm/\Fq)$, and $g,h\in \Fm^m$ be two vectors such that $\rk_q(g)=\rk_q(h)=m$. Consider two Gabidulin codes $C=\G_{k, \theta_1}(g)$ and $C'=\G_{k,\theta_2}(h)$. First we show that if $\theta_1=\theta_2$ then $C\sim C'$. Since $n=m$ then $\rsu_q(g)=\rsu_q(h)$. Therefore, there exists $A \in \GL_m(q)$ such that $g A = h$. This implies that $C\cdot A=C'$ and therefore, $C\sim C'$. 
Now, suppose that $\theta_1=\theta_2^{-1}$. Then, by the previous argument, we can assume $g = h$. Since by Lemma \ref{lem:Gabtheta=theta-1}, $\G_{k,\theta_2^{-1}}(h) = \G_{k, \theta_2}(\theta_2^{1-k}(h))$, we obtain again $C\sim C'$. 
Finally, if  $\theta_1 \notin \{\theta_2, \theta_2^{-1}\}$, then $\theta_1=\theta_2^r$, with $r \notin \{1,-1\}$, and by  Proposition \ref{prop:GabidulinSeq}, we have $s^{\theta_1}_1(C')=k+1$ and $s^{\theta_2}_1(C)\geq k+2$. By Lemma \ref{lem:invariant} we deduce that they cannot be equivalent. Since there are exactly $\phi(m)$ generators for $\Gal(\Fm/\Fq)$, we conclude.

\item By part 2 of Corollary \ref{cor:numGab}, we have that there are at most  $\frac{\phi(m)}{2} \prod_{i=1}^{n-1}(q^m-q^i)$ many Gabidulin codes. Moreover, consider the action
\begin{equation}\label{eq:action}\begin{array}{rcl}
\GL_n(q) \times \Gab_q(k,n,m) &\longrightarrow &\Gab_q(k,n,m) \\
(A,\G_{k,\theta}(g) ) & \longmapsto & \G_{k,\theta}(g) \cdot A :=\G_{k,\theta}(gA).
\end{array}
\end{equation}
We have that, by Theorem \ref{thm:numGab}, $\G_{k,\theta}(g) \cdot A=\G_{k,\theta}(g) $, if and only if $gA=\lambda g$, for some $\lambda\in \Fm^*$. Since $g=(g_1,\ldots, ,g_n)$ is such that the $g_i$'s are $\Fq$-linearly independent, it easily follows that
$gA=\lambda g$ if and only if $A=\lambda I_n$. Therefore, the action defined in \eqref{eq:action} induces a free action  of $\GL_n(q)/\Fq^*$, with the same orbits. Since this action is free, and every orbit is contained in an equivalence class, we have that
$$N_q(k,m,n) \leq |\Gab_q(k,n,m)| \frac{|\Fq^*|}{|\GL_n(q)|}=\frac{\phi(m)}{2}\prod_{i=1}^{n-1}\frac{q^{m-i}-1}{q^{i+1}-1}.$$

We now prove the lower bound. By Theorem \ref{thm:upperboundnumGab}, we have at least $\frac{\phi(m)}{\lfloor\frac{2m}{n-1}\rfloor}\prod_{i=1}^{n-1}(q^m-q^i)$ many distinct Gabidulin codes. Considering again the action
in \eqref{eq:action}, we get that the number of orbits under that action, is at least 
$$\frac{\phi(m)}{\lfloor\frac{2m}{n-1}\rfloor}\prod_{i=1}^{n-1}\frac{q^{m-i}-1}{q^{i+1}-1}.$$
If we now consider the equivalence classes of Gabidulin codes, it remains to study the action of the subgroup $\Aut(\Fm)$, which has cardinality exactly $m[\Fq:\F_p]$. Therefore, an equivalence class  can be union of at most $m[\Fq:\F_p]$ orbits of the action \eqref{eq:action}, which leads to the desired result. \qedhere
\end{enumerate}
\end{proof}

The first part of  Theorem \ref{thm:numberinequivalentGab} provides the exact number of inequivalent Gabidulin codes in the case $n=m$. Moreover, for the  case $n<m$, the second part of Theorem  \ref{thm:numberinequivalentGab}  gives both an upper and a lower bound on this number. It is important to observe that, whenever $(n-1)\phi(m)>2m[\Fq:\F_p]$, the lower bound improves the one given in Theorem \ref{thm:schmidtandzhou}, due to Schmidt and Zhou \cite{sc17}.

\subsection{Twisted Gabidulin Codes}
\label{ssec:num_ineq_twisted}

Let $F$ be the set of generators of  $\Gal(\Fm/\Fq) $. We denote by $\TGab_q(k,n,m,\theta)$  the set of all $\Fmk$ $\theta$-twisted Gabidulin codes, and by $\TGab_q(k,n,m)$ the set of all  $\Fmk$ twisted Gabidulin codes  i.e.
\begin{align*}
\TGab_q(k,n,m,\theta):= \big\{ \mathcal U \in \Gr(k,\Fm^n) \mid & \; \mathcal U=\TGabShee{k,\theta}{\eta}{g} \mbox{ for some $\eta \in \Fm^*$ and $g \in \Fm^n$} \\
& \mbox{ with  } \Norm(\eta)\neq (-1)^{km} \mbox{ and } \rk_q(g)=n \big\}, 
\end{align*}
\begin{align*}
\qquad \quad \!\! \TGab_q(k,n,m):= & \left\{ \mathcal U \in \Gr(k,\Fm^n) \mid \mathcal U \mbox{ is a $\theta$-twisted Gabidulin code for some }  \theta \in F  \right\} \\
= & \bigcup_{\theta \in F} \TGab_q(k,n,m, \theta).
\end{align*}

As for Gabidulin codes,  the dual of a $\theta$-twisted Gabidulin code with $b=0$ is another $\theta$-twisted Gabidulin code. For the case $n=m$ this duality result based on the Delsarte bilinear form was shown in \cite{sheekey2016new,lunardon2018generalized}. One can show a similar duality result for the general case $n \leq m$, whose proof can be also found in \cite[Chapter 7]{neri2019PhD}.

\begin{theorem}\cite[Theorem 7.21]{neri2019PhD}\label{thm:dualTwist}
Let $k,n,m$ be positive integers such that $2 \leq k \leq n-2$ and $n\leq m$. Let $g \in \Fm^n$ with $\rk_q(g)=n$, $\eta \in \Fm^*$ with $\Norm(\eta)\neq (-1)^{km}$ and consider the $\theta$-twisted Gabidulin code $C:=\TGabShee{k,\theta}{\eta}{g}$. Then, for any non-zero $g' \in \G_{n-1,\theta}(\theta^{-(n-k-1)}(g))^\perp$, we have 
$C^\perp=\TGabShee{n-k,\theta}{\eta'}{g'},$
where $$\eta'=(-1)^n\eta \frac{\theta^{k-n+1}(D)}{\theta^{k-n}(D)}\frac{\theta^{k-n}(\langle \theta^{n-k}(g'); g \rangle)}{\langle \theta^{n-k}(g'); g \rangle},$$
and  $D:=\det(M_{n,\theta}(g))$. 
Moreover $\Norm(\eta')=(-1)^{nm}\Norm(\eta)$.
\end{theorem}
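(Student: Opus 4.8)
The plan is to follow the same strategy that works for the dual of a Gabidulin code (Proposition~\ref{prop:dualGab}), pushing the explicit-parity-check computation through the one extra "twist" term. First I would set up the candidate parity check: since $\TGabShee{k,\theta}{\eta}{g}$ is spanned by the rows of a Moore-type matrix $M$ with first row $g+\eta\theta^k(g)$ and subsequent rows $\theta(g),\dots,\theta^{k-1}(g)$, I want to produce an $(n-k)\times n$ matrix $H$ of rank $n-k$ with $MH^\top = 0$. The natural guess, dictated by Proposition~\ref{prop:dualGab}, is that $H$ is itself a twisted Moore matrix built on a vector $g'$ spanning $\G_{n-1,\theta}(\theta^{-(n-k-1)}(g))^\perp$, with a suitable twist coefficient $\eta'$ in the last row. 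So the first concrete step is: fix such a $g'$, note $\rk_q(g')=n$ (this is exactly the non-degeneracy already recorded in Proposition~\ref{prop:dualGab}), and write down the twisted code $\TGabShee{n-k,\theta}{\eta'}{g'}$ for an as-yet-undetermined $\eta'$.

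Next I would compute the $k\times(n-k)$ matrix of inner products $\langle \theta^i(g+\eta\theta^k(g)\text{ or }g)\, ;\, \theta^j(g')\,\rangle$ for the relevant ranges of $i,j$. The point of choosing $g'$ orthogonal to $\G_{n-1,\theta}(\theta^{-(n-k-1)}(g))$ is that almost all of these vanish automatically: $\langle \theta^a(g); \theta^b(g')\rangle = \theta^b(\langle \theta^{a-b}(g); g'\rangle)$, and $\langle \theta^c(g); g'\rangle = 0$ for $c$ in a window of length $n-1$ around the appropriate shift. The only surviving entries come from the "corners" — the first row of $M$ (because of its $\eta\theta^k(g)$ tail) paired with the last row of $H$ (because of its $\eta'$-twisted tail), and possibly the extreme untwisted corners. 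So the orthogonality $MH^\top=0$ reduces to a single scalar equation relating $\eta$, $\eta'$, the Moore determinant $D=\det M_{n,\theta}(g)$ (or rather the ratio $\theta^{k-n+1}(D)/\theta^{k-n}(D)$, which arises from Cramer-type expressions for the boundary inner products), and the pairing $\langle \theta^{n-k}(g'); g\rangle$. Solving that one equation for $\eta'$ yields exactly the stated formula
$$\eta'=(-1)^n\eta \,\frac{\theta^{k-n+1}(D)}{\theta^{k-n}(D)}\,\frac{\theta^{k-n}(\langle \theta^{n-k}(g'); g \rangle)}{\langle \theta^{n-k}(g'); g \rangle}.$$
Once $MH^\top=0$ is verified, the dimension count finishes it: $\dim C^\perp = n-k$ and the twisted code on $g'$ has dimension $n-k$ (by the remark following Definition~\ref{def:TGabcode}, since $\Norm(\eta')\neq(-1)^{(n-k)m}$, which I must also check), so the inclusion forces equality.

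Two more items remain. One is the norm identity $\Norm(\eta')=(-1)^{nm}\Norm(\eta)$: this follows by applying $\Norm$ to the displayed formula for $\eta'$ and using $\Norm\circ\theta^j=\Norm$ together with $\Norm(\theta^{j+1}(x)/\theta^j(x))=1$, so all the $D$- and inner-product-factors have norm $1$ and only $(-1)^{n}$ survives, raised to the $m$-th power by multiplicativity over the $m$ conjugates; this simultaneously confirms $\Norm(\eta')\neq(-1)^{(n-k)m}$ from the hypothesis $\Norm(\eta)\neq(-1)^{km}$. The other is independence of the choice of $g'$: any two valid choices differ by an $\Fm^*$-scalar, which rescales $\langle\theta^{n-k}(g');g\rangle$ and hence multiplies $\eta'$ by $\theta^{k-n}(\mu)/\mu$ for $\mu\in\Fm^*$ — but such a rescaling of $g'$ also rescales the whole generator matrix of $\TGabShee{n-k,\theta}{\eta'}{g'}$ consistently, so the code is unchanged; I would phrase this as "the statement holds for any such $g'$" exactly as in Proposition~\ref{prop:dualGab}. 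The main obstacle I anticipate is the bookkeeping in the corner-inner-product computation: getting the shifts, the index $\theta^{-(n-k-1)}(g)$, and the exponents on $\theta$ applied to $D$ exactly right, since a single off-by-one in the Moore-matrix index propagates into the $\theta$-powers in the final formula. Everything else is orthogonality + a dimension count + a norm computation, all routine given Propositions~\ref{cor:rankMoore}, \ref{prop:dualGab} and \ref{prop:TGabisMRD}.
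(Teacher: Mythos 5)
First, a point of reference: the paper itself contains no proof of this theorem --- it is quoted from \cite[Theorem~7.21]{neri2019PhD} and the reader is referred to Chapter~7 of that thesis --- so there is no in-paper argument to compare yours against, and your proposal has to stand on its own. On its own terms, your strategy is the right one and the skeleton is sound. The defining property of $g'$ gives $\langle \theta^c(g); g'\rangle =0$ exactly for $c\in\{k-n+1,\dots,k-1\}$, a window of length $n-1$; a direct check (using $2\leq k\leq n-2$) confirms your claim that every pairing between the generators of $C$ and those of $\TGabShee{n-k,\theta}{\eta'}{g'}$ falls in this window except the single twisted-row-versus-twisted-row entry, which reduces to $\eta'\langle\theta^{n-k}(g');g\rangle+\eta\langle\theta^{k}(g);g'\rangle=0$. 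Since $\rk_q(g')=n$, both boundary pairings are nonzero and the candidate code has dimension $n-k$, so orthogonality plus the dimension count gives $C^\perp=\TGabShee{n-k,\theta}{\eta'}{g'}$ with $\eta'=-\eta\,\langle\theta^k(g);g'\rangle/\langle\theta^{n-k}(g');g\rangle$. Your deduction of $\Norm(\eta')\neq(-1)^{(n-k)m}$ from the norm identity is also correct.

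The one substantive gap is precisely the step you defer to ``Cramer-type expressions'': the identity
\begin{equation*}
\langle\theta^{k}(g);g'\rangle \;=\;(-1)^{n-1}\,\frac{\theta^{k-n+1}(D)}{\theta^{k-n}(D)}\;\langle\theta^{k-n}(g);g'\rangle ,
\end{equation*}
which (after rewriting $\langle\theta^{k-n}(g);g'\rangle=\theta^{k-n}(\langle\theta^{n-k}(g');g\rangle)$) converts the clean expression above into the displayed formula for $\eta'$. This is not mere bookkeeping: it is the sole source of the factors $(-1)^n$ and $\theta^{k-n+1}(D)/\theta^{k-n}(D)$, and your norm computation secretly depends on it --- from $\eta'=-\eta A/B$ alone one only gets $\Norm(\eta')=(-1)^{m}\Norm(\eta)\Norm(A)/\Norm(B)$, so concluding $\Norm(\eta')=(-1)^{nm}\Norm(\eta)$ still requires $\Norm(A)/\Norm(B)=(-1)^{(n-1)m}$, i.e.\ the same determinant identity. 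It has to be proved, e.g.\ by realizing $g'$ (up to scalar) as the vector of signed maximal minors of $M_{n-1,\theta}(\theta^{k-n+1}(g))$ and expanding $\det M_{n,\theta}(\theta^{k-n}(g))$ and $\det M_{n,\theta}(\theta^{k-n+1}(g))$ along their extremal rows. A small additional slip: rescaling $g'\mapsto\mu g'$ multiplies $\eta'$ by $\mu/\theta^{n-k}(\mu)$, not by $\theta^{k-n}(\mu)/\mu$ as you wrote; with the corrected factor one has $\mu g'+\eta'\tfrac{\mu}{\theta^{n-k}(\mu)}\,\theta^{n-k}(\mu)\theta^{n-k}(g')=\mu\bigl(g'+\eta'\theta^{n-k}(g')\bigr)$, so the code is indeed independent of the choice of $g'$, consistent with Theorem~\ref{thm:numTGab}.
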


As for $\theta$-Gabidulin codes, one can  find the exact number of $\theta$-twisted Gabidulin codes  for a  given generator  $\theta$  of $\Gal(\Fm/\Fq)$, as we show in the following.

\begin{theorem}\label{thm:numTGab} Let $u,v \in \Fm^n$ be two vectors such that $\rk_q(u)=\rk_q(v)=n$. Then, for any generator $\theta$ of $\Gal(\Fm/\Fq)$ and $\eta, \eta' \in \Fm^*$,
$\TGabShee{k,\theta}{\eta}{u}=\TGabShee{k,\theta}{\eta'}{v}$ if and only if there exists $\lambda \in \Fm^*$ such that $u=\lambda v$ and $\eta'=\eta\frac{\theta^k(\lambda)}{\lambda}$.
\end{theorem}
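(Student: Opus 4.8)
The plan is to prove both implications directly by manipulating the explicit spanning sets of the two codes. Recall that $\TGabShee{k,\theta}{\eta}{g} = \langle g + \eta\theta^k(g), \theta(g), \dots, \theta^{k-1}(g)\rangle_{\Fm}$, and similarly for the other code. The "if" direction is the easy one: assume $u = \lambda v$ and $\eta' = \eta\,\theta^k(\lambda)/\lambda$. Then $\theta^i(u) = \theta^i(\lambda)\theta^i(v)$, so the generators $\theta^i(u)$ for $1 \le i \le k-1$ span the same space as the $\theta^i(v)$ — wait, that is not quite true unless we are careful, since the scalars differ per index; instead I would argue directly that $u + \eta\theta^k(u) = \lambda v + \eta\theta^k(\lambda)\theta^k(v) = \lambda\bigl(v + (\eta\theta^k(\lambda)/\lambda)\theta^k(v)\bigr) = \lambda\bigl(v + \eta'\theta^k(v)\bigr)$. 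So the image under multiplication by $\lambda$ of the generating set $\{v + \eta'\theta^k(v), \theta(v), \dots, \theta^{k-1}(v)\}$ is $\{\lambda(v+\eta'\theta^k(v)), \lambda\theta(v), \dots, \lambda\theta^{k-1}(v)\}$; since these differ from a generating set of $\TGabShee{k,\theta}{\eta}{u}$ only by the individual invertible scalars $\lambda, \theta(\lambda),\dots,\theta^{k-1}(\lambda)$... hmm, that still is not closure under $\Fm$-span of the *whole* set. Cleanest is: both codes are the $\Fm$-span of the respective sets, and $\Fm$-scaling each basis vector does not change the span, so $\TGabShee{k,\theta}{\eta}{u}$ equals the span of $\{u+\eta\theta^k(u), \theta(u),\dots,\theta^{k-1}(u)\}$ which equals the span of $\{\lambda(v+\eta'\theta^k(v)), \theta(\lambda)\theta(v),\dots\}$ which equals the span of $\{v+\eta'\theta^k(v), \theta(v),\dots,\theta^{k-1}(v)\} = \TGabShee{k,\theta}{\eta'}{v}$, using that each scalar is a nonzero element of $\Fm$. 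This works.

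For the "only if" direction, assume the two codes are equal; call this code $C$, which by Proposition~\ref{prop:TGabisMRD} is MRD of dimension $k$. First I would recover $u$ (up to scalar) and $v$ (up to scalar) intrinsically from $C$. Here the natural tool is Proposition~\ref{prop:TwGabidulinSeq}: for $1 \le r \le k-1$ we have $\cS^{\theta^r}_i(C) = \G_{k+ir+1,\theta}(g)$ — but this expresses things in terms of a Gabidulin code with evaluation vector $g$, which is only determined up to scalar by Theorem~\ref{thm:numGab}. So I expect the argument to go: from $C$ (equivalently, from $\cS^{\theta}_{i}(C)$ for suitably large $i$, which fills up to $\Fm^n$) extract the Gabidulin code $\G_{k+1,\theta}(g) = C + \theta(C)$; by Theorem~\ref{thm:numGab} this determines $g$ up to an $\Fm^*$-scalar. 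Applying this both to the $u$-description and the $v$-description of $C$ forces $u = \lambda v$ for some $\lambda \in \Fm^*$. Then substitute $u = \lambda v$ into $\TGabShee{k,\theta}{\eta}{u} = \TGabShee{k,\theta}{\eta'}{v}$ and use the "if"-direction computation: $\TGabShee{k,\theta}{\eta}{\lambda v} = \TGabShee{k,\theta}{\eta\theta^k(\lambda)/\lambda}{v}$, so we are reduced to showing $\TGabShee{k,\theta}{\mu}{v} = \TGabShee{k,\theta}{\eta'}{v}$ (with $\mu := \eta\theta^k(\lambda)/\lambda$) implies $\mu = \eta'$.

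The last step — equality of twist parameters for a fixed evaluation vector — is the part requiring the most care, and I expect it to be the main obstacle (though a mild one). The idea: $\TGabShee{k,\theta}{\mu}{v}$ and $\TGabShee{k,\theta}{\eta'}{v}$ both contain $\theta(v),\dots,\theta^{k-1}(v)$, and each contains one further vector, $v + \mu\theta^k(v)$ resp.\ $v + \eta'\theta^k(v)$. If the two codes coincide, then $(v+\mu\theta^k(v)) - (v+\eta'\theta^k(v)) = (\mu-\eta')\theta^k(v)$ lies in the code. But $\theta^k(v) \notin \langle v, \theta(v),\dots,\theta^{k-1}(v)\rangle$ by Proposition~\ref{cor:rankMoore} (since $\rk_q(v) = n \ge k+1$, using $k < n-1$ from... actually we only know $2 \le k \le n-2$, so $k+1 \le n$, fine), and more relevantly $\theta^k(v)$ is not in $\TGabShee{k,\theta}{\eta'}{v}$: indeed if it were, then $v = (v+\eta'\theta^k(v)) - \eta'\theta^k(v)$ would also be in the code, giving $\langle v,\theta(v),\dots,\theta^k(v)\rangle \subseteq C$, a space of dimension $k+1 > k$, contradiction. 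Hence $(\mu - \eta')\theta^k(v) \in C$ forces $\mu - \eta' = 0$. I should double-check the edge cases $k=1$ (where there are no "middle" generators $\theta(v),\dots,\theta^{k-1}(v)$) and confirm the hypotheses of the theorem (which are stated only as "positive integers", so I may want to note the argument still goes through, the difference-of-generators step being the crux regardless of $k$). Assembling these pieces gives the equivalence.
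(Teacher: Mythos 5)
Your overall strategy for the ``only if'' direction is genuinely different from the paper's, but as written it has a concrete gap in the step where you recover $u$ up to scalar. First, the formula is off: $C+\theta(C)=\cS^{\theta}_1(C)$ equals $\G_{k+2,\theta}(g)$, not $\G_{k+1,\theta}(g)$ (Proposition~\ref{prop:TwGabidulinSeq} with $r=i=1$; directly, the sum contains $\theta(g),\dots,\theta^{k}(g)$, hence also $g$ from the twisted generator of $C$ and $\theta^{k+1}(g)$ from the twisted generator of $\theta(C)$). Once corrected, the appeal to Theorem~\ref{thm:numGab} for $\G_{k+2,\theta}(\cdot)$ fails at the top of the admissible range: for $k=n-2$ one gets $\G_{n,\theta}(u)=\Fm^n=\G_{n,\theta}(v)$ for \emph{every} pair of full-rank vectors, so nothing about $u,v$ is recovered (and Theorem~\ref{thm:numGab} manifestly cannot hold for dimension $n$, so it cannot be invoked there; dimension $n-1$, needed for $k=n-3$, is also outside the range $2\leq k\leq n-2$ in which the paper actually uses Berger's result). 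The off-by-one error masks exactly this failure. The gap is patchable --- for $k\geq 3$ use instead $C\cap\theta(C)=\G_{k-2,\theta}(\theta(g))$, whose dimension always lies safely below $n$, and treat the remaining small cases (in particular $k=2$, $n=4$, where both the sum and the intersection are trivial) by hand --- but some such repair must be supplied.

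For contrast, the paper avoids this entirely by a direct coefficient analysis: it expands $\theta(u)$ in the generators $v+\eta'\theta^k(v),\theta(v),\dots,\theta^{k-1}(v)$, uses Proposition~\ref{cor:rankMoore} to rule out every position of the top nonzero coefficient except the first (otherwise the code would contain $k+1$ independent vectors), concludes $\theta(u)=\lambda_1\theta(v)$, and handles $k>\tfrac{n}{2}$ via the duality of Theorem~\ref{thm:dualTwist} and $k=2$ separately. Your ``if'' direction and your final step pinning down the twist coefficient (the difference $(\mu-\eta')\theta^k(v)$ lies in the code while $\theta^k(v)$ does not) are both correct and essentially coincide with the paper's corresponding computation.
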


\begin{proof}
We divide the proof in three cases.

\underline{Case $3\leq k \leq \frac{n}{2}$}:
Suppose that $C:=\TGabShee{k,\theta}{\eta}{u}=\TGabShee{k,\theta}{\eta'}{v}$. Then $\theta(u)$ can be written as 
$$\theta(u) =\sum_{i=1}^{k-1}\lambda_i\theta^i(v)+ \lambda_k(v+\eta' \theta^k(v)),$$
for some $\lambda_i \in \Fm$ not all zeros.
Let $r:=\max\{i \in [k] \mid \lambda_i \neq 0\}$.
 If $r=k$,  then we would have
$\theta(\theta(u))=\theta(\lambda_k(v+\eta' \theta^k(v)))+\theta(\sum_{i=1}^{k-1}\lambda_i \theta^i(v)) \in C$, but this is not possible, since we would have  $C=\langle v+\eta' \theta^k(v), \theta(v), \ldots, \theta^{k-1}(v),\theta^{k+1}(v)+\mu\theta^k(v)\rangle$, for some $\mu \in \Fm$, which has dimension $k+1$ by Proposition \ref{cor:rankMoore}.
Then $r<k$. Suppose that $r>1$, then $0<k-r<k$ and $\theta^{k-r}(\theta(u)) \in C$, and we obtain
$$\theta^{k-r}(\theta(u))=\sum_{i=1}^r \theta^{k-r}(\lambda_i)\theta^{k-r+i}(v)\in C.$$
Also in this case, we obtain $C=\langle v, \theta(v), \ldots, \theta^{k-1}(v),\theta^k(v)\rangle$, which has dimension $k+1$ by Proposition \ref{cor:rankMoore}. Hence the only possibility is $r=1$, i.e. $\theta(u)=\lambda_1\theta(v)$, or equivalently, $u=\lambda v$ for some $\lambda \in \Fm^*$. It remains to study the conditions on $\eta$ and $\eta'$. At this point we have 
$C=\langle v+\eta'\theta^k(v), \theta(v),\ldots, \theta^{k-1}(v) \rangle =\langle \lambda v+\eta\theta^k(\lambda)\theta^k(v), \theta(v),\ldots, \theta^{k-1}(v) \rangle$. Therefore, $v+\eta'\theta^k(v)= \mu \lambda v+\mu\eta\theta^k(\lambda)\theta^k(v)+\sum_{i=1}^{k-1}\mu_i\theta^i(v)$, for some $\mu, \mu_i \in \Fm$, which we can rewrite as
$$(1-\mu\lambda)v-\sum_{i=1}^{k-1}\mu_i \theta^i(v)+(\eta'-\mu\eta \theta^k(\lambda))\theta^k(v)=0.$$
By Proposition \ref{cor:rankMoore}, $v,\theta(v),\ldots,\theta^k(v)$ are linearly independent (since obviously $k<n$), hence 
$\mu_i=0$ for every $i\in [k-1]$, $\mu=\lambda^{-1}$ and $\eta'=\frac{\theta^k(\lambda)}{\lambda}\eta$.

\underline{Case $k=2$}: Suppose $C=\langle u+\eta\theta^2(u), \theta(u)\rangle = \langle v+\eta'\theta^2(v), \theta(v) \rangle$. Then $\theta(u)$ can be written as linear combination of $v+\eta'\theta^2(v), \theta(v)$, i.e. $\theta(u)=\lambda_1\theta(v)+\lambda_2 v+\lambda_2\eta'\theta^2(v)$. Then one can write 
\begin{align*}
u+\eta\theta^2(u)= &\theta^{-1}(\lambda_2)\theta^{-1}(v)+\theta^{-1}(\lambda_1)v+(\theta(\lambda_2)+\theta^{-1}(\lambda_2)\theta^{-1}(\eta'))\theta(v)+\\
&\theta(\lambda_1)\theta^2(v)+\theta(\lambda_2)\theta(\eta')\theta^3(v).
\end{align*}
By Proposition \ref{cor:rankMoore} we deduce that $\lambda_2=0$, and therefore, $\theta(u)=\lambda_1\theta(v)$, or equivalently, $u=\lambda v$ for some $\lambda \in \Fm^*$. The relation between $\eta$ and $\eta'$ is derived in the same way as done in the proof of  the case $3\leq k \leq \frac{n}{2}$.

\underline{Case $k>\frac{n}{2}$}: It follows by the duality result in Theorem \ref{thm:dualTwist}, and the cases $k=2$ and $3\leq k \leq \frac{n}{2}$.
\end{proof}

\begin{corollary}\label{cor:numTGab}
 Let $k,n,m$  be integers such that $2\leq k \leq n-2$ and $n \leq m$, and let $\theta$ be a generator of $\Gal(\Fm/\Fq)$. Then,
$$|\TGab_{q}(k,n,m,\theta)| =\left(1-\frac{1}{q-1}\right) \prod_{i=0}^{n-1}(q^m-q^i),$$
\end{corollary}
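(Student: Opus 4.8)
The statement to prove is a count of the set $\TGab_q(k,n,m,\theta)$, i.e.\ the number of distinct $\Fm$-linear $\theta$-twisted Gabidulin codes of length $n$ and dimension $k$ for a \emph{fixed} generator $\theta$. The natural approach is to count pairs $(g,\eta)$ that give rise to such a code and then divide by the size of the fibers using the precise equality criterion of Theorem~\ref{thm:numTGab}. First I would count the admissible pairs: the vector $g$ must satisfy $\rk_q(g)=n$, and the number of such vectors in $\Fm^n$ is the standard count $\prod_{i=0}^{n-1}(q^m-q^i)$ (the number of ordered $n$-tuples of $\Fq$-linearly independent elements of $\Fm$). The twist parameter $\eta$ must lie in $\Fm^*$ and satisfy $\Norm(\eta)\neq(-1)^{km}$. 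Since the norm map $\Norm:\Fm^*\to\Fq^*$ is surjective and each fiber has size $(q^m-1)/(q-1)$, the number of forbidden $\eta$ is $(q^m-1)/(q-1)$ (it is $0$ if $(-1)^{km}\notin\Fq^*$, but over a finite field $-1\in\Fq^*$ always, and when $q$ is even $(-1)^{km}=1$, still a valid value), so the number of admissible $\eta$ is $(q^m-1)-\tfrac{q^m-1}{q-1}=(q^m-1)\bigl(1-\tfrac{1}{q-1}\bigr)$.

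Next I would quotient by the equivalence-of-representation relation. By Theorem~\ref{thm:numTGab}, two admissible pairs $(u,\eta)$ and $(v,\eta')$ yield the same code if and only if there is $\lambda\in\Fm^*$ with $u=\lambda v$ and $\eta'=\eta\,\theta^k(\lambda)/\lambda$. Thus $\Fm^*$ acts on the set of admissible pairs by $\lambda\cdot(g,\eta)=(\lambda g,\ \theta^k(\lambda)\lambda^{-1}\eta)$, and the codes in $\TGab_q(k,n,m,\theta)$ are exactly the orbits. The action is free: if $\lambda\cdot(g,\eta)=(g,\eta)$ then $\lambda g=g$, and since the $g_i$ are $\Fq$-linearly independent (in particular $g\neq 0$) this forces $\lambda=1$. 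Hence every orbit has size exactly $|\Fm^*|=q^m-1$. I should also check that the action preserves admissibility, which it does automatically since it preserves the code, and in particular preserves $\rk_q(g)=n$ and the MRD property, hence the constraint on $\Norm(\eta)$; alternatively one computes $\Norm(\theta^k(\lambda)\lambda^{-1}\eta)=\Norm(\lambda)\Norm(\lambda)^{-1}\Norm(\eta)=\Norm(\eta)$ directly, using that $\Norm\circ\theta^k=\Norm$.

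Putting the pieces together, the number of codes is
$$
|\TGab_q(k,n,m,\theta)| \;=\; \frac{\bigl(\prod_{i=0}^{n-1}(q^m-q^i)\bigr)\cdot (q^m-1)\bigl(1-\tfrac{1}{q-1}\bigr)}{q^m-1}
\;=\; \Bigl(1-\frac{1}{q-1}\Bigr)\prod_{i=0}^{n-1}(q^m-q^i),
$$
which is the claimed formula. The main obstacle, and the only genuinely nontrivial input, is Theorem~\ref{thm:numTGab}, which I am allowed to assume; granting that, the remaining work is the routine verification that the $\Fm^*$-action is well-defined on admissible pairs and free, plus the two elementary counting facts (number of rank-$n$ vectors, and size of the non-admissible set of twist parameters via surjectivity of the norm). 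One minor care point worth spelling out is the hypothesis $2\le k\le n-2$: it is needed so that Theorem~\ref{thm:numTGab} applies (its proof splits into the cases $k=2$, $3\le k\le n/2$, and $k>n/2$, all covered in that range), and so that the codes are genuinely of dimension $k$ and the whole setup is non-degenerate.
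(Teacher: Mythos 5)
Your proposal is correct and follows essentially the same route as the paper: count the $\prod_{i=0}^{n-1}(q^m-q^i)$ full-rank vectors $g$ and the $(q^m-1)-\tfrac{q^m-1}{q-1}$ admissible twist parameters $\eta$, then divide by the fiber size $q^m-1$ coming from the equality criterion of Theorem~\ref{thm:numTGab}. Your extra checks that the $\Fm^*$-action is free and preserves the norm condition are sound refinements of the same argument, not a different method.
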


\begin{proof}
We have exactly $\prod_{i=0}^{n-1}(q^m-q^i)$ many choices for the vector $g$ and $(q^m-1) -\frac{q^m-1}{q-1}$ choices for the element $\eta$ with norm different from $(-1)^{km}$.  By Theorem \ref{thm:numTGab}, the total number has to be divided by the number of non-zero multiples of $g$, which is $q^m-1$.
\end{proof}

\begin{corollary}\label{cor:numtotTGab}
 Let $k,n,m$  be integers such that $2\leq k \leq n-2$ and $n \leq m$. Then,
$$|\TGab_{q}(k,n,m)| \leq \frac{\phi(m)}{2}\left( 1-\frac{1}{q-1}\right)\prod_{i=0}^{n-1}(q^m-q^i).$$
\end{corollary}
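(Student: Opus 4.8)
The plan is to mirror the proof of Corollary~\ref{cor:numGab}(2): combine the exact count from Corollary~\ref{cor:numTGab} with a union bound over the $\phi(m)$ generators of $\Gal(\Fm/\Fq)$, and then use Lemma~\ref{lem:TGabtheta=theta-1} to pair each generator $\theta$ with $\theta^{-1}$ and thereby save the factor $2$.

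First I would check that, for a generator $\theta$ and admissible parameters $g\in\Fm^n$ with $\rk_q(g)=n$ and $\eta\in\Fm^*$ with $\Norm(\eta)\neq(-1)^{km}$, Lemma~\ref{lem:TGabtheta=theta-1} identifies $\TGabShee{k,\theta}{\eta}{g}=\TGabShee{k,\theta^{-1}}{\eta^{-1}}{\theta^k(g)}$ as a \emph{bona fide} $\theta^{-1}$-twisted Gabidulin code: indeed $\theta^k$ is an $\Fq$-linear bijection of $\Fm$, so $\rk_q(\theta^k(g))=n$, and $\Norm(\eta^{-1})=\Norm(\eta)^{-1}$ while $(-1)^{km}\in\{\pm1\}$ is its own inverse, so the norm condition is preserved. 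Hence $\TGab_q(k,n,m,\theta)\subseteq\TGab_q(k,n,m,\theta^{-1})$, and by symmetry these two sets are equal.

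Next I would note that $2\le k\le n-2$ forces $n\ge 4$, hence $m\ge n\ge 4$; therefore $\phi(m)$ is even and no generator satisfies $\theta=\theta^{-1}$ (that would require $2r\equiv 0\pmod m$ for $r$ coprime to $m$, i.e.\ $m\mid 2$). Consequently the set $F$ of generators splits into $\phi(m)/2$ disjoint pairs $\{\theta,\theta^{-1}\}$, and by the previous step
\[
\TGab_q(k,n,m)=\bigcup_{\theta\in F}\TGab_q(k,n,m,\theta)
\]
is really a union of only $\phi(m)/2$ of these sets. Applying the union bound together with Corollary~\ref{cor:numTGab} gives
\[
|\TGab_q(k,n,m)|\le \frac{\phi(m)}{2}\Bigl(1-\frac{1}{q-1}\Bigr)\prod_{i=0}^{n-1}(q^m-q^i),
\]
as claimed.

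The proof is purely a counting argument, so there is no serious obstacle; the only point deserving attention is verifying that inversion preserves admissibility of $\eta$ (so that Lemma~\ref{lem:TGabtheta=theta-1} truly yields $\TGab_q(k,n,m,\theta)=\TGab_q(k,n,m,\theta^{-1})$), and — in contrast to the Gabidulin case of Theorem~\ref{thm:upperboundnumGab} — no delicate lemma bounding how many automorphisms fix a given code is needed, precisely because only an upper bound on $|\TGab_q(k,n,m)|$ is being claimed.
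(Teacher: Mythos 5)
Your proposal is correct and takes essentially the same route as the paper, which simply states that the bound ``directly follows from Corollary~\ref{cor:numTGab} and Lemma~\ref{lem:TGabtheta=theta-1}''; you have merely spelled out the details the paper leaves implicit (that inversion preserves the norm condition on $\eta$ and the rank condition on $g$, so that $\TGab_q(k,n,m,\theta)=\TGab_q(k,n,m,\theta^{-1})$, and that the generators pair off into $\phi(m)/2$ distinct two-element classes). No gap.
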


\begin{proof}
It directly follows from Corollary \ref{cor:numTGab} and Lemma \ref{lem:TGabtheta=theta-1}.
\end{proof}

For a prime power $q$, and two integers $k,m$ we consider the set $X_q(m,k):=\{\alpha \in \Fm \mid \Norm(\alpha)\neq (-1)^{km} \}$, and the left group action 

\begin{equation}\label{eq:actiontwist}\begin{array}{rcl}
\Aut(\Fm) \times X_q(m,k) & \longrightarrow & X_q(m,k) \\
(\tau , \alpha) & \longmapsto & \tau(\alpha).
\end{array}
\end{equation}
Observe that the one above is well-defined. Indeed, if $\alpha \in X_q(m,k)$, and $\tau \in \Aut(\Fm)$, then
$$\Norm(\tau(\alpha))=\prod_{i=0}^{m-1}\theta^i(\tau(\alpha))=\prod_{i=0}^{m-1}\tau(\theta^i(\alpha))=\tau(\Norm(\alpha)).$$
Since $(-1)^{km}$ belongs to the prime field, and therefore is fixed by any automorphism $\tau \in \Aut(\Fm)$, we have that also $\tau(\Norm(\alpha))\neq (-1)^{km}$.

\begin{theorem}\label{thm:numberinequivalentTGab}\cite[Theorem 2]{ne19invariants}
Let $\Fq$ be a finite field of characteristic $p$ and  $2< k < n-2$ be integers. Denote by $\mathcal X_q(m,k)$ the cardinality of the set of orbits of this group action. 
 If $m=n$, then  the number of inequivalent twisted Gabidulin codes is 
$$\mathcal X_q(m,k)\frac{\phi(m)}{2}.$$
\end{theorem}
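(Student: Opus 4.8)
The plan is to follow the template of the first part of Theorem~\ref{thm:numberinequivalentGab} (the Gabidulin case $m=n$). First I would split $\TGab_q(k,n,n)$ into $\tfrac{\phi(m)}{2}$ subfamilies, one for each unordered pair $\{\theta,\theta^{-1}\}$ of generators of $\Gal(\Fm/\Fq)$ --- note that $2<k<n-2$ forces $m=n\geq 6$, so $\theta\neq\theta^{-1}$ and $\phi(m)$ is even --- and prove that these subfamilies are pairwise disjoint and each closed under code equivalence. Then I would count the equivalence classes inside one subfamily; since this number will come out the same, namely $\mathcal X_q(m,k)$, for every choice of $\theta$, the total is $\tfrac{\phi(m)}{2}\,\mathcal X_q(m,k)$.

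\emph{Separating the subfamilies.} For $m=n$, parts~1 and~2 of Proposition~\ref{prop:TwGabidulinSeq} together cover every $r\in\{0,1,\dots,m-1\}$, and they show that a $\theta$-twisted Gabidulin code $C$ has $s_1^{\theta^r}(C)=k+2$ precisely for $r\in\{1,m-1\}$: for $r=0$ the value is $k$; for $r\in\{2,\dots,k-1\}\cup\{m-k+1,\dots,m-2\}$ it is $\geq\min\{k+3,n\}=k+3$ (this uses $k\leq n-3$); and for $k\leq r\leq n-k$ it equals $2k>k+2$ (this uses $k>2$). Since $\{s_i^{\autom}(C)\}$ is invariant under equivalence (Lemma~\ref{lem:invariant}), the set $\{\sigma\in\Gal(\Fm/\Fq): s_1^{\sigma}(C)=k+2\}=\{\theta,\theta^{-1}\}$ is an equivalence invariant; in particular a code can be $\theta$-twisted for at most one pair $\{\theta,\theta^{-1}\}$, and equivalent codes share this pair. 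By Lemma~\ref{lem:TGabtheta=theta-1} the set of $\theta$-twisted codes equals the set of $\theta^{-1}$-twisted codes, so $\TGab_q(k,n,n)$ is a disjoint union of $\tfrac{\phi(m)}{2}$ equivalence-closed subfamilies, indexed by $\{\theta,\theta^{-1}\}$.

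\emph{Counting inside one subfamily.} Fix a generator $\theta$. Every equivalence of $\Fm$-linear codes has the shape $C\mapsto\tau(C)A$ with $\tau\in\Aut(\Fm)$ and $A\in\GL_n(q)$ (Theorem~\ref{isometries}, after the reduction to $\lambda=1$). Because $A$ has entries in $\Fq$ and the maps in the skew group algebra are $\Fq$-linear, $\TGabShee{k,\theta}{\eta}{g}A=\TGabShee{k,\theta}{\eta}{gA}$; and because $\Aut(\Fm)$ is cyclic, $\tau$ commutes with $\theta$, so $\tau(\TGabShee{k,\theta}{\eta}{g})=\TGabShee{k,\theta}{\tau(\eta)}{\tau(g)}$. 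When $m=n$, a vector of $q$-rank $n$ is exactly an $\Fq$-basis of $\Fm$, hence $\GL_n(q)$ acts transitively on such vectors. Feeding this into the equality criterion of Theorem~\ref{thm:numTGab} yields: $\TGabShee{k,\theta}{\eta}{g}\sim\TGabShee{k,\theta}{\eta'}{g'}$ if and only if $\eta'=\tau(\eta)\,\theta^k(\lambda)\lambda^{-1}$ for some $\tau\in\Aut(\Fm)$, $\lambda\in\Fm^*$. Hence the equivalence classes in the subfamily of $\theta$ are the orbits of $X_q(m,k)$ under the group generated by the action~\eqref{eq:actiontwist} of $\Aut(\Fm)$ and by multiplication with the subgroup $H:=\{\theta^k(\lambda)\lambda^{-1}:\lambda\in\Fm^*\}$; by Hilbert~90, $H=\ker\big(\mathrm{N}_{\Fm/\F_{q^{\gcd(k,m)}}}\big)$, so $H$ is independent of $\theta$ and is normalised by $\Aut(\Fm)$, and the number of orbits is the quantity $\mathcal X_q(m,k)$.

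Combining the two steps gives $\tfrac{\phi(m)}{2}\,\mathcal X_q(m,k)$ inequivalent twisted Gabidulin codes, as claimed. I expect the counting step to be the main obstacle: one has to carry the scalar ambiguity $\eta\mapsto\eta\,\theta^k(\lambda)\lambda^{-1}$ from Theorem~\ref{thm:numTGab} correctly through the composition of $\tau$ and $A$ (using $m=n$ to see that $\lambda$ is unconstrained), and then recognise the resulting orbit set as the one counted by $\mathcal X_q(m,k)$ --- that is, that at the level of these codes the relevant action is precisely~\eqref{eq:actiontwist} once the Hilbert~90 subgroup $H$ is taken into account. A more routine but still delicate point is the case analysis in the separation step, where the hypotheses $2<k<n-2$ are exactly what is needed to rule out $s_1^{\theta^r}(C)=k+2$ for all $r\not\in\{1,m-1\}$.
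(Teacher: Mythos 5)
Your first step (splitting $\TGab_q(k,n,n)$ into $\tfrac{\phi(m)}{2}$ equivalence-closed subfamilies indexed by $\{\theta,\theta^{-1}\}$ via the invariant $s_1^{\theta^r}$, using Lemma~\ref{lem:invariant}, Lemma~\ref{lem:TGabtheta=theta-1} and Proposition~\ref{prop:TwGabidulinSeq}) is exactly the paper's first step, and your case analysis of $r$ is correct. The problem is the counting step, and it sits precisely where you flag ``the main obstacle'': the final identification of your orbit count with $\mathcal X_q(m,k)$ is not justified, and in fact it cannot be, given how the paper defines $\mathcal X_q(m,k)$. You correctly deduce from Theorem~\ref{thm:numTGab} and the transitivity of $\GL_n(q)$ on full-rank vectors (for $m=n$) that $\TGabShee{k,\theta}{\eta}{g}\sim\TGabShee{k,\theta}{\eta'}{h}$ exactly when $\eta'=\tau(\eta)\theta^k(\lambda)\lambda^{-1}$ for some $\tau\in\Aut(\Fm)$ and $\lambda\in\Fm^*$; so the classes inside one subfamily are the orbits of $X_q(m,k)\setminus\{0\}$ under the group generated by the Galois action \emph{and} multiplication by $H=\{\theta^k(\lambda)\lambda^{-1}\}=\ker(\mathrm{N}_{\Fm/\F_{q^{\gcd(k,m)}}})$. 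But $\mathcal X_q(m,k)$ is defined as the number of orbits of the action~\eqref{eq:actiontwist} of $\Aut(\Fm)$ \emph{alone}, and the two counts differ for every parameter set in range: an $H$-coset has $(q^m-1)/(q^{\gcd(k,m)}-1)$ elements while a Galois orbit has at most $m[\Fq:\F_p]$, so a single $H$-coset meets many Galois orbits. Indeed, already the identity $\TGabShee{k,\theta}{\eta}{\lambda g}=\TGabShee{k,\theta}{\eta\theta^k(\lambda)/\lambda}{g}$ exhibits one and the same code attached to values of $\eta$ lying in different orbits of~\eqref{eq:actiontwist}. So ``the number of orbits is the quantity $\mathcal X_q(m,k)$'' is exactly the step that does not close.

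For comparison, the paper's proof never introduces $H$: it verifies that $\eta'=\tau(\eta)$ implies equivalence (choosing $A$ to map $\tau(g)$ to $h$) and then asserts the converse --- ``$C$ and $C'$ are equivalent if and only if $\ldots$ $\eta=\tau(\eta')$'' --- without carrying the scalar $\lambda$ from Theorem~\ref{thm:numTGab} through the computation of $\tau(C)A$. Your more careful bookkeeping is what exposes the tension between that asserted converse and Theorem~\ref{thm:numTGab}. To complete a proof along your lines you would have to show that multiplication by $H$ never merges distinct orbits of~\eqref{eq:actiontwist} inside $X_q(m,k)$, which the cardinality comparison above rules out; so either the orbit set must be replaced by the orbits of the larger group you identified (changing the count), or the discrepancy with the paper's argument has to be resolved some other way. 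As written, the proposal has a genuine gap at this point.
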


\begin{proof}
  The proof is similar to the one of part 1 of Theorem \ref{thm:numberinequivalentGab}.  Let $\theta_1, \theta_2$ be two generators of $\Gal(\Fm/\Fq)$, and $g,h\in \Fm^m$ be two vectors such that $\rk_q(g)=\rk_q(h)=n$, and let $\eta, \eta' \in \Fm$ with norm not equal to $(-1)^{km}$. Consider two twisted Gabidulin codes $C=\TGabShee{k,\theta_1}{\eta}{g}$ and $C'=\TGabShee{k,\theta_2}{\eta'}{h}$.
Suppose  $\theta_1\notin\{ \theta_2, \theta_2^{-1}\}$, 
then $\theta_1=\theta_2^r$, with $r \notin\{ 1,-1\}$, and by Proposition \ref{prop:TwGabidulinSeq}, we have $s^{\theta_1}_1(C')=k+2$ and $s^{\theta_1}_1(C)\geq k+3$. By Lemma \ref{lem:invariant} we conclude that they cannot be equivalent.
Now, recall that $C \sim C'$  if and only if there exist $\tau \in \Aut(\Fm)$ and $A \in \GL_n(q)$ such that $C'=\tau(C)A$. When $C=\TGabShee{k,\theta_2}{\eta}{g}$ we get $\tau(C)A= \TGabShee{k, \theta_2}{\tau(\eta)}{\tau(g)}$.
Assume $\eta'=\tau(\eta)$ for some $\tau \in \Aut(\Fm)$. If $\theta_2=\theta_1$ then, $\Fm^m=\rsu_q(\tau(g))=\rsu_q(h)$. This implies that there exists $A \in \GL_m(q)$ such that $\tau(g)A=h$ and  $\tau(C) A = C'$.
Hence, for every $\theta_2$ generator of $\Gal(\Fm/\Fq)$, and for every representative $\eta$ in an orbit of the action defined in \eqref{eq:actiontwist},  we have exactly one equivalent class of twisted Gabidulin codes. Moreover, observe that $\TGabShee{k, \theta_2^{-1}}{\eta}{g} = \TGabShee{k,\theta_2}{\eta^{-1}}{\theta_2^{-k}(h)}$, by Lemma \ref{lem:TGabtheta=theta-1}.
This shows that $C$ and $C'$ are equivalent if and only if $\theta_2=\theta_1$ and $\eta=\tau(\eta')$ for some $\tau \in \Aut(\Fm)$ or $\theta_1=\theta_2^{-1}$ and $\eta^{-1}=\tau(\eta')$ for some $\tau \in \Aut(\Fm)$. By counting, we get exactly $\frac{\phi(m)}{2}\mathcal X_q(m,k)$ inequivalent twisted Gabidulin codes.
\end{proof}

\subsection{Computational Results on the Number of Generalized Twisted Gabidulin Codes}
\label{ssec:computational_number_of_gTGab}

Due to the huge variation of parameters, studying the exact number of generalized $\theta$-twisted Gabidulin codes with the same techniques as in the previous two subsections would become extremely technical. To nevertheless give an idea on the suitability of the invariants for distinguishing these codes, we present computational results on the number of inequivalent generalized $\theta$-twisted Gabidulin codes with one twist ($\numTwists=1$) here.

We fix the following parameters
\begin{itemize}
	\item $n \geq 6$
	\item $m = 2n$
	\item $2 \leq k \leq n-2$
	\item $g \in \Fqn^{n}$ (chosen at random such that $\rk_q (g) = n$)
	\item $\eta \in  \Fqm \setminus \Fqn$ (chosen at random).
\end{itemize}
Note that $\Fqn$ is a subfield of $\Fqm$ and the choice of $m$ gives the smallest $m$ such that the sufficient MRD condition in \cite{puchinger2017further} can be applied. Also, $g$ and $\eta$ are chosen to satisfy this MRD condition.

For each such choice of fixed parameters, we consider the parameter set
\begin{align*}
\mathrm{param}(n,m) := \left\{ (\theta, t, h) \, : \, \theta \in \Gal(\Fqm/\Fq) \text{ generator}, \; 1\leq t \leq n-k, \; 0\leq h <k \right\} / \sim_\mathrm{p},
\end{align*}
where we say that two parameter triples are equivalent, $(\theta_1,t_1,h_1) \sim_\mathrm{p} (\theta_2,t_2,h_2)$, if $\theta_1 = \theta_2^{-1}$, $t_1 = n-k+1-t_2$, and $h_1 = k-1-h_2$. 
The choice of equivalence in parameters, $\sim_\mathrm{p}$, is due to the symmetry of generalized $\theta$-twisted Gabidulin codes described in Lemma~\ref{lem:GTGabtheta=theta-1}.

We consider the set of codes
\begin{align}
\gTGab_{q}(k,n,m,g,\eta) := \left\{ \GTw{k,\theta}{\eta,t,h}{g} \, : \, (\theta, t, h) \in \mathrm{param}(n,m) \right\} \label{eq:Gabidulin_codes}
\end{align}
and count the number of equivalence classes $|\gTGab_{q}(k,n,m,g,\eta)/\sim|$ among these codes.

Table~\ref{tab:lower_upper_bounds_equivalence_classes_gTGab} presents lower and upper bounds on $|\gTGab_{q}(k,n,m,g,\eta)/\sim|$, where each table entry contains three bounds $LB1(n,k)$, $LB2(n,k)$, and $UB(n,k)$ that are computed as follows:
\begin{itemize}
\item $LB1(n,k)$: Lower bound on $|\gTGab_{q}(k,n,m,g,\eta)/\sim|$ obtained by computing the sequences $\{s_i^\sigma(C)\}_{i=1}^{n-k}$ and $\{t_i^\sigma(C)\}_{i=1}^{k}$ (consecutive sums/intersections) for all codes $C$ in $\gTGab_{q}(k,n,m,g,\eta)$, where $\sigma$ ranges through all elements of the Galois group $\Gal(\Fqm/\Fq)$.
\item $LB2(n,k)$: Lower bound on $|\gTGab_{q}(k,n,m,g,\eta)/\sim|$ obtained by computing the dimensions of the sums $\sigma_1(C)+\sigma_2(C)+\sigma_3(C)$ and intersections $\sigma_1(C)\cap\sigma_2(C)\cap\sigma_3(C)$ for $100$ random choices of pairwise distinct automorphisms $\sigma_1,\sigma_2,\sigma_3 \in \Gal(\Fqm/\Fq)$.
\item $UB(n,k)$: Upper bound on $|\gTGab_{q}(k,n,m,g,\eta)/\sim|$ given by $|\mathrm{param}(n,m)|$.
\end{itemize}

\begin{table}
\caption{{Lower ($LB1$ and $LB2$) and Upper ($UB$) bounds on the number of equivalence classes $|\gTGab_{q}(k,n,m,g,\eta)/\sim|$ (cf.~\eqref{eq:Gabidulin_codes}) as described in Section~\ref{ssec:computational_number_of_gTGab}. Each cell is formatted as
\protect\mytableformat{LB1(n,k)}{LB2(n,k)}{UB(n,k)}.}}
\label{tab:lower_upper_bounds_equivalence_classes_gTGab}
\begin{center}
\resizebox{\textwidth}{!}{
\def\arraystretch{2}
\begin{tabular}{c||c|c|c|c|c|c|c|c|c|c|c|c|c|c|c|c|c|c|c|c|c|c}
	$n$ / $k$ & $2$ & $3$ & $4$ & $5$ & $6$ & $7$ & $8$ & $9$ & $10$ & $11$ & $12$ & $13$ & $14$ & $15$ & $16$ & $17$ & $18$ & $19$ & $20$ & $21$ & $22$ & $23$ \\
	\hline \hline 
	$6$ & \mytableformat{2}{7}{16} & \mytableformat{3}{8}{18} & \mytableformat{2}{7}{16}  \\
	\hline 
	$7$ & \mytableformat{4}{15}{30} & \mytableformat{9}{27}{36} & \mytableformat{9}{27}{36} & \mytableformat{4}{15}{30}  \\
	\hline 
	$8$ & \mytableformat{6}{18}{48} & \mytableformat{11}{27}{60} & \mytableformat{10}{28}{64} & \mytableformat{11}{27}{60} & \mytableformat{6}{18}{48}  \\
	\hline 
	$9$ & \mytableformat{6}{27}{42} & \mytableformat{15}{45}{54} & \mytableformat{27}{57}{60} & \mytableformat{27}{57}{60} & \mytableformat{15}{45}{54} & \mytableformat{6}{27}{42}  \\
	\hline 
	$10$ & \mytableformat{5}{26}{64} & \mytableformat{18}{40}{84} & \mytableformat{18}{46}{96} & \mytableformat{25}{47}{100} & \mytableformat{18}{46}{96} & \mytableformat{18}{40}{84} & \mytableformat{5}{26}{64}  \\
	\hline 
	$11$ & \mytableformat{15}{45}{90} & \mytableformat{40}{105}{120} & \mytableformat{60}{135}{140} & \mytableformat{70}{145}{150} & \mytableformat{70}{145}{150} & \mytableformat{60}{135}{140} & \mytableformat{40}{105}{120} & \mytableformat{15}{45}{90}  \\
	\hline 
	$12$ & \mytableformat{12}{34}{80} & \mytableformat{23}{51}{108} & \mytableformat{26}{62}{128} & \mytableformat{36}{68}{140} & \mytableformat{32}{70}{144} & \mytableformat{36}{68}{140} & \mytableformat{26}{62}{128} & \mytableformat{23}{51}{108} & \mytableformat{12}{33}{80}  \\
	\hline 
	$13$ & \mytableformat{20}{66}{132} & \mytableformat{66}{162}{180} & \mytableformat{102}{210}{216} & \mytableformat{114}{234}{240} & \mytableformat{120}{246}{252} & \mytableformat{120}{246}{252} & \mytableformat{114}{234}{240} & \mytableformat{102}{210}{216} & \mytableformat{66}{162}{180} & \mytableformat{20}{66}{132}  \\
	\hline 
	$14$ & \mytableformat{13}{57}{144} & \mytableformat{45}{96}{198} & \mytableformat{51}{117}{240} & \mytableformat{66}{132}{270} & \mytableformat{63}{141}{288} & \mytableformat{75}{144}{294} & \mytableformat{63}{141}{288} & \mytableformat{66}{132}{270} & \mytableformat{51}{117}{240} & \mytableformat{42}{96}{198} & \mytableformat{13}{47}{144}  \\
	\hline 
	$15$ & \mytableformat{16}{73}{104} & \mytableformat{52}{132}{144} & \mytableformat{76}{172}{176} & \mytableformat{92}{196}{200} & \mytableformat{100}{212}{216} & \mytableformat{108}{220}{224} & \mytableformat{108}{220}{224} & \mytableformat{100}{212}{216} & \mytableformat{92}{196}{200} & \mytableformat{76}{172}{176} & \mytableformat{52}{132}{144} & \mytableformat{16}{76}{104}  \\
	\hline 
	$16$ & \mytableformat{23}{77}{224} & \mytableformat{74}{148}{312} & \mytableformat{80}{188}{384} & \mytableformat{108}{216}{440} & \mytableformat{100}{236}{480} & \mytableformat{128}{248}{504} & \mytableformat{120}{252}{512} & \mytableformat{128}{248}{504} & \mytableformat{100}{236}{480} & \mytableformat{108}{216}{440} & \mytableformat{80}{188}{384} & \mytableformat{74}{150}{312} & \mytableformat{23}{78}{224}  \\
	\hline 
	$17$ & \mytableformat{40}{108}{240} & \mytableformat{136}{312}{336} & \mytableformat{184}{408}{416} & \mytableformat{232}{472}{480} & \mytableformat{240}{520}{528} & \mytableformat{264}{552}{560} & \mytableformat{280}{568}{576} & \mytableformat{280}{568}{576} & \mytableformat{264}{552}{560} & \mytableformat{240}{520}{528} & \mytableformat{232}{472}{480} & \mytableformat{184}{408}{416} & \mytableformat{136}{312}{336} & \mytableformat{40}{102}{240}  \\
	\hline 
	$18$ & \mytableformat{25}{74}{192} & \mytableformat{60}{130}{270} & \mytableformat{78}{165}{336} & \mytableformat{99}{192}{390} & \mytableformat{102}{213}{432} & \mytableformat{117}{228}{462} & \mytableformat{111}{237}{480} & \mytableformat{123}{240}{486} & \mytableformat{111}{237}{480} & \mytableformat{117}{228}{462} & \mytableformat{102}{213}{432} & \mytableformat{99}{192}{390} & \mytableformat{78}{165}{336} & \mytableformat{60}{132}{270} & \mytableformat{25}{72}{192}  \\
	\hline 
	$19$ & \mytableformat{48}{110}{306} & \mytableformat{180}{400}{432} & \mytableformat{252}{531}{540} & \mytableformat{297}{621}{630} & \mytableformat{342}{693}{702} & \mytableformat{369}{747}{756} & \mytableformat{378}{783}{792} & \mytableformat{396}{801}{810} & \mytableformat{396}{801}{810} & \mytableformat{378}{783}{792} & \mytableformat{369}{747}{756} & \mytableformat{342}{693}{702} & \mytableformat{297}{621}{630} & \mytableformat{252}{531}{540} & \mytableformat{180}{405}{432} & \mytableformat{48}{131}{306}  \\
	\hline 
	$20$ & \mytableformat{35}{73}{288} & \mytableformat{90}{197}{408} & \mytableformat{108}{251}{512} & \mytableformat{148}{296}{600} & \mytableformat{148}{332}{672} & \mytableformat{180}{360}{728} & \mytableformat{176}{380}{768} & \mytableformat{200}{392}{792} & \mytableformat{192}{396}{800} & \mytableformat{200}{392}{792} & \mytableformat{176}{380}{768} & \mytableformat{180}{360}{728} & \mytableformat{148}{332}{672} & \mytableformat{148}{296}{600} & \mytableformat{108}{252}{512} & \mytableformat{90}{192}{408} & \mytableformat{35}{73}{288}  \\
	\hline 
	$21$ & \mytableformat{44}{137}{228} & \mytableformat{126}{297}{324} & \mytableformat{186}{402}{408} & \mytableformat{234}{474}{480} & \mytableformat{258}{534}{540} & \mytableformat{276}{582}{588} & \mytableformat{294}{618}{624} & \mytableformat{312}{642}{648} & \mytableformat{324}{654}{660} & \mytableformat{324}{654}{660} & \mytableformat{312}{642}{648} & \mytableformat{294}{618}{624} & \mytableformat{276}{581}{588} & \mytableformat{258}{534}{540} & \mytableformat{234}{474}{480} & \mytableformat{186}{402}{408} & \mytableformat{126}{302}{324} & \mytableformat{44}{109}{228}  \\
	\hline 
	$22$ & \mytableformat{41}{111}{400} & \mytableformat{135}{273}{570} & \mytableformat{165}{355}{720} & \mytableformat{210}{419}{850} & \mytableformat{215}{475}{960} & \mytableformat{260}{520}{1050} & \mytableformat{250}{555}{1120} & \mytableformat{290}{580}{1170} & \mytableformat{285}{595}{1200} & \mytableformat{305}{600}{1210} & \mytableformat{285}{595}{1200} & \mytableformat{290}{580}{1170} & \mytableformat{250}{555}{1120} & \mytableformat{260}{520}{1050} & \mytableformat{215}{475}{960} & \mytableformat{210}{420}{850} & \mytableformat{165}{355}{720} & \mytableformat{135}{277}{570} & \mytableformat{41}{97}{400}  \\
	\hline 
	$23$ & \mytableformat{88}{161}{462} & \mytableformat{286}{596}{660} & \mytableformat{385}{825}{836} & \mytableformat{484}{977}{990} & \mytableformat{528}{1111}{1122} & \mytableformat{605}{1221}{1232} & \mytableformat{638}{1309}{1320} & \mytableformat{660}{1375}{1386} & \mytableformat{693}{1419}{1430} & \mytableformat{715}{1441}{1452} & \mytableformat{715}{1440}{1452} & \mytableformat{693}{1419}{1430} & \mytableformat{660}{1375}{1386} & \mytableformat{638}{1309}{1320} & \mytableformat{605}{1221}{1232} & \mytableformat{528}{1111}{1122} & \mytableformat{484}{976}{990} & \mytableformat{385}{823}{836} & \mytableformat{286}{611}{660} & \mytableformat{88}{164}{462}  \\
	\hline 
	$24$ & \mytableformat{45}{97}{352} & \mytableformat{122}{239}{504} & \mytableformat{140}{315}{640} & \mytableformat{192}{376}{760} & \mytableformat{200}{428}{864} & \mytableformat{240}{472}{952} & \mytableformat{248}{508}{1024} & \mytableformat{272}{536}{1080} & \mytableformat{260}{556}{1120} & \mytableformat{288}{568}{1144} & \mytableformat{280}{572}{1152} & \mytableformat{288}{568}{1144} & \mytableformat{260}{556}{1120} & \mytableformat{272}{535}{1080} & \mytableformat{248}{507}{1024} & \mytableformat{240}{472}{952} & \mytableformat{200}{428}{864} & \mytableformat{192}{376}{760} & \mytableformat{140}{314}{640} & \mytableformat{122}{242}{504} & \mytableformat{45}{89}{352}  \\
	\hline 
	$25$ & \mytableformat{100}{139}{460} & \mytableformat{290}{609}{660} & \mytableformat{390}{828}{840} & \mytableformat{490}{988}{1000} & \mytableformat{570}{1130}{1140} & \mytableformat{620}{1250}{1260} & \mytableformat{660}{1350}{1360} & \mytableformat{710}{1429}{1440} & \mytableformat{720}{1489}{1500} & \mytableformat{750}{1530}{1540} & \mytableformat{770}{1550}{1560} & \mytableformat{770}{1550}{1560} & \mytableformat{750}{1530}{1540} & \mytableformat{720}{1489}{1500} & \mytableformat{710}{1429}{1440} & \mytableformat{660}{1349}{1360} & \mytableformat{620}{1250}{1260} & \mytableformat{570}{1130}{1140} & \mytableformat{490}{987}{1000} & \mytableformat{390}{829}{840} & \mytableformat{290}{606}{660} & \mytableformat{100}{156}{460}  \\
	\hline 
\end{tabular}
}
\end{center}
\end{table}

It can be seen that for many parameters, we obtain lower and upper bounds that are quite close to each other. For instance, for $[n,k] = [11,5]$, there are at least $145$ and at most $150$ equivalence classes. The bounds are also almost tight for $[n,k] = [15,6]$ ($LB2 = 212$, $UB = 216$) and $[n,k] = [25,10]$ ($LB2 = 1489$, $UB = 1500$).

For even length, the best possible ratio (attained for $k = n/2$ in all cases) of $LB2$ and $UB$ is in all cases close to, but never above $0.5$.
This might be due to a further symmetry in the code parameters, besides $\sim_\mathrm{p}$, resulting in equivalent codes, but it needs to be investigated further.

Although the lower bounds on consecutive sums and intersections (UB1) are in general quite good (for odd lengths, they find roughly half of the possible equivalence classes), random sums and intersections (UB2) of codes under automorphisms appear to perform better in distinguishing generalized twisted Gabidulin codes.
Our observation is that consecutive sums of two twisted codes always have the same dimensions if the code parameters are related by certain symmetries that the random sums can often distinguish.
This should be further investigated in future work.

Another interesting observation is that $UB1$ is symmetric in $k$ for a given length (i.e., $UB1(n,k) = UB1(n,n-k)$ for all $n,k$).
When looking closer at the raw results, one can observe that the number of equivalence classes for $[n,k]$ from the consecutive sums ($s_i^\autom$) in all checked cases equals the number obtained for $[n,n-k]$ from the consecutive intersections ($t_i^\autom$).
This symmetry, together with $t_i^\autom(C)=n-s_i^\autom(C^\perp)$ (cf.~Proposition~\ref{prop:DualitySiTi}), shows that the bounds obtained from consecutive sums of the $[n,k]$ twisted codes and the duals of the $[n,n-k]$ codes (which are again $[n,k]$ codes) are the same.
This indicates, but certainly does not prove, that the duals of the tested twisted Gabidulin codes are equivalent to twisted Gabidulin codes.
To the best of our knowledge, nothing is known about the duals of twisted Gabidulin codes, except for Theorem~\ref{thm:dualTwist} (narrow sense twisted Gabidulin codes, i.e., $t=1$, $h=0$).

We computed the values of Table~\ref{tab:lower_upper_bounds_equivalence_classes_gTGab} using the computer algebra SageMath v8.1~\cite{sagemath}.
The computation of the entire table using a non-optimized implementation took less than the equivalent of $45$ days on a single thread of an Intel X5650 CPU (launch year 2010). Note, for comparison, that checking the equivalence of only two codes of length $25$ using the definition, Definition~\ref{isometries}, would involve brute-forcing through all full-rank matrices in $\mathbb{F}_3^{25 \times 25}$, which are more than $2^{989}$ many. This is far away from what is assumed to be computable on an ordinary computer today.

\section{Characterization Results for Gabidulin Codes}\label{sec:CharGabidulin}

In this section we study the $\sigma$-sequences, in order to derive characterization results of some families of codes. 
Unfortunately, for $m$ large enough almost all the rank-metric codes have the same   sequence $\{s_i^\autom(C)\}$, hence it seems unlikely that asymptotically we can get nice characterizations.
 This is explained by the following result, due to Coggia and Couvreur.

\begin{proposition}\cite[Proposition 2]{coggia2019security}
If $C$ is an $\Fmk$ code chosen at random and uniformly among all the possible $\Fmk$ codes, then for any non-negative integer $b$ and for a positive
integer $i< k$, we have
$$\mathrm{Pr}\big\{s_i^\theta(C)\leq \min\{n,(i+1)k\}-b\big\}=\mathcal O(q^{-mb}),$$
 for $m\rightarrow +\infty$.
\end{proposition}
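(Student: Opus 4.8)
The plan is to realise a uniformly random $\Fmk$ code as the row space of a uniformly random full-rank matrix $G\in\Fm^{k\times n}$ and to rephrase the event in terms of the rank of a single ``stacked'' matrix. Applying $\theta^{j}$ entrywise, set
\[
M_G:=\begin{pmatrix} G\\ \theta(G)\\ \vdots\\ \theta^{i}(G)\end{pmatrix}\in\Fm^{(i+1)k\times n}.
\]
Since $\theta^{j}(C)=\rs(\theta^{j}(G))$, the row space of $M_G$ is $\cS_i^{\theta}(C)=\sum_{j=0}^{i}\theta^{j}(C)$, so $s_i^{\theta}(C)=\rk(M_G)$, and trivially $\rk(M_G)\le s:=\min\{n,(i+1)k\}$. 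The uniform distribution on codes is exactly that of $\rs(G)$ conditioned on $\rk(G)=k$, and $\mathrm{Pr}\{\rk(G)=k\}=\prod_{j=0}^{k-1}(1-q^{m(j-n)})=1-\mathcal O(q^{-m})\ge\tfrac12$ for $m$ large; hence it suffices to prove $\mathrm{Pr}\{\rk(M_G)\le s-b\}=\mathcal O(q^{-mb})$ for $G$ uniform in $\Fm^{k\times n}$, since dividing by this probability costs only a constant.

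Next I would analyse $M_G$ columnwise. The columns $x^{(1)},\dots,x^{(n)}\in\Fm^{k}$ of $G$ are i.i.d.\ uniform, and the $c$-th column of $M_G$ equals $\Phi(x^{(c)})$, where $\Phi\colon\Fm^{k}\to\Fm^{(i+1)k}$, $\Phi(x_1,\dots,x_k)=\bigl(\theta^{a}(x_l)\bigr)_{0\le a\le i,\ 1\le l\le k}$, is an injective $\Fq$-linear map. Thus $\rk(M_G)=\dim_{\Fm}\langle\Phi(x^{(1)}),\dots,\Phi(x^{(n)})\rangle_{\Fm}$. Building this span one column at a time, put $V_c:=\langle\Phi(x^{(1)}),\dots,\Phi(x^{(c)})\rangle_{\Fm}$ and call an index $c$ \emph{stalling} if $\Phi(x^{(c)})\in V_{c-1}$. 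Among the first $s$ columns each non-stalling column raises the dimension by one, so $\dim V_s=s-\#\{\text{stalling }c\le s\}$; since $V_s\subseteq V_n$, the event $\{\rk(M_G)\le s-b\}$ forces at least $b$ stalling indices in $\{1,\dots,s\}$. A union bound over the $\binom{s}{b}$ subsets of size $b$ (their number being independent of $m$), together with iterated conditioning and the fact that every $V_{c-1}$ with $c\le s$ has $\Fm$-dimension $\le c-1\le s-1<(i+1)k$ (so is contained in a hyperplane), gives
\[
\mathrm{Pr}\{\rk(M_G)\le s-b\}\ \le\ \binom{s}{b}\Bigl(\ \sup_{H\ \text{hyperplane}}\ \mathrm{Pr}_{x\sim\mathrm{Unif}(\Fm^{k})}\{\Phi(x)\in H\}\ \Bigr)^{b}.
\]

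The heart of the argument, and what I expect to be the main obstacle, is the uniform estimate $\mathrm{Pr}_{x}\{\Phi(x)\in H\}\le q^{-(m-i)}$ for every hyperplane $H=\{y:\langle\mu;y\rangle=0\}$ with $\mu\ne 0$. The key structural point is that $\langle\mu;\Phi(x)\rangle=\sum_{l=1}^{k}Q_l(x_l)$, where $Q_l(z)=\sum_{a=0}^{i}\mu_{a,l}z^{q^{a}}$ is an $\Fq$-linearised polynomial of $q$-degree at most $i<m$; since $\mu\ne 0$, some $Q_{l_0}$ is a nonzero such polynomial, hence has at most $q^{i}$ roots in $\Fm$ (a standard fact, which also follows from Proposition~\ref{cor:rankMoore}), so the fibre $Q_{l_0}^{-1}(v)$ has size $\le q^{i}$ for every $v\in\Fm$. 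Fixing the remaining coordinates, $\mathrm{Pr}_{x_{l_0}}\{Q_{l_0}(x_{l_0})=-\sum_{l\ne l_0}Q_l(x_l)\}\le q^{i}/q^{m}$, and averaging over the other coordinates yields the bound. Substituting, $\mathrm{Pr}\{\rk(M_G)\le s-b\}\le\binom{s}{b}q^{bi}\cdot q^{-mb}=\mathcal O(q^{-mb})$, because $\binom{s}{b}q^{bi}$ does not depend on $m$; combined with the reduction in the first paragraph this proves the proposition. (The ``loss'' $q^{bi}$ relative to the naive target $q^{-mb}$ is harmless precisely because it is $m$-independent --- noticing this is what lets one avoid a much sharper, and considerably more delicate, estimate of $\mathrm{Pr}_{x}\{\Phi(x)\in W\}$ in terms of $\dim W$.)
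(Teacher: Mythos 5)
The paper does not prove this proposition: it is imported verbatim from Coggia--Couvreur \cite{coggia2019security} as a cited fact, so there is no in-paper argument to measure your proof against. Judged on its own, your argument is correct and self-contained. The reduction from the uniform code to a uniform (unconditioned) generator matrix costs only the constant factor $\Pr\{\rk G=k\}^{-1}=1+\mathcal O(q^{-m})$; the identification $s_i^\theta(C)=\rk(M_G)$ is right because $\theta^j(\rs(G))=\rs(\theta^j(G))$; the ``stalling index'' bookkeeping correctly converts a rank deficiency of $b$ among the first $s$ columns into at least $b$ stalling indices, and the iterated-conditioning union bound is valid since the event that the earlier indices of $T$ stall is measurable with respect to the first $c_b-1$ columns while $x^{(c_b)}$ is independent of them and $V_{c_b-1}$ is a proper subspace of $\Fm^{(i+1)k}$. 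The hyperplane estimate is the crux and is handled correctly: the only caveat is that writing $Q_l(z)=\sum_a\mu_{a,l}z^{q^a}$ presumes $\theta$ is the $q$-Frobenius, whereas for a general generator $\theta=\theta_u$ the exponents are $q^{au\bmod m}$ and the naive degree bound on the number of roots fails; but the kernel bound $\dim_{\Fq}\ker\bigl(\sum_{a=0}^i\mu_{a}\theta^a\bigr)\le i$ that you actually invoke does follow from Proposition~\ref{cor:rankMoore} for any generator (if the kernel contained $i+1$ elements independent over $\Fq$, the corresponding Moore matrix would have rank $i+1$, contradicting the nontrivial $\Fm$-linear relation), so the bound $q^{i-m}$ per hyperplane stands, and the $m$-independent loss $\binom{s}{b}q^{bi}$ is indeed harmless. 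This is a somewhat more probabilistic, column-by-column route than the counting/first-moment argument over low-dimensional containing subspaces used in the cited source, but it proves the same estimate.
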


However, this happens only when $m$ is big. In particular,  one can expect that codes which have no maximal dimension (usually) have good algebraic structures. Moreover, restricting to  MRD codes and the case $n=m$ has a different effect. An idea of this different behavior is explained by the following result due to Payne in 1971. The original result is formulated in a completely different way, since it was determined in the framework of hyperovals and linearized $o$-polynomials. See \cite{caullery2015classification} for more details.

\begin{theorem}\cite{payne1971complete}\label{thm:payne}
Let $C$ be an $[n,2]_{2^n}$ MRD code. Then there exists a generator $\theta$ of $\Gal(\Fm/\Fq)$ such that
$s_i^{\theta}(C)=2+i$ for every  $i \in \{0,\ldots, n-2\}$.
\end{theorem}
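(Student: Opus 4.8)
The plan is to reduce the statement to a known structural fact: that in this special regime every MRD code is already (equivalent to) a Gabidulin code. I would first record that the conclusion is automatic for Gabidulin codes. If $C'$ is equivalent to a $\theta$-Gabidulin code $\G_{2,\theta}(g)$ (with $\rk_q(g)=n$ and $\theta$ a generator of $\Gal(\F_{2^n}/\F_2)$), then by Lemma~\ref{lem:invariant} $s_i^{\theta}(C')=s_i^{\theta}(\G_{2,\theta}(g))$ for every $i$, and since
\[
\cS_i^{\theta}\bigl(\G_{2,\theta}(g)\bigr)=\sum_{j=0}^{i}\theta^{j}\!\left(\langle g,\theta(g)\rangle\right)=\langle g,\theta(g),\dots,\theta^{i+1}(g)\rangle=\G_{i+2,\theta}(g),
\]
Proposition~\ref{cor:rankMoore} yields $s_i^{\theta}(\G_{2,\theta}(g))=\min\{i+2,n\}=i+2$ for all $0\le i\le n-2$. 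So it will be enough to show that every $[n,2]_{2^n}$ MRD code is equivalent to a $\theta$-Gabidulin code for some generator $\theta$.

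I would dispose of $n\le 3$ by hand (for $n=3$ one only needs $s_1^{\theta}(C)=3$, i.e.\ $\dim(C\cap\theta(C))=1$; distinct planes of $\F_{2^3}^{3}$ meet in a line, and $C=\theta(C)$ is impossible since a $\theta$-invariant code has an $\F_2$-basis whereas an MRD code with $d=n-1\ge 2$ has none), and assume $n\ge 4$ from now on. Since a positive-dimensional MRD code contains a full-rank codeword $g$ (a standard property, using $n\le m$), I would act by a rank isometry $(1,A,\mathrm{id})$ with $A\in\GL_n(\F_2)$ as in Theorem~\ref{isometries} to put $g=(\gamma_1,\dots,\gamma_n)$ for a fixed $\F_2$-basis $(\gamma_i)$ of $\F_{2^n}$, and then identify $v\in\F_{2^n}^{n}$ with the unique $\F_2$-linear map $\varphi_v\colon\F_{2^n}\to\F_{2^n}$ sending $\gamma_i$ to $v_i$. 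Under this identification $C$ becomes a $2$-dimensional $\F_{2^n}$-subspace $\langle\,\mathrm{id},\psi\,\rangle$ of $\cL_n(\F_{2^n})$, and the MRD property becomes: $f_0\,\mathrm{id}+f_1\psi$ has $\F_2$-kernel of dimension at most $1$ for all $(f_0,f_1)\neq(0,0)$. Via the classical dictionary between MRD codes in this regime and hyperovals of $\mathrm{PG}(2,2^n)$ (see \cite{caullery2015classification,sheekey2019mrd}), such a space corresponds to an $o$-polynomial, and — thanks to the strength of the MRD condition when $q=2$, $m=n$ and $k=2$ — this $o$-polynomial is additive, i.e.\ a linearized polynomial.

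At that point Payne's classification \cite{payne1971complete} finishes the argument: every additive $o$-polynomial over $\F_{2^n}$ is a monomial $x^{2^{e}}$ with $\gcd(e,n)=1$. Translating back, $\psi$ is $\F_{2^n}$-proportional to $x^{2^{e}}$, so $C$ is equivalent to $\langle g,\theta(g)\rangle=\G_{2,\theta}(g)$ with $\theta\colon x\mapsto x^{2^{e}}$ a generator of $\Gal(\F_{2^n}/\F_2)$, and the first paragraph then gives the claim.

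The main obstacle will be the translation in the second paragraph, namely proving that the $o$-polynomial attached to an $[n,2]_{2^n}$ MRD code is forced to be linearized; this is exactly where the hypotheses $q=2$, $m=n$ and $k=2$ are indispensable (for $q>2$ the twisted Gabidulin codes violate the conclusion, and for $k>2$ the corresponding minimality of the sum sequence fails), and it is the only step that goes beyond routine bookkeeping and the cited theorem of Payne. Additional care will be needed to make the correspondence between vector MRD codes, spaces of linearized polynomials and hyperovals completely explicit and to settle the small values of $n$.
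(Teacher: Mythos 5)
The paper does not actually prove Theorem~\ref{thm:payne}; it quotes it from Payne with the remark that the original statement is phrased in terms of hyperovals and linearized $o$-polynomials, and your argument --- reduce to showing every $[n,2]_{2^n}$ MRD code is Gabidulin, identify $C$ with $\langle \mathrm{id},\psi\rangle\subseteq\cL_n(\F_{2^n})$ via a full-rank codeword, and invoke Payne's classification of additive $o$-polynomials as the monomials $x^{2^e}$ with $\gcd(e,n)=1$ --- is exactly that intended translation, so your route matches and is sound. One correction to your self-assessment: the step you flag as the main obstacle (``the $o$-polynomial is forced to be linearized'') is not where any work lies, since additivity is automatic --- $\psi$ is an $\F_2$-linear map by construction. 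What the MRD condition supplies is the $o$-polynomial property itself: kernel of $f_0\,\mathrm{id}+f_1\psi$ of dimension at most $1$ for all $(f_0,f_1)\neq(0,0)$ is equivalent to injectivity of $x\mapsto\psi(x)/x$ on $\F_{2^n}^{*}$, after which one replaces $\psi$ by $\psi+c_0\,\mathrm{id}$ for the unique slope $c_0$ not attained (making it bijective) and rescales before applying Payne. That bookkeeping is routine, so no genuine gap remains.
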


Reformulated, Theorem \ref{thm:payne} states that  all the $[n,2]_{2^n}$ MRD codes are Gabidulin codes.

Now we are going to use the sequences for characterizing Gabidulin codes. The following result follows from \cite[Lemma 3.5]{giuzzi2019identifiers}, but we are going to include a proof for completeness, which uses the tools developed in this work. Note that for MRD codes the same result was shown in \cite[Proposition 4.6]{ho16}.

\begin{lemma}\label{lem:k-1intersect}
Let $0<k<n\leq m$ be integers,  $C$ be an $\Fmk$ code, and $\theta$ be a generator of $\Gal(\Fm/\Fq)$. If $s_1^{\theta}(C)=k+1$, then there exists $g \in \Fm^n$, $0\leq t \leq k$ such that
$$C:=C_1 \oplus \G_{t,\theta}(g),$$
where $C_1$ is an $[n,k-t]_{q^m}$ code which has a basis  of rank 1 vectors, and $\rk_q(g)> t$.
\end{lemma}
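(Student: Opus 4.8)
The plan is to analyze the structure of $C$ using the hypothesis $s_1^\theta(C)=k+1$, which by part 3 of Proposition~\ref{prop:propertiesSi} is equivalent to saying $\cS_1^\theta(C)=C+\theta(C)$ has a basis of vectors in $\Fq^n$ — equivalently, $\theta(\cS_1^\theta(C))=\cS_1^\theta(C)$. Write $S:=\cS_1^\theta(C)$, so $\dim_{\Fm}S=k+1$ and $\theta(S)=S$. Since $C$ and $\theta(C)$ are both $k$-dimensional subspaces of the $(k+1)$-dimensional space $S$, their intersection $C\cap\theta(C)=\T_1^\theta(C)$ has dimension $k-1$ (which also matches part~1 of Proposition~\ref{prop:SiTi}: $t_1^\theta(C)=2k-s_1^\theta(C)=k-1$). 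First I would work inside the fixed space $S$: because $\theta(S)=S$, by \cite[Lemma~4.5]{ho16} (as used in the proof of Proposition~\ref{prop:propertiesSi}), $S$ has an $\Fq$-basis, hence after applying a suitable $A\in\GL_n(\Fq)$ we may assume $S=\langle e_1,\dots,e_{k+1}\rangle_{\Fm}$ where $e_i$ are standard basis vectors; equivalently we reduce to the case $n=k+1$ and $\theta$ acts coordinatewise on $S\cong\Fm^{k+1}$.

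Next I would iterate the intersection. Set $C^{(0)}:=C$ and $C^{(j)}:=\T_j^\theta(C)=\bigcap_{i=0}^j\theta^i(C)$. By part~7 of Proposition~\ref{prop:propertiesTi} together with the monotonicity in part~6 (the decrements $\Lambda_j^\theta$ are non-increasing and $\Lambda_0^\theta=\Delta_0^\theta=s_1^\theta(C)-k=1$), each step drops the dimension by at most $1$, and once it stabilizes it stays constant. Let $t:=k-t_k^\theta(C)$, i.e. $C_1:=\T_k^\theta(C)$ is the stable intersection, of dimension $k-t$, and by part~3 of Proposition~\ref{prop:propertiesTi} it has a basis of vectors in $\Fq^n$ — these are the promised rank-$1$ vectors. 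It remains to produce $g$ with $\rk_q(g)>t$ and $C=C_1\oplus\G_{t,\theta}(g)$. For this I would argue, working in the reduced coordinates, that the "moving part" of $C$ behaves like a Gabidulin code: choosing any $v\in C\setminus C_1$, the vectors $v,\theta(v),\dots,\theta^{t-1}(v)$ together with $C_1$ span $C$ — this is where the fact that each $\Lambda_j^\theta\in\{0,1\}$ for $j<t$ and the structure of $S$ is used. Then one sets $g$ to be (a suitable representative of) this cyclic generator, checks $\G_{t,\theta}(g)\cap C_1=\{0\}$ by a dimension count, and verifies $\rk_q(g)>t$: if $\rk_q(g)\le t$ then by Proposition~\ref{cor:rankMoore} the vectors $g,\theta(g),\dots,\theta^{t-1}(g)$ would not be $\Fm$-linearly independent, contradicting $\dim\G_{t,\theta}(g)=t$.

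The main obstacle I anticipate is the step of actually extracting the cyclic (Gabidulin-type) summand $\G_{t,\theta}(g)$ and showing the sum is direct: the hypothesis only gives $s_1^\theta(C)=k+1$, i.e. information about one step of the sum sequence, and I must leverage this single condition — via the stability of $S$ under $\theta$ and the chain of intersections — to control the whole orbit $\{\theta^i(C)\}$ and identify a single vector $g$ that cyclically generates the complement of $C_1$ inside $C$. The cleanest route is probably to pass to the quotient $C/C_1$ (an $\Fm$-vector space of dimension $t$ on which $\theta$ still acts, since $\theta(C_1)=C_1$), show this quotient together with the induced $\theta$-action is a "Gabidulin structure" — namely that $\theta$ acts on it with a cyclic vector and the relevant Moore-type independence — and then lift a cyclic generator back to $C$. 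One must be slightly careful that $\theta$ is only a semilinear map on these $\Fm$-spaces, but since we are tracking $\Fm$-dimensions of $\theta^i(C)$ and their intersections, the semilinearity does not affect the dimension counts, exactly as in the proofs of Propositions~\ref{prop:propertiesSi} and~\ref{prop:propertiesTi}.
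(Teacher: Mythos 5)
Your opening step rests on a misreading of part~3 of Proposition~\ref{prop:propertiesSi}: that statement says $s_i^\theta(C)=s_{i+1}^\theta(C)$ if and only if $\cS_i^\theta(C)$ has an $\Fq$-basis. The hypothesis $s_1^\theta(C)=k+1$ is the assertion $s_1^\theta(C)=s_0^\theta(C)+1$; it says nothing about $s_1^\theta(C)$ versus $s_2^\theta(C)$, and in general $\cS_1^\theta(C)$ is \emph{not} $\theta$-invariant. (Take $C=\G_{k,\theta}(g)$ with $\rk_q(g)=n>k+1$: then $\cS_1^\theta(C)=\G_{k+1,\theta}(g)$ and $\theta(\cS_1^\theta(C))\neq\cS_1^\theta(C)$, since $s_2^\theta(C)=k+2$.) So the reduction to ``$S$ has an $\Fq$-basis, hence $n=k+1$'' collapses, and with it the ``structure of $S$'' you invoke later. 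A second, independent problem is the extraction of the cyclic summand. For an arbitrary $v\in C\setminus C_1$ the vectors $\theta(v),\dots,\theta^{t-1}(v)$ need not lie in $C$ at all (they lie in $\theta(C),\theta^2(C),\dots$), so ``choosing any $v$'' cannot work; and the proposed fix via the quotient $C/C_1$ fails because $\theta(C_1)=C_1$ is not enough for $\theta$ to induce a map on $C/C_1$ --- one would need $\theta(C)\subseteq C$, which is false for the codes in question. Producing the vector $g$ is precisely the content of the lemma, and your outline leaves it open.

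The paper closes this gap by induction on $k$. After splitting off $C_1$ (the span of the rank-one codewords, which absorbs all the sum/intersection bookkeeping, so one may assume $d(C)>1$ and $C_1=\{0\}$), it sets $D:=C\cap\theta(C)$, of dimension $k-1$, shows $t_2^\theta(C)=k-2$ (otherwise $D$ would have an $\Fq$-basis by part~3 of Proposition~\ref{prop:propertiesTi}, contradicting $d(C)>1$), applies the inductive hypothesis to get $D=\G_{k-1,\theta}(h)$ with $\rk_q(h)\geq k$, and then uses $h\in C\cap\theta(C)$ to conclude $\theta^{-1}(h)\in C$, whence $C\supseteq\langle\theta^{-1}(h),h,\dots,\theta^{k-2}(h)\rangle$ and $C=\G_{k,\theta}(\theta^{-1}(h))$ by Proposition~\ref{cor:rankMoore}; finally $\rk_q(g)>k$ is forced because otherwise $C$ would contain rank-one words. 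This inductive descent through the intersection chain is the mechanism that actually identifies the cyclic generator, and it is the piece your argument is missing.
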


\begin{proof}
 Let $U_1=\{v \in \Fm^n \mid \rk_q(v)=1\}$. Then we can write $C=C_1\oplus C'$, where
$C_1=\langle C \cap U_1 \rangle$. Hence $C'\cap U_1 =\emptyset$. In particular, if $\dim(C_1)=k-t$, then $C'$ is an $[n,t,d]_{q^m}$ code with $d>1$. Moreover, $C + \theta(C)=C_1 \oplus ( C' + \theta(C'))$. Therefore, we can assume $C_1=\{0\}$, and that the minimum distance of $C$ is greater than $1$, for the rest of the proof. 

We proceed by induction on $k$. For $k=1$ it is trivially true. Suppose now that $k\geq 2$ and that the statement is true for $k-1$; we want to prove the lemma for an $\Fmk$ code $C$. By hypothesis, $s_1^\theta(C)=k+1$, hence $t_1^\theta(C)=k-1$, by Proposition \ref{prop:SiTi}. Consider the code $D:=C\cap\theta(C)=\T_1^\theta(C)$. This code has dimension $k-1$. Moreover, $D\cap U_1=\emptyset$, and 
$$k-1\geq t_2^\theta(C)=t_1^\theta(C) -\Lambda_1^\theta(C) \geq t_1^\theta(C)-\Lambda_0^\theta(C)=k-2.$$
However, observe that $t_1^\theta(D)=t_2^\theta(C)$. Therefore,  if $t_2^\theta(C)=k-1$, then $t_1^\theta(D)=t_0^\theta(D)$, which implies, by part 3 of Proposition \ref{prop:propertiesTi}, that $D$ has a basis of elements in $\Fq^n$, or, equivalently, that $D=\langle D\cap  U_1 \rangle$. Since $k-1>0$, $D\subseteq C$ and $C\cap U_1=\emptyset$, which is a contradiction. Hence, we necessarily have that $t_1^\theta(D)=t_2^\theta(C)=k-2$. Thus, by inductive hypothesis, $D=\G_{k-1}(h)$, for some $h\in \Fm^n$ with $\rk_q(h)\geq k$. Moreover, $\theta^{-1}(h) \in \theta^{-1}(D)=\theta^{-1}(C)\cap C \subseteq C$. Therefore, 
$C\supseteq \langle \theta^{-1}(h),h,\ldots, \theta^{k-2}(h)\rangle$. Since $\rk_q(h)\geq k$, by Proposition \ref{cor:rankMoore} we get $C=\G_{k,\theta}(g)$, where $g:=\theta^{-1}(h)$. We only need to show that $\rk_q(g)>k$. Suppose $\rk_q(g)=k$, and let $\{f_1,\ldots, f_k\}$ be a basis for $\rsu_q(g)$. Then, there exists $A\in \GL_n(q)$ such that $gA=(f_1,\ldots, f_k, 0, \ldots, 0)$. Moreover, the code $C\cdot A=\G_{k,\theta}(gA) \sim C$ and has the same parameters. It is easy to see that this code has generator matrix $(I_k \mid 0)$, since the last $n-k$ entries of the code $C\cdot A$ are all zeros. This implies that $C\cdot A$ has (a basis of) codewords of rank $1$, which yields a contradiction. Hence, $\rk_q(g)>k$ and this concludes the proof.
\end{proof}

\begin{remark} 
Observe that in Lemma \ref{lem:k-1intersect}, the notation $\G_{k,\theta}(g)$ is used to indicate the code
$\{ (f(g_1),\ldots, f(g_n)) \mid f \in \G_{k,\theta}\}$,
which is not necessarily a $\theta$-Gabidulin code, since $\rk_q(g)$ can be smaller than $n$. Moreover, we have that
$C_1=\{0\}$ if and only if the minimum distance of $C$ is strictly greater than $1$.
\end{remark}

From Lemma \ref{lem:k-1intersect} we can derive a new criterion for characterizing a Gabidulin code. In order to put all the criteria together, we state a very general characterization theorem which includes also results from \cite{ho16} and \cite{ne18sys}.

\begin{theorem}[Characterization of $\theta$-Gabidulin codes]\label{thm:characterizationGab}
Let $C\subseteq \Fm^n$ be a linear code of dimension $k$ and let $\theta$ be a generator of $\Gal(\Fm/\Fq)$. The following are equivalent:
\begin{enumerate}
\item $C$ is a $\theta$-Gabidulin code of dimension $k$. 
\item $C^{\perp}$ is a $\theta$-Gabidulin code of dimension $n-k+1$. 
\item $C$ is MRD  and $s_1^\theta(C)=k+1$. %
\item $C$ is MRD  and $t_1^\theta(C)=k-1$.

\item $(s_i^\theta(C))_{i=0}^{n-k}=(k,k+1,\ldots,n)$ and $d(C)>1$.
\item $(t_i^\theta(C))_{i=0}^{k}=(k,k-1,\ldots,0)$ and $d(C^\perp)>1$. 

\item $s_1^\theta(C)=k+1$ and $s^\theta_{n-k}(C)=n$ and $d(C)>1$.
\item $t_1^\theta(C)=k-1$ and $t^\theta_{k}(C)=0$ and $d(C^\perp)>1$.

\item $(\Delta_i^\theta(C))_{i=0}^{n-k}=(1,1,\ldots, 1, 0)$ and $d(C)>1$.
\item $(\Lambda_i^\theta(C))_{i=0}^{k}=(1,1,\ldots, 1, 0)$ and $d(C^\perp)>1$.

\item  $\Delta_0^\theta(C)=\Delta_{n-k-1}^\theta(C)=1$ and $d(C)>1$. 
\item  $\Lambda_0^\theta(C)=\Lambda_{k-1}^\theta(C)=1$ and $d(C^\perp)>1$. 

\item $C=\rs(I_k\mid X)$, where: 
\begin{itemize}
\item[\emph{(a)}] $\rk(\theta(X)-X)=1$,
\item[\emph{(b)}]  the $q$-rank  of the first row of $\theta(X)-X$ is $n-k$,
\item[\emph{(c)}]  the  $q$-rank  of the first column  of  $\theta(X)-X$ is $k$.
\end{itemize}
\end{enumerate}
\end{theorem}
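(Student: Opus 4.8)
The plan is to use the duality relations of Proposition~\ref{prop:DualitySiTi} to reduce the thirteen conditions to a shorter ``primal'' list, prove the equivalence of that list by a single cycle of implications, and deal with the generator-matrix condition~(13) separately. (Throughout I assume $0<k<n\le m$, so that $d(C)=n-k+1>1$ for MRD codes and Lemma~\ref{lem:k-1intersect} applies.) First I would observe that six of the conditions are dual versions of six others: since $t_i^\theta(C)=n-s_i^\theta(C^\perp)$ and $\Lambda_i^\theta(C)=\Delta_i^\theta(C^\perp)$ (Proposition~\ref{prop:DualitySiTi}), since $C$ is MRD iff $C^\perp$ is, since $\dim C^\perp=n-k$, and since $C$ is a $\theta$-Gabidulin code iff $C^\perp$ is (Proposition~\ref{prop:dualGab}), a direct substitution shows that, for $C$: $(2)$ is condition~$(1)$ for $C^\perp$, $(6)$ is $(5)$ for $C^\perp$, $(8)$ is $(7)$ for $C^\perp$, $(10)$ is $(9)$ for $C^\perp$, and $(12)$ is $(11)$ for $C^\perp$; moreover $(3)\Leftrightarrow(4)$ for $C$ itself, via $t_1^\theta(C)=2k-s_1^\theta(C)$ (Proposition~\ref{prop:SiTi}). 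Thus it suffices to prove that $(1),(3),(5),(7),(9),(11),(13)$ are equivalent for an arbitrary $[n,k]_{q^m}$ code.

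For this primal block I would run the cycle $(1)\Rightarrow(5)\Rightarrow(7)\Rightarrow(9)\Rightarrow(11)\Rightarrow(1)$ together with $(1)\Leftrightarrow(3)$. Here $(1)\Rightarrow(5)$ is immediate from Propositions~\ref{prop:GabisMRD} and~\ref{prop:GabidulinSeq} ($s_i^\theta(\G_{k,\theta}(g))=\min\{k+i,n\}$ and $d>1$). The steps $(5)\Rightarrow(7)\Rightarrow(9)\Rightarrow(11)$ and $(11)\Rightarrow(9)$ are bookkeeping with Proposition~\ref{prop:propertiesSi}: from $s_i^\theta(C)=k+\sum_{j=0}^{i-1}\Delta_j^\theta(C)$, $\Delta_{n-k}^\theta(C)=0$ and $k\ge\Delta_0^\theta(C)\ge\Delta_1^\theta(C)\ge\cdots\ge0$, the data $s_1^\theta(C)=k+1$ and $s_{n-k}^\theta(C)=n$ (equivalently $\Delta_0^\theta(C)=\Delta_{n-k-1}^\theta(C)=1$) force $\Delta_0^\theta(C)=\cdots=\Delta_{n-k-1}^\theta(C)=1$. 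For $(11)\Rightarrow(1)$: condition~(11) gives $s_1^\theta(C)=k+1$, $s_{n-k}^\theta(C)=n$ and $d(C)>1$; Lemma~\ref{lem:k-1intersect} yields $C=C_1\oplus\G_{t,\theta}(g)$ with $\rk_q(g)>t$, and $d(C)>1$ eliminates the rank-one summand $C_1$, so $C=\G_{k,\theta}(g)$ with $\rk_q(g)>k$; finally $s_{n-k}^\theta(C)=\dim_{\Fm}\langle g,\theta(g),\dots,\theta^{n-1}(g)\rangle=\min\{n,\rk_q(g)\}=n$ (Proposition~\ref{cor:rankMoore}), whence $\rk_q(g)=n$ and $C$ is a $\theta$-Gabidulin code. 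The argument $(3)\Rightarrow(1)$ is identical, except that $\rk_q(g)=n$ is now forced by the MRD hypothesis: for $\rk_q(g)\ge k$ one checks $d(\G_{k,\theta}(g))=\rk_q(g)-k+1$ by minimising $\rk(f(g))=\rk_q(g)-\dim_{\Fq}(\rsu_q(g)\cap\ker f)$ over nonzero $f\in\G_{k,\theta}$, the minimum being attained at the $q$-annihilator of a $(k-1)$-dimensional $\Fq$-subspace of $\rsu_q(g)$; and $(1)\Rightarrow(3)$ is Proposition~\ref{prop:GabisMRD} plus $s_1^\theta(C)=k+1$.

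It remains to insert condition~(13), which I would do via $(1)\Leftrightarrow(13)$. The enabling fact is that an MRD code has every $k$-subset of coordinates as an information set (if $k$ columns of a generator matrix were dependent, there would be a nonzero codeword of Hamming, hence rank, weight $\le n-k<d$), so a $\theta$-Gabidulin code can be written $C=\rs(I_k\mid X)$; since $\theta$ fixes $\Fq$ we have $\theta(C)=\rs(I_k\mid\theta(X))$, and row reduction gives $s_i^\theta(C)=k+\rk\begin{pmatrix}\theta(X)-X\\ \vdots\\ \theta^i(X)-X\end{pmatrix}$, where $\theta^j(X)-X=\sum_{l=0}^{j-1}\theta^l(u)\,\theta^l(v)^\top$ once $\theta(X)-X=uv^\top$ has rank one. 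For $(1)\Rightarrow(13)$: $s_1^\theta(C)=k+1$ gives $\rk(\theta(X)-X)=1$, i.e.\ (a); $s_{n-k}^\theta(C)=n$, together with the fact that all rows of the matrices $\theta^j(X)-X$ lie in $\langle v,\theta(v),\dots,\theta^{n-k-1}(v)\rangle$, forces $\rk_q(v)=n-k$, and running the same argument on the $\theta$-Gabidulin code $C^\perp=\rs(-X^\top\mid I_{n-k})$ forces $\rk_q(u)=k$; since a vector of full $q$-rank has no zero coordinate, $u_1,v_1\neq0$, so the first row $u_1v^\top$ of $\theta(X)-X$ has $q$-rank $\rk_q(v)=n-k$ and its first column $v_1u$ has $q$-rank $\rk_q(u)=k$, i.e.\ (b) and (c). For $(13)\Rightarrow(1)$: (a) gives $s_1^\theta(C)=k+1$; from (b), $u_1\neq0$ and $\rk_q(v)=n-k$, and differencing the first rows of $\theta(X)-X,\dots,\theta^{n-k}(X)-X$ shows they span $\langle v,\theta(v),\dots,\theta^{n-k-1}(v)\rangle$, so $s_{n-k}^\theta(C)=n$; from (c), $\rk_q(u)=k$, so $u_1,\dots,u_k$ are $\Fq$-independent, which excludes rank-one codewords — a codeword $\lambda w$ with $w\in\Fq^n$ satisfies $\theta(\lambda w)=\tfrac{\theta(\lambda)}{\lambda}\lambda w$, forcing $(w_1,\dots,w_k)(\theta(X)-X)=0$, i.e.\ $\sum_{i=1}^k w_iu_i=0$, hence $w=0$ — whence $d(C)>1$ and condition~(7) holds.

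Finally, the dual conditions $(2),(6),(8),(10),(12)$ follow by applying the primal equivalences (now established for all codes) to $C^\perp$ and translating back through Propositions~\ref{prop:DualitySiTi} and~\ref{prop:dualGab}; this is not circular, since the primal block is proved uniformly in the code. I expect the main obstacle to be condition~(13): passing from the abstract equality $C=\G_{k,\theta}(g)$ to the systematic form, pinning down the factorisation $\theta(X)-X=uv^\top$, and matching $\rk_q(u)=k$, $\rk_q(v)=n-k$ to the minimum-distance and $s_i^\theta$ data all rely on the information-set property of MRD codes and on the fact that a full-$q$-rank vector has no zero entry. A small but recurring point is that ``$d(C)>1$'' (resp.\ ``$d(C^\perp)>1$'') is exactly the hypothesis needed to discard the rank-one summand $C_1$ produced by Lemma~\ref{lem:k-1intersect}.
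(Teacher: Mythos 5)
Your proof is correct, and its logical skeleton matches the paper's: the even-numbered conditions are folded into the odd-numbered ones via Proposition~\ref{prop:DualitySiTi} and Proposition~\ref{prop:dualGab}, the chain among $(5),(7),(9),(11)$ is the same $\Delta$-bookkeeping through Proposition~\ref{prop:propertiesSi}, and the crucial step back to $(1)$ is Lemma~\ref{lem:k-1intersect} together with the observation that $d(C)>1$ eliminates the rank-one summand while $s_{n-k}^\theta(C)=n$ (resp.\ the MRD hypothesis) forces $\rk_q(g)=n$. Where you genuinely depart from the paper is in the two equivalences it outsources to the literature: $(1)\Leftrightarrow(3)$ is cited from \cite{ho16} and $(1)\Leftrightarrow(13)$ from \cite{ne18sys}, whereas you prove both from scratch. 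For $(3)\Rightarrow(1)$ you replace the citation by the identity $d(\G_{k,\theta}(g))=\rk_q(g)-k+1$, which is legitimate because the $\theta$-annihilator of a $(k-1)$-dimensional $\Fq$-subspace of $\rsu_q(g)$ has $\theta$-degree $k-1$ and hence lies in $\G_{k,\theta}$. For $(1)\Leftrightarrow(13)$ you use that an MRD code is systematic on any $k$ coordinates and that $s_i^\theta(C)=k+\rk\bigl(\theta(X)-X;\,\dots;\,\theta^i(X)-X\bigr)$, then read conditions (a)--(c) off the rank-one factorization $\theta(X)-X=uv^\top$ via $\rk_q(u)=k$ and $\rk_q(v)=n-k$; the converse correctly routes through condition $(7)$, with the rank-one-codeword exclusion coming from the $\Fq$-independence of $u_1,\dots,u_k$. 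The trade-off is clear: your version is self-contained at the cost of two extra technical arguments, while the paper's is shorter but leans on external results. Two minor remarks: the dimension in condition $(2)$ should read $n-k$, a typo you silently correct; and your reduction of $(5)\Leftrightarrow(6)$ etc.\ via Proposition~\ref{prop:DualitySiTi} is more careful than the paper's nominal appeal to Proposition~\ref{prop:SiTi}, which only covers the $i=1$ case.
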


\begin{proof}\quad
\begin{enumerate}
\item[1.$\iff$2.] This is well known and can be found e.g. in \cite{ga85a}.
\item[1.$\iff$3.]
This was shown in \cite[Theorem 4.8]{ho16}.
\item[1.$\iff$5.]
That 1. implies 5. was shown in part 1 of Proposition \ref{prop:GabidulinSeq}.
It remains to show the other direction. 
Since $s_1^\theta(C)=1$, then, by Lemma \ref{lem:k-1intersect} and the fact that $d>1$, we have that $C=\G_{k,\theta}(g)$, where $\rk_q(g)>k$. Moreover, $\cS_{n-k}^\theta(C)=\G_{n,\theta}(g)$. By hypothesis, we also have that $n=s_{n-k}^\theta(C)=\dim(\G_{n,\theta}(g))$. This implies, by Proposition \ref{cor:rankMoore}, that $\dim(\G_{n,\theta}(g))=\rk_q(g)$. Therefore, $C$ is a $\theta$-Gabidulin code.

\item[5.$\iff$7.]
Moreover, if $s_1^\theta(C)=k+1$, and $s_{n-k}^\theta(C)=n$, then $\Delta_0^{\theta}(C)=s_1^\theta(C)-k=1$, and by part 7 of Proposition \ref{prop:propertiesSi}, we have 
$$n=s_{n-k}^\theta(C)=k+\sum_{i=0}^{n-k-1}\Delta_i^\theta(C) \leq k+\sum_{i=0}^{n-k-1}\Delta_0^\theta(C)=k+(n-k).$$
Therefore, $\Delta_i^\theta(C)=1$, for every $i=0,\ldots, n-k-1$.
\item[5.$\iff$9.]
The equivalence follows from the definition of $\Delta_i^\theta$ and part 7 of Proposition \ref{prop:propertiesSi}.
\item[9.$\iff$11.]
If $\Delta_{n-k-1}^\theta(C)=\Delta_0^\theta(C)=1$, by Proposition \ref{prop:propertiesSi}, we have $1=\Delta_0^\theta(C) \geq \ldots \geq \Delta_{n-k-1}^\theta(C)=1$, hence we have all equalities.

\item[4.,6.,8.,10.,12.]
It is easy to see that 3.$\iff$4., 5.$\iff$6.,7.$\iff$8.,9.$\iff$10. and 11.$\iff$12. by Proposition \ref{prop:SiTi}.

\item[1.$\iff$13.]
This was shown in \cite[Theorem 11]{ne18sys}.  \qedhere
\end{enumerate}
\end{proof}

\begin{remark}
As explained in Section \ref{sec:invariants}, we can efficiently compute the sum or intersection sequences of any given code. For the characterization result above, however, we need to check if the code (or its dual) has minimum distance greater than one. Although determining the exact minimum rank distance of a code is a computationally heavy task, checking if the minimum distance is one or greater can be done very efficiently by determining the $\Fq$-(subfield) subcode of the code. This method of finding rank one codewords is explained in detail in \cite{horlemann2018extension}.
\end{remark}

After this characterization result, we conclude the section by proving that Gabidulin's new codes of Definitions \ref{def:GabNewIcodes} and \ref{def:GabNewIIcodes} are actually the classical Gabidulin codes, whenever they are MRD. 

\begin{theorem}
Let $1\leq k \leq m$ be integers and  $\theta$ be a generator of $\Gal(\Fm/\Fq)$. Let, moreover,  $\eta \in \Fm$ with $\Norm(\eta) \neq (-1)^{km}$, and $g\in \Fm^n$ be such that $\rk_q(g)=n$.
\begin{enumerate}
\item If  $m-k > k$,  then the code $C:=\NewG{k,\theta}{\eta,I}(g)$ is a $\theta$-Gabidulin code.
\item If $m-k\leq k$, then the code $D:=\NewG{k,\theta}{\eta,II}(g)$ is a $\theta$-Gabidulin code.
\end{enumerate}
\end{theorem}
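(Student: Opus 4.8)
The plan is to deduce both claims from the characterization of $\theta$-Gabidulin codes in Theorem~\ref{thm:characterizationGab}. Under the hypotheses of the statement, Proposition~\ref{prop:newGabsareMRD} tells us that $C$ (in part~1) and $D$ (in part~2) are $\Fmk$ MRD codes; in particular $\dim C=\dim D=k$, and unless $k=n$ the minimum distance equals $n-k+1\ge 2$. If $k=n$ the code is all of $\Fm^n=\G_{n,\theta}(g)$ and there is nothing to prove, so we assume $k<n$ from now on. By the equivalence $(1)\Leftrightarrow(3)$ of Theorem~\ref{thm:characterizationGab} it then suffices to prove $s_1^\theta(C)=k+1$, resp.\ $s_1^\theta(D)=k+1$. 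Since $s_1^\theta\ge s_0^\theta=k$ always and we shall check $s_1^\theta\le k+1$ in each case, the only value to exclude is $s_1^\theta=k$; but that would mean $\theta(C)=C$, so by part~3 of Proposition~\ref{prop:propertiesSi} the code $C=\cS_0^\theta(C)$ would have a basis of vectors in $\Fq^n$, forcing $d(C)=1$, a contradiction. So everything reduces to the inequality $s_1^\theta\le k+1$.

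The key device is to set $g':=g+\eta\,\theta^k(g)$ and observe that $\theta^i(g')=\theta^i(g)+\theta^i(\eta)\,\theta^{i+k}(g)$ for every $i$. Hence the generating family of $\NewG{k,\theta}{\eta,I}$ evaluated at $g$ is exactly $\{g',\theta(g'),\dots,\theta^{k-1}(g')\}$, so $C=\langle g',\theta(g'),\dots,\theta^{k-1}(g')\rangle$ and therefore $C+\theta(C)=\langle g',\theta(g'),\dots,\theta^{k}(g')\rangle$ is spanned by $k+1$ vectors; thus $s_1^\theta(C)\le k+1$, and Theorem~\ref{thm:characterizationGab} identifies $C$ as a $\theta$-Gabidulin code. (Equivalently: $C$ is the evaluation code of $\G_{k,\theta}$ at $g'$, and MRD-ness forces $\rk_q(g')=n$ by Proposition~\ref{cor:rankMoore}, so $C$ is a $\theta$-Gabidulin code outright.)

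For part~2 the same substitution gives $D=\langle g',\dots,\theta^{m-k-1}(g'),\theta^{m-k}(g),\dots,\theta^{k-1}(g)\rangle$ and $\theta(D)=\langle\theta(g'),\dots,\theta^{m-k}(g'),\theta^{m-k+1}(g),\dots,\theta^{k}(g)\rangle$. If $m-k=k$ the "pure" block is empty and $D$ has the same shape as $C$ in part~1, so the previous argument applies; assume then $m-k<k$, so that $\theta^{m-k}(g)\in D$. Using $\theta^m=\mathrm{id}$ we get $\theta^{m-k}(g')=\theta^{m-k}(g)+\theta^{m-k}(\eta)\,g$, whence $g\in D+\langle\theta^{m-k}(g')\rangle$, while $g'=g+\eta\,\theta^k(g)\in D$ yields $\theta^k(g)=\eta^{-1}(g'-g)\in D+\langle g\rangle$. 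Every remaining generator of $\theta(D)$ lies in $D$ outright, so $\theta(D)\subseteq D+\langle g\rangle$ and hence $s_1^\theta(D)=\dim(D+\langle g\rangle)\le k+1$. As before this is upgraded to $s_1^\theta(D)=k+1$ using $d(D)>1$, and Theorem~\ref{thm:characterizationGab} finishes the proof.

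The only delicate step is the index bookkeeping in part~2, in particular the wrap-around identity $\theta^{m-k}(g')=\theta^{m-k}(g)+\theta^{m-k}(\eta)\,g$ coming from $\theta^m=\mathrm{id}$, together with the resulting inclusion $\theta(D)\subseteq D+\langle g\rangle$; the boundary cases $k=n$ and $m-k=k$ are trivial. Beyond this, no genuine obstacle is expected: once $s_1^\theta\le k+1$ is in hand, Theorem~\ref{thm:characterizationGab} combined with the MRD property from Proposition~\ref{prop:newGabsareMRD} does all the work.
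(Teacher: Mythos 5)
Your proposal is correct and follows essentially the same route as the paper: in both cases one reduces to showing $s_1^\theta=k+1$ (via the observation that all generators of $\theta(C)$, resp.\ $\theta(D)$, lie in the code plus a single extra vector, handling the wrap-around term $\theta^{m-k}(g)+\theta^{m-k}(\eta)g$ exactly as the paper does) and then invokes part 3 of Theorem~\ref{thm:characterizationGab} together with the MRD property from Proposition~\ref{prop:newGabsareMRD}. The only quibble is the parenthetical shortcut in part 1: Proposition~\ref{cor:rankMoore} alone only yields $\rk_q(g')\geq k$, and upgrading this to $\rk_q(g')=n$ requires the shortening/Singleton argument as in Lemma~\ref{lem:k-1intersect} — but since this is offered only as an alternative to your main (valid) argument through the characterization theorem, it does not affect correctness.
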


\begin{proof}
\begin{enumerate}
\item It is immediate to observe that $s_1^\theta(C) = k+1$. Then, since by Proposition \ref{prop:newGabsareMRD} $C$ is MRD, we conclude using part 3 of the characterization result in Theorem \ref{thm:characterizationGab}.
\item By Proposition \ref{prop:newGabsareMRD}, we know that $D$ is MRD. Moreover, $\theta(D)$ is generated by
\begin{itemize}
\item[-] $\theta^{i}(g)+\theta^{i}(\eta) \theta^{k+i}(g)$ for $1 \leq i < m-k$, which are already contained in $D$,
\item[-] $\theta^i(g)$ for $m-k < i \leq k$, which are already contained in $D$,
\item[-] $\theta^k(g)$, which is linearly independent from $D$, and
\item[-] $\theta^{m-k}(g)+\theta^{m-k}(\eta) \theta^{m}(g)=\theta^{m-k}(g)+\theta^{m-k}(\eta) g $, 
which is a linear combination of $\theta^{m-k}(g)$ (which is in $D$), $\theta^k(g)$ (which is in $\theta(D)$), and $g+\theta^{k}(\eta)\theta^k(g)$ (which is in $D$).
\end{itemize}
Hence, $\cS_1^{\theta}(D) = \langle D, \theta^k(g)\rangle$ has dimension $k+1$. We conclude using part 3 of Theorem~\ref{thm:characterizationGab}. \qedhere
\end{enumerate}
\end{proof}

\begin{remark}
At this point one may ask whether a characterization result similar to Theorem \ref{thm:characterizationGab} can be developed for twisted Gabidulin codes, in particular if it is possible to characterize these codes via their $\sigma$-sequences. Unfortunately, this seems to be impossible, even for the case $n=m$, in which we can show that there exists many codes with the same sequence of $\sigma$-sums: 

Let $g \in \F_{q^n}^n$ be such that $\rk_q(g)=n$, $\theta $ be a generator of $\Gal(\F_{q^n}/\Fq)$ and $\eta \in \F_{q^n}^*$ such that $\mathrm{N}_{\F_{q^n}/\Fq}(\eta) \neq (-1)^{kn}$, and consider the $\theta$-twisted Gabidulin code  $C:=\TGabShee{k, \theta}{\eta}{g}$. Moreover, we define
$$\G_{I,\theta}(g'):=\left\langle \theta_i(g') \mid i \in I \right\rangle,$$
for some $I \subseteq \{0,1,\ldots, n-1\}$, and $g'\in \F_{q^n}^n$ with $\rk_q(g')=n$. Observe that Gabidulin codes are a particular case of $\G_{I,\theta}(g')$, with $I=\{0,\ldots, k-1\}$. It was shown in \cite[Theorem 2.2]{csajbok2018maximum} that this family of codes corresponds to the family of codes having left and right idealizers isomorphic to $\F_{q^n}$. Moreover, in \cite{lunardon2018generalized}, it was proven that the right idealizer of the twisted Gabidulin code  $C$ is isomorphic to $\F_{q^{\gcd(k,n)}}$ whenever $1<k<n-1$, in the case $n=m$. Therefore, codes of the form $\G_{I,\theta}(g')$ cannot be equivalent to twisted Gabidulin codes, since the right idealizer is invariant under code equivalence. For more details, the interested reader is referred to \cite{csajbok2018maximum,lunardon2018generalized}. 

Now we consider a set $I=\{0,\ldots, k\}\setminus \{j\}$, for some $1\leq j \leq k-1$,  a vector $h\in \F_{q^n}^n$ with $\rk_q(h)=n$ and the corresponding code $\G_{I,\theta}(h)$. Using Proposition \ref{cor:rankMoore}, one can easily show that $s_i^\theta(\G_{I,\theta}(h))=\min \{e,n\}$, where 
$$e=\begin{cases} k & \mbox{ if } i=0 \\
k+1+i & \mbox{ if } i\geq 1. \end{cases}$$
In particular the sequence of the  $\theta$-sum dimensions of $C$ is equal to the one of $\G_{I,\theta}(h)$. It is easy to see that also the sequences of $\theta$-intersection dimensions coincide. This shows that a result in the spirit of Theorem \ref{thm:characterizationGab} seems not possible for twisted Gabidulin codes. Moreover, it was proven in \cite{csajbok2018maximum} that when $n=7$ and $q$ is odd, or when $n=8$ and  $q \equiv 1 \mod  3$, the codes $\G_{I,\theta}(h)$ with $I=\{0,1,3\}$ are MRD. Therefore, even with the further assumption to take an MRD code, the characterization of $\theta$-twisted Gabidulin codes in terms of their $\theta$-sequences seems not feasible. A step in this direction was done in \cite{giuzzi2019identifiers}, where a characterization of $\theta$-Gabidulin codes involving $t_0^\theta(C)$, $t_1^\theta(C)$, and $t_2^\theta(C)$ was given. However, the characterization given there required the existence of a certain element of maximum rank $n$ with some special properties. An interesting problem would be if one can find better conditions involving sum or intersection sequences, possibly for different $\theta$'s, in order to characterize twisted Gabidulin codes.
\end{remark}

\section{Conclusion}\label{sec:conc}

We showed that the sum and the intersection sequence of a rank-metric code (under an automorphism) are invariants of the equivalence class of the code. This gives an efficiently computable criterion to check if two codes are inequivalent. 

We determined many exact and some bounds on the values of these sequences for known maximum rank distance code families, namely Gabidulin, twisted Gabidulin and generalized twisted Gabidulin codes. 
Based on these results we derived sufficient conditions on the parameters $n$ (length), $k$ (dimension) and $m$ (field extension degree) of the codes, such that (generalized) twisted Gabidulin codes are inequivalent to Gabidulin codes, and such that certain generalized twisted Gabidulin codes are inequivalent to narrow sense twisted Gabidulin codes. 

Furthermore, we used the invariants to derive  upper and lower bounds on the number of inequivalent classical and (generalized) twisted Gabidulin codes, where for Gabidulin codes with $m=n$ the bounds coincide and give the exact number of inequivalent Gabidulin codes. Finally, we gave a characterization of Gabidulin codes in terms of these sequences. We used this latter result to show that Gabidulin's new codes (which correspond to generalized twisted Gabidulin codes with certain parameters) are in fact Gabidulin codes in the original sense. A similar characterization for twisted Gabidulin codes is not possible, since there is a counter example of an inequivalent code construction that has the same sum (or intersection) sequence as a twisted Gabidulin code. The question remains if it is possible to characterize twisted Gabidulin codes by its sum or intersection sequence together with another easily computable criterion.

Besides the main purpose of being an easily computable criterion to verify if a new code construction is inequivalent to other known code constructions, the results of this paper might also be of interest for code-based cryptography. Although not explicitly mentioned, the sum sequence of Gabidulin codes has frequently been used as a distinguisher from random linear codes, for attacking McEliece type of cryptosystems that use Gabidulin codes for the private key. On one hand, it is therefore a promising idea to use non-Gabidulin MRD codes for the design of new cryptosystems. On the other hand, the invariants of the code families treated in this work are all again quite different from the behavior of a random code, which raises the question if similar distinguisher attacks (as for Gabidulin codes) are possible for (generalized) twisted Gabidulin codes.

For the sake of simplicity we have formulated all main results in this paper for codes that are evaluated in a full rank vector $g\in \mathbb F_{q^m}^n$. However, most of the results can easily be carried over to the case where $g$ does not have full rank. Moreover, we have shown how the invariants can be used to show the inequivalence of certain generalized twisted Gabidulin codes to other code families. This can similarly be done for many more subfamilies of generalized twisted Gabidulin codes, and more generally for any other type of evaluation code based on linearized polynomials.

Lastly, we would like to state the open problem of generalizing Theorem \ref{thm:numberinequivalentTGab}, which gives the exact number of inequivalent twisted Gabidulin codes in the case $n=m$, to the case $n<m$. This requires an estimate on the cardinality of the set of generators $\theta$ of $\Gal(\Fm/\Fq)$ for which a given code is $\theta$-twisted Gabidulin, which would then imply the number of inequivalent codes by Theorem \ref{thm:numTGab}.

\bibliographystyle{abbrv}
\bibliography{biblio}

\appendix

\section{Examples}\label{app:examples}

Tables~\ref{tab:m=15_n=8_k=3_example} and \ref{tab:m=23_n=20_k=9_example} exemplify, for two different sets of code parameters, Proposition~\ref{prop:GabidulinSeq} (sequences for Gabidulin codes) and Proposition~\ref{prop:TwGabidulinSeq} (sequences for Twisted Gabidulin codes) from Section~\ref{sec:sequences}, and also show the actual sequences $s_i$ for comparison. The sequences were computed using SageMath v8.1~\cite{sagemath}.

{\footnotesize
	\newcommand{\cellcolored}{\cellcolor{black!10}}
	\begin{longtable}{|l|l|l|l|}
		\caption{Example illustrating Proposition~\ref{prop:GabidulinSeq} (sequences for Gabidulin codes) and Proposition~\ref{prop:TwGabidulinSeq} (sequences for Twisted Gabidulin codes). Code parameters: $q=2$, $m=15$, $n=8$, $k=3$. Evaluation points $g = (\alpha^{16474}$, $\alpha^{23822}$, $\alpha^{10386}$, $\alpha^{28105}$, $\alpha^{21661}$, $\alpha^{2599}$, $\alpha^{30721}$, $\alpha^{198}) \in \Fqm^{n}$ and twist coefficient $\eta = \alpha^{22859} \in \Fqm$, where $\alpha$ is a primitive element of $\Fqm$ with $\alpha^{15} = \alpha^{5} + \alpha^{4} + \alpha^{2} + 1$.} \label{tab:m=15_n=8_k=3_example} \\
		\hline
		$r$ & Code & Case in Prop.~\ref{prop:GabidulinSeq} or Prop.~\ref{prop:TwGabidulinSeq} & $(s^{\theta^r}_1(C), s^{\theta^r}_2(C), \dots )$ \\
		\hline \hline
		$   1$ & Gabidulin & \cellcolored Actual Sequence & \cellcolored $( 4, 5, 6, 7, 8, 8, \dots )$ \\
		\cline{3-4}
		& & 1a: $r \leq k$ & $( 4, 5, 6, 7, 8, 8, \dots )$ \\
		& & 3b: $r < \min\{m-k,m-n+k\}$ & $s_1 \geq   -3$ \\
		\cline{2-4}
		& Tw.\ Gab.\ & \cellcolored Actual Sequence & \cellcolored $( 5, 6, 7, 8, 8, \dots )$ \\
		\cline{3-4}
		& & 1a: $r < k$ & $( 5, 6, 7, 8, 8, \dots )$ \\
		& & 3b: $r < \min\{m-k+1,m-n+k\}$ & $s_1 \geq   -3$ \\
		\hline \hline 
		$   2$ & Gabidulin & \cellcolored Actual Sequence & \cellcolored $( 5, 7, 8, 8, \dots )$ \\
		\cline{3-4}
		& & 1a: $r \leq k$ & $( 5, 7, 8, 8, \dots )$ \\
		& & 3b: $r < \min\{m-k,m-n+k\}$ & $s_1 \geq   -2$ \\
		\cline{2-4}
		& Tw.\ Gab.\ & \cellcolored Actual Sequence & \cellcolored $( 6, 8, 8, \dots )$ \\
		\cline{3-4}
		& & 1a: $r < k$ & $( 6, 8, 8, \dots )$ \\
		& & 3b: $r < \min\{m-k+1,m-n+k\}$ & $s_1 \geq   -2$ \\
		\hline \hline 
		$   3$ & Gabidulin & \cellcolored Actual Sequence & \cellcolored $( 6, 8, 8, \dots )$ \\
		\cline{3-4}
		& & 1a: $r \leq k$ & $( 6, 8, 8, \dots )$ \\
		& & 3b: $r < \min\{m-k,m-n+k\}$ & $s_1 \geq   -1$ \\
		\cline{2-4}
		& Tw.\ Gab.\ & \cellcolored Actual Sequence & \cellcolored $( 6, 8, 8, \dots )$ \\
		\cline{3-4}
		& & 2a: $k \leq r \leq n-k$ & $s_1 =     6$ \\
		& & 3b: $r < \min\{m-k+1,m-n+k\}$ & $s_1 \geq   -1$ \\
		\hline \hline 
		$   4$ & Gabidulin & \cellcolored Actual Sequence & \cellcolored $( 6, 8, 8, \dots )$ \\
		\cline{3-4}
		& & 2a: $k < r \leq n-k$ & $s_1 =     6$ \\
		& & 3b: $r < \min\{m-k,m-n+k\}$ & $s_1 \geq    0$ \\
		\cline{2-4}
		& Tw.\ Gab.\ & \cellcolored Actual Sequence & \cellcolored $( 6, 8, 8, \dots )$ \\
		\cline{3-4}
		& & 2a: $k \leq r \leq n-k$ & $s_1 =     6$ \\
		& & 3b: $r < \min\{m-k+1,m-n+k\}$ & $s_1 \geq    0$ \\
		\hline \hline 
		$   5$ & Gabidulin & \cellcolored Actual Sequence & \cellcolored $( 6, 8, 8, \dots )$ \\
		\cline{3-4}
		& & 2a: $k < r \leq n-k$ & $s_1 =     6$ \\
		& & 3b: $r < \min\{m-k,m-n+k\}$ & $s_1 \geq    1$ \\
		\cline{2-4}
		& Tw.\ Gab.\ & \cellcolored Actual Sequence & \cellcolored $( 6, 8, 8, \dots )$ \\
		\cline{3-4}
		& & 2a: $k \leq r \leq n-k$ & $s_1 =     6$ \\
		& & 3b: $r < \min\{m-k+1,m-n+k\}$ & $s_1 \geq    1$ \\
		\hline \hline 
		$   6$ & Gabidulin & \cellcolored Actual Sequence & \cellcolored $( 6, 8, 8, \dots )$ \\
		\cline{3-4}
		& & 3a: $\max\{k,n-k\} < r$ & $s_1 \geq    5$ \\
		& & 3b: $r < \min\{m-k,m-n+k\}$ & $s_1 \geq    2$ \\
		\cline{2-4}
		& Tw.\ Gab.\ & \cellcolored Actual Sequence & \cellcolored $( 6, 8, 8, \dots )$ \\
		\cline{3-4}
		& & 3a: $\max\{k-1,n-k\} < r$ & $s_1 \geq    5$ \\
		& & 3b: $r < \min\{m-k+1,m-n+k\}$ & $s_1 \geq    2$ \\
		\hline \hline 
		$   7$ & Gabidulin & \cellcolored Actual Sequence & \cellcolored $( 6, 7, 8, 8, \dots )$ \\
		\cline{3-4}
		& & 3a: $\max\{k,n-k\} < r$ & $s_1 \geq    4$ \\
		& & 3b: $r < \min\{m-k,m-n+k\}$ & $s_1 \geq    3$ \\
		\cline{2-4}
		& Tw.\ Gab.\ & \cellcolored Actual Sequence & \cellcolored $( 6, 8, 8, \dots )$ \\
		\cline{3-4}
		& & 3a: $\max\{k-1,n-k\} < r$ & $s_1 \geq    4$ \\
		& & 3b: $r < \min\{m-k+1,m-n+k\}$ & $s_1 \geq    3$ \\
		\hline \hline 
		$   8$ & Gabidulin & \cellcolored Actual Sequence & \cellcolored $( 6, 7, 8, 8, \dots )$ \\
		\cline{3-4}
		& & 3a: $\max\{k,n-k\} < r$ & $s_1 \geq    3$ \\
		& & 3b: $r < \min\{m-k,m-n+k\}$ & $s_1 \geq    4$ \\
		\cline{2-4}
		& Tw.\ Gab.\ & \cellcolored Actual Sequence & \cellcolored $( 6, 8, 8, \dots )$ \\
		\cline{3-4}
		& & 3a: $\max\{k-1,n-k\} < r$ & $s_1 \geq    3$ \\
		& & 3b: $r < \min\{m-k+1,m-n+k\}$ & $s_1 \geq    4$ \\
		\hline \hline 
		$   9$ & Gabidulin & \cellcolored Actual Sequence & \cellcolored $( 6, 8, 8, \dots )$ \\
		\cline{3-4}
		& & 3a: $\max\{k,n-k\} < r$ & $s_1 \geq    2$ \\
		& & 3b: $r < \min\{m-k,m-n+k\}$ & $s_1 \geq    5$ \\
		\cline{2-4}
		& Tw.\ Gab.\ & \cellcolored Actual Sequence & \cellcolored $( 6, 8, 8, \dots )$ \\
		\cline{3-4}
		& & 3a: $\max\{k-1,n-k\} < r$ & $s_1 \geq    2$ \\
		& & 3b: $r < \min\{m-k+1,m-n+k\}$ & $s_1 \geq    5$ \\
		\hline \hline 
		$  10$ & Gabidulin & \cellcolored Actual Sequence & \cellcolored $( 6, 8, 8, \dots )$ \\
		\cline{3-4}
		& & 3a: $\max\{k,n-k\} < r$ & $s_1 \geq    1$ \\
		& & 2b: $m-n+k \leq r < m-k$ & $s_1 =     6$ \\
		\cline{2-4}
		& Tw.\ Gab.\ & \cellcolored Actual Sequence & \cellcolored $( 6, 8, 8, \dots )$ \\
		\cline{3-4}
		& & 3a: $\max\{k-1,n-k\} < r$ & $s_1 \geq    1$ \\
		& & 2b: $m-n+k \leq r \leq m-k$ & $s_1 =     6$ \\
		\hline \hline 
		$  11$ & Gabidulin & \cellcolored Actual Sequence & \cellcolored $( 6, 8, 8, \dots )$ \\
		\cline{3-4}
		& & 3a: $\max\{k,n-k\} < r$ & $s_1 \geq    0$ \\
		& & 2b: $m-n+k \leq r < m-k$ & $s_1 =     6$ \\
		\cline{2-4}
		& Tw.\ Gab.\ & \cellcolored Actual Sequence & \cellcolored $( 6, 8, 8, \dots )$ \\
		\cline{3-4}
		& & 3a: $\max\{k-1,n-k\} < r$ & $s_1 \geq    0$ \\
		& & 2b: $m-n+k \leq r \leq m-k$ & $s_1 =     6$ \\
		\hline \hline 
		$  12$ & Gabidulin & \cellcolored Actual Sequence & \cellcolored $( 6, 8, 8, \dots )$ \\
		\cline{3-4}
		& & 3a: $\max\{k,n-k\} < r$ & $s_1 \geq   -1$ \\
		& & 1b: $m-k \leq r$ & $( 6, 8, 8, \dots )$ \\
		\cline{2-4}
		& Tw.\ Gab.\ & \cellcolored Actual Sequence & \cellcolored $( 6, 8, 8, \dots )$ \\
		\cline{3-4}
		& & 3a: $\max\{k-1,n-k\} < r$ & $s_1 \geq   -1$ \\
		& & 2b: $m-n+k \leq r \leq m-k$ & $s_1 =     6$ \\
		\hline \hline 
		$  13$ & Gabidulin & \cellcolored Actual Sequence & \cellcolored $( 5, 7, 8, 8, \dots )$ \\
		\cline{3-4}
		& & 3a: $\max\{k,n-k\} < r$ & $s_1 \geq   -2$ \\
		& & 1b: $m-k \leq r$ & $( 5, 7, 8, 8, \dots )$ \\
		\cline{2-4}
		& Tw.\ Gab.\ & \cellcolored Actual Sequence & \cellcolored $( 6, 8, 8, \dots )$ \\
		\cline{3-4}
		& & 3a: $\max\{k-1,n-k\} < r$ & $s_1 \geq   -2$ \\
		& & 1b: $m-k < r$ & $( 6, 8, 8, \dots )$ \\
		\hline \hline 
		$  14$ & Gabidulin & \cellcolored Actual Sequence & \cellcolored $( 4, 5, 6, 7, 8, 8, \dots )$ \\
		\cline{3-4}
		& & 3a: $\max\{k,n-k\} < r$ & $s_1 \geq   -3$ \\
		& & 1b: $m-k \leq r$ & $( 4, 5, 6, 7, 8, 8, \dots )$ \\
		\cline{2-4}
		& Tw.\ Gab.\ & \cellcolored Actual Sequence & \cellcolored $( 5, 6, 7, 8, 8, \dots )$ \\
		\cline{3-4}
		& & 3a: $\max\{k-1,n-k\} < r$ & $s_1 \geq   -3$ \\
		& & 1b: $m-k < r$ & $( 5, 6, 7, 8, 8, \dots )$ \\
		\hline \hline 
	\end{longtable}
}

{\footnotesize
	\newcommand{\cellcolored}{\cellcolor{black!10}}
	\begin{longtable}{|l|l|l|l|}
		\caption{Example illustrating Proposition~\ref{prop:GabidulinSeq} (sequences for Gabidulin codes) and Proposition~\ref{prop:TwGabidulinSeq} (sequences for Twisted Gabidulin codes). Code parameters: $q=3$, $m=23$, $n=20$, $k=9$. Evaluation points $g = (\alpha^{18291492625}$, $\alpha^{30157479146}$, $\alpha^{61931009420}$, $\alpha^{46672256788}$, $\alpha^{48458087457}$, $\alpha^{45285722774}$, $\alpha^{75023150823}$, $\alpha^{7059856837}$, $\alpha^{6759919186}$, $\alpha^{27228306115}$, $\alpha^{63169590947}$, $\alpha^{60982249453}$, $\alpha^{53931149991}$, $\alpha^{65993950263}$, $\alpha^{30419168464}$, $\alpha^{58409498579}$, $\alpha^{46827933410}$, $\alpha^{67114805914}$, $\alpha^{51682126798}$, $\alpha^{31714555456}) \in \Fqm^{n}$ and twist coefficient $\eta = \alpha^{67060309696} \in \Fqm$, where $\alpha$ is a primitive element of $\Fqm$ with $\alpha^{23} = 2\alpha^{3} + 2 \alpha + 2$.} \label{tab:m=23_n=20_k=9_example} \\
		\hline
		$r$ & Code & Case in Prop.~\ref{prop:GabidulinSeq} or Prop.~\ref{prop:TwGabidulinSeq} & $(s^{\theta^r}_1(C), s^{\theta^r}_2(C), \dots )$ \\
		\hline \hline
		$   1$ & Gabidulin & \cellcolored Actual Sequence & \cellcolored $( 10, 11, 12, 13, 14, 15, 16, 17, 18, 19, 20, 20, \dots )$ \\
		\cline{3-4}
		& & 1a: $r \leq k$ & $( 10, 11, 12, 13, 14, 15, 16, 17, 18, 19, 20, 20, \dots )$ \\
		& & 3b: $r < \min\{m-k,m-n+k\}$ & $s_1 \geq    7$ \\
		\cline{2-4}
		& Tw.\ Gab.\ & \cellcolored Actual Sequence & \cellcolored $( 11, 12, 13, 14, 15, 16, 17, 18, 19, 20, 20, \dots )$ \\
		\cline{3-4}
		& & 1a: $r < k$ & $( 11, 12, 13, 14, 15, 16, 17, 18, 19, 20, 20, \dots )$ \\
		& & 3b: $r < \min\{m-k+1,m-n+k\}$ & $s_1 \geq    7$ \\
		\hline \hline 
		$   2$ & Gabidulin & \cellcolored Actual Sequence & \cellcolored $( 11, 13, 15, 17, 19, 20, 20, \dots )$ \\
		\cline{3-4}
		& & 1a: $r \leq k$ & $( 11, 13, 15, 17, 19, 20, 20, \dots )$ \\
		& & 3b: $r < \min\{m-k,m-n+k\}$ & $s_1 \geq    8$ \\
		\cline{2-4}
		& Tw.\ Gab.\ & \cellcolored Actual Sequence & \cellcolored $( 12, 14, 16, 18, 20, 20, \dots )$ \\
		\cline{3-4}
		& & 1a: $r < k$ & $( 12, 14, 16, 18, 20, 20, \dots )$ \\
		& & 3b: $r < \min\{m-k+1,m-n+k\}$ & $s_1 \geq    8$ \\
		\hline \hline 
		$   3$ & Gabidulin & \cellcolored Actual Sequence & \cellcolored $( 12, 15, 18, 20, 20, \dots )$ \\
		\cline{3-4}
		& & 1a: $r \leq k$ & $( 12, 15, 18, 20, 20, \dots )$ \\
		& & 3b: $r < \min\{m-k,m-n+k\}$ & $s_1 \geq    9$ \\
		\cline{2-4}
		& Tw.\ Gab.\ & \cellcolored Actual Sequence & \cellcolored $( 13, 16, 19, 20, 20, \dots )$ \\
		\cline{3-4}
		& & 1a: $r < k$ & $( 13, 16, 19, 20, 20, \dots )$ \\
		& & 3b: $r < \min\{m-k+1,m-n+k\}$ & $s_1 \geq    9$ \\
		\hline \hline 
		$   4$ & Gabidulin & \cellcolored Actual Sequence & \cellcolored $( 13, 17, 20, 20, \dots )$ \\
		\cline{3-4}
		& & 1a: $r \leq k$ & $( 13, 17, 20, 20, \dots )$ \\
		& & 3b: $r < \min\{m-k,m-n+k\}$ & $s_1 \geq   10$ \\
		\cline{2-4}
		& Tw.\ Gab.\ & \cellcolored Actual Sequence & \cellcolored $( 14, 18, 20, 20, \dots )$ \\
		\cline{3-4}
		& & 1a: $r < k$ & $( 14, 18, 20, 20, \dots )$ \\
		& & 3b: $r < \min\{m-k+1,m-n+k\}$ & $s_1 \geq   10$ \\
		\hline \hline 
		$   5$ & Gabidulin & \cellcolored Actual Sequence & \cellcolored $( 14, 19, 20, 20, \dots )$ \\
		\cline{3-4}
		& & 1a: $r \leq k$ & $( 14, 19, 20, 20, \dots )$ \\
		& & 3b: $r < \min\{m-k,m-n+k\}$ & $s_1 \geq   11$ \\
		\cline{2-4}
		& Tw.\ Gab.\ & \cellcolored Actual Sequence & \cellcolored $( 15, 20, 20, \dots )$ \\
		\cline{3-4}
		& & 1a: $r < k$ & $( 15, 20, 20, \dots )$ \\
		& & 3b: $r < \min\{m-k+1,m-n+k\}$ & $s_1 \geq   11$ \\
		\hline \hline 
		$   6$ & Gabidulin & \cellcolored Actual Sequence & \cellcolored $( 15, 20, 20, \dots )$ \\
		\cline{3-4}
		& & 1a: $r \leq k$ & $( 15, 20, 20, \dots )$ \\
		& & 3b: $r < \min\{m-k,m-n+k\}$ & $s_1 \geq   12$ \\
		\cline{2-4}
		& Tw.\ Gab.\ & \cellcolored Actual Sequence & \cellcolored $( 16, 20, 20, \dots )$ \\
		\cline{3-4}
		& & 1a: $r < k$ & $( 16, 20, 20, \dots )$ \\
		& & 3b: $r < \min\{m-k+1,m-n+k\}$ & $s_1 \geq   12$ \\
		\hline \hline 
		$   7$ & Gabidulin & \cellcolored Actual Sequence & \cellcolored $( 16, 20, 20, \dots )$ \\
		\cline{3-4}
		& & 1a: $r \leq k$ & $( 16, 20, 20, \dots )$ \\
		& & 3b: $r < \min\{m-k,m-n+k\}$ & $s_1 \geq   13$ \\
		\cline{2-4}
		& Tw.\ Gab.\ & \cellcolored Actual Sequence & \cellcolored $( 17, 20, 20, \dots )$ \\
		\cline{3-4}
		& & 1a: $r < k$ & $( 17, 20, 20, \dots )$ \\
		& & 3b: $r < \min\{m-k+1,m-n+k\}$ & $s_1 \geq   13$ \\
		\hline \hline 
		$   8$ & Gabidulin & \cellcolored Actual Sequence & \cellcolored $( 17, 20, 20, \dots )$ \\
		\cline{3-4}
		& & 1a: $r \leq k$ & $( 17, 20, 20, \dots )$ \\
		& & 3b: $r < \min\{m-k,m-n+k\}$ & $s_1 \geq   14$ \\
		\cline{2-4}
		& Tw.\ Gab.\ & \cellcolored Actual Sequence & \cellcolored $( 18, 20, 20, \dots )$ \\
		\cline{3-4}
		& & 1a: $r < k$ & $( 18, 20, 20, \dots )$ \\
		& & 3b: $r < \min\{m-k+1,m-n+k\}$ & $s_1 \geq   14$ \\
		\hline \hline 
		$   9$ & Gabidulin & \cellcolored Actual Sequence & \cellcolored $( 18, 20, 20, \dots )$ \\
		\cline{3-4}
		& & 1a: $r \leq k$ & $( 18, 20, 20, \dots )$ \\
		& & 3b: $r < \min\{m-k,m-n+k\}$ & $s_1 \geq   15$ \\
		\cline{2-4}
		& Tw.\ Gab.\ & \cellcolored Actual Sequence & \cellcolored $( 18, 20, 20, \dots )$ \\
		\cline{3-4}
		& & 2a: $k \leq r \leq n-k$ & $s_1 =    18$ \\
		& & 3b: $r < \min\{m-k+1,m-n+k\}$ & $s_1 \geq   15$ \\
		\hline \hline 
		$  10$ & Gabidulin & \cellcolored Actual Sequence & \cellcolored $( 18, 20, 20, \dots )$ \\
		\cline{3-4}
		& & 2a: $k < r \leq n-k$ & $s_1 =    18$ \\
		& & 3b: $r < \min\{m-k,m-n+k\}$ & $s_1 \geq   16$ \\
		\cline{2-4}
		& Tw.\ Gab.\ & \cellcolored Actual Sequence & \cellcolored $( 18, 20, 20, \dots )$ \\
		\cline{3-4}
		& & 2a: $k \leq r \leq n-k$ & $s_1 =    18$ \\
		& & 3b: $r < \min\{m-k+1,m-n+k\}$ & $s_1 \geq   16$ \\
		\hline \hline 
		$  11$ & Gabidulin & \cellcolored Actual Sequence & \cellcolored $( 18, 19, 20, 20, \dots )$ \\
		\cline{3-4}
		& & 2a: $k < r \leq n-k$ & $s_1 =    18$ \\
		& & 3b: $r < \min\{m-k,m-n+k\}$ & $s_1 \geq   17$ \\
		\cline{2-4}
		& Tw.\ Gab.\ & \cellcolored Actual Sequence & \cellcolored $( 18, 20, 20, \dots )$ \\
		\cline{3-4}
		& & 2a: $k \leq r \leq n-k$ & $s_1 =    18$ \\
		& & 3b: $r < \min\{m-k+1,m-n+k\}$ & $s_1 \geq   17$ \\
		\hline \hline 
		$  12$ & Gabidulin & \cellcolored Actual Sequence & \cellcolored $( 18, 19, 20, 20, \dots )$ \\
		\cline{3-4}
		& & 3a: $\max\{k,n-k\} < r$ & $s_1 \geq   17$ \\
		& & 2b: $m-n+k \leq r < m-k$ & $s_1 =    18$ \\
		\cline{2-4}
		& Tw.\ Gab.\ & \cellcolored Actual Sequence & \cellcolored $( 18, 20, 20, \dots )$ \\
		\cline{3-4}
		& & 3a: $\max\{k-1,n-k\} < r$ & $s_1 \geq   17$ \\
		& & 2b: $m-n+k \leq r \leq m-k$ & $s_1 =    18$ \\
		\hline \hline 
		$  13$ & Gabidulin & \cellcolored Actual Sequence & \cellcolored $( 18, 20, 20, \dots )$ \\
		\cline{3-4}
		& & 3a: $\max\{k,n-k\} < r$ & $s_1 \geq   16$ \\
		& & 2b: $m-n+k \leq r < m-k$ & $s_1 =    18$ \\
		\cline{2-4}
		& Tw.\ Gab.\ & \cellcolored Actual Sequence & \cellcolored $( 18, 20, 20, \dots )$ \\
		\cline{3-4}
		& & 3a: $\max\{k-1,n-k\} < r$ & $s_1 \geq   16$ \\
		& & 2b: $m-n+k \leq r \leq m-k$ & $s_1 =    18$ \\
		\hline \hline 
		$  14$ & Gabidulin & \cellcolored Actual Sequence & \cellcolored $( 18, 20, 20, \dots )$ \\
		\cline{3-4}
		& & 3a: $\max\{k,n-k\} < r$ & $s_1 \geq   15$ \\
		& & 1b: $m-k \leq r$ & $( 18, 20, 20, \dots )$ \\
		\cline{2-4}
		& Tw.\ Gab.\ & \cellcolored Actual Sequence & \cellcolored $( 18, 20, 20, \dots )$ \\
		\cline{3-4}
		& & 3a: $\max\{k-1,n-k\} < r$ & $s_1 \geq   15$ \\
		& & 2b: $m-n+k \leq r \leq m-k$ & $s_1 =    18$ \\
		\hline \hline 
		$  15$ & Gabidulin & \cellcolored Actual Sequence & \cellcolored $( 17, 20, 20, \dots )$ \\
		\cline{3-4}
		& & 3a: $\max\{k,n-k\} < r$ & $s_1 \geq   14$ \\
		& & 1b: $m-k \leq r$ & $( 17, 20, 20, \dots )$ \\
		\cline{2-4}
		& Tw.\ Gab.\ & \cellcolored Actual Sequence & \cellcolored $( 18, 20, 20, \dots )$ \\
		\cline{3-4}
		& & 3a: $\max\{k-1,n-k\} < r$ & $s_1 \geq   14$ \\
		& & 1b: $m-k < r$ & $( 18, 20, 20, \dots )$ \\
		\hline \hline 
		$  16$ & Gabidulin & \cellcolored Actual Sequence & \cellcolored $( 16, 20, 20, \dots )$ \\
		\cline{3-4}
		& & 3a: $\max\{k,n-k\} < r$ & $s_1 \geq   13$ \\
		& & 1b: $m-k \leq r$ & $( 16, 20, 20, \dots )$ \\
		\cline{2-4}
		& Tw.\ Gab.\ & \cellcolored Actual Sequence & \cellcolored $( 17, 20, 20, \dots )$ \\
		\cline{3-4}
		& & 3a: $\max\{k-1,n-k\} < r$ & $s_1 \geq   13$ \\
		& & 1b: $m-k < r$ & $( 17, 20, 20, \dots )$ \\
		\hline \hline 
		$  17$ & Gabidulin & \cellcolored Actual Sequence & \cellcolored $( 15, 20, 20, \dots )$ \\
		\cline{3-4}
		& & 3a: $\max\{k,n-k\} < r$ & $s_1 \geq   12$ \\
		& & 1b: $m-k \leq r$ & $( 15, 20, 20, \dots )$ \\
		\cline{2-4}
		& Tw.\ Gab.\ & \cellcolored Actual Sequence & \cellcolored $( 16, 20, 20, \dots )$ \\
		\cline{3-4}
		& & 3a: $\max\{k-1,n-k\} < r$ & $s_1 \geq   12$ \\
		& & 1b: $m-k < r$ & $( 16, 20, 20, \dots )$ \\
		\hline \hline 
		$  18$ & Gabidulin & \cellcolored Actual Sequence & \cellcolored $( 14, 19, 20, 20, \dots )$ \\
		\cline{3-4}
		& & 3a: $\max\{k,n-k\} < r$ & $s_1 \geq   11$ \\
		& & 1b: $m-k \leq r$ & $( 14, 19, 20, 20, \dots )$ \\
		\cline{2-4}
		& Tw.\ Gab.\ & \cellcolored Actual Sequence & \cellcolored $( 15, 20, 20, \dots )$ \\
		\cline{3-4}
		& & 3a: $\max\{k-1,n-k\} < r$ & $s_1 \geq   11$ \\
		& & 1b: $m-k < r$ & $( 15, 20, 20, \dots )$ \\
		\hline \hline 
		$  19$ & Gabidulin & \cellcolored Actual Sequence & \cellcolored $( 13, 17, 20, 20, \dots )$ \\
		\cline{3-4}
		& & 3a: $\max\{k,n-k\} < r$ & $s_1 \geq   10$ \\
		& & 1b: $m-k \leq r$ & $( 13, 17, 20, 20, \dots )$ \\
		\cline{2-4}
		& Tw.\ Gab.\ & \cellcolored Actual Sequence & \cellcolored $( 14, 18, 20, 20, \dots )$ \\
		\cline{3-4}
		& & 3a: $\max\{k-1,n-k\} < r$ & $s_1 \geq   10$ \\
		& & 1b: $m-k < r$ & $( 14, 18, 20, 20, \dots )$ \\
		\hline \hline 
		$  20$ & Gabidulin & \cellcolored Actual Sequence & \cellcolored $( 12, 15, 18, 20, 20, \dots )$ \\
		\cline{3-4}
		& & 3a: $\max\{k,n-k\} < r$ & $s_1 \geq    9$ \\
		& & 1b: $m-k \leq r$ & $( 12, 15, 18, 20, 20, \dots )$ \\
		\cline{2-4}
		& Tw.\ Gab.\ & \cellcolored Actual Sequence & \cellcolored $( 13, 16, 19, 20, 20, \dots )$ \\
		\cline{3-4}
		& & 3a: $\max\{k-1,n-k\} < r$ & $s_1 \geq    9$ \\
		& & 1b: $m-k < r$ & $( 13, 16, 19, 20, 20, \dots )$ \\
		\hline \hline 
		$  21$ & Gabidulin & \cellcolored Actual Sequence & \cellcolored $( 11, 13, 15, 17, 19, 20, 20, \dots )$ \\
		\cline{3-4}
		& & 3a: $\max\{k,n-k\} < r$ & $s_1 \geq    8$ \\
		& & 1b: $m-k \leq r$ & $( 11, 13, 15, 17, 19, 20, 20, \dots )$ \\
		\cline{2-4}
		& Tw.\ Gab.\ & \cellcolored Actual Sequence & \cellcolored $( 12, 14, 16, 18, 20, 20, \dots )$ \\
		\cline{3-4}
		& & 3a: $\max\{k-1,n-k\} < r$ & $s_1 \geq    8$ \\
		& & 1b: $m-k < r$ & $( 12, 14, 16, 18, 20, 20, \dots )$ \\
		\hline \hline 
		$  22$ & Gabidulin & \cellcolored Actual Sequence & \cellcolored $( 10, 11, 12, 13, 14, 15, 16, 17, 18, 19, 20, 20, \dots )$ \\
		\cline{3-4}
		& & 3a: $\max\{k,n-k\} < r$ & $s_1 \geq    7$ \\
		& & 1b: $m-k \leq r$ & $( 10, 11, 12, 13, 14, 15, 16, 17, 18, 19, 20, 20, \dots )$ \\
		\cline{2-4}
		& Tw.\ Gab.\ & \cellcolored Actual Sequence & \cellcolored $( 11, 12, 13, 14, 15, 16, 17, 18, 19, 20, 20, \dots )$ \\
		\cline{3-4}
		& & 3a: $\max\{k-1,n-k\} < r$ & $s_1 \geq    7$ \\
		& & 1b: $m-k < r$ & $( 11, 12, 13, 14, 15, 16, 17, 18, 19, 20, 20, \dots )$ \\
		\hline \hline 
	\end{longtable}
}

\end{document}